\documentclass[lettersize,onecolumn]{IEEEtran}
\usepackage{amsmath,amsfonts}
\usepackage{algorithmic}
\usepackage{algorithm}
\usepackage{array}
\usepackage[caption=false,font=normalsize,labelfont=sf,textfont=sf]{subfig}
\usepackage{textcomp}
\usepackage{stfloats}
\usepackage{url}
\usepackage{verbatim}
\usepackage{graphicx}
\usepackage{cite}
\usepackage{booktabs}
\usepackage{arydshln}
\usepackage{amssymb}
\usepackage{float}
\usepackage{color}
\usepackage{mathrsfs}
\newcommand{\tr}{{\mathrm{Tr}}}
\newcommand{\wt}{{\mathtt{wt}}}

\newcommand{\gf}{{\mathbb{F}}}
\newcommand{\C}{{\mathcal{C}}}
\newcommand{\cC}{{\mathcal{C}}}

\newcommand{\ord}{{\mathrm{ord}}}

\newtheorem{remark}{Remark}
\newtheorem{theorem}{Theorem}
\newtheorem{lemma}[theorem]{Lemma}

\newtheorem{proposition}[theorem]{Proposition}
\newtheorem{corollary}[theorem]{Corollary}

\newtheorem{definition}[theorem]{Definition}
\newtheorem{example}[theorem]{Example}

\allowdisplaybreaks[4]

\begin{document}

\title{New Criteria and Constructions for\\ Self-Orthogonal Codes

}

\author{Peng Wang and Ziling Heng
        % <-this % stops a space
\thanks{P. Wang and Z. Heng are with the School of Science, Chang'an University, Xi'an 710064, China,  and also with the State Key Laboratory of Integrated Services Networks, Xidian University, Xi'an 710071, China (email: wp20201115@163.com, zilingheng@chd.edu.cn).}% <-this % stops a space
\thanks{This research was supported by the National Natural Science Foundation of China under Grant 12271059.}}

% The paper headers
\markboth{Journal of \LaTeX\ Class Files,~Vol.~, No.~, ~}%
{Shell \MakeLowercase{\textit{et al.}}: A Sample Article Using IEEEtran.cls for IEEE Journals}

\IEEEpubid{0000--0000~\copyright~2023 IEEE}
% Remember, if you use this you must call \IEEEpubidadjcol in the second
% column for its text to clear the IEEEpubid mark.

\maketitle

\begin{abstract}
Self-orthogonal codes have attracted considerable attention owing to their applications in quantum error-correcting codes, linear complementary dual codes, and a variety of other fields.
 In this paper, we construct new families of self-orthogonal codes and self-orthogonal minimal codes by establishing criteria that characterize the self-orthogonality of certain linear codes. To this end, we first establish several new criteria for linear codes arising from the defining-set construction to be self-orthogonal. More specifically, for $q > 3$, we show that the code $\mathcal{C}_D$ is self-orthogonal whenever the defining set $D$ is $G$-invariant, where $G \subseteq \mathbb{F}_q^*$ and $|G| > 2$. For $q = 2, 3$, we characterize the self-orthogonality of $\mathcal{C}_D$ using certain character sums. By combining these criteria with partial difference sets, we construct several new families of self-orthogonal codes, which lead to optimal or almost optimal quantum codes.
Secondly, via the action of a multiplicative subgroup $G \subseteq \mathbb{F}_q^*$ with $|G| > 2$ on the columns of a projective linear code, we construct self-orthogonal codes with flexible parameters. From their augmented codes, we derive quantum codes. By choosing suitable projective codes, we obtain optimal quantum codes with high parametric flexibility.
Thirdly, we employ the characteristic function method to construct linear codes and establish criterion for their self-orthogonality. Based on this, we construct several classes of self-orthogonal minimal codes that violate the Ashikhmin–Barg condition, using vectorial dual-bent functions and $s$-plateaued functions.
\end{abstract}

\begin{IEEEkeywords}
Self-orthogonal codes, minimal codes, linear codes.
\end{IEEEkeywords}

\section{Introduction}
In this section, we introduce linear codes, self-orthogonal codes, minimal codes, and then state the motivations and main contributions of this paper.

\subsection{Linear codes}
Let $q=p^e$ for a prime $p$ and a positive integer $e$, $\gf_q$ the finite field with $q$ elements, $\mathfrak{n}$ a positive integer and $\gf_{q}^\mathfrak{n}$ the vector space of the $\mathfrak{n}$-tuples over $\gf_q$. The Hamming weight of a vector $\mathbf{a}=(a_1, a_2, \cdots, a_{\mathfrak{n}})\in \gf_{q}^{\mathfrak{n}}$, denoted by $\wt(\mathbf{a})$, is the  cardinality of its support defined by supp$(\mathbf{a})=\{1\leq i \leq \mathfrak{n}: a_i\neq 0\}$. If a  nonempty set $\mathcal{C}$ is a $k$-dimensional linear subspace of $\gf_{q}^\mathfrak{n}$, then it is called an $[\mathfrak{n}, k, d]$ linear code over $\gf_{q}$, where $d$ denotes the minimum Hamming distance of $\mathcal{C}$. It is known that $d$ equals the minimum weight of nonzero codewords in $\mathcal{C}$. 
 Let $A_{i}$ denote the number of codewords with  weight $i$ in $\mathcal{C}$, where $1\leq i\leq \mathfrak{n}$. The sequence $(1, A_1, A_2,\cdots, A_\mathfrak{n})$ is called the weight distribution of $\mathcal{C}$. The weight enumerator of $\mathcal{C}$ is defined as
$$A(z)=1+A_{1}z+A_{2}z^2+ \cdots +A_{\mathfrak{n}}z^\mathfrak{n}.$$
One fundamental problem in coding theory is the construction of $[\mathfrak{n},k,d]$ linear codes, where we strive to maximize both the code rate $k/\mathfrak{n}$ and the minimum distance $d$.
However, there  exists a tradeoff among the parameters of linear codes. In the following, we present the well-known  sphere-packing  bound on linear codes over $\gf_{q}$. 
\begin{lemma}\label{sphere-packing}
Let  $\mathcal{C}$ be an $[\mathfrak{n}, k, d]$ linear code over $\gf_{q}$. Then
\begin{eqnarray*}
q^\mathfrak{n} \geq q^k \sum_{i=0}^{\lfloor\frac{d-1}{2}\rfloor}\binom{\mathfrak{n}}{i}(q-1)^i,
\end{eqnarray*}
where $\lfloor \cdot \rfloor$ denotes the floor function.
\end{lemma}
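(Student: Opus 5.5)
The plan is the standard packing argument with Hamming balls of radius $t=\lfloor\frac{d-1}{2}\rfloor$ centred at the codewords of $\mathcal{C}$. For each $\mathbf{c}\in\mathcal{C}$ define
$$B(\mathbf{c},t)=\{\mathbf{x}\in\gf_q^{\mathfrak{n}}:\wt(\mathbf{x}-\mathbf{c})\leq t\}.$$
The proof then has three steps: compute the size of each ball, show the balls are pairwise disjoint, and compare the size of their union with $|\gf_q^{\mathfrak{n}}|=q^{\mathfrak{n}}$.

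\emph{Size of a ball.} Fix $i$ with $0\leq i\leq t$. A vector at distance exactly $i$ from $\mathbf{c}$ is determined by choosing the $i$ coordinates where it differs from $\mathbf{c}$, which can be done in $\binom{\mathfrak{n}}{i}$ ways, and then a different field element in each of those coordinates, which can be done in $(q-1)^i$ ways. Summing over $i$ gives
$$|B(\mathbf{c},t)|=\sum_{i=0}^{t}\binom{\mathfrak{n}}{i}(q-1)^i,$$
and this value does not depend on $\mathbf{c}$.

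\emph{Disjointness.} This is the only step that needs any care. Suppose $\mathbf{x}\in B(\mathbf{c},t)\cap B(\mathbf{c}',t)$ with $\mathbf{c}\neq\mathbf{c}'$. The Hamming distance is $d_H(\mathbf{u},\mathbf{v})=\wt(\mathbf{u}-\mathbf{v})$, and it satisfies the triangle inequality because $\mathrm{supp}(\mathbf{u}+\mathbf{v})\subseteq\mathrm{supp}(\mathbf{u})\cup\mathrm{supp}(\mathbf{v})$. Hence
$$\wt(\mathbf{c}-\mathbf{c}')\leq \wt(\mathbf{c}-\mathbf{x})+\wt(\mathbf{x}-\mathbf{c}')\leq 2t\leq d-1.$$
On the other hand, $\mathbf{c}-\mathbf{c}'$ is a nonzero codeword because $\mathcal{C}$ is linear. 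Since $d$ is the minimum weight of nonzero codewords, $\wt(\mathbf{c}-\mathbf{c}')\geq d$, which is a contradiction. So the balls are pairwise disjoint.

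\emph{Counting.} There are $|\mathcal{C}|=q^k$ balls, they are disjoint, and all lie in $\gf_q^{\mathfrak{n}}$. Therefore
$$q^k\sum_{i=0}^{t}\binom{\mathfrak{n}}{i}(q-1)^i\leq q^{\mathfrak{n}},$$
which is exactly the bound stated in Lemma~\ref{sphere-packing}.
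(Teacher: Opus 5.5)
Your proof is correct. It is the standard Hamming-ball packing argument, and each step is sound: the ball size, the disjointness via the triangle inequality together with $2t\le d-1$, and the final count of $q^k$ disjoint balls inside $\gf_q^{\mathfrak{n}}$. The paper gives no proof of its own, since it quotes this as the well-known sphere-packing bound, so your argument supplies exactly the textbook justification it relies on.
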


An $[\mathfrak{n}, k, d]$ linear code $\mathcal{C}$ over $\gf_q$ is said to be (distance) optimal if no $[\mathfrak{n}, k, d']$ linear code over $\gf_q$  with $d'>d$ exists, and it is called almost optimal if there exists an $[\mathfrak{n}, k, d+1]$ optimal  code over $\gf_q$. A wide range of approaches for constructing codes with good parameters can be found in existing literature \cite{Huff, Lidl, Ding}.

Now we survey two well-known generic constructions of linear codes over finite fields, referred to as the first and the second generic constructions.
Let $n$ and $e$ be positive integers such that $e\mid n$, $V_n^{(p)}$ denote an $n$-dimensional vector space over $\mathbb{F}_p$ and  $f(x)$ be a function from $V_{n}^{(p)}$ to  $\gf_{q}$ for $q=p^e$.
$V_n^{(p)}$ is an $n$-dimensional vector space over $\mathbb{F}_p$. In order to define a convenient inner product, we view $V_n^{(p)}$ as a vector space over $\mathbb{F}_{q}$ of dimension $\frac{n}{e}$. We then identify $x\in V_{n}^{(p)}$ with the tuple $(x_1,\dots,x_{\frac{n}{e}}) \in \mathbb{F}_{q}^{\frac{n}{e}}$.
For $x = (x_1,\dots,x_{\frac{n}{e}})$ and $y = (y_1,\dots,y_{\frac{n}{e}})$ in $\mathbb{F}_{q}^{\frac{n}{e}}$, we define the inner product
$$
  \langle x, y \rangle_{n/e} := x \cdot y = \sum_{i=1}^{\frac{n}{e}} x_i y_i \in \mathbb{F}_{q}.
$$
The   first   generic   construction is given by 
 \begin{eqnarray}\label{eqn-01}
 \mathcal{C}_{f}=\left\{\left(a f(x)+\langle b,x\rangle_{n/e}\right)_{x\in V_{n}^{(p)}}: (a,b)\in \gf_{q}\times V_{n}^{(p)}\right\}.
 \end{eqnarray}
 Note that $\mathcal{C}_{f}$ is a linear code over $\gf_{q}$ with length $p^n$ and dimension at most $n/e+1$. This construction is generic in the sense that
 any function $f(x)$ from $V_{n}^{(p)}$ to  $\gf_{q}$ yields a $q$-ary linear code $\mathcal{C}_{f}$. This construction was used to produce minimal codes and self-orthogonal codes in \cite{Ding1, Heng2, Jin, Tao, XuG}. 
 The second generic construction, also called the deﬁning-set method, is defined as
 \begin{eqnarray}\label{eq-1}
\mathcal{C}_{D}=\left\{\left(\langle b,d_1\rangle_{n/e}, \langle b,d_2\rangle_{n/e},\cdots, \langle b,d_\mathfrak{n}\rangle_{n/e}\right): b \in V_{n}^{(p)}\right\},
\end{eqnarray}
where $D = \left\{d_1, d_2, \cdots, d_\mathfrak{n}\right\} \subseteq V_{n}^{(p)}$ is called  deﬁning set of $\mathcal{C}_{D}$.
The ordering of the elements in $D$ does not effect the parameters and weight distribution of  $\mathcal{C}_{D}$. 
Obviously, $\mathcal{C}_{D}$ is a $q$-ary linear code of length $\mathfrak{n}$ and dimension at most $n/e$. 
This defining-set construction is generic in the sense that any set $D \subseteq V_{n}^{(p)}$ produce a linear code $\mathcal{C}_{D}$. 
In 2007, Ding et al. used this construction to obtain cyclotomic linear codes \cite{DingN}. 
Motivated by this work, a large number of good linear codes have been reported in the literature.
For instance, several families of self-orthogonal codes and minimal codes were constructed based on this construction \cite{DingK, Heng3, LiX, Tang}.

Another approach to constructing linear codes is to modify old linear codes. The augmentation technique is useful for constructing linear codes that achieve a higher code rate than the original ones. For an $[\mathfrak{n},k]$ linear code $\mathcal{C}$ over $\gf_{q}$ satisfying $\mathbf{1}\notin \mathcal{C}$, its augmented code is defined as
\begin{eqnarray}\label{eq-01}
\overline{\mathcal{C}}=\left\{\mathbf{c}+c\mathbf{1}: c\in \gf_q \right\},
\end{eqnarray}
where $\mathbf{1}=(1,1,\cdots,1) \in \gf_q$. 
It is obvious that $\overline{\mathcal{C}}$ is an $[\mathfrak{n},k+1]$ linear code which has a higher code rate than that of  $\mathcal{C}$. Generally speaking, it is difficult to determine the minimum distance of $\overline{\mathcal{C}}$, even when the parameters of the original code $\mathcal{C}$ are known. 
It is easy to see that determining the minimum distance of $\overline{\mathcal{C}}$ is equivalent to determining the complete weight distribution of  $\mathcal{C}$. 
The augmentation technique has been employed in \cite{Heng, Heng3, LiX} and many other works to construct linear codes with excellent properties.

 \subsection{Self-orthogonal codes}

For an $[\mathfrak{n}, k]$ linear code $\mathcal{C}$ over $\gf_q$, its dual code is defined as 
$$\mathcal{C}^\perp=\left\{\mathbf{c}^{\perp}\in \gf_q^n:\langle\mathbf{c}^{\perp},\mathbf{c}\rangle=0\text{ for all }\mathbf{c}\in \mathcal{C}\right\},$$
where $\langle \cdot,\cdot  \rangle$ denotes the standard inner product of two codewords in a code. It is easy to know that $\mathcal{C}^\perp$ is an $[\mathfrak{n}, \mathfrak{n}-k]$ linear code over $\mathbb{F}_q$. If $\mathcal{C}\subset\mathcal{C}^{\perp}$, then $\mathcal{C}$ is called a \emph{self-orthogonal} code. Specially, if $\mathcal{C}$ satisfies  $\mathcal{C} =\mathcal{C}^{\perp}$, then it is said to be  \emph{self-dual}. Due to their nice applications in quantum codes, lattices, and linear complementary dual codes (LCD codes for short) \cite{LingS, M, Wan1}, self-orthogonal codes have become an interesting research topic, and several infinite families have been constructed in the literature \cite{Jin,Heng3,WangJ2,YWei}.

Given a linear code, a natural question is how to verify whether it is self-orthogonal or not. By definition, a linear code with generator matrix $G$ is self-orthogonal if and only if $GG^T$ is a zero matrix, with $G^T$ the transpose of $G$. However, this criterion loses its practical utility when $G$ is of a complicated form. When $q$ is an odd prime power, the following lemma simplifies the task of proving a $q$-ary linear code to be self-orthogonal.

 \begin{lemma}\cite{Wan1}\label{orthogonal}
Let $q$ be a power of an odd prime $p$ and $\mathcal{C}$ a $q$-ary
linear code. Then $\mathcal{C}$ is self-orthogonal if and only if $\mathbf{c} \cdot \mathbf{c} = 0$ for all $\mathbf{c} \in \mathcal{C}$.
\end{lemma}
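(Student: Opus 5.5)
The plan is to prove both directions directly from bilinearity of the standard inner product, using the polarization identity, which is available because $2$ is invertible in $\gf_q$ when $p$ is odd.

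\emph{Necessity} is immediate. If $\mathcal{C}$ is self-orthogonal, then $\mathcal{C}\subseteq\mathcal{C}^\perp$, so every $\mathbf{c}\in\mathcal{C}$ is orthogonal to every codeword of $\mathcal{C}$. In particular $\mathbf{c}\cdot\mathbf{c}=0$. This direction does not need $p$ odd.

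\emph{Sufficiency.} Assume $\mathbf{c}\cdot\mathbf{c}=0$ for all $\mathbf{c}\in\mathcal{C}$, and take arbitrary $\mathbf{x},\mathbf{y}\in\mathcal{C}$. Since $\mathcal{C}$ is linear, $\mathbf{x}+\mathbf{y}\in\mathcal{C}$. The standard inner product is symmetric and bilinear, so
\[
0=(\mathbf{x}+\mathbf{y})\cdot(\mathbf{x}+\mathbf{y})=\mathbf{x}\cdot\mathbf{x}+2\,\mathbf{x}\cdot\mathbf{y}+\mathbf{y}\cdot\mathbf{y}=2\,\mathbf{x}\cdot\mathbf{y}.
\]
Because $q$ is a power of an odd prime, $2\neq 0$ in $\gf_q$ and is therefore invertible. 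Hence $\mathbf{x}\cdot\mathbf{y}=0$. As $\mathbf{x},\mathbf{y}$ were arbitrary, every codeword lies in $\mathcal{C}^\perp$, i.e.\ $\mathcal{C}\subseteq\mathcal{C}^\perp$, and $\mathcal{C}$ is self-orthogonal.

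The only delicate point is the division by $2$, and this is exactly where the hypothesis that $p$ is odd enters. In characteristic $2$ the argument fails: for instance, the binary code spanned by $(1,1,0,0)$ and $(0,1,1,0)$ has every codeword of even weight, hence $\mathbf{c}\cdot\mathbf{c}=0$ for all $\mathbf{c}$, yet $(1,1,0,0)\cdot(0,1,1,0)=1\neq 0$. This is why the paper treats $q=2$ by separate criteria.
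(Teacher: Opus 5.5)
Your proof is correct. Necessity follows from $\mathcal{C}\subseteq\mathcal{C}^\perp$, and sufficiency follows from the polarization step $0=(\mathbf{x}+\mathbf{y})\cdot(\mathbf{x}+\mathbf{y})=2\,\mathbf{x}\cdot\mathbf{y}$ together with the invertibility of $2$ in odd characteristic. Your binary counterexample correctly shows that the oddness hypothesis is needed.

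The paper does not prove this lemma itself; it quotes it from the literature, and your argument is the standard proof.
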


If all codewords in $\mathcal{C}$ have weights divisible by an integer $\Delta$, then $\mathcal{C}$ is said to be $\Delta$-divisible. In coding theory, divisible codes and self-orthogonal codes are closely related. For the binary and ternary cases, this relationship is characterized as follows.
 
\begin{lemma}\cite[Theorems 1.4.8 and 1.4.10]{Huff}\label{lem-binary}
Let $\mathcal{C}$ be a linear code over $\gf_p$. For $p=2$, if every codeword of $\mathcal{C}$ has weight divisible by four, then $\mathcal{C}$ is self-orthogonal. For $p=3$, $\mathcal{C}$ is self-orthogonal if and only if every codeword of $\mathcal{C}$ has weight divisible by three.
\end{lemma}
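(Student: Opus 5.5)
The argument rests on two elementary identities that tie the inner product to Hamming weights over the prime fields $\gf_2$ and $\gf_3$. Over $\gf_2$, for any $\mathbf{x},\mathbf{y}\in\gf_2^{\mathfrak{n}}$,
\begin{equation*}
\wt(\mathbf{x}+\mathbf{y})=\wt(\mathbf{x})+\wt(\mathbf{y})-2\,|\mathrm{supp}(\mathbf{x})\cap\mathrm{supp}(\mathbf{y})|,
\end{equation*}
and $\mathbf{x}\cdot\mathbf{y}\equiv |\mathrm{supp}(\mathbf{x})\cap\mathrm{supp}(\mathbf{y})| \pmod 2$. Over $\gf_3$, every nonzero entry satisfies $a^2=1$, so $\mathbf{c}\cdot\mathbf{c}=\wt(\mathbf{c}) \bmod 3$.

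\emph{The case $p=2$.} Assume every codeword of $\mathcal{C}$ has weight divisible by $4$, and let $\mathbf{x},\mathbf{y}\in\mathcal{C}$. Then $\mathbf{x}+\mathbf{y}\in\mathcal{C}$, so the displayed identity gives $2|\mathrm{supp}(\mathbf{x})\cap\mathrm{supp}(\mathbf{y})|\equiv 0\pmod 4$. Hence the intersection size is even, and therefore $\mathbf{x}\cdot\mathbf{y}=0$. This shows $\mathcal{C}\subseteq\mathcal{C}^\perp$.

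\emph{The case $p=3$.} Here I would invoke Lemma~\ref{orthogonal}, which applies since $3$ is odd. It says $\mathcal{C}$ is self-orthogonal if and only if $\mathbf{c}\cdot\mathbf{c}=0$ for all $\mathbf{c}\in\mathcal{C}$. Because nonzero elements of $\gf_3$ are $\pm1$, we have $\mathbf{c}\cdot\mathbf{c}=\sum_{i\in\mathrm{supp}(\mathbf{c})}c_i^2=\wt(\mathbf{c})$ in $\gf_3$. So $\mathbf{c}\cdot\mathbf{c}=0$ holds exactly when $3\mid\wt(\mathbf{c})$, which gives both directions of the equivalence at once.

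If one prefers not to rely on Lemma~\ref{orthogonal}, its needed direction follows from polarization, since $2$ is invertible in $\gf_3$:
\begin{equation*}
2\,\mathbf{x}\cdot\mathbf{y}=(\mathbf{x}+\mathbf{y})\cdot(\mathbf{x}+\mathbf{y})-\mathbf{x}\cdot\mathbf{x}-\mathbf{y}\cdot\mathbf{y}.
\end{equation*}
The main subtlety is the binary case. There $\mathbf{c}\cdot\mathbf{c}=\wt(\mathbf{c})\bmod 2$ alone does not yield orthogonality of distinct codewords, because polarization fails in characteristic $2$. This is why the stronger hypothesis of divisibility by $4$, combined with the support-intersection identity, is required.
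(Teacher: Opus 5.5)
Your proof is correct and follows the standard argument. The paper gives no proof of its own; it only cites Huffman--Pless. Their proofs of Theorems 1.4.8 and 1.4.10 use the same ingredients you do: the identity $\wt(\mathbf{x}+\mathbf{y})=\wt(\mathbf{x})+\wt(\mathbf{y})-2|\mathrm{supp}(\mathbf{x})\cap\mathrm{supp}(\mathbf{y})|$ in the binary case, and $\mathbf{c}\cdot\mathbf{c}\equiv\wt(\mathbf{c})\pmod 3$ combined with polarization in the ternary case.
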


A binary linear code is said to be \emph{doubly-even} if the weights of all its codewords are divisible by four, and \emph{singly-even} if they are all divisible by two but not by four.
By Lemma \ref{lem-binary}, a doubly-even binary linear code must be self-orthogonal. However, the same does not hold for singly-even binary linear codes.

If the all-1 codeword $\mathbf{1}$ is contained in $\mathcal{C}$ and  $q$ is a power of an odd prime, Li and Heng established a sufﬁcient condition for a $q$-ary linear code to be self-orthogonal as follows.

\begin{lemma}\cite{LiX}\label{lem-self}
Let $q$ be a power of an odd prime $p$. Then any $p$-divisible linear code containing the all-1 vector over the finite field $\gf_q$ is self-orthogonal. More generally, any $p$-divisible linear code of length $n$ over $\gf_q$ containing a codeword of
weight $n$ is monomially equivalent to a self-orthogonal
code of length $n$ over $\gf_q$.
\end{lemma}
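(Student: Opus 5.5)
By Lemma \ref{orthogonal}, since $p$ is odd, it suffices to show that $\mathbf{c}\cdot\mathbf{c}=0$ for every $\mathbf{c}\in\mathcal{C}$. I would not compute $\mathbf{c}\cdot\mathbf{c}$ directly. Instead I would use that $\mathcal{C}$ is closed under adding multiples of $\mathbf{1}$, and then count coordinates by value.

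Fix $\mathbf{c}=(c_1,\dots,c_n)\in\mathcal{C}$. For each $a\in\gf_q$ let $N_a=|\{i: c_i=a\}|$. Since $\mathbf{1}\in\mathcal{C}$, the vector $\mathbf{c}-a\mathbf{1}$ lies in $\mathcal{C}$ for every $a\in\gf_q$. Its weight is $n-N_a$, and $p$-divisibility gives $n-N_a\equiv 0 \pmod p$. The zero codeword has weight $0$, and $\mathbf{1}$ has weight $n$, so $p\mid n$. Hence $N_a\equiv 0\pmod p$ for all $a$. Now
$$\mathbf{c}\cdot\mathbf{c}=\sum_{i=1}^n c_i^2=\sum_{a\in\gf_q}N_a a^2 .$$
The integer $N_a$ acts on $\gf_q$ through its residue in $\gf_p$, which is $0$, so every term vanishes and $\mathbf{c}\cdot\mathbf{c}=0$. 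This proves the first claim.

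For the general claim, let $\mathbf{u}\in\mathcal{C}$ have weight $n$, so every $u_i\neq 0$. Apply the monomial map $\mathbf{x}\mapsto(x_1u_1^{-1},\dots,x_nu_n^{-1})$. It preserves Hamming weights, so the image code $\mathcal{C}'$ is still $p$-divisible, and it sends $\mathbf{u}$ to $\mathbf{1}$. By the first part, $\mathcal{C}'$ is self-orthogonal, and $\mathcal{C}'$ is monomially equivalent to $\mathcal{C}$.

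I expect no serious obstacle; the only point needing care is that each count $N_a$ is divisible by $p$. This needs both that $\mathbf{c}-a\mathbf{1}\in\mathcal{C}$ for every $a$ and that $p\mid n$, which comes from $\mathbf{1}\in\mathcal{C}$.
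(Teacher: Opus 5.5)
Your proof is correct and complete. The paper gives no proof of this lemma; it quotes it from \cite{LiX}, so there is no internal argument to compare against. Your argument is a self-contained justification:
\begin{itemize}
\item From $\wt(\mathbf{c}-a\mathbf{1})=n-N_a$ and $p\mid n$, every value class $N_a=|\{i:c_i=a\}|$ of a codeword has size divisible by $p$.
\item Hence $\mathbf{c}\cdot\mathbf{c}=\sum_a N_a a^2=0$ for every $\mathbf{c}\in\mathcal{C}$, and Lemma~\ref{orthogonal} then gives self-orthogonality.
\item The general case reduces to this one by the weight-preserving diagonal rescaling $x_i\mapsto x_iu_i^{-1}$, which sends the full-weight codeword $\mathbf{u}$ to $\mathbf{1}$.
\end{itemize}
You are right that oddness of $p$ enters only through Lemma~\ref{orthogonal}, and it is genuinely needed there. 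The binary even-weight $[4,3,2]$ code contains $\mathbf{1}$ and is $2$-divisible, so $\mathbf{c}\cdot\mathbf{c}=0$ for all its codewords. Yet $(1,1,0,0)\cdot(0,1,1,0)=1$, so it is not self-orthogonal.
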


 Thanks to Lemma \ref{lem-self}, several families of  $p$-divisible self-orthogonal codes were constructed in \cite{LiX}. 
 Using Walsh spectral values, Jin, Li, and Qu recently established necessary and sufficient conditions for the self-orthogonality of codes arising from the first generic construction \cite{Jin}. Recently, Li, Chen, Jin and Qu derived a necessary and sufficient condition for a binary linear code to be self-orthogonal, yielding a complete characterization of the singly-even self-orthogonal codes produced by the first generic construction \cite{Li}. 

 \subsection{Minimal linear codes }
For any two  codewords $\mathbf{a}$ and $\mathbf{b} \in \mathcal{C}$, we say that  $\mathbf{b}$  covers $\mathbf{a}$ if $\mbox{supp}(\mathbf{a})\subseteq \mbox{supp}(\mathbf{b})$. Obviously, if $\mathbf{b}$ covers $\mathbf{a}$, then $\mathbf{b}$ covers $c\mathbf{a}$ for all $c \in \gf_{q}$. 
A nonzero codeword \(\mathbf{a} \in \mathcal{C}\) is defined to be minimal if it covers no codewords in \(\mathcal{C}\) other than scalar multiples \(c\mathbf{a}\) for \(c \in \gf_{q}\). A linear code \(\mathcal{C}\) is said to be minimal when all of its nonzero codewords possess this property.
Minimal linear codes attract considerable interest due to their important applications in secret sharing schemes \cite{DingK}.

A natural problem is to determine whether a given linear code is minimal.
Ashikhmin and Barg proposed a well-known sufficient condition for linear codes to be minimal \cite{Ashikhmin}, which is stated below.

\begin{lemma} \cite{Ashikhmin}\label{minimal}
Let $\mathcal{C}$ be a linear code  over $\gf_{q}$. Then it is minimal if 
$$\frac{\wt{_{\min}}}{\wt{_{\max}}}>\frac{q-1}{q},$$
where \(\wt_{\min}\) and \(\wt_{\max}\) denote the minimum and maximum nonzero Hamming weights of the code \(\mathcal{C}\), respectively.
\end{lemma}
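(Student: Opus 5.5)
We argue by contraposition. Suppose $\mathbf{a},\mathbf{b}\in\mathcal{C}$ are nonzero and $\mbox{supp}(\mathbf{a})\subseteq\mbox{supp}(\mathbf{b})$, but $\mathbf{a}$ is not a scalar multiple of $\mathbf{b}$. We will show that some codeword then has weight small enough to violate the hypothesis $\wt_{\min}/\wt_{\max}>(q-1)/q$. The idea is to average the weights of the codewords $\mathbf{b}-c\mathbf{a}$ over $c\in\gf_q^*$.

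\textbf{Averaging step.} Fix a coordinate $i\in\mbox{supp}(\mathbf{b})$ and count the values $c\in\gf_q^*$ for which $b_i-ca_i=0$:
\begin{itemize}
\item if $a_i\neq 0$, exactly one $c$ works, namely $c=b_i/a_i$;
\item if $a_i=0$, no $c$ works.
\end{itemize}
Coordinates outside $\mbox{supp}(\mathbf{b})$ lie outside $\mbox{supp}(\mathbf{a})$ as well, so they stay zero for every $c$. Summing over all coordinates gives
$$\sum_{c\in\gf_q^*}\wt(\mathbf{b}-c\mathbf{a})=(q-1)\wt(\mathbf{b})-\wt(\mathbf{a}).$$

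\textbf{Nonzero step.} Since $\mathbf{a}$ is not a multiple of $\mathbf{b}$, $\mathbf{b}$ is also not a multiple of $\mathbf{a}$. Hence each $\mathbf{b}-c\mathbf{a}$ is a nonzero codeword, and each term in the sum is at least $\wt_{\min}$. Using $\wt(\mathbf{b})\le\wt_{\max}$ and $\wt(\mathbf{a})\ge\wt_{\min}$, we get
$$(q-1)\wt_{\min}\le(q-1)\wt_{\max}-\wt_{\min}.$$
This rearranges to $q\,\wt_{\min}\le(q-1)\wt_{\max}$, that is, $\wt_{\min}/\wt_{\max}\le (q-1)/q$, contradicting the hypothesis. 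Therefore every nonzero codeword is minimal, and $\mathcal{C}$ is minimal.

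The step that needs the most care is the coordinatewise count behind the identity, specifically the claim that no cancellation happens off $\mbox{supp}(\mathbf{b})$. This is exactly where the covering hypothesis is used. The rest is elementary.
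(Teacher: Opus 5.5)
Your proof is correct and complete. The coordinatewise count giving $\sum_{c\in\gf_q^*}\wt(\mathbf{b}-c\mathbf{a})=(q-1)\wt(\mathbf{b})-\wt(\mathbf{a})$ is right. The covering hypothesis makes every coordinate outside $\mbox{supp}(\mathbf{b})$ vanish. For $i\in\mbox{supp}(\mathbf{a})$, $b_i\neq 0$ ensures the unique root $c=b_i/a_i$ lies in $\gf_q^*$. Each $\mathbf{b}-c\mathbf{a}$ is nonzero because $\mathbf{a}$ is not a scalar multiple of $\mathbf{b}$, and the final inequality rearranges correctly to $q\,\wt_{\min}\le(q-1)\wt_{\max}$.

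The paper gives no proof of this lemma; it only cites \cite{Ashikhmin}, and your averaging argument is the standard one. Your identity is exactly the equality in Lemma~\ref{minimal1}, after interchanging $\mathbf{a}$ and $\mathbf{b}$ and setting $z=-c$. So you have in effect derived the Ashikhmin--Barg bound from the easy ``covering implies equality'' half of that criterion.
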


The above Ashikhmin–Barg sufficient condition is highly useful for constructing minimal linear codes. However, it is merely sufficient but not necessary.
Subsequently, Ding, Heng and Zhou established necessary and sufficient conditions for binary and general $q$-ary linear codes to be minimal \cite{Ding1,Heng2}.

\begin{lemma} \cite{Ding1,Heng2}\label{minimal1}
A linear code $\mathcal{C}$ over $\gf_{q}$ is minimal if and only if 
$$\sum_{z\in \gf_{q}^*}\wt(\mathbf{a}+z\mathbf{b})\neq (q-1)\wt(\mathbf{a})-\wt(\mathbf{b}).$$
for any $\gf_q$-linearly independent codewords $\mathbf{a}, \mathbf{b} \in \mathcal{C}.$
\end{lemma}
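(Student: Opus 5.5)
The plan is to prove the equivalence through a single identity. For fixed codewords $\mathbf{a},\mathbf{b}$ I will compute the quantity $\sum_{z\in\gf_q^*}\wt(\mathbf{a}+z\mathbf{b})-\bigl((q-1)\wt(\mathbf{a})-\wt(\mathbf{b})\bigr)$ exactly, in terms of the supports $A=\mbox{supp}(\mathbf{a})$ and $B=\mbox{supp}(\mathbf{b})$. The goal is to show that this difference equals $q\,|B\setminus A|$. Once that holds, the equality in the lemma occurs exactly when $\mbox{supp}(\mathbf{b})\subseteq\mbox{supp}(\mathbf{a})$, that is, when $\mathbf{a}$ covers $\mathbf{b}$.

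The computation is done coordinate by coordinate. I write $\wt(\mathbf{a}+z\mathbf{b})=\sum_i [a_i+zb_i\neq 0]$ and swap the two sums. For a fixed coordinate $i$, the number of $z\in\gf_q^*$ with $a_i+zb_i\neq0$ depends on the case:
\begin{itemize}
\item if $b_i=0$, the count is $q-1$ when $a_i\neq 0$ and $0$ when $a_i=0$;
\item if $b_i\neq 0$ and $a_i=0$, the count is $q-1$;
\item if $b_i\neq0$ and $a_i\neq0$, the count is $q-2$, since exactly one nonzero $z$, namely $z=-a_i/b_i$, is excluded.
\end{itemize}
Summing over coordinates gives
$$\sum_{z\in\gf_q^*}\wt(\mathbf{a}+z\mathbf{b})=(q-1)|A\setminus B|+(q-1)|B\setminus A|+(q-2)|A\cap B|.$$
I then expand $(q-1)\wt(\mathbf{a})-\wt(\mathbf{b})=(q-1)\bigl(|A\setminus B|+|A\cap B|\bigr)-|B\setminus A|-|A\cap B|$. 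Subtracting, the $|A\setminus B|$ terms cancel, and the $|A\cap B|$ terms cancel as well, since $(q-2)-(q-1)+1=0$. What remains is exactly $q|B\setminus A|$.

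Next comes the translation to minimality. The code $\mathcal{C}$ fails to be minimal exactly when some nonzero $\mathbf{a}$ covers a codeword $\mathbf{b}$ that is not of the form $c\mathbf{a}$. Such a $\mathbf{b}$ is nonzero, because $\mathbf{0}=0\cdot\mathbf{a}$. So "not of the form $c\mathbf{a}$" together with $\mathbf{a}\neq\mathbf{0}$ is the same as $\mathbf{a},\mathbf{b}$ being $\gf_q$-linearly independent. Combined with the identity, non-minimality is therefore equivalent to the existence of a linearly independent pair $\mathbf{a},\mathbf{b}$ for which the stated equality holds. Taking the contrapositive gives the lemma.

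The only real care point is this bookkeeping of the definition of covering. One has to check that linear independence is precisely the right hypothesis: it must exclude the trivial coverings by scalar multiples, and it must also exclude the case $\mathbf{a}=\mathbf{0}$. The counting step itself is routine.
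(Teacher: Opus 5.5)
Your proof is correct, and it is essentially the standard argument for this lemma; the paper does not prove it but quotes it from Ding--Heng--Zhou. Your coordinatewise count shows that $\sum_{z\in\gf_q^*}\wt(\mathbf{a}+z\mathbf{b})$ exceeds $(q-1)\wt(\mathbf{a})-\wt(\mathbf{b})$ by exactly $q\,|\mbox{supp}(\mathbf{b})\setminus\mbox{supp}(\mathbf{a})|$, so equality means precisely that $\mathbf{a}$ covers $\mathbf{b}$, and your translation of ``covers only its scalar multiples'' into the linear-independence hypothesis is handled correctly.
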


Since minimal linear codes violating the Ashikhmin–Barg condition are rarer than those satisfying it, their construction has become a worthwhile research direction.
A number of related constructions have been presented in the literature; see, e.g., \cite{Jin, Ding1, Heng2, Mesnager, Tao, XuG}.

\subsection{Motivations and contributions of this paper}
Since self-orthogonal codes and minimal codes have important applications, many researchers have devoted significant effort to constructing such codes.
Constructing self-orthogonal codes hinges on establishing useful criteria that characterize the self-orthogonality of a linear code.
Although some criteria have been proposed in \cite{Huff, Li, LiX, Jin}, they rely heavily on either the divisibility of codes or the Walsh spectra of functions.
This motivates us to seek new criteria for self-orthogonal codes.
Furthermore, as pointed out by Jin et al. \cite{Jin}, constructing linear codes that possess both minimality and self-orthogonality is a very interesting research topic.
To the best of our knowledge, only a few constructions of linear codes that violate the Ashikhmin–Barg condition and simultaneously satisfy self-orthogonality have been reported \cite{Jin, Li}. This motivates us to provide new constructions of linear codes that possess both minimality and self-orthogonality.
These two motivations lead us to make the following contributions:
\begin{enumerate}
\item \textbf{Criteria and constructions for self-orthogonal codes via the defining-set method:}
For linear codes based on the defining-set construction, we establish new criteria for self-orthogonality. 
More specifically, for $q>3$, we show that the code $\mathcal{C}_D$ is self-orthogonal whenever the defining set $D$ is $G$-invariant, where $G \subseteq \mathbb{F}_q^*$ and $|G|>2$.
For $q=2,3$, we characterize the self-orthogonality of $\mathcal{C}_D$ using certain character sums.
Compared with those in \cite{Huff, Li, LiX, Jin}, these new criteria have the advantage of not relying on code divisibility or the Walsh spectra of functions.
Using $\mathbb{F}_q^*$-invariant partial difference sets, we construct two families of linear codes and determine their weight distributions. We further prove that these codes are self-orthogonal under additional conditions. Consequently, they yield pure quantum codes that are at least almost optimal with respect to the quantum Hamming bound.

\item \textbf{Criteria and constructions for self-orthogonal codes via projective codes:}
We present a new construction of self-orthogonal codes by the action of a multiplicative subgroup $G \subset \mathbb{F}_q^*$ $(|G|>2)$ on the columns of an arbitrary projective linear code. This construction yields self-orthogonal codes with flexible parameters. We then derive quantum codes from their augmented codes. In particular, choosing appropriate projective codes leads to optimal quantum codes with a high degree of parametric flexibility.

\item \textbf{Criteria and constructions for self-orthogonal minimal codes via the characteristic function method:}
We employ the characteristic function method to construct linear codes and establish criteria for their self-orthogonality. Based on these criteria, we construct several new classes of self-orthogonal minimal codes that violate the Ashikhmin–Barg condition, using vectorial dual-bent functions and $s$-plateaued functions.
\end{enumerate}

\section{Preliminaries}\label{sec1}
This section establishes the notation employed in this paper and provides the necessary preliminary knowledge for the subsequent discussion.

\subsection{Notation}

We adopt the following notation throughout this paper:
\begin{itemize}
  \item $p$ is a prime, and $q = p^e$ with $e$ a positive integer.
  \item $n$ and $e$ are positive integers such that $e \mid n$. Let $r := \frac{n}{e}$.
  \item $\epsilon := \sqrt{(-1)^{\frac{p-1}{2}}}$ for an odd prime $p$.
  \item $\zeta_p := e^{\frac{2\pi\sqrt{-1}}{p}}$ is a complex primitive $p$-th root of unity.
  \item $\mathbb{F}_q$ denotes the finite field with $q$ elements, and $\mathbb{F}_q^* := \mathbb{F}_q \setminus \{0\}$.
  \item $\mathbb{F}_q^n$ is the vector space of $n$-tuples over $\mathbb{F}_q$.
  \item $V_n^{(p)}$ is an $n$-dimensional vector space over $\mathbb{F}_p$. In order to define a convenient inner product, we view $V_n^{(p)}$ as a vector space over $\mathbb{F}_{p^e}$ of dimension $\frac{n}{e}$. Concretely, we fix an $\mathbb{F}_{p^e}$-basis $\{\gamma_1,\dots,\gamma_{\frac{n}{e}}\}$ of $V_n^{(p)}$, so that every $x \in V_n^{(p)}$ can be uniquely written as
$$
  x = \sum_{i=1}^{\frac{n}{e}} x_i \gamma_i, \qquad x_i \in \mathbb{F}_{p^e}.
$$
  We then identify $x$ with the tuple $(x_1,\dots,x_{\frac{n}{e}}) \in \mathbb{F}_{p^e}^{\frac{n}{e}}$.
  \item For $x = (x_1,\dots,x_{\frac{n}{e}})$ and $y = (y_1,\dots,y_{\frac{n}{e}})$ in $\mathbb{F}_{p^e}^{\frac{n}{e}}$, we define the inner product
$$
  \langle x, y \rangle_{n/e} := x \cdot y = \sum_{i=1}^{\frac{n}{e}} x_i y_i \in \mathbb{F}_{p^e},
$$
  and then the $\mathbb{F}_p$-valued inner product
$$
  \langle x, y \rangle_n := \operatorname{Tr}_{p^e/p}\bigl( \langle x, y \rangle_{n/e} \bigr),
$$
  where $\operatorname{Tr}_{p^e/p} : \mathbb{F}_{p^e} \to \mathbb{F}_p$ is the trace function defined by $\operatorname{Tr}_{p^e/p}(x)=\sum_{i=0}^{e-1}x^{p^i}$ for $x\in \mathbb{F}_{p^e} $.
  It is easy to verify that $\langle \cdot,\cdot \rangle_n$ is $\mathbb{F}_p$-bilinear and satisfies $\langle x, a y \rangle_n = \langle a x, y \rangle_n$ for all $a \in \mathbb{F}_{p^e}$.
\item For a function \( F : V_n^{(p)} \to V_m^{(p)} \) and any \( A \subseteq V_m^{(p)} \), set \( D_{F,A} = \{ x \in V_n^{(p)} : F(x) \in A \} \). In the case \( A = \{a\} \), we write \( D_{F,\{a\}} \) for brevity.
  \item For any set $A \subseteq V_n^{(p)}$ and any $b \in V_n^{(p)}$, we denote $\chi_b(A) := \sum_{x \in A} \chi_b(x)$, where $\chi_b$ is the additive character of $V_n^{(p)}$ defined by
  \[
  \chi_b(x) = \zeta_p^{\langle b, x \rangle_n}, \qquad x \in V_n^{(p)}.
  \]
  \item For $y \in \mathbb{F}_p$, $\varphi_y$ denotes an additive character of $\mathbb{F}_p$.
  \item $\eta_e$ denotes the quadratic multiplicative character of $\mathbb{F}_{p^e}$.
  \item $SQ$: the set of all squares in $\gf_{p^m}^*$ for an integer $m\geq 1$.
  \item $NSQ$: the set of all  nonsquares in $\gf_{p^m}^*$ for an integer $m\geq 1$.
\end{itemize}

\subsection{Group actions and orbits}

\begin{definition}
Let $G$ be a group and $X$ be a non-empty set. An action of $G$ on $X$ is a map 
\begin{eqnarray*}
G\times X & \rightarrow & X\\
(g, x) & \longmapsto & g\cdot x
\end{eqnarray*}
satisfying the following two axioms:
\begin{itemize}
\item{For every $x\in X$, $\mathbf{e}\cdot x=x$, where $\mathbf{e}$ is the identity element of $G$.}
\item{ For all $g, h \in G$ and $x \in X$, $(g\cdot h)\cdot x=g\cdot(h\cdot x)$.}  
\end{itemize}
\end{definition}

\begin{definition}
For an element $x \in X$, the orbit of $x$ is defined as 
\begin{eqnarray*}
G\cdot x=\{g\cdot x| g\in G\} \subseteq X.
\end{eqnarray*}
Orbits form the equivalence classes of the relation $x\sim y$ if and only if $y=g\cdot  x$ for $g \in G$. 
\end{definition}

From the definition, the orbit containing an element $x$ is the subset 
\begin{eqnarray*}
\operatorname{Orb}(x)=\{g\cdot x| g \in G\}.
\end{eqnarray*}
Clearly, the set $X$ is the union of all distinct orbits, i.e.,
\begin{eqnarray*}
X=\bigcup_{x}\operatorname{Orb}(x),
\end{eqnarray*}
where $x$ runs over a set of representatives of the distinct orbits.

If a subset $X'\subseteq X$ satisfies $g\cdot x \in X'$ for each $g \in G$ and $x \in X'$, then we say that $X'$ is $G$-\emph{invariant}. In particular, if a nonempty subset $X'$ is $G$-invariant, then $X'$ can be partitioned into a union of orbits.

%\begin{definition}
%Let $G$ be a group, $X$ and let $X'$ be two non-empty sets. Suppose $G$ acts on $X$ and also acts on $X'$. If there exists a bijection $f: X\rightarrow X'$ such that 
%\begin{eqnarray*}
%f(g(x))=g(f(x))
%\end{eqnarray*}
%for all $g \in G$ and $x \in X$, then the two actions are said to equivalent.
%\end{definition}

\subsection{Cyclotomic field}
The following lemma gives some results on the cyclotomic field $\mathbb{Q}(\zeta_p)$.
\begin{lemma}\label{lem-cyclo}\cite{Ireland}
Let $K=\mathbb{Q(}\zeta_p)$ denote the $p$-th cyclotomic field over the rational number field $\mathbb{Q}$, where $p$ is an odd prime. Then the following hold.
\begin{itemize}
  \item The ring of integers in $K$ is $O_K=\mathbb{Z}(\zeta_p)$ and $\{\zeta_p^i: 1 \leq i \leq p-1\}$ is an integral basis of $O_K$, where $\zeta_p$ is the primitive $p$-th root of complex unity.
  \item The field extension $K/\mathbb{Q}$ is Galois of degree $p-1$ and Galois group $\text{Gal}(K/\mathbb{Q})=\{\sigma_a: a \in \gf_p^*\},$
  where the automorphism $\sigma_a$ of $K$ is defined by $\sigma_a(\zeta_p)=\zeta_p^a$.
  \item The field $K$ has a unique quadratic subfield $L=\mathbb{Q}(\sqrt{p^*})$. For $1 \leq a \leq p-1$, $\sigma_a(\sqrt{p^*})=\eta_0(a)\sqrt{p^*}$, where $p^*=(-1)^{\frac{p-1}{2}}p$ and $\eta_0$ is the quadratic multiplicative character of $\gf_p$. Hence, the Galois group $\text{Gal}(L/\mathbb{Q})=\{1, \sigma_\gamma\}$, where $\gamma$ is a nonsquare in $\gf_p^*$.
\end{itemize}
\end{lemma}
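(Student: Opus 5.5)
This lemma collects three classical facts about $K=\mathbb{Q}(\zeta_p)$, so the proof reduces each item to a standard computation. First I establish the degree, then the Galois group, then the integral basis, and finally the quadratic subfield through Gauss sums.

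\textbf{Degree and integral basis.} The minimal polynomial of $\zeta_p$ over $\mathbb{Q}$ is $\Phi_p(x)=x^{p-1}+\cdots+x+1$. It is irreducible by Eisenstein's criterion applied to $\Phi_p(x+1)$ at the prime $p$. Hence $[K:\mathbb{Q}]=p-1$, and $1,\zeta_p,\dots,\zeta_p^{p-2}$ is a $\mathbb{Q}$-basis of $K$.

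Since $1=-(\zeta_p+\cdots+\zeta_p^{p-1})$, the sets $\{\zeta_p^i:1\le i\le p-1\}$ and $\{\zeta_p^i:0\le i\le p-2\}$ span the same $\mathbb{Z}$-module. So it suffices to show $O_K=\mathbb{Z}[\zeta_p]$, and I use the standard argument. The element $\pi=1-\zeta_p$ satisfies $(\pi)^{p-1}=(p)$, and $p$ is totally ramified. The discriminant of the basis $\{\zeta_p^i:0\le i\le p-2\}$ is $\pm p^{p-2}$. So for $\alpha\in O_K$, the index $[O_K:\mathbb{Z}[\zeta_p]]$ is a power of $p$. One then shows $p$ does not divide the index by expanding $\alpha$ in powers of $\pi$ and using $v_\pi$-valuations. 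This is the main obstacle; in the paper I would simply cite \cite{Ireland} for this step.

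\textbf{Galois group.} $K$ is the splitting field of $x^p-1$, so it is normal, and it is separable since we are in characteristic $0$. Any automorphism sends $\zeta_p$ to another root of $\Phi_p$, namely some $\zeta_p^a$ with $a\in\mathbb{F}_p^*$. This gives an injective homomorphism $\mathrm{Gal}(K/\mathbb{Q})\to\mathbb{F}_p^*$. It is surjective because both groups have order $p-1$. Thus $\mathrm{Gal}(K/\mathbb{Q})=\{\sigma_a\}$, and this group is cyclic.

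\textbf{Quadratic subfield.} A cyclic group of order $p-1$ has a unique subgroup of index $2$, namely the squares $\{\sigma_a:\eta_0(a)=1\}$. By the Galois correspondence, $K$ therefore has a unique quadratic subfield $L$.

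To identify $L$, consider the quadratic Gauss sum $g=\sum_{a\in\mathbb{F}_p^*}\eta_0(a)\zeta_p^a$. The standard computation $g^2=\eta_0(-1)p=p^*$ shows $\sqrt{p^*}=\pm g\in K$, so $L=\mathbb{Q}(\sqrt{p^*})$. Finally, reindexing the sum gives $\sigma_b(g)=\sum_a\eta_0(a)\zeta_p^{ab}=\eta_0(b)g$, which yields $\sigma_b(\sqrt{p^*})=\eta_0(b)\sqrt{p^*}$. Consequently the restriction of $\sigma_\gamma$ to $L$, for $\gamma$ a nonsquare, is the nontrivial automorphism of $L$.
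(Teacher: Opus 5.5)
Your proposal is correct and follows the standard textbook route: Eisenstein's criterion on $\Phi_p(x+1)$, the injection $\mathrm{Gal}(K/\mathbb{Q})\hookrightarrow\mathbb{F}_p^*$, the discriminant and $\pi$-adic valuation argument for $O_K=\mathbb{Z}[\zeta_p]$, and the Gauss sum $g$ with $g^2=p^*$ and $\sigma_b(g)=\eta_0(b)g$. The paper gives no proof of this lemma and only cites \cite{Ireland}, so deferring the integral-basis step to that reference is fully consistent with the paper.
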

According to Lemma \ref{lem-cyclo}, we have
\begin{eqnarray}\label{eq-sigma}
\sigma_a(\zeta_p^b)=\zeta_p^{ab} \mbox{ and } \sigma_a(\sqrt{p^*}^e)=\eta_0^e(a)\sqrt{p^*}^e.
\end{eqnarray}

\subsection{Characters and Gaussian sums over finite fields}
An \emph{additive character} $\lambda$ of $\gf_{q}$ is defined as a homomorphism from $\gf_{q}$ onto the multiplicative group  of $p$-th roots of complex unity. It is known that all additive characters of $\gf_q$ are given by $\{\lambda_a:a\in \gf_q\}$, where
$\lambda_{a}(x)=\zeta_p^{\tr_{q/p}(ax)}$ for  $x\in \gf_{q}$. 
In particular, $\lambda_0$ is called the \emph{trivial} additive character of $\gf_q$, and $\lambda_1$ is referred to as the \emph{canonical} additive character of $\gf_q$. 
The conjugate $\overline{\lambda}$ of $\lambda$ is defined by $\overline{\lambda}(x)=\overline{\lambda(x)}=\lambda(-x)$ for all $x\in \gf_{q}$. 

Let $\gf_{q}^{*}=\langle\alpha\rangle$. A \emph{multiplicative character} of $\gf_{q}$ is defined as the homomorphic function $\psi_j$ from $\gf_{q}^*$ to the multiplicative group  of $(q-1)$-th roots of complex unity.
It is known that all multiplicative characters of $\gf_q$ are given by $\{\psi_j:j=0, 1, \cdots, q-2\}$, where $\psi_j(\alpha^s)=\zeta_{q-1}^{sj}$ for $0\leq s\leq q-2$. 
 If $q=p^e$ is odd, then $\eta_e:=\psi_{\frac{q-1}{2}}$ is called the \emph{quadratic multiplicative character} of $\gf_{q}$. Besides, $\psi_0$ is called the trivial multiplicative character of $\gf_q$. 
For convenience, we extend the definition of $\psi$ by $\psi(0)=0$ if $\psi$ is nontrivial and $\psi(0)=1$ if $\psi$ is trivial.

For an additive character $\lambda$ and a multiplicative character $\psi$ of $\gf_{q}$, the \emph{Gaussian sum} $G(\psi,\lambda)$ over $\gf_{q}$ is defined by 
\begin{eqnarray*}
G(\psi,\lambda)=\sum_{x\in \gf_{q}^{*}}\psi(x)\lambda(x).
\end{eqnarray*}
Specially, $G(\eta,\lambda)$ is called the quadratic Gaussian sum for $\lambda\neq\lambda_{0}$.

\begin{lemma}\cite{Lidl}\label{quadGuasssum1}
The properties of Gaussian sums are given as follows:
\begin{enumerate}
\item $G(\psi,\lambda_{ab})=\overline{\psi(a)}G(\psi,\lambda_{b})$ for $a\in\gf_{q}^{*}$ and $b\in\gf_{q}$;
\item $G(\psi,\overline{\lambda})=\psi(-1)G(\psi,\lambda)$;
\item $G(\overline{\psi},\lambda)=\psi(-1)\overline{G(\psi,\lambda)}$.
\end{enumerate}
\end{lemma}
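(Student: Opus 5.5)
The plan is to prove the three identities directly from the definition
\[
G(\psi,\lambda)=\sum_{x\in\gf_q^*}\psi(x)\lambda(x),
\]
using one substitution in the summation variable for each, together with the multiplicativity of $\psi$ and the fact that its values are roots of unity.

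\textbf{Part (1).} Write $\lambda_{ab}(x)=\zeta_p^{\tr_{q/p}(abx)}=\lambda_b(ax)$. Since $a\neq 0$, the map $x\mapsto ax$ is a bijection of $\gf_q^*$, so we may substitute $y=ax$. Multiplicativity gives $\psi(x)=\psi(a^{-1}y)=\psi(a^{-1})\psi(y)$, and $\psi(a^{-1})=\overline{\psi(a)}$ because $|\psi(a)|=1$. Hence
\[
G(\psi,\lambda_{ab})=\sum_{y\in\gf_q^*}\psi(a^{-1})\psi(y)\lambda_b(y)=\overline{\psi(a)}\,G(\psi,\lambda_b).
\]
If $b=0$, both sides are sums with $\lambda_0$ and the same substitution still works.

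\textbf{Part (2).} Write $\overline{\lambda}(x)=\lambda(-x)$. Substitute $y=-x$ and use $\psi(-y)=\psi(-1)\psi(y)$. This gives $G(\psi,\overline{\lambda})=\psi(-1)G(\psi,\lambda)$. Equivalently, it is Part (1) with $a=-1$, using $\overline{\psi(-1)}=\psi(-1)$ since $\psi(-1)=\pm1$.

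\textbf{Part (3).} Take complex conjugates termwise:
\[
\overline{G(\psi,\lambda)}=\sum_{x\in\gf_q^*}\overline{\psi}(x)\,\overline{\lambda}(x)=G(\overline{\psi},\overline{\lambda}).
\]
Applying Part (2) to the character $\overline{\psi}$ gives $G(\overline{\psi},\overline{\lambda})=\overline{\psi}(-1)G(\overline{\psi},\lambda)$. Since $\overline{\psi}(-1)=\psi(-1)=\pm1$, we can invert this factor to get $G(\overline{\psi},\lambda)=\psi(-1)\overline{G(\psi,\lambda)}$.

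There is no real obstacle. The only point needing care is that $\psi(-1)\in\{\pm1\}$, which holds because $\psi(-1)^2=\psi(1)=1$; this is what allows us to drop the conjugation in Parts (2) and (3).
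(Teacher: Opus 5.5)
Your proposal is correct: each identity follows from the substitution you describe, together with $\psi(a^{-1})=\overline{\psi(a)}$ and $\psi(-1)=\pm 1$. The paper gives no proof of its own and only cites Lidl--Niederreiter, whose proof is this same change of variables in the defining sum, so your approach coincides with the intended one.
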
 

\begin{lemma}[The quadratic Gaussian sums]\cite{Lidl}\label{quadGuasssum3}
Let $p$ be an odd prime and $q=p^e$. Let $\lambda_1$ be the canonical additive character of $\gf_{q}$. Then 
\begin{eqnarray*}
G(\eta_e,\lambda_1)&=&(-1)^{e-1}(\sqrt{-1})^{(\frac{p-1}{2})^2e}\sqrt{q}\\
 &=&\left\{
\begin{array}{lll}
(-1)^{e-1}\sqrt{q}    &   \mbox{for }p\equiv 1\pmod{4},\\
(-1)^{e-1}(\sqrt{-1})^{e}\sqrt{q}    &   \mbox{for }p\equiv 3\pmod{4}.
\end{array} \right.
\end{eqnarray*}
\end{lemma}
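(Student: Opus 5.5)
We will not attempt to evaluate the sum over $\gf_q$ directly. The plan is to reduce to the prime field and then lift.

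First we would establish the magnitude and the square over $\gf_p$. For the canonical character of $\gf_p$, write $g:=\sum_{x\in\gf_p^*}\eta_0(x)\zeta_p^{x}$. Expanding $g\,\overline{g}$ and using orthogonality of additive characters gives $|g|^2=p$. Lemma \ref{quadGuasssum1}(2) gives $\overline{g}=\eta_0(-1)g$. Together these yield $g^2=\eta_0(-1)p=p^*$, so $g=\pm\sqrt{p^*}$. In the notation of the paper, $\sqrt{p^*}=\epsilon\sqrt p$.

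Next comes the sign, which we expect to be the main obstacle. The claim is that $g=+\sqrt{p^*}$, i.e.\ $g=\sqrt p$ for $p\equiv1\pmod 4$ and $g=\sqrt{-1}\sqrt p$ for $p\equiv 3\pmod 4$. This is Gauss's classical sign determination and cannot be obtained from the algebraic manipulations above. We would use Schur's argument:
\begin{itemize}
\item First, $g=\sum_{x\in\gf_p}\zeta_p^{x^2}$, which is the trace of the Fourier matrix $F=(\zeta_p^{jk})_{0\le j,k<p}$.
\item One has $F^4=p^2I$, so the eigenvalues of $F$ lie in $\{\pm\sqrt p,\pm\sqrt{-1}\sqrt p\}$.
\item The multiplicities of these eigenvalues are pinned down by the trace of $F^2$, by $\det F$ (a Vandermonde product whose argument can be computed explicitly), and by the known value $|{\rm tr}\,F|^2=p$.
\end{itemize}
This forces ${\rm tr}\,F=\sqrt{p^*}$. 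Alternatively, since the lemma is quoted from \cite{Lidl}, we may simply invoke the theorem there for the case $e=1$.

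Then we lift to $\gf_{p^e}$. The quadratic character of $\gf_{p^e}$ is the lift $\eta_e=\eta_0\circ{\rm N}_{q/p}$, and the canonical character is $\lambda_1=\varphi_1\circ\tr_{q/p}$. The Davenport–Hasse lifting theorem therefore gives
$$G(\eta_e,\lambda_1)=(-1)^{e-1}g^{e}.$$
To prove this we would compare $L$-functions. Set $L(t)=\sum_{f}\eta_0(\text{coeff})\varphi_1(\text{coeff})t^{\deg f}$ over monic polynomials $f$ over $\gf_p$, with the usual multiplicativity. One gets $L(t)=1+gt$, and taking the logarithmic derivative extracts the sum over $\gf_{p^e}$ as $-(-g)^e$. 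Again, this is the standard citation-level fact.

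Finally we combine. We have $(-1)^{e-1}g^e=(-1)^{e-1}(\sqrt{p^*})^e$. If $p\equiv1\pmod4$, this equals $(-1)^{e-1}\sqrt q$. If $p\equiv3\pmod4$, it equals $(-1)^{e-1}(\sqrt{-1})^e\sqrt q$. Since $(\frac{p-1}{2})^2\equiv\frac{p-1}{2}\pmod 2$, both cases coincide with the unified expression $(-1)^{e-1}(\sqrt{-1})^{(\frac{p-1}{2})^2e}\sqrt q$.
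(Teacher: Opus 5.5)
Your proposal is correct. The paper gives no proof of this lemma; it quotes the result from \cite{Lidl}. The argument there has the same skeleton as yours: settle the prime-field case $g=\sqrt{p^*}$, then lift with the Davenport--Hasse theorem to get $G(\eta_e,\lambda_1)=(-1)^{e-1}g^{e}$.

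The one step where you genuinely diverge is the sign of $g$. You use Schur's eigenvalue count for the Fourier matrix. Your three constraints, namely $\operatorname{tr}F^2=p$, $|\operatorname{tr}F|^2=p$, and the argument of the Vandermonde determinant, do force $\operatorname{tr}F=\sqrt{p}$ or $\sqrt{-1}\sqrt{p}$ respectively. The cited reference instead compares $g$ with $\prod_{j=1}^{(p-1)/2}(\zeta_p^{2j-1}-\zeta_p^{-(2j-1)})$, whose square is also $p^*$. It excludes the opposite sign with an argument on a polynomial having $\zeta_p$ as a root, then reads off the value by counting negative sines. Schur's route is shorter and more conceptual; the other stays entirely within elementary polynomial algebra.

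One small slip in your last step. Because $\sqrt{-1}$ has order $4$, the congruence $(\frac{p-1}{2})^2\equiv\frac{p-1}{2}\pmod 2$ is not enough. For $p\equiv 1\pmod 4$ it would still leave a stray factor $(-1)^e$. What you need is $(\frac{p-1}{2})^2\equiv 0$ or $1\pmod 4$ according as $\frac{p-1}{2}$ is even or odd. This holds because every square is $0$ or $1$ modulo $4$.
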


\begin{lemma}\cite{Heng}\label{Fourier1}
Let $\psi$ be a multiplicative character of $\gf_{q}$. Then
\begin{eqnarray*}
\psi(x)=\frac{1}{q}\sum_{\lambda\in\widehat{\gf}_{q}}G(\psi,\overline{\lambda})\lambda(x)\text{ for }x\in \gf_{q}.
\end{eqnarray*}
\end{lemma}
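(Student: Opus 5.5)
The statement is the Fourier expansion of $\psi$ in terms of the additive characters. I will verify it directly by substituting the definition of the Gaussian sum into the right-hand side and using orthogonality of additive characters. The additive characters of $\gf_q$ are $\widehat{\gf}_q=\{\lambda_b: b\in\gf_q\}$ with $\lambda_b(x)=\zeta_p^{\tr_{q/p}(bx)}$, and $\overline{\lambda_b}=\lambda_{-b}$. Writing the right-hand side as
$$\frac{1}{q}\sum_{b\in\gf_q}\sum_{y\in\gf_q^*}\psi(y)\lambda_b(-y)\lambda_b(x),$$
I will interchange the two finite sums to get
$$\frac{1}{q}\sum_{y\in\gf_q^*}\psi(y)\sum_{b\in\gf_q}\zeta_p^{\tr_{q/p}(b(x-y))}.$$
The inner sum equals $q$ if $y=x$ and $0$ otherwise, since $\tr_{q/p}$ is surjective and hence $b\mapsto \zeta_p^{\tr_{q/p}(bc)}$ is a nontrivial character when $c\neq 0$.

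For $x\neq 0$, exactly one term survives, namely $y=x$, and the expression collapses to $\psi(x)$, as required.

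For $x=0$, no $y\in\gf_q^*$ equals $x$, so the right-hand side is $0$. This matches the convention $\psi(0)=0$ for nontrivial $\psi$. The only real obstacle is the trivial character $\psi_0$, whose convention is $\psi_0(0)=1$. Here $G(\psi_0,\overline{\lambda})=\sum_{y\neq0}\overline{\lambda}(y)$, which equals $q-1$ for $\lambda=\lambda_0$ and $-1$ otherwise. So at $x=0$ the right-hand side is $\frac{1}{q}\bigl((q-1)-(q-1)\bigr)=0\neq 1$. The identity therefore fails at the single point $x=0$ when $\psi$ is trivial.

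To handle this, I will state the lemma for nontrivial $\psi$, or for $x\in\gf_q^*$ in general; this is how it is presumably applied later, to $\psi=\eta_e$. I will flag this restriction explicitly. With that caveat, the argument is just the orthogonality computation above.
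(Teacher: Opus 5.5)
Your proof is correct. The paper gives no argument of its own (it only cites the result), and the standard proof in the literature is exactly your substitution of the Gaussian sum followed by orthogonality of additive characters. Your caveat is also right: under the paper's convention $\psi_0(0)=1$, the right-hand side is $0$ at $x=0$ for every $\psi$, so the statement as written needs either $\psi$ nontrivial or $x\in\gf_q^*$. This restriction is harmless for the way the paper uses the lemma, namely with the nontrivial character $\eta_1$ in Lemma~\ref{lem-41}.
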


%\subsection{The Pless power moments}\label{Plessmoments}
%\begin{lemma}[Pless power moments]\cite{Huff}\label{Plessmoments}
%For an $[n,k,d]$ linear code $\C$ over $\gf_q$, let $A_{i}^\perp$ denote the number of codewords with weight $i$ in $\C^\perp$, where $0\leq i \leq n$. Then the following equalities hold:
%\begin{eqnarray*}
%\sum_{j=0}^{n}A_j&=&q^{k},\\
%\sum_{j=0}^{n}jA_j&=&q^{k-1}(qn-n-A_1^\perp),\\
%\sum_{j=0}^{n}j^2A_j&=&q^{k-2}\big((q-1)n(qn-n+1)-  \\ && (2qn-q-2n+2)A_1^\perp+2A_2^\perp\big),\\
%\sum_{j=0}^{n}j^3A_j&=&q^{k-3}[(q-1)n(q^2n^2-2qn^2+ 3qn-q\\&&+n^2-3n+2)
%-(3q^2 n^2-3q^2n -6qn^2\\ & &+12qn+q^2-6q+3n^2-9n+6)A_1^\perp\\
%& &+6(qn-q-n+2)A_2^\perp-6A_3^\perp].
%\end{eqnarray*}
%\end{lemma}

%The Pless power moments are very useful in determining the weight distributions of linear codes.

\subsection{Vectorial dual-bent functions}
A function $F: V_{n}^{(p)}\rightarrow V_{m}^{(p)}$ is called a vectorial $p$-ary function.
If $m=1$, the vectorial $p$-ary function is the usual $p$-ary function.

For a $p$-ary function $f: V_{n}^{(p)}\rightarrow \gf_p$, its Walsh transform is defined by
\begin{eqnarray*}
W_{f}(b)=\sum_{x \in V_{n}^{(p)}}\zeta_{p}^{f(x)-\langle b,x\rangle_n},\ b\in V_{n}^{(p)}.
\end{eqnarray*}
The inverse Walsh transform of $f(x)$ is defined as
\begin{eqnarray*}
\zeta_{p}^{f(x)}=\frac{1}{p^n}\sum_{b \in V_{n}^{(p)}}W_{f}(b)\zeta_{p}^{\langle b, x\rangle_n},\ x \in V_{n}^{(p)}.
\end{eqnarray*}
If $| W_{f}(b)|=p^{\frac{n}{2}}$ for all $b \in V_{n}^{(p)}$, then $f$ is called a $p$-ary\emph{ bent function}. 
For a $p$-ary bent function $f$, if $p=2$, then $W_{f}(b)=2^{n/2}(-1)^{f^*(a)}$; if $p$ is an odd prime, then
\begin{eqnarray*}
W_{f}(a)=\left\{
\begin{array}{ll}
\pm p^{\frac{n}{2}}\zeta_{p}^{f^{*}(a)}  &   \mbox{if  $p \equiv 1 \pmod 4$ or $n$ is even},\\
\pm \sqrt{-1}p^{\frac{n}{2}}\zeta_{p}^{f^{*}(a)}   &   \mbox{if $p \equiv 3 \pmod 4 $ and $n$ is odd},\\
\end{array} \right.
\end{eqnarray*}
where $f^*: V_{n}^{(p)}\rightarrow \gf_p$  is the \emph{dual} of $f$ \cite{Tang}. A $p$-ary function $f$ is said to be \emph{regular bent} if $W_{f}(a)=p^{\frac{n}{2}}\zeta_{p}^{f^{*}(a)}$ for any $a \in V_n^{(p)}$, and \emph{weakly regular bent} if $W_{f}(a)=\varepsilon_{f}p^{\frac{n}{2}}\zeta_{p}^{f^{*}(a)}$ for any $a \in V_n^{(p)}$, where $\varepsilon_{f} \in \{\pm1,\pm\sqrt{-1}\}$ is referred to as the \emph{sign of the walsh transform of} $f$. Otherwise, the $p$-ary bent function $f$ is said to be \emph{non-weakly regular}. 
It is known from \cite{Tang} that the dual $f^{*}$ of a $p$-ary weakly regular bent function $f$ is also a weakly regular bent function such that
\begin{eqnarray*}
 (f^*)^*(x)=f(-x),\ \varepsilon_{f^*}=\varepsilon_{f}^{-1}.
\end{eqnarray*}

A vectorial $p$-ary function $F:V_{n}^{(p)}\rightarrow V_{m}^{(p)}$ is said to be \emph{vectorial bent} if the component function $F_c$ defined as $F_{c}(x)=\langle c, F(x)\rangle_{m}$ is a $p$-ary bent function \cite{Coesmelioglu1}. It is clear that every $p$-ary bent function is vectorial bent. For a vectorial $p$-ary bent function $F: V_{n}^{(p)}\rightarrow V_{m}^{(p)}$ and any $c \in V_{m}^{(p)}\setminus\{0\}$, if there exists a vectorial bent function $G:V_{n}^{(p)}\rightarrow V_{m}^{(p)}$ satisfying $(F_{c})^{*}=G_{\sigma(c)}$,  then $F$ is said to be \emph{vectorial dual-bent}, where $(F_{c})^{*}$ is the dual of $F_c$ and $\sigma$ is some permutation over  $V_{m}^{(p)}\setminus\{0\}$. The vectorial bent function $G$ is referred to as  a \emph{vectorial dual} of $F$ which is denoted by $F^*$.

A $p$-ary function $f:V_{n}^{(p)}\rightarrow \gf_p$ is said to be of $l$-form if $f(ax)=a^lf(x)$ for any $a \in \gf_p^*$, $x \in V_{n}^{(p)}$ and some integer $l$. In \cite{Tang}, 
 the set consisting of all weakly regular bent functions $f:V_{n}^{(p)}\rightarrow \gf_p$ of $l$-form such that $f(\textbf{0})=0$ and $\gcd(l-1,p-1)=1$ is denoted by $\mathcal{RF}$.
\begin{proposition}\cite{Coesmelioglu}
Let $f: V_{n}^{(p)}\rightarrow \gf_p$ be a bent function belonging to $\mathcal{RF}$. If we view $f$ as a vectorial bent function, then
 $f$ is a vectorial dual-bent function with $(cf)^*=c^{1-d}f^*$, $c \in \gf_p^*$, and $\varepsilon_{cf}=\varepsilon_{f}$ if $n$ is even, $\varepsilon_{cf}=\varepsilon_{f}\eta_{0}(c)$ if $n$ is odd, where $(l-1)(d-1)\equiv 1\mod {(p-1)}$.
\end{proposition}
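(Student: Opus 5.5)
The plan is to compute the Walsh transform of every component $cf$, $c\in\gf_p^*$, by transporting $W_f$ through the Galois automorphisms $\sigma_c$ of $\mathbb{Q}(\zeta_p)$ from Lemma \ref{lem-cyclo}. First I will establish the homogeneity of the dual $f^*$; then I will read off $(cf)^*$ and $\varepsilon_{cf}$ from the resulting identity.

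\emph{Step 1: $f^*$ is of $d$-form.} Fix $a\in\gf_p^*$. Since $\gcd(l-1,p-1)=1$, the equation $t^{l-1}=a$ has the unique solution $t=a^{d-1}$ in $\gf_p^*$, where $(l-1)(d-1)\equiv 1 \pmod{p-1}$. Substitute $x=ty$ in
\[
W_f(ab)=\sum_{x}\zeta_p^{f(x)-a\langle b,x\rangle_n}
\]
and use $f(ty)=t^lf(y)$ together with $at=t^l=a^d$. This gives
\[
W_f(ab)=\sum_y \zeta_p^{a^d(f(y)-\langle b,y\rangle_n)}=\sigma_{a^d}\bigl(W_f(b)\bigr).
\]
Next I write $\varepsilon_fp^{n/2}=\pm\sqrt{p^*}^{\,n}$, which is possible because $\varepsilon_f\in\{\pm1,\pm\sqrt{-1}\}$ is matched with the parity of $n$ and with $p \bmod 4$. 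By (\ref{eq-sigma}) we then have $\sigma_{a^d}(\varepsilon_fp^{n/2})=\eta_0(a^d)^n\varepsilon_fp^{n/2}$. Comparing both sides of
\[
\varepsilon_fp^{n/2}\zeta_p^{f^*(ab)}=\eta_0(a)^{dn}\varepsilon_fp^{n/2}\zeta_p^{a^df^*(b)}
\]
forces $\eta_0(a)^{dn}=1$ and $f^*(ab)=a^df^*(b)$. The reason is that a root of unity $\zeta_p^j$ equals $\pm\zeta_p^{j'}$ only when the sign is $+$ and $j=j'$, since $p$ is odd.

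\emph{Step 2: the components.} Since
\[
W_{cf}(b)=\sum_x\zeta_p^{cf(x)-\langle b,x\rangle_n}=\sigma_c\bigl(W_f(c^{-1}b)\bigr),
\]
we obtain
\[
W_{cf}(b)=\eta_0(c)^n\varepsilon_fp^{n/2}\zeta_p^{cf^*(c^{-1}b)}.
\]
Hence $cf$ is weakly regular bent with $\varepsilon_{cf}=\eta_0(c)^n\varepsilon_f$, which equals $\varepsilon_f$ for even $n$ and $\eta_0(c)\varepsilon_f$ for odd $n$. Its dual is $(cf)^*(b)=cf^*(c^{-1}b)=c\cdot c^{-d}f^*(b)=c^{1-d}f^*(b)$ by Step 1.

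\emph{Step 3: vectorial dual-bentness.} Take $F^*=f^*$ and $\sigma(c)=c^{1-d}$. The map $\sigma$ is a permutation of $\gf_p^*$ because $d-1$ is invertible modulo $p-1$. The function $f^*$ is vectorial bent because each component $c'f^*$ is bent: $f^*$ is weakly regular bent, and Step 2 applied to $f^*$ (or directly $|W_{c'f^*}|=\bigl|\sigma_{c'}(W_{f^*}(\cdot))\bigr|$) gives this. Therefore $(F_c)^*=F^*_{\sigma(c)}$.

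The main obstacle I expect is Step 1. It needs the correct choice of substitution $t=a^{d-1}$ so that the Walsh sum becomes a single Galois conjugate, and a careful check that $\varepsilon_fp^{n/2}$ is exactly $\pm\sqrt{p^*}^{\,n}$, so that the sign comparison is legitimate.
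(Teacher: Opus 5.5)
Your proof is correct and complete. The paper gives no proof of its own and simply cites \cite{Coesmelioglu}; your route is the standard argument from that literature. You Galois-conjugate the Walsh transform via the substitution $x=a^{d-1}y$ to show $f^*$ is of $d$-form, then use $W_{cf}(b)=\sigma_c\bigl(W_f(c^{-1}b)\bigr)$ to read off $\varepsilon_{cf}=\eta_0(c)^n\varepsilon_f$ and $(cf)^*=c^{1-d}f^*$.

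A minor remark: $d-1$ is invertible modulo the even number $p-1$, so $d$ is even. Hence $\eta_0(a)^{dn}=1$ holds automatically, which is consistent with what your sign comparison yields.
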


In this paper,  we  consider vectorial  dual-bent  functions  satisfying the following \textbf{Condition A} \cite{WangJ2}.

\textbf{Condition A}: Let $p$ be an odd prime and  $n, e, m, d, l$ be positive integers with $e \mid n$, $e \mid m$ such that $\frac{n}{m}\geq3$ is odd. Let $F: V_{n}^{(p)}\rightarrow \gf_{p^m}$ be  a vectorial dual bent function satisfying
 \begin{itemize}
 \item{there is a vectorial dual $F^*$ such that ${F_c}^{*}=(F^*)_{c^{1-d}}$, $c \in \gf_{p^m}^{*}$, where $\gcd(d-1,p^m-1)=1$};
 \item{$F(ax)=a^{l}F(x)$, $a \in \gf_{p^e}^{*}$, $x \in V_{n}^{(p)}$ and $F(0)=0$;}    
 \item{all component functions $F_c, c \in \gf_{p^m}^{*}$, are weakly regular with $\varepsilon_{F_{c}}=\upsilon\eta_m(c)$, $c \in \gf_{p^m}^{*}$, where $\upsilon \in \{\pm\epsilon^{m}\}$ is a constant for $\epsilon:=\sqrt{(-1)^{\frac{p-1}{2}}}$.}  
 \end{itemize}

\begin{lemma}\label{lemm-21}\cite{WangJ2}
Let $F: V_{n}^{(p)}\rightarrow \gf_{p^m}$ be  a vectorial dual bent function satisfying \textbf{Condition A}. Then the  value distributions of $F$ and $F^*$ are given by
$$|D_{F,0}|=|D_{F^*,0}|=p^{n-m},$$
$$|D_{F,i}|=p^{n-m}+\upsilon(-1)^{m-1}\epsilon^{m}\eta_m(-i)p^{\frac{n-m}{2}}, i \in \gf_{p^m}^*,$$
$$|D_{F^*,i}|=p^{n-m}+\upsilon^{-1}(-1)^{m-1}\epsilon^{m}\eta_m(-i)p^{\frac{n-m}{2}}, i \in \gf_{p^m}^*.$$
\end{lemma}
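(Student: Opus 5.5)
We prove Lemma \ref{lemm-21} by expressing each preimage size through additive characters of $\gf_{p^m}$ and then evaluating the resulting Walsh sums with \textbf{Condition A}. For $i\in\gf_{p^m}$, orthogonality of additive characters gives
\begin{eqnarray*}
|D_{F,i}|=\frac{1}{p^m}\sum_{c\in\gf_{p^m}}\sum_{x\in V_n^{(p)}}\zeta_p^{\tr_{p^m/p}(c(F(x)-i))}
=p^{n-m}+\frac{1}{p^m}\sum_{c\in\gf_{p^m}^*}\zeta_p^{-\tr_{p^m/p}(ci)}W_{F_c}(0).
\end{eqnarray*}
Here the component $F_c(x)=\tr_{p^m/p}(cF(x))$ is identified with $\langle c,F(x)\rangle_m$. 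By the third bullet of \textbf{Condition A}, $W_{F_c}(0)=\upsilon\eta_m(c)p^{n/2}\zeta_p^{F_c^*(0)}$. By the first bullet, $F_c^*(0)=(F^*)_{c^{1-d}}(0)=\tr_{p^m/p}(c^{1-d}F^*(0))$.

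The first step is to show $F^*(0)=0$. The intended route runs as follows. Since $F(ax)=a^lF(x)$ with $F(0)=0$, $F_c$ is invariant in a suitable sense under scaling by $\gf_{p^e}^*$. The Walsh transform at $0$ is then a sum over orbits, so $W_{F_c}(0)$ is related to $W_{F_{ca^{l}}}(0)$. Comparing the two weakly regular expressions forces $\zeta_p^{F_c^*(0)}$ to be constant along the cosets of $c$ under $a^{l}$-multiplication. Combined with the homogeneity $F_c^*$ inherits, this gives $F_c^*(0)=0$. Alternatively, one can use the standard fact (cf.\ \cite{WangJ2}) that the dual of such homogeneous vectorial dual-bent functions also vanishes at $0$. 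I expect this step to be the main obstacle, and I would simply invoke the known property from \cite{WangJ2} if the orbit argument is messy.

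Given $F^*(0)=0$, the inner sum becomes a quadratic Gauss sum:
\begin{eqnarray*}
|D_{F,i}|=p^{n-m}+\upsilon p^{\frac{n}{2}-m}\sum_{c\in\gf_{p^m}^*}\eta_m(c)\zeta_p^{-\tr_{p^m/p}(ci)}.
\end{eqnarray*}
For $i=0$ the sum vanishes, so $|D_{F,0}|=p^{n-m}$. For $i\neq0$ we apply Lemma \ref{quadGuasssum1} to get $\sum_{c}\eta_m(c)\lambda_{-i}(c)=\eta_m(-i)G(\eta_m,\lambda_1)$. Lemma \ref{quadGuasssum3} then gives $G(\eta_m,\lambda_1)=(-1)^{m-1}\epsilon^{m}p^{m/2}$, since $(\sqrt{-1})^{(\frac{p-1}{2})^2m}=\epsilon^m$. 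This yields the stated value
\[
|D_{F,i}|=p^{n-m}+\upsilon(-1)^{m-1}\epsilon^{m}\eta_m(-i)p^{\frac{n-m}{2}}.
\]

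For $F^*$ we repeat the computation with $W_{(F^*)_{c'}}(0)$. Put $c'=c^{1-d}$; since $\gcd(d-1,p^m-1)=1$, $c\mapsto c'$ permutes $\gf_{p^m}^*$. We have $(F^*)_{c'}=F_c^*$, whose sign is $\varepsilon_{F_c}^{-1}=\upsilon^{-1}\eta_m(c)$. Its dual is $F_c(-x)$, which vanishes at $0$, so $W_{(F^*)_{c'}}(0)=\upsilon^{-1}\eta_m(c)p^{n/2}$. Because $1-d$ is odd, $\eta_m(c')=\eta_m(c)$. The same Gauss-sum evaluation, with $\upsilon^{-1}$ in place of $\upsilon$, gives the formula for $|D_{F^*,i}|$, including $|D_{F^*,0}|=p^{n-m}$. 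For this we need $d$ even; that follows from $\gcd(d-1,p^m-1)=1$ with $p$ odd, since then $d-1$ is odd.
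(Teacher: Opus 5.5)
Your overall approach works: the character expansion of $|D_{F,i}|$, the evaluation $G(\eta_m,\lambda_1)=(-1)^{m-1}\epsilon^m p^{m/2}$ from Lemma~\ref{quadGuasssum3}, and the whole $F^*$ half are correct. The $F^*$ half uses only $(F_c^*)^*=F_c(-x)$, $\varepsilon_{F_c^*}=\varepsilon_{F_c}^{-1}$, $F(0)=0$ and $d$ even. (The paper gives no proof of its own here; it imports the lemma from \cite{WangJ2}.)

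The gap is the step $F_c^*(0)=0$, which the $F$ half depends on. Your orbit sketch, made precise, gives $W_{F_c}(0)=W_{F_{ca^l}}(0)$ for $a\in\gf_{p^e}^*$. Hence $F_c^*(0)=F^*_{ca^l}(0)$, and through $F_c^*(0)=\tr_{p^m/p}(c^{1-d}F^*(0))$ you get $(1-a^{l(1-d)})F^*(0)=0$. Because $\gcd(d-1,p^e-1)=1$, this forces $F^*(0)=0$ only when $(p^e-1)\nmid l$. For instance, it gives nothing when $p^e=3$ and $l=2$ (the quadratic examples over $\gf_3$ with $e=1$), since every $a\in\gf_3^*=\{\pm1\}$ has $a^l=1$. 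Falling back on an unspecified ``standard fact'' from \cite{WangJ2} leaves the one nontrivial point of the proof unproved.

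The repair is to use only $a=-1$, which gives $F_c(-x)=(-1)^lF_c(x)$.

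\emph{Case $l$ even} (the case the paper has in mind; cf.\ the proof of Theorem~\ref{theorem-41}). Put $N_j=\#\{x:F_c(x)=j\}$ and $j_0=F_c^*(0)$. Pairing $x$ with $-x$ shows that $N_0$ is odd and $N_j$ is even for $j\neq0$. Moreover $\sum_jN_j\zeta_p^{j-j_0}=\varepsilon_{F_c}p^{n/2}$.
\begin{itemize}
\item If $n$ is even, the right-hand side is rational, so all $N_j$ with $j\neq j_0$ are equal. If $j_0\neq0$, then $N_0$ would equal some $N_j$ with $j\notin\{0,j_0\}$, contradicting parity.
\item If $n$ is odd, the right-hand side is $\pm p^{(n-1)/2}\sum_{k\in\gf_p^*}\eta_0(k)\zeta_p^k$. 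So $N_{j_0+k}-N_{j_0}=\pm p^{(n-1)/2}\eta_0(k)$ is odd for every $k\neq0$. If $j_0\neq0$, then $N_{j_0}$ is even, so $N_j$ would be odd for some $j\notin\{0,j_0\}$, again a contradiction.
\end{itemize}
Hence $j_0=0$. This is exactly the lemma from \cite{YWei} recorded in Subsection~\ref{subsectionF}, specialized to $s=0$.

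\emph{Case $l$ odd.} Here $F_c(-x)=-F_c(x)$ makes $W_{F_c}(0)$ real. Since $\varepsilon_{F_c}p^{n/2}$ is either nonzero real or nonzero purely imaginary, $\varepsilon_{F_c}p^{n/2}\zeta_p^{j_0}$ can be real only if $j_0=0$.

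Once $F_c^*(0)=0$ holds for all $c\neq0$, the rest of your computation for $|D_{F,i}|$ goes through unchanged.
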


In what follows, we introduce some known classes
of vectorial dual-bent functions $F: V_{n}^{(p)}\rightarrow \gf_{p^m}$ satisfying \textbf{Condition A} for general $m$. 
Observe that if $\frac{n}{m}$ is odd, then it follows that $\epsilon^n \in \{\pm\epsilon^m\}$.
\begin{itemize}
\item{Let $p$ be an odd prime and $m, s$ be positive integers. Let $s\geq3$ be odd. All non-degenerate quadratic forms $F: \gf_{p^m}^s\rightarrow \gf_{p^m}$ or $ F: \gf_{p^{ms}}\rightarrow \gf_{p^m}$ are vectorial dual-bent functions satisfying \textbf{Condition A} with $l=d=2$ \cite{WangJ}, \cite{WangJ1}. 
The following are two families of such functions.
    \begin{enumerate}
    \item{Let $m, n$ be positive integers such that  $m\mid n, \frac{n}{m}\geq3$ is odd, $a \in \gf_{p^n}^*$. Define $F: \gf_{p^n}\rightarrow \gf_{p^m}$ by
 \begin{eqnarray}\label{Tr(x^2)1}
 F(x)=\tr_{p^n/p^m}(a x^2).
\end{eqnarray}
Then $F$ is a vectorial dual-bent function satisfying  \textbf{Condition A} with $$l=d=2,\upsilon=(-1)^{n-1}\epsilon^n\eta_n(a).$$
        }
     \item{Let $m, s$ be positive integers such that $ s\geq3$ is odd, $\alpha_i \in \gf_{p^m}^*, 1\leq i \leq s$. Define $F: \gf_{p^m}^s\rightarrow \gf_{p^m}$ by
      \begin{eqnarray*}
      F(x_1,\ldots, x_s)=\sum_{i=1}^{s}\alpha_ix_i^2.
      \end{eqnarray*} 
      Then $F$ is a vectorial dual-bent function satisfying  \textbf{Condition A} with $$l=d=2,\upsilon=(-1)^{m-1}\epsilon^{ms}\eta_m(\alpha_1\cdots\alpha_s).$$
     }       
    \end{enumerate}
    }
    \item{Let $p$ be an odd prime, $m, n', n''$ be positive integers such that $m\mid n', m\mid n'', \frac{n'}{m}$ is odd. For $i \in \gf_{p^m}$, let $H(i,x): \gf_{p^{n'}}\rightarrow \gf_{p^m}$ be defined by $H(0,x)=\tr_{p^{n'}/p^m}(\alpha_1x^2), H(i,x)=\tr_{p^{n'}/p^m}(\alpha_2x^2)$ if $i$ is a square in $\gf_{p^m}^*$, $H(i,x)=\tr_{p^{n'}/p^m}(\alpha_3x^2)$ if $i$ is a non-square in $\gf_{p^m}^*$, where $\alpha_1, \alpha_2, \alpha_3$ are all square elements  or all non-square elements in $\gf_{p^{n'}}^*$. Let $G: \gf_{p^{n''}}\times\gf_{p^{n''}}\rightarrow \gf_{p^m}$ be defined by $G(y_1,y_2)= \tr_{p^{n''}/p^m}(\beta y_1L(y_2))$, where $\beta \in \gf_{p^{n''}}^*$, $L(x)=\sum a_ix^{p^{mi}}$ is a $p^m$- polynomial over $\gf_{p^{n''}}$ inducing a permutation of $\gf_{p^{n''}}$. Let $\gamma \in \gf_{p^{n''}}^*$ and $F: \gf_{p^{n'}}\times \gf_{p^{n''}}\times \gf_{p^{n''}}\rightarrow \gf_{p^m}$ be defined by
    \begin{eqnarray*}
    F(x,y_1,y_2)=H(\tr_{p^{n''}/p^m}(\gamma y_2^2),x)+G(y_1,y_2).
    \end{eqnarray*}
    Then $F$ is a vectorial dual-bent function satisfying \textbf{Condition A} with $l=d=2$ and $\upsilon=(-1)^{n'-1}\epsilon^{n'}\eta_{n'}(\alpha_1)$ \cite{WangJ1}.
    }
\end{itemize}

\subsection{Plateaued Functions} \label{subsectionF}
If a $p$-ary function $f: V_{n}^{(p)}\rightarrow \gf_p$ takes every value of $\gf_{p}$ exactly $p^{n-1}$ times, then $f$ is said to be \emph{balanced} over $\gf_{p}$. Otherwise, $f$ is said to be \emph{unbalanced}. Moreover, $f$ is \emph{balanced} if and only if $ W_{f}(0)=0$.  If $| W_{f}(b)|\in \{0, p^{\frac{n+s}{2}}\}$ for all $b \in V_{n}^{(p)}$, where $0\leq s\leq n$, then $f$ is called $s$-\emph{plateaued}. Specially, when $s=0$, an $s$-plateaued function is a bent function. For an $s$-plateaued function $f$, its \emph{Walsh support} is defined by the set $S_f=\{b \in V_{n}^{(p)}:{\mid W_f(b) \mid}^2 = p^{n+s}$\}.
The Walsh transform of an $s$-plateaued function $f$ over $b \in V_{n}^{(p)}$ is defined as
\begin{eqnarray*}
W_{f}(a)=\left\{
\begin{array}{ll}
\pm p^{\frac{n+s}{2}}\zeta_{p}^{f^{*}(b)},0  &   \mbox{if  $p^{n+s} \equiv 1 \pmod 4$},\\
\pm \sqrt{-1}p^{\frac{n+s}{2}}\zeta_{p}^{f^{*}(b)},0   &   \mbox{if $p^{n+s} \equiv 3 \pmod 4 $},\\
\end{array} \right.
\end{eqnarray*}
where $f^*$ is a function from $S_f$ to $\gf_{p}$ called the \emph{dual} of $f$.If $W_f(b) = \epsilon p^{\frac{n+s}{2}} \zeta_p^{f^*(b)}$ for all $b \in S_f$, where $\epsilon \in \{\pm1, \pm \sqrt{-1}\}$ is called the \emph{sign of the Walsh transform} of $f$ and is independent of $b$, then $f$ is called \emph{weakly regular}. Otherwise, $f$ is called \emph{non-weakly regular}, and the sign of the Walsh transform of $f$ depends on $b$. 

\begin{definition}\label{de-sp}
Let $S$ be a subset of $V_{n}^{(p)}$ with $\sharp S= N$ and $f$ be a function from $S$ to $\gf_{p}$. If $|W_{f}(b)|=N^{\frac{1}{2}}$ for all $b \in V_{n}^{(p)}$, then $f$ is called bent relative to $S$, where $W_{f}(b)= \sum_{x \in S}\zeta_{p}^{f(x)-\langle b, x\rangle_n}$.
\end{definition}

For an $s$-plateaued function $f: V_{n}^{(p)}\rightarrow \gf_{p}$, if its dual $f^*$ is bent relative to $S_f$, then for any $b \in V_{n}^{(p)}$, the value of $W_{f^*}(b)$ is defined as
\begin{eqnarray*}
W_{f^*}(a)=\left\{
\begin{array}{ll}
\pm p^{\frac{n-s}{2}}\zeta_{p}^{f^{**}(b)},0  &   \mbox{if  $p^{n-s} \equiv 1 \pmod 4$},\\
\pm \sqrt{-1}p^{\frac{n-s}{2}}\zeta_{p}^{f^{**}(b)},0   &   \mbox{if $p^{n-s} \equiv 3 \pmod 4 $},\\
\end{array} \right.
\end{eqnarray*}
where $f^{**}(x): V_{n}^{(p)}\rightarrow \gf_{p}$ is dual of $f^{*}(x)$. Similarly, the dual $f^*$ is called weakly regular bent relative to $S_{f}$, if for all $b\in V_{n}^{(p)}$, $W_{f^*}(a)=\epsilon p^{\frac{n-s}{2}}\zeta_{p}^{f^{**}(x)}$, where $\epsilon \in \{\pm1, \pm\sqrt{-1}\}$ is independent of $b$. Otherwise, it is called non-weakly regular bent relative to $S_f$. In \cite{Ozbudak}, \"{O}zbudak et al. proved that the dual $f^*(x)$ of
a weakly regular $s$-plateaued function $f(x)$ is weakly
regular bent relative to $S_f$ and $f^{**}(x)=f(-x)$. They also proved that if $f(x)$ is a non-weakly regular $s$-plateaued function such that its dual $f^*(x)$ is bent relative to $S_f$, then $f^*(x)$ is non-weakly regular bent relative to $S_f$ and satisfies $f^{**}(x)=f(-x)$.

Let $f(x): V_{n}^{(p)}\rightarrow \gf_{p}$ be an $s$-plateaued function, $\mu =1$ if $p^{n+s}\equiv 1 \pmod{4}$ and $\mu =\sqrt{-1}$ if $p^{n+s}\equiv 3  \pmod{4}$. We define $B_{+}(f)$ and $B_{-}(f)$ as follows:
$$B_{+}(f)=\left\{b \in S_{f}: W_{f}(b)=\mu p^{\frac{n+s}{2}}\zeta_{p}^{f^*(b)}\right\},$$
$$B_{-}(f)=\left\{b \in S_{f}: W_{f}(b)=-\mu p^{\frac{n+s}{2}}\zeta_{p}^{f^*(b)}\right\}.$$
For any $b \in S_{f}$, define $\epsilon_{b}=1$ (respectively $-1$) if $b \in B_{+}(f)$ ( $B_{-}(f)$). If $f(x)$ is unbalanced, we define the type of $f(x)$ to be \emph{type} ($+$) if $\epsilon_0=1$, and \emph{type} ($-$) if $\epsilon_0=-1$. Besides, if the dual $f^*(x)$ of $f(x)$ is bent relative to $S_{f}$, then we define $B_{+}(f^*)$ and 
$B_{-}(f^*)$ as follows:
$$B_{+}(f^*)=\left\{b \in S_{f}: W_{f^*}(b)=\mu p^{\frac{n-s}{2}}\zeta_{p}^{f(-b)}\right\},$$
$$B_{-}(f^*)=\left\{b \in S_{f}: W_{f^*}(b)=-\mu p^{\frac{n-s}{2}}\zeta_{p}^{f(-b)}\right\}.$$

Meanwhile, for any $b \in V_{n}^{(p)}$, define $\epsilon^{*}_{b}=1$ (respectively $-1$) if $b \in B_{+}(f^*)$ (respectively $B_{-}(f^*)$). The type of $f^*(x)$ is defined as \emph{type} ($+$) if $\epsilon^{*}_{0}=1$,  and \emph{type} ($-$) if $\epsilon^{*}_{0}=-1$.

\begin{remark}\label{rema-21}
Let $f(x):V_{n}^{(p)}\rightarrow \gf_{p}$ be an unbalanced weakly regular $s$-plateaued function. According to \cite{Ozbudak}, we know that the types of $f(x)$ and $f^*(x)$ are the same for $p^{n+s}\equiv1 \pmod 4$ and the types of $f(x)$ and $f^*(x)$ are different for $p^{n+s}\equiv3 \pmod 4$. 
\end{remark}
 
Define $D_{f,i}=\left\{x \in V_{n}^{(p)}: f(x)=i\right\}$, $D_{f,sq}=\left\{x \in V_{n}^{(p)}: f(x) \in SQ\right\}$ and $D_{f,nsq}=\left\{x \in V_{n}^{(p)}: f(x) \in NSQ\right\}$  for $i \in \gf_{p}$. Let $\sharp S$ denote the cardinality of a set $S$.
The following lemma gives the value distribution of unbalanced $p$-ary $s$-plateaued functions.

\begin{lemma}\label{lem-11}\cite{Ozbudak}
Let $f(x):V_{n}^{(p)}\rightarrow \gf_{p}$ be an unbalanced $s$-plateaued function with $f^*(0)=j_0$. For any $j \in \gf_{p}$, define $N_{i}(f)=\sharp D_{f,j}$. Then we have the following. 
\begin{itemize}
 \item When $n+s$ is even, we have $N_{j}(f)=p^{n-1}+\epsilon_{0}(\delta_{0}(j-j_0)p-1)p^{\frac{n+s}{2}-1}$ for any $j \in \gf_{p}$.
  \item When $n+s$ is odd, we have $N_{j}(f)=p^{n-1}+\epsilon_{0}\eta_{0}(j-j_0)p^{\frac{n+s-1}{2}}$ for any $j \in \gf_{p}$.    
\end{itemize}
\end{lemma}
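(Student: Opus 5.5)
The plan is to count the fibres of $f$ with additive characters of $\gf_p$ and then compute the resulting Walsh values at $0$ by Galois conjugation in $\mathbb{Q}(\zeta_p)$ (Lemma \ref{lem-cyclo}). By orthogonality of the additive characters of $\gf_p$,
\begin{eqnarray*}
N_j(f)=\frac{1}{p}\sum_{y\in\gf_p}\sum_{x\in V_n^{(p)}}\zeta_p^{y(f(x)-j)}
=p^{n-1}+\frac{1}{p}\sum_{y\in\gf_p^*}\zeta_p^{-yj}\,W_{yf}(0).
\end{eqnarray*}
So everything reduces to computing $W_{yf}(0)$ for $y\in\gf_p^*$.

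The first key step is the identity $W_{yf}(0)=\sigma_y(W_f(0))$, which holds because $W_{yf}(0)=\sum_x\zeta_p^{yf(x)}$. Since $f$ is unbalanced, $0\in S_f$, and by the definition of $\epsilon_0$ and $j_0=f^*(0)$ we have
\begin{eqnarray*}
W_f(0)=\epsilon_0\,\mu\, p^{\frac{n+s}{2}}\zeta_p^{j_0}.
\end{eqnarray*}
Next I write $\mu p^{\frac{n+s}{2}}=\kappa\sqrt{p^*}^{\,n+s}$ for a sign $\kappa\in\{\pm1\}$ depending only on $p$ and $n+s$. This follows by checking the cases $p\equiv1\pmod 4$ and $p\equiv3\pmod 4$ against the definition of $\mu$. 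By (\ref{eq-sigma}) this gives
\begin{eqnarray*}
W_{yf}(0)=\epsilon_0\,\kappa\,\eta_0(y)^{n+s}\sqrt{p^*}^{\,n+s}\zeta_p^{yj_0}.
\end{eqnarray*}

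Substituting back, I split into two cases according to the parity of $n+s$.

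\emph{Case $n+s$ even.} Here $\eta_0(y)^{n+s}=1$, and $\kappa\sqrt{p^*}^{\,n+s}=p^{\frac{n+s}{2}}$ because $\mu=1$. The correction term becomes
\begin{eqnarray*}
\epsilon_0\, p^{\frac{n+s}{2}-1}\sum_{y\in\gf_p^*}\zeta_p^{y(j_0-j)}=\epsilon_0\, p^{\frac{n+s}{2}-1}\bigl(p\,\delta_0(j-j_0)-1\bigr),
\end{eqnarray*}
which is the first formula.

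\emph{Case $n+s$ odd.} The correction term is
\begin{eqnarray*}
\frac{\epsilon_0\,\mu\, p^{\frac{n+s}{2}}}{p}\sum_{y\in\gf_p^*}\eta_0(y)\zeta_p^{y(j_0-j)}.
\end{eqnarray*}
For $j\neq j_0$ this sum is the quadratic Gauss sum $\eta_0(j_0-j)\,G(\eta_0,\lambda_1)$, using Lemma \ref{quadGuasssum1}(1). For $j=j_0$ it is $0$, which agrees with $\eta_0(0)=0$. Lemma \ref{quadGuasssum3} with $e=1$ gives $G(\eta_0,\lambda_1)=\sqrt{p^*}$.

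The main obstacle is the sign bookkeeping in this odd case. I have to show that
\begin{eqnarray*}
\mu\,\sqrt{p^*}\,\eta_0(j_0-j)=p^{\frac12}\,\eta_0(j-j_0).
\end{eqnarray*}
When $p\equiv1\pmod4$ we have $\mu=1$, $\sqrt{p^*}=\sqrt p$ and $\eta_0(-1)=1$, so this is immediate. When $p\equiv3\pmod4$ and $n+s$ is odd, we have $p^{n+s}\equiv3\pmod4$, so $\mu=\sqrt{-1}$ and $\sqrt{p^*}=\sqrt{-1}\sqrt p$. Then $\mu\sqrt{p^*}=-\sqrt p$, and the minus sign is absorbed by $\eta_0(-1)=-1$, turning $\eta_0(j_0-j)$ into $\eta_0(j-j_0)$. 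This yields $N_j(f)=p^{n-1}+\epsilon_0\eta_0(j-j_0)p^{\frac{n+s-1}{2}}$. I would double-check this sign carefully against the conventions for $\mu$ and $\epsilon_0$ fixed earlier in the paper, since it is the only delicate point.
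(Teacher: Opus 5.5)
Your proposal is correct. The orthogonality expansion, the identity $W_{yf}(0)=\sigma_y(W_f(0))$ with $W_f(0)=\epsilon_0\mu p^{\frac{n+s}{2}}\zeta_p^{j_0}$, and the odd-case sign check all go through. In fact $\mu p^{\frac{n+s}{2}}=p^{\frac{n+s-1}{2}}\sqrt{p^*}$ in both residue classes of $p$, so the correction term is directly $\epsilon_0 p^{\frac{n+s-1}{2}}\frac{p^*}{p}\eta_0(j_0-j)=\epsilon_0\eta_0(j-j_0)p^{\frac{n+s-1}{2}}$.

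The paper does not prove this lemma itself but imports it from \cite{Ozbudak}. Your argument is the standard one and uses exactly the technique of the paper's own proofs of Lemmas \ref{lem-41} and \ref{lem-42}: orthogonality, then Galois conjugation $\sigma_y$ of the Walsh value, then a quadratic Gauss sum.
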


\begin{lemma}\label{lem-12}\cite{YWei}
Let $f(x):V_{n}^{(p)}\rightarrow \gf_{p}$ be an $s$-plateaued function whose dual $f^*(x)$ is bent relative to $S_{f}$ and $f(0)=j_0$. For any $j \in \gf_{p}$, define $N_{i}(f^*)=\sharp \left\{ x \in S_{f}: f^*(x)=j\right\}$. Then we obtain the following.
\begin{itemize}
 \item When $n+s$ is even, we have $N_{j}(f^*)=p^{n-s-1}+\epsilon_{0}^*(\delta_{0}(j-j_0)p-1)p^{\frac{n-s}{2}-1}$ for any $j \in \gf_{p}$.
  \item When $n+s$ is odd, we have $N_{j}(f^*)=p^{n-s-1}+\epsilon_{0}^*\eta_{0}(j-j_0)p^{\frac{n-s-1}{2}}$ for any $j \in \gf_{p}$.    
\end{itemize}
\end{lemma}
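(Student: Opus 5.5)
The plan is to count $N_j(f^*)$ with the orthogonality of additive characters of $\gf_p$, and to evaluate the resulting character sums by applying Galois automorphisms of $\mathbb{Q}(\zeta_p)$ (Lemma \ref{lem-cyclo}) to the single Walsh value $W_{f^*}(0)$. Concretely, I will start from
\begin{eqnarray*}
N_j(f^*)=\frac{1}{p}\sum_{y\in\gf_p}\sum_{x\in S_f}\zeta_p^{y(f^*(x)-j)}
=\frac{\sharp S_f}{p}+\frac{1}{p}\sum_{y\in\gf_p^*}\zeta_p^{-yj}\sum_{x\in S_f}\zeta_p^{yf^*(x)} .
\end{eqnarray*}

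\textbf{The size of $S_f$.} Parseval's identity gives $\sum_{b}|W_f(b)|^2=p^{2n}$. Since $|W_f(b)|^2\in\{0,p^{n+s}\}$, this forces $\sharp S_f=p^{n-s}$, so the first term equals $p^{n-s-1}$.

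\textbf{The inner sum.} For the $y=1$ term, the definition of $B_\pm(f^*)$ at $b=0$, together with $f(-0)=f(0)=j_0$, gives
\[
\sum_{x\in S_f}\zeta_p^{f^*(x)}=W_{f^*}(0)=\epsilon_0^*\,\mu\, p^{\frac{n-s}{2}}\zeta_p^{j_0}.
\]
For $y\in\gf_p^*$, applying $\sigma_y$ turns the left side into $\sum_{x\in S_f}\zeta_p^{yf^*(x)}$. The key point is how $\sigma_y$ acts on $\mu p^{\frac{n-s}{2}}$. Since $n+s\equiv n-s \pmod 2$, there are two cases.
\begin{itemize}
\item If $n+s$ is even, then $p^{n+s}\equiv1\pmod 4$, so $\mu=1$ and $p^{\frac{n-s}{2}}$ is rational. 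Hence $\sigma_y$ fixes it.
\item If $n+s$ is odd, then $\mu p^{\frac{n-s}{2}}=\mu\sqrt p\,p^{\frac{n-s-1}{2}}$. Here $\mu\sqrt p=\sqrt{p^*}$, because $\mu=\sqrt{-1}$ exactly when $p\equiv3\pmod4$. Hence $\sigma_y$ multiplies it by $\eta_0(y)$, by (\ref{eq-sigma}).
\end{itemize}

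\textbf{Finishing.}
\begin{itemize}
\item In the even case,
\[
N_j(f^*)=p^{n-s-1}+\epsilon_0^*p^{\frac{n-s}{2}-1}\sum_{y\in\gf_p^*}\zeta_p^{y(j_0-j)},
\]
and the last sum equals $\delta_0(j-j_0)p-1$.
\item In the odd case, the sum becomes $\epsilon_0^*\sqrt{p^*}\,p^{\frac{n-s-1}{2}-1}\sum_{y\in\gf_p^*}\eta_0(y)\zeta_p^{y(j_0-j)}$. The quadratic Gauss sum over $\gf_p$ (Lemmas \ref{quadGuasssum1} and \ref{quadGuasssum3} with $e=1$) evaluates the inner sum to $\eta_0(j_0-j)\sqrt{p^*}$. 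The value for $j=j_0$ is $0$, consistent with $\eta_0(0)=0$. Using $(\sqrt{p^*})^2=p^*=\eta_0(-1)p$, we get $\eta_0(-1)\eta_0(j_0-j)=\eta_0(j-j_0)$, which yields $N_j(f^*)=p^{n-s-1}+\epsilon_0^*\eta_0(j-j_0)p^{\frac{n-s-1}{2}}$.
\end{itemize}

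\textbf{Main obstacle.} I expect the only delicate step to be the Galois action on $\mu p^{\frac{n-s}{2}}$, that is, correctly matching $\mu$ with $\sqrt{p^*}$ in the odd case, together with the resulting sign bookkeeping through $\eta_0(-1)$. Everything else is routine character-sum manipulation.
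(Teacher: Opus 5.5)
Your proof is correct: orthogonality over $\gf_p$, $\sharp S_f=p^{n-s}$ from Parseval, and applying $\sigma_y$ to $W_{f^*}(0)=\epsilon_0^*\mu\, p^{\frac{n-s}{2}}\zeta_p^{j_0}$ (using $\mu\, p^{\frac{n-s}{2}}=\sqrt{p^*}\,p^{\frac{n-s-1}{2}}$ when $n+s$ is odd) give both formulas, and the sign step $\eta_0(-1)\eta_0(j_0-j)=\eta_0(j-j_0)$ is handled correctly. The paper has no proof of its own here, since the lemma is quoted from \cite{YWei}, but your route is the standard one and uses the same Galois-action technique the paper applies to Walsh values in Lemma \ref{lem-42}.
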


\begin{lemma}\label{lem-13}\cite{YWei}
Let $f(x): V_{n}^{(p)}\rightarrow \gf_{p}$ be an $s$-plateaued function whose dual $f^*(x)$ is bent relative to $S_{f}$ and $f(0)=j_0$, $\sharp B_{+}(f)=k$ with $0\leq k \leq p^{n-s}$. For any $j \in \gf_{p}$, define $c_{j}(f^*)=\sharp \left\{x \in B_{+}(f): f^*(x)=j\right\}$ and $d_{j}(f^*)=\sharp \left\{x \in B_{-}{f}: f^*(x)=j\right\}$. Then we obtain the following.
\begin{itemize}
 \item If $n+s$ is even, then for any $j \in \gf_{p}$ we have $c_{j}(f^*)=\frac{k}{p}+\frac{\epsilon_{0}^*+1}{2}(\delta_{0}(j-j_0)p-1)p^{\frac{n-s}{2}-1}$ and $d_{j}(f^*)=p^{n-s-1}-\frac{k}{p}+\frac{\epsilon_{0}^*-1}{2}(\delta_{0}(j-j_0)p-1)p^{\frac{n-s}{2}-1}$.
\item If $n+s$ is odd, then for any $j \in \gf_{p}$ we have $c_{j}(f^*)=\frac{k}{p}+\frac{\epsilon_{0}^*+\eta_{0}(-1)}{2}\eta_{0}(j-j_0)p^{\frac{n-s-2}{2}}$ and $d_{j}(f^*)=p^{n-s-1}-\frac{k}{p}+\frac{\epsilon_{0}^*-\eta_{0}(-1)}{2}\eta_{0}(j-j_0)p^{\frac{n-s-1}{2}}$.
\end{itemize}
\end{lemma}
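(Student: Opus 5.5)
The plan is to determine the two unknown families $c_j(f^*)$ and $d_j(f^*)$ from two linear relations. The first is
$$c_j(f^*)+d_j(f^*)=N_j(f^*),$$
which holds because $S_f=B_+(f)\cup B_-(f)$ is a disjoint union; Lemma \ref{lem-12} then gives this sum explicitly. The second is a formula for the difference $c_j(f^*)-d_j(f^*)$, which I will extract from the inverse Walsh transform of $f$ at $x=0$. Solving the resulting $2\times 2$ system for each $j\in\gf_p$ gives the claimed formulas.

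For the difference, I use that $W_f(b)=0$ for $b\notin S_f$ and that $W_f(b)=\epsilon_b\mu p^{\frac{n+s}{2}}\zeta_p^{f^*(b)}$ for $b\in S_f$. The inverse transform at $x=0$ then reads
$$\zeta_p^{j_0}=\frac{1}{p^n}\sum_{b\in S_f}W_f(b)=\mu p^{\frac{s-n}{2}}\sum_{j\in\gf_p}\bigl(c_j(f^*)-d_j(f^*)\bigr)\zeta_p^{j}.$$
Writing $e_j:=c_j(f^*)-d_j(f^*)\in\mathbb{Z}$, this becomes
$$\sum_{j}e_j\zeta_p^j=\mu^{-1}p^{\frac{n-s}{2}}\zeta_p^{j_0}.$$
Counting gives a further relation, $\sum_j e_j=k-(p^{n-s}-k)=2k-p^{n-s}$, since $\sharp S_f=p^{n-s}$.

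Next I solve for the $e_j$ using Lemma \ref{lem-cyclo}. The only $\mathbb{Q}$-linear relation among $1,\zeta_p,\dots,\zeta_p^{p-1}$ is $\sum_j\zeta_p^j=0$, so the integer vector $(e_j)$ is determined by the displayed identity up to an additive constant $t$.
\begin{itemize}
\item If $n+s$ is even, then $\mu=1$ and the right-hand side is $p^{\frac{n-s}{2}}\zeta_p^{j_0}$. This forces $e_j=t+\delta_0(j-j_0)p^{\frac{n-s}{2}}$.
\item If $n+s$ is odd, I write $p^{\frac{n-s}{2}}=p^{\frac{n-s-1}{2}}\sqrt{p}$ and express $\mu^{-1}\sqrt p$ through the quadratic Gauss sum (Lemma \ref{quadGuasssum3} with $e=1$). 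This gives $\mu^{-1}\sqrt{p}=\pm\sum_{i}\eta_0(i)\zeta_p^i$, with the sign depending on $\eta_0(-1)$. Hence $e_j=t\pm\eta_0(-1)\,\eta_0(j-j_0)p^{\frac{n-s-1}{2}}$.
\end{itemize}
In both cases the constant $t$ is fixed by $\sum_j e_j=2k-p^{n-s}$, using $\sum_j\eta_0(j-j_0)=0$ in the odd case.

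Finally I combine $e_j$ with $N_j(f^*)$ from Lemma \ref{lem-12} to get $c_j(f^*)=\frac{1}{2}(N_j(f^*)+e_j)$ and $d_j(f^*)=\frac{1}{2}(N_j(f^*)-e_j)$. The term $p^{n-s-1}$ in $N_j(f^*)$ combines with $\frac{t}{2}$ to produce $\frac{k}{p}$ and $p^{n-s-1}-\frac{k}{p}$. The $\epsilon_0^*$-terms of $N_j(f^*)$ merge with the $\delta_0$ or $\eta_0$ term of $e_j$ to give the factors $\frac{\epsilon_0^*\pm1}{2}$ and $\frac{\epsilon_0^*\pm\eta_0(-1)}{2}$.

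The main obstacle is the sign bookkeeping in the odd case. One has to correctly relate $\mu^{-1}\sqrt{p}$ to $\sqrt{p^*}=\sum_i\eta_0(i)\zeta_p^i$, keeping track of $\eta_0(-1)$ and of $\mu=\sqrt{-1}$ versus $\mu=1$. This is exactly what determines whether $\eta_0(-1)$ appears alongside $\epsilon_0^*$. I would double-check the result against the normalization of $\epsilon_0^*$ used in Lemma \ref{lem-12}, and against the power of $p$ printed in the statement, which I would verify carefully.
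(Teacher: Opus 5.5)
Your argument is correct. The paper gives no proof of this lemma; it simply imports it from \cite{YWei}, so there is no in-paper argument to compare against. Your route is the standard one: the sum $c_j+d_j=N_j(f^*)$ from Lemma \ref{lem-12}, the difference $e_j=c_j-d_j$ from the inverse Walsh transform at $x=0$, and the integral basis of $\mathbb{Z}[\zeta_p]$ to read off the $e_j$ up to a constant. Your use of $\sharp S_f=p^{n-s}$ is justified by Parseval.

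The sign you left open resolves cleanly, and your expression $t\pm\eta_0(-1)\eta_0(j-j_0)p^{\frac{n-s-1}{2}}$ double-counts it.
\begin{itemize}
\item For $n+s$ odd one has $\mu\sqrt{p}=\sqrt{p^*}$ and $\mu^2=\eta_0(-1)$, hence $\mu^{-1}\sqrt{p}=\eta_0(-1)\sqrt{p^*}$. Therefore $e_j=t+\eta_0(-1)\eta_0(j-j_0)p^{\frac{n-s-1}{2}}$, with a single factor $\eta_0(-1)$ and no further $\pm$, and $t=\frac{2k-p^{n-s}}{p}$.
\item For $n+s$ even, $t=\frac{2k-p^{n-s}-p^{\frac{n-s}{2}}}{p}$. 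Its last term, together with $\delta_0(j-j_0)p^{\frac{n-s}{2}}$, produces the factor $(\delta_0(j-j_0)p-1)p^{\frac{n-s}{2}-1}$ that pairs with the $\epsilon_0^*$-term of $N_j(f^*)$.
\end{itemize}

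With these values, $c_j=\frac{1}{2}(N_j(f^*)+e_j)$ and $d_j=\frac{1}{2}(N_j(f^*)-e_j)$ give exactly the stated formulas, with one exception. Your suspicion about the printed exponent is right: in the odd case, $p^{\frac{n-s-2}{2}}$ in $c_j(f^*)$ is a typo for $p^{\frac{n-s-1}{2}}$. The printed exponent is not an integer when $n-s$ is odd, and $c_j+d_j$ must equal $N_j(f^*)$.
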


Let $\mathscr{F}$ be the set of $p$-ary $s$-plateaued functions satisfying the following conditions \cite{YWei}:
\begin{itemize}
\item [(1)] $f(0)=0$;
\item [(2)] $f^*(x)$ is bent relative to $S_{f}$;
\item [(3)] For any $x \in V_{n}^{(p)}$, $a \in \gf_{p}^{*}$, if $x \in B_{+}(f^*)$ (respectively $B_{-}(f^*)$), then $ax \in B_{+}(f^*)$ (respectively $B_{-}(f^*)$);
\item [(4)] There exists a positive integer $t$ with $2\leq t\leq p-1$ and $\gcd (t-1,p-1)=1$ such that $f(ax)=a^{t}f(x)$ for any $a \in \gf_{p}^*$ and $x \in B_{+}(f^*)$, and there exists a positive integer $t^{'}$ with $2 \leq t^{'}\leq p-1$ and $\gcd (t^{'}-1,p-1)=1$ such that $f(ax)=a^{t^{'}}f(x)$ for any $a \in \gf_{p}^*$ and $x \in B_{-}(f^*)$.
\end{itemize}

In the following, we introduce a construction of plateaued  functions that belong to $\mathscr{F}$. Let $f^{(z)}$ be a weakly regular bent function from $V_{n_1}^{(p)}$ to $\gf_{p}$ for any $z \in \gf_{p^{n_2}}$. We consider  the variant of generalized Maiorana-McFarland (GMMF, \cite{Coesmelioglu2}) function $F: V_{n_1}^{(p)}\times \gf_{p^{n_2}}\times \gf_{p^{n_2}}\rightarrow \gf_{p}$ deﬁned by
\begin{eqnarray}\label{eqnarr-11}
F(x,y,z)=f^{(z)}(x)+\tr_{p^{n_2}/p}(yz^{l-1}),
\end{eqnarray}
where $l$ is a positive integer and $\gcd(l-1, p^{n_2}-1)=1$. Define
$$W_{+}(F)=\{z\in \gf_{p^{n_2}}: f^{(z^e)} \mbox{is of type}(+)\},$$
$$W_{-}(F)=\{z\in \gf_{p^{n_2}}: f^{(z^e)} \mbox{is of type}(-)\},$$
$$W_{+}(F^*)=\{z\in \gf_{p^{n_2}}: f^{(-z)^*} \mbox{is of type}(+)\},$$
$$W_{-}(F^*)=\{z\in \gf_{p^{n_2}}: f^{(-z)^*} \mbox{is of type}(-)\},$$
where $e$ is a  positive integer and $e(l-1)\equiv 1 \pmod {p^{n_2}-1}$. 
By the deﬁnition of Walsh transform and Remark \ref{rema-21}, we  easily 
know that $F^*(x, y, z)=f^{(y^e)^*}(x)-\tr_{p^{n_2}/p}(y^ez)$, and 
$$B_{\pm}(F)=V_{n_1}^{(p)}\times W_{\pm}(F)\times \gf_{p^{n_2}},$$
$$B_{\pm}(F^*)=V_{n_1}^{(p)}\times \gf_{p^{n_2}}\times W_{\pm}(F^*).$$

\begin{remark}\label{rema-22}
Let $f^{(0)}=0$ and $f^{(az)}=f^{(z)}$ for any $a \in \gf_{p}^*$ and $z \in \gf_{p^{n_2}}^*$. Let the types of $f^{(z)}$ be the same for all $z \in \gf_{p^{n_2}}^*$. 
If the types of $f^{(0)}$ and $f^{(z)}$ are the same, then $F(x, y, z)$ is weakly
regular. If the types of $f^{(0)}$ and $f^{(z)}$ are diﬀerent,then $F(x, y, z)$ is non-weakly regular. Let $f^{(0)}$ be of $t$-form ($t'$-form) and $f^{(z)}$ of $t'$-form ($t$-form) for $z \in \gf_{p^{n_2}}^*$, where $t'\equiv l \pmod {p-1}$ $(t\equiv l \pmod {p-1})$. By using known weakly regular bent functions, we can construct inﬁnitely many weakly regular and non-weakly regular bent functions belonging to $\mathscr{F}$ by Equation (\ref{eqnarr-11}). According to\cite[Examples 3, 4]{Ozbudak}, we can also obtain inﬁnitely many weakly regular and non-weakly regular $s$-plateaued functions belonging to $\mathscr{F}$ with $s \neq0$. 
\end{remark}

\begin{lemma}\cite{YWei}
Let $f(x): V_{n}^{(p)}\rightarrow \gf_{p}$ be an $s$-plateaued function and $f^*(x)$ be its dual. If $f(0)=0$ and $f(x)=f(-x)$ for any $x \in V_{n}^{(p)}$. Then $f^*(0)=0$.
\end{lemma}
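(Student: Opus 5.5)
The plan is to compute $f^*(0)$ straight from the Walsh transform at $b=0$ and use the symmetry $f(x)=f(-x)$ to pin down the phase $\zeta_p^{f^*(0)}$. First I would dispose of the degenerate case. If $0\notin S_f$, then $f^*(0)$ is not defined (or is set to $0$ by convention), so there is nothing to prove. Hence assume $0\in S_f$, so that
\[
W_f(0)=\sum_{x\in V_n^{(p)}}\zeta_p^{f(x)}=\pm\mu\, p^{\frac{n+s}{2}}\zeta_p^{f^*(0)},
\]
where $\mu\in\{1,\sqrt{-1}\}$ is determined by $p^{n+s}\bmod 4$.

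The key step is a Galois argument using Lemma \ref{lem-cyclo}. Write $N_j=\sharp D_{f,j}$, so that $W_f(0)=\sum_{j\in\gf_p}N_j\zeta_p^{j}$. This is an element of $\mathbb{Z}[\zeta_p]$. Applying $\sigma_a$ and comparing with the displayed form gives
\[
\sigma_a(W_f(0))=\pm\sigma_a(\mu p^{\frac{n+s}{2}})\,\zeta_p^{a f^*(0)} .
\]
Here $\mu p^{\frac{n+s}{2}}$ equals, up to sign, a rational number times a power of $\sqrt{p^*}$, and (\ref{eq-sigma}) says $\sigma_a$ only multiplies it by $\eta_0(a)^{n+s}$.

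The symmetry of $f$ enters through $\sigma_{-1}$. Since $f(-x)=f(x)$, the substitution $x\mapsto -x$ shows
\[
\sigma_{-1}(W_f(0))=\sum_x\zeta_p^{-f(x)}=\overline{W_f(0)} .
\]
On its own this relation is automatic and gives nothing new. The hypothesis $f(0)=0$ is also not yet used. So the extra input I would look for is the value distribution in Lemma \ref{lem-11}: it shows that $N_j$ depends only on $j-j_0$ (through $\delta_0(j-j_0)$ or $\eta_0(j-j_0)$), where $j_0=f^*(0)$. I would combine this with a parity count. Because $f$ is even, the nonzero $x$ pair off with $-x$, so $N_j-\delta_0(j)$ is even for every $j$. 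I then compare these parities with the formulas of Lemma \ref{lem-11}.

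In the case $n+s$ even, Lemma \ref{lem-11} gives $N_j=p^{n-1}+\epsilon_0(\delta_0(j-j_0)p-1)p^{\frac{n+s}{2}-1}$. This value is the same for all $j\ne j_0$ and differs at $j_0$. If $j_0\ne0$, I would take $j\ne 0,j_0$, which is possible because $p\ge3$. Then $N_{j_0}-N_j=\pm p^{\frac{n+s}{2}}$ is odd, yet by the pairing both $N_{j_0}$ and $N_j$ are even, a contradiction. In the case $n+s$ odd, the same difference between $j$ and $j'=2j_0-j$ (the reflection about $j_0$) gives the contradiction whenever $\eta_0(j-j_0)\ne\eta_0(j'-j_0)$. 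When $p\equiv3\pmod 4$ this holds, since $\eta_0(-1)=-1$. When $p\equiv1\pmod4$ I would instead compare $N_j$ for $j$ with $\eta_0(j-j_0)=1$ against $j$ with $\eta_0(j-j_0)=-1$, both nonzero and $\neq j_0$. For $p\ge5$ such choices should exist, and their difference $2p^{\frac{n+s-1}{2}}$ is only even, so this case needs a finer argument.

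For that finer argument I would use the modulo-4 parity of the pair counts, or else the identity $N_j=N_{-j}$. This identity holds because $\sum_x\zeta_p^{f(x)}$ is real exactly when the distribution is symmetric, and combining it with Lemma \ref{lem-11} forces $\eta_0(j-j_0)=\eta_0(-j-j_0)$ for all $j$, which should force $j_0=0$. I expect this last case to be the main obstacle.
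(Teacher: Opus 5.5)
Your parity setup is the right one. Because $p$ is odd, $x\mapsto -x$ pairs off the nonzero vectors with equal $f$-values, so $N_0$ is odd and $N_j$ is even for $j\neq 0$. Your contradiction for $n+s$ even (two even counts differing by the odd number $\pm p^{\frac{n+s}{2}}$) is valid.

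The case $n+s$ odd, however, is not proved. By Lemma \ref{lem-11}, $N_j=p^{n-1}\pm p^{\frac{n+s-1}{2}}$ for every $j\neq j_0$. So two such counts differ by $0$ or $\pm 2p^{\frac{n+s-1}{2}}$, never by an odd number. For $j,j'\neq 0$ both counts are even, and there is no contradiction whether or not $\eta_0(j-j_0)\neq\eta_0(j'-j_0)$. Hence your claim for $p\equiv 3\pmod 4$ does not follow, and the quadratic-residue condition you single out plays no role.

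Your proposed repair for $p\equiv 1\pmod 4$ is circular. The identity $N_j=N_{-j}$ for all $j$ holds if and only if $W_f(0)$ is real. In this case $W_f(0)=\pm p^{\frac{n+s}{2}}\zeta_p^{j_0}$, so reality is equivalent to $j_0=0$, which is the statement being proved. Evenness of $f$ alone does not give $N_j=N_{-j}$: for instance, $f(x)=x^2$ on $\gf_p$ with $p\equiv 3\pmod 4$ has $N_j=1+\eta_0(j)$.

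The missing idea is to look at the parity of a single count rather than at differences of counts.
For $n+s$ odd, $\eta_0(0)=0$ gives $N_{j_0}=p^{n-1}$, which is odd. The pairing makes $N_j$ even for every $j\neq 0$, so $j_0=0$ immediately.
The same observation treats $n+s$ even uniformly. There $N_{j_0}=p^{n-1}+\epsilon_0(p-1)p^{\frac{n+s}{2}-1}$ is odd, while $N_j=p^{n-1}-\epsilon_0p^{\frac{n+s}{2}-1}$ is even for $j\neq j_0$.
Thus in both cases $j_0$ is the unique value with an odd number of preimages, and by the pairing that value is $0$.

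If you prefer to keep your Galois setup, the same argument runs in $\mathbb{Z}[\zeta_p]$. The pairing gives $W_f(0)\equiv 1\pmod 2$. The prefactor $\pm\mu p^{\frac{n+s}{2}}$ equals $\pm p^{\frac{n+s}{2}}$ or $\pm p^{\frac{n+s-1}{2}}\sqrt{p^*}$, and $\sqrt{p^*}=\pm\sum_{y\in\gf_p^*}\eta_0(y)\zeta_p^y\equiv 1\pmod 2$, so the prefactor is $\equiv 1\pmod 2$. Therefore $\zeta_p^{j_0}\equiv 1\pmod 2$, which is impossible for $j_0\neq 0$ because $1-\zeta_p^{j_0}$ has norm $p$.
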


\begin{lemma} \cite{YWei}\label{YWei}
Let $f(x):V_{n}^{(p)}\rightarrow \gf_{p}$ be an $s$-plateaued function belonging to $\mathscr{F}$, $\sharp B_{+}(f)=k$ with $0\leq k\leq p^{n-s}$, and $f^*(x)$ be its dual. 
Then we have the following.
\begin{itemize}
\item [(1)] For any $a \in \gf_{p}^*$ and $b \in V_{n}^{(p)}$, if $b \notin S_{f}$, then $ab \notin S_f$, and if $b\in B_{+}(f)$ (respectively $B_{-}(f)$), then $ab \in B_{+}(f)$ (respectively $B_{-}(f)$).
\item [(2)]  There exists a positive integer $h$ with $\gcd(h-1,p-1)=1$ such that $f^*(ab)=a^hf^*(b)$ for any $a \in \gf_{p}^*$, $b \in B_{+}(f)$, and there exists a positive integer $h'$ with $\gcd(h'-1,p-1)=1$ such that $f^*(ab)=a^{h'}f^*(b)$ for any $a \in \gf_{p}^*$, $b \in B_{-}(f)$.
 \item [(3)]   The types of $f(x)$ and $f^*(x)$ are the same if $p^{n+s}\equiv 1\pmod{4}$ and different if $p^{n+s}\equiv 3\pmod{4}$.
\end{itemize}
\end{lemma}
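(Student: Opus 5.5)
The plan is to deduce all three statements from the Galois action $\sigma_c$ of Lemma \ref{lem-cyclo} on Walsh values, combined with the duality $f^{**}(x)=f(-x)$ for $f$ and its dual $f^*$, which is bent relative to $S_f$. The starting observation concerns conditions (2)–(3) in the definition of $\mathscr{F}$. Since $f^*$ is bent relative to $S_f$, we have $W_{f^*}(x)\neq 0$ for every $x\in V_n^{(p)}$, so $V_n^{(p)}=B_+(f^*)\sqcup B_-(f^*)$. Moreover, both pieces are $\gf_p^*$-invariant by condition (3). Condition (4) then says that $f$ is $t$-homogeneous on one piece and $t'$-homogeneous on the other.

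\textbf{Step 1 (invariance of $S_f$ and $B_\pm(f)$, part (1)).} By inverse Fourier transform, for $x\in V_n^{(p)}$,
$$\epsilon^*_x\,\mu\, p^{\frac{n-s}{2}}\zeta_p^{f(-x)}=W_{f^*}(x)=\sum_{b\in S_f}\zeta_p^{f^*(b)-\langle x,b\rangle_n},$$
and symmetrically $W_f(b)=\epsilon_b\mu p^{\frac{n+s}{2}}\zeta_p^{f^*(b)}$ on $S_f$ and $0$ off it. I will write $W_f(b)$ as the sum of two partial sums $\Sigma_\pm(b)=\sum_{x\in B_\pm(f^*)}\zeta_p^{f(x)-\langle b,x\rangle_n}$. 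For $a\in\gf_p^*$, substitute $x=a y$ in $\Sigma_+(ab)$ and use $f(ay)=a^t f(y)$. Writing $a^t=a\cdot a^{t-1}$ gives
$$\Sigma_+(ab)=\sigma_{c}\bigl(\Sigma_+(a'b)\bigr)$$
for suitable $c,a'$ depending only on $a,t$; the analogous identity holds for $\Sigma_-$ with $t'$. The key point is that $\gcd(t-1,p-1)=1$, so $a\mapsto a^{t-1}$ permutes $\gf_p^*$. Using this, I will choose $c=a^{t}$ and reparametrize so that each partial sum at $ab$ is a Galois conjugate of the same partial sum at a scalar multiple of $b$. Iterating over the cyclic group $\gf_p^*$ then shows that $|\Sigma_\pm|$ is constant on $\gf_p^*$-orbits.

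\textbf{Step 2 (handling the two degrees).} I will not combine $\Sigma_+$ and $\Sigma_-$ directly. Instead, I will express each of them through $f^*$: restrict the Fourier inversion above to $x\in B_\pm(f^*)$, which expresses $\Sigma_\pm(b)$ as an explicit linear combination of $\zeta_p^{f^*(b')}$ over $b'\in S_f$. From this, $W_f(ab)=0$ if and only if $W_f(b)=0$, which gives the first half of (1). Next, apply $\sigma_c$, which multiplies $\mu p^{\frac{n+s}{2}}$ by $\eta_0(c)^{n+s}$. Comparing the sign in front of $\zeta_p^{f^*(ab)}$ with that in front of $\zeta_p^{f^*(b)}$ gives $\epsilon_{ab}=\epsilon_b$, which is the second half of (1). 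Comparing the exponents gives $f^*(ab)=a^{h}f^*(b)$ on $B_+(f)$ and $f^*(ab)=a^{h'}f^*(b)$ on $B_-(f)$, which is (2). Here $h$ is defined by $(t-1)(h-1)\equiv1\pmod{p-1}$, the inverse-degree relation familiar from the Proposition of \cite{Coesmelioglu}, and $h'$ is defined likewise from $t'$. In particular $\gcd(h-1,p-1)=\gcd(h'-1,p-1)=1$.

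\textbf{Step 3 (part (3)).} Evaluate at $b=0$ and $x=0$. By Lemma \ref{lem-11}, $W_f(0)=\epsilon_0\mu p^{\frac{n+s}{2}}\zeta_p^{f^*(0)}$, and by Lemma \ref{lem-12}, $W_{f^*}(0)=\epsilon^*_0\mu p^{\frac{n-s}{2}}\zeta_p^{f(0)}$. I will relate the two via the Parseval-type identity $\sum_{b}W_f(b)\overline{(\cdot)}$ together with the value distributions in Lemmas \ref{lem-11} and \ref{lem-12}. The ratio of the two types then comes out as $\mu^2/|\mu|^2$, which is $1$ if $p^{n+s}\equiv1\pmod 4$ and $-1$ otherwise, exactly as in Remark \ref{rema-21} for the weakly regular case.

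I expect Step 2 to be the main obstacle: the two different homogeneity degrees $t,t'$ prevent a single Galois automorphism from acting uniformly on $W_f$. The first thing I will try is to show that $\sigma_c$ maps each partial sum $\Sigma_\pm$ separately to the corresponding partial sum at the scaled argument, and then to invoke the linear independence of $\{\zeta_p^i\}$ (Lemma \ref{lem-cyclo}) to read off signs and exponents.
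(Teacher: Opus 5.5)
Your overall strategy of Galois conjugation applied separately to the two partial sums is right. There are, however, two genuine gaps: Step 2 never produces closed forms for $\Sigma_\pm$, and Step 3 never uses the homogeneity of $f$.

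\textbf{Step 2.} Your Step 1 identity $\Sigma_+(ab)=\sigma_{a^t}\bigl(\Sigma_+(a^{2-t}b)\bigr)$, and its analogue for $\Sigma_-$ with $t'$, is correct. But restricting the inversion of $W_{f^*}$ to $x\in B_\pm(f^*)$ only yields
$\Sigma_\pm(b)=\pm\mu^{-1}p^{-\frac{n-s}{2}}\sum_{b'\in S_f}\zeta_p^{f^*(b')}\chi_{b'-b}\bigl(B_\pm(f^*)\bigr)$.
Its coefficients are unknown character sums of $B_\pm(f^*)$, so nothing explicit comes out. Without explicit values, $W_f(ab)=\Sigma_+(ab)+\Sigma_-(ab)$ is a sum of two terms conjugated by \emph{different} automorphisms. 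So neither ``$W_f(ab)=0$ iff $W_f(b)=0$'' nor the sign comparison follows. Also, $\sigma_c$ need not preserve absolute values, so ``$|\Sigma_\pm|$ is constant on orbits'' is not available at that point either.

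The missing step is the combination you chose to avoid. Condition (3) in the definition of $\mathscr{F}$ gives $\epsilon^*_{-x}=\epsilon^*_x$. Hence the \emph{full} inversion $\sum_{x}W_{f^*}(x)\zeta_p^{\langle x,b\rangle_n}=p^n\zeta_p^{f^*(b)}\mathbf{1}_{S_f}(b)$, with $\mathbf{1}_{S_f}$ the indicator of $S_f$, collapses to
\[
\Sigma_+(b)-\Sigma_-(b)=\mu^{-1}p^{\frac{n+s}{2}}\zeta_p^{f^*(b)}\mathbf{1}_{S_f}(b).
\]
Together with $\Sigma_+(b)+\Sigma_-(b)=W_f(b)$ this gives
$\Sigma_\pm(b)=\tfrac12(\epsilon_b\mu\pm\mu^{-1})p^{\frac{n+s}{2}}\zeta_p^{f^*(b)}\mathbf{1}_{S_f}(b)$.
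So for $b\in S_f$ exactly one partial sum is nonzero. Only then does your conjugation argument give (1) and (2); $\sigma_c$ fixes $\mu p^{\frac{n+s}{2}}$ because $c$ is a square, $t$ and $t'$ being even.

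When $\mu=\sqrt{-1}$, the sum supported on $B_+(f)$ is $\Sigma_-$, not $\Sigma_+$. So your choice $(t-1)(h-1)\equiv 1\pmod{p-1}$ on $B_+(f)$ is wrong for $p^{n+s}\equiv 3\pmod 4$. There $h$ comes from $t'$ and $h'$ from $t$; the existence claim in (2) is unaffected.

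\textbf{Step 3.} This step has no working mechanism. The formulas you write for $W_f(0)$ and $W_{f^*}(0)$ are just the definitions of $\epsilon_0$ and $\epsilon^*_0$. Lemmas \ref{lem-11} and \ref{lem-12} describe $\epsilon_0$ and $\epsilon^*_0$ separately, and no Parseval-type identity links them once $\epsilon_b$ and $\epsilon^*_x$ vary. The factor $\mu^2$ you quote comes from the weakly regular computation, where a constant sign can be pulled out of the inversion formula.

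The homogeneity must enter here. Let $P=B_{-\epsilon^*_0}(f^*)$, the piece not containing $0$. The closed forms at $b=0$ give
$\sum_{x\in P}\zeta_p^{f(x)}=\tfrac12(\epsilon_0\mu-\epsilon^*_0\mu^{-1})p^{\frac{n+s}{2}}\zeta_p^{f^*(0)}$,
which vanishes iff $\epsilon^*_0=\mu^2\epsilon_0$, i.e.\ iff (3) holds.

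Suppose it did not vanish. Then it would equal $\pm\mu p^{\frac{n+s}{2}}\zeta_p^{f^*(0)}$. Expand this in the basis $\zeta_p^j$, using $\sqrt{p^*}=\sum_{j\in\gf_p^*}\eta_0(j)\zeta_p^j$ when $n+s$ is odd. This forces two of the counts $M_j=\#\{x\in P: f(x)=j\}$ to differ by an odd number.

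On the other hand, $P$ is a disjoint union of orbits $\gf_p^*x$ with $x\neq 0$, on which $f$ is homogeneous of even degree. Each orbit contributes $p-1$ to $M_0$, or $\gcd(t,p-1)$ (resp.\ $\gcd(t',p-1)$) to each value of a coset. Both numbers are even, so every $M_j$ is even. This contradiction proves (3).
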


\subsection{Partial Difference Set}
Let \(G\) be a finite  Abelian group of order \(v\) and let \(D\) be a subset of \(G\) with \(\kappa\) elements. Denote by $D^{-1}=\{x^{-1}:x\in D\}$.
Then \(D\) is called a \((v, \kappa, \lambda, \mu)\) \emph{partial difference set} (PDS for short) in \(G\) if the multiset of expressions \(d_1 d_2^{-1}\), for \(d_1, d_2 \in D\) with \(d_1 \neq d_2\), contains each nonidentity element in \(D\) exactly \(\lambda\) times and each nonidentity element not in \(D\) exactly \(\mu\) times. 
In particular, when \(\lambda = \mu\), then \(D\) is called a difference set.

Let \( D \) be a subset of an Abelian group \( G \) such that the identity element \( \mathbf{e} \notin D \) and \( D^{-1} = D \). The Cayley graph \( \Gamma = \operatorname{Cay}(G, D) \) with connection set \( D \) is defined as follows: the vertex set is \( G \), and two vertices \( x, y \in G \) are adjacent, denoted \( x \sim y \), if and only if \( xy^{-1} \in D \). Then \( \operatorname{Cay}(G, D) \) is a strongly regular graph with parameters \( (v, \kappa, \lambda, \mu) \) if and only if \( D \) is a partial difference set (PDS) with parameters \( (v, \kappa, \lambda, \mu) \) \cite{SLMa}.

There is an important tool to characterize partial difference sets in terms of character sums.
\begin{lemma} \label{lemma-21}\cite{SLMa}
Let $G$ be an Abelian group of order $v$. Suppose that $D$ is a subset of $G$ such that $D^{-1}=D$. Suppose $\kappa, \lambda$ and $\mu$ are positive integers satisfying $\kappa^2=\mu v+(\lambda-\mu)\kappa+\gamma$, where $\gamma=\kappa-\mu$ if $\mathbf{e} \notin D$ and $\gamma=\kappa-\lambda$ if $\mathbf{e} \in D$, where $\mathbf{e}$ represents the identity element in $G$. Then $D$ is a $(v,\kappa,\lambda,\mu)$-PDS in $G$ if and only if
\begin{eqnarray*}
\chi(D)=\left\{
\begin{array}{ll}
\kappa  &   \mbox{if $\chi$ is trivial},\\
\frac{\beta\pm\sqrt{\Delta}}{2}   &   \mbox{if $\chi$ is nontrivial},
\end{array} \right.
\end{eqnarray*}
where $\chi$ is a character of $G$, $\beta=\lambda-\mu$ and $\Delta=\beta^2+4\gamma$.
\end{lemma}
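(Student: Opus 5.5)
The plan is to work in the integral group ring $\mathbb{Z}[G]$. I identify a subset $A\subseteq G$ with the element $\sum_{a\in A}a$, and extend each character $\chi$ linearly to a ring homomorphism $\mathbb{Z}[G]\to\mathbb{C}$. The key is to turn the combinatorial PDS condition into a single quadratic identity in $\mathbb{Z}[G]$. Once that is done, I apply characters and invoke Fourier inversion on the abelian group $G$.

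\textbf{Step 1: the group-ring identity.} Since $D^{-1}=D$, the product $DD^{-1}=D^2$ counts, for each $g\in G$, the number of pairs $(d_1,d_2)\in D^2$ with $d_1d_2^{-1}=g$.
\begin{itemize}
\item The pairs with $d_1=d_2$ contribute $\kappa\,\mathbf{e}$.
\item By definition, $D$ is a $(v,\kappa,\lambda,\mu)$-PDS exactly when the remaining pairs contribute $\lambda$ to each non-identity element of $D$ and $\mu$ to each non-identity element outside $D$.
\end{itemize}
I split according to whether $\mathbf{e}$ lies in $D$.
\begin{itemize}
\item If $\mathbf{e}\notin D$, the condition reads $D^2=\kappa\mathbf{e}+\lambda D+\mu(G-\mathbf{e}-D)$.
\item If $\mathbf{e}\in D$, it reads $D^2=\kappa\mathbf{e}+\lambda(D-\mathbf{e})+\mu(G-D)$.
\end{itemize}
In both cases this becomes
\[
D^2=\gamma\,\mathbf{e}+\beta D+\mu G,
\]
with $\beta=\lambda-\mu$ and $\gamma$ exactly as in the statement. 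Conversely, this identity recovers the PDS condition, because the coefficient of each $g$ in $D^2$ is precisely the relevant count. So $D$ is a PDS if and only if this identity holds.

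\textbf{Step 2: the forward direction.} I apply characters to the identity.
\begin{itemize}
\item For the trivial $\chi$, we always have $\chi(D)=\kappa$, and the identity gives $\kappa^2=\gamma+\beta\kappa+\mu v$, which is the assumed parameter relation.
\item For nontrivial $\chi$, orthogonality gives $\chi(G)=0$, so $\chi(D)^2-\beta\chi(D)-\gamma=0$.
\end{itemize}
Hence $\chi(D)=\frac{\beta\pm\sqrt{\Delta}}{2}$ with $\Delta=\beta^2+4\gamma$. Note that $D=D^{-1}$ makes $\chi(D)$ real, which is consistent with this.

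\textbf{Step 3: the converse.} Set $X=D^2-\gamma\mathbf{e}-\beta D-\mu G\in\mathbb{Z}[G]$. I check that $\chi(X)=0$ for every character.
\begin{itemize}
\item For the trivial character, this is exactly the hypothesis $\kappa^2=\mu v+(\lambda-\mu)\kappa+\gamma$.
\item For nontrivial characters, $\chi(G)=0$, and $\chi(D)$ is a root of $x^2-\beta x-\gamma$ by assumption.
\end{itemize}
By Fourier inversion, the coefficient of $g$ in $X$ is $\frac1v\sum_\chi\chi(X)\overline{\chi(g)}$. So an element of $\mathbb{Z}[G]$ annihilated by all characters is $0$, giving $X=0$. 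Step 1 then shows that $D$ is a PDS.

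The only delicate point is the bookkeeping in Step 1. I need to see that the two cases for $\mathbf{e}$ produce the same $\beta$ and $\gamma$ matching the statement. I also need to note that the definition concerns non-identity elements only, so the $\mathbf{e}$-coefficient is absorbed correctly into $\gamma$.
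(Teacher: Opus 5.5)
Your argument is correct. The group-ring identity $D^2=\gamma\mathbf{e}+\beta D+\mu G$ encodes the PDS condition in both cases for $\mathbf{e}$; its $\mathbf{e}$-coefficient even forces $|D|=\kappa$. Applying characters and then Fourier inversion gives both directions. The paper gives no proof of its own and simply quotes the lemma from \cite{SLMa}, where this same group-ring and character-inversion argument is the standard proof, so your route matches it.
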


In the following, let $\Delta_1=\beta^2+4(\kappa-\mu)$ if $\mathbf{e}  \notin D$ and $\Delta_2=\beta^2+4(\kappa-\lambda)$ if $\mathbf{e}  \in D$.

\begin{lemma} \label{lem-21}
Let $G$ be an Abelian group of order $v$. Suppose that $D$ is a subset of $G$ such that $D^{-1}=D$ and  $\mathbf{e} \notin D$. Suppose $\kappa, \lambda$ and $\mu$ are positive integers satisfying $\kappa^2=\mu v+(\lambda-\mu)\kappa+\gamma$, where $\gamma=\kappa-\mu$. Then $D$ is a $(v,\kappa,\lambda,\mu)$-PDS in $G$ if and only if
\begin{eqnarray*}
\chi(D)=\left\{
\begin{array}{ll}
\kappa  &   \mbox{with frequency $1$,}\\
\frac{\beta+\sqrt{\Delta_1}}{2}   &   \mbox{with frequency $\frac{1}{2}\left((v-1)-\frac{2\kappa+\beta(v-1)}{\sqrt{\Delta_1}}\right)$,}\\
\frac{\beta-\sqrt{\Delta_1}}{2}   &   \mbox{with frequency $\frac{1}{2}\left((v-1)+\frac{2\kappa+\beta(v-1)}{\sqrt{\Delta_1}}\right)$,}\\
\end{array} \right.
\end{eqnarray*}
where $\chi$ runs over the characters of $G$, $\beta=\lambda-\mu$ and $\Delta_1=\beta^2+4\gamma$.
\end{lemma}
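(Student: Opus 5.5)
The plan is to combine Lemma \ref{lemma-21} with orthogonality of characters. By Lemma \ref{lemma-21}, under the hypotheses $D^{-1}=D$, $\mathbf{e}\notin D$ and $\kappa^2=\mu v+(\lambda-\mu)\kappa+(\kappa-\mu)$, the set $D$ is a $(v,\kappa,\lambda,\mu)$-PDS if and only if $\chi(D)=\kappa$ for the trivial character and $\chi(D)\in\{\frac{\beta\pm\sqrt{\Delta_1}}{2}\}$ for every nontrivial $\chi$. The "if" direction of the statement is therefore immediate, since the prescribed value distribution says exactly this. For the "only if" direction, it remains to compute the two frequencies. Write $N_+$ and $N_-$ for the number of nontrivial characters taking the values $\frac{\beta+\sqrt{\Delta_1}}{2}$ and $\frac{\beta-\sqrt{\Delta_1}}{2}$ respectively. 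We have $N_++N_-=v-1$, since $G$ has exactly $v$ characters.

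For a second equation, sum $\chi(D)$ over all characters of $G$:
$$\sum_{\chi}\chi(D)=\sum_{x\in D}\sum_\chi \chi(x)=v\cdot[\mathbf{e}\in D]=0,$$
because $\mathbf{e}\notin D$. Hence
$$\kappa+N_+\frac{\beta+\sqrt{\Delta_1}}{2}+N_-\frac{\beta-\sqrt{\Delta_1}}{2}=0,$$
which rearranges to
$$\frac{\beta(v-1)}{2}+\frac{\sqrt{\Delta_1}}{2}(N_+-N_-)=-\kappa,$$
so that
$$N_+-N_-=-\frac{2\kappa+\beta(v-1)}{\sqrt{\Delta_1}}.$$
Solving this together with $N_++N_-=v-1$ gives exactly the frequencies in the statement.

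The only delicate point is that $\Delta_1>0$, so the two values are distinct and the linear system is nondegenerate. This holds since $\Delta_1=\beta^2+4(\kappa-\mu)$ and $\kappa>\mu$ in any nontrivial PDS with $\mathbf{e}\notin D$. Alternatively, if $\Delta_1=0$ the two values coincide and the frequency formula is vacuous; that case can be dismissed under the hypotheses. As a cross-check, summing $|\chi(D)|^2$ over all characters yields $v\kappa$ by Parseval, and this is consistent with the parameter relation.
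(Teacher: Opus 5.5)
Your proof is correct and is essentially the paper's: you apply Lemma \ref{lemma-21} and then solve the $2\times 2$ system consisting of $N_++N_-=v-1$ and the orthogonality relation. The only difference is bookkeeping: the paper sums over nontrivial characters to get $-\kappa$, while you sum over all characters to get $0$, which gives the same equation. One correction to your aside: $\kappa>\mu$ can fail (e.g.\ $D=G\setminus H$ for a nontrivial subgroup $H$ of index at least $3$ gives $\kappa=\mu$), though $\Delta_1>0$ still holds there because $\beta=-|H|\neq 0$; $\Delta_1=0$ occurs only in the degenerate case $D=G\setminus\{\mathbf{e}\}$ with the vacuous choice $\mu=v$, an edge case the paper's statement also silently ignores.
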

\begin{IEEEproof}
Let $m_1$ and $m_2$ respectively denote the frequencies of  $\frac{\beta+\sqrt{\Delta_1}}{2}$ and $\frac{\beta-\sqrt{\Delta_1}}{2}$. It is clear that $m_1+m_2=v-1$ and 
\begin{eqnarray*}
\frac{\beta+\sqrt{\Delta_1}}{2}m_1+\frac{\beta-\sqrt{\Delta_1}}{2}m_2=\sum_{x \in D} \sum_{\chi \in G^*}\chi(x).
\end{eqnarray*}
Since $\mathbf{e} \notin D$,  the orthogonal relation of characters yields
\begin{eqnarray*}
\frac{\beta+\sqrt{\Delta_1}}{2}m_1+\frac{\beta-\sqrt{\Delta_1}}{2}m_2=-\kappa.
\end{eqnarray*}
Then we have
\begin{eqnarray*}
\left\{
\begin{array}{ll}
m_1+m_2=v-1,\\
\frac{\beta+\sqrt{\Delta_1}}{2}m_1+\frac{\beta-\sqrt{\Delta_1}}{2}m_2=-\kappa.
\end{array} \right.
\end{eqnarray*}
This yields $m_1=\frac{1}{2}\left((v-1)-\frac{2\kappa+\beta(v-1)}{\sqrt{\Delta_1}}\right)$ and $m_2=\frac{1}{2}\left((v-1)+\frac{2\kappa+\beta(v-1)}{\sqrt{\Delta_1}}\right)$.
The desired conclusion follows from Lemma \ref{lemma-21}.
\end{IEEEproof}

\begin{lemma}\label{lem-22}
Let $G$ be an Abelian group of order $v$. Suppose that $D$ is a subset of $G$ such that $D^{-1}=D$ and $\mathbf{e} \in D$. Suppose $\kappa, \lambda$ and $\mu$ are positive integers satisfying $\kappa^2=\mu v+(\lambda-\mu)\kappa+\gamma$ where $\gamma=\kappa-\lambda$. Then $D$ is a $(v,\kappa,\lambda,\mu)$-PDS in $G$ if and only if
\begin{eqnarray*}
\chi(D)=\left\{
\begin{array}{ll}
\kappa  &   \mbox{with frequency $1$,}\\
\frac{\beta+\sqrt{\Delta_2}}{2}   &   \mbox{with frequency $\frac{1}{2}\left((v-1)+\frac{2(v-\kappa)-\beta(v-1)}{\sqrt{\Delta_2}}\right)$,}\\
\frac{\beta-\sqrt{\Delta_2}}{2}   &   \mbox{with frequency $\frac{1}{2}\left((v-1)-\frac{2(v-\kappa)-\beta(v-1)}{\sqrt{\Delta_2}}\right)$,}\\
\end{array} \right.
\end{eqnarray*}
where $\chi$ runs over the characters of $G$, $\beta=\lambda-\mu$ and $\Delta_2=\beta^2+4\gamma$.
\end{lemma}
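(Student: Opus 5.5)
This follows the proof of Lemma \ref{lem-21}, with the role of $\mathbf{e}$ reversed. By Lemma \ref{lemma-21} with $\gamma=\kappa-\lambda$ (the case $\mathbf{e}\in D$), $D$ is a $(v,\kappa,\lambda,\mu)$-PDS if and only if $\chi(D)=\kappa$ for the trivial character and $\chi(D)\in\{\frac{\beta\pm\sqrt{\Delta_2}}{2}\}$ for every nontrivial character. So it suffices to show that these value conditions force the stated frequencies; the converse direction then follows directly from Lemma \ref{lemma-21}. Let $m_1$ and $m_2$ be the numbers of nontrivial characters with $\chi(D)=\frac{\beta+\sqrt{\Delta_2}}{2}$ and $\frac{\beta-\sqrt{\Delta_2}}{2}$, respectively.

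The first equation comes from counting nontrivial characters: $m_1+m_2=v-1$.

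The second equation sums $\chi(D)$ over all nontrivial characters $\chi$ and swaps the order of summation:
\begin{eqnarray*}
\sum_{\chi\neq\chi_0}\chi(D)=\sum_{x\in D}\sum_{\chi\neq\chi_0}\chi(x).
\end{eqnarray*}
By orthogonality, the inner sum equals $v-1$ when $x=\mathbf{e}$ and $-1$ otherwise. Since $\mathbf{e}\in D$, the total is $(v-1)-(\kappa-1)=v-\kappa$. Hence
\begin{eqnarray*}
\frac{\beta+\sqrt{\Delta_2}}{2}m_1+\frac{\beta-\sqrt{\Delta_2}}{2}m_2=v-\kappa .
\end{eqnarray*}

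Solving the $2\times 2$ system, the second equation gives $\frac{\beta}{2}(v-1)+\frac{\sqrt{\Delta_2}}{2}(m_1-m_2)=v-\kappa$. Therefore
\begin{eqnarray*}
m_1-m_2=\frac{2(v-\kappa)-\beta(v-1)}{\sqrt{\Delta_2}},
\end{eqnarray*}
and together with $m_1+m_2=v-1$ this yields exactly the stated frequencies.

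The only step needing care is the character-sum evaluation: because $\mathbf{e}\in D$, the right-hand side becomes $v-\kappa$ instead of the $-\kappa$ of Lemma \ref{lem-21}, and this is what changes the sign pattern in the frequencies. Dividing by $\sqrt{\Delta_2}$ is legitimate since $\Delta_2\neq 0$: the two nontrivial values are distinct, which is implicit in the form of Lemma \ref{lemma-21}.
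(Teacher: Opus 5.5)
Your proposal is correct and follows essentially the same route as the paper. Like the paper, you set up $m_1+m_2=v-1$, evaluate $\sum_{x\in D}\sum_{\chi\neq\chi_0}\chi(x)=v-\kappa$ by orthogonality using $\mathbf{e}\in D$, solve the $2\times 2$ system, and appeal to Lemma \ref{lemma-21} for the equivalence; your explicit count $(v-1)-(\kappa-1)$ and your remark that $\Delta_2\neq 0$ simply spell out details the paper leaves implicit.
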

\begin{IEEEproof}
Let $m_1$ and $m_2$ respectively denote the frequencies of  $\frac{\beta+\sqrt{\Delta_2}}{2}$ and $\frac{\beta-\sqrt{\Delta_2}}{2}$. Note that $m_1+m_2=v-1$ and
\begin{eqnarray*}
\frac{\beta+\sqrt{\Delta_2}}{2}m_1+\frac{\beta-\sqrt{\Delta_2}}{2}m_2=\sum_{x \in D} \sum_{\chi \in G^*}\chi(x).
\end{eqnarray*}
Since $\mathbf{e} \in D$, the orthogonal relation of characters yields
\begin{eqnarray*}
\frac{\beta+\sqrt{\Delta_2}}{2}m_1+\frac{\beta-\sqrt{\Delta_2}}{2}m_2=v-\kappa.
\end{eqnarray*}
Then we have
\begin{eqnarray*}
\left\{
\begin{array}{ll}
m_1+m_2=v-1\\
\frac{\beta+\sqrt{\Delta_2}}{2}m_1+\frac{\beta-\sqrt{\Delta_2}}{2}m_2=v-\kappa.
\end{array} \right.
\end{eqnarray*}
This yields $m_1=\frac{1}{2}\left((v-1)+\frac{2(v-\kappa)-\beta(v-1)}{\sqrt{\Delta_2}}\right)$ and $m_2=\frac{1}{2}\left((v-1)-\frac{2(v-\kappa)-\beta(v-1)}{\sqrt{\Delta_2}}\right)$.
The desired conclusion follows from Lemma \ref{lemma-21}.
\end{IEEEproof}

The following presents a method for constructing partial difference sets, as given in \cite{Brouwer}. 

Let $q = p^e$, $\mathbb{F}_{q^r} = \langle \alpha_1 \rangle$, $p$ be a prime, and $r$ be a positive integer such that $2 \mid er$. For a positive integer $N$ dividing $q^r - 1$, the $N$-th cyclotomic classes of $\mathbb{F}_{q^r}$ are defined by
\[
C_i = \left\{ \alpha_1^{jN + i} : 0 \le j \le \frac{q^r - 1}{N} - 1 \right\}, \quad 0 \le i \le N-1.
\]
The set $C_0$ is a subgroup of $\mathbb{F}_{q^r}^*$ of index $N$, and $C_i = \alpha_1^i C_0$ for $0 \le i \le N-1$.

\begin{lemma}\label{lem-23}\cite{Brouwer}
Retain the notation as above, and suppose that \( N \) is a proper divisor of \( q^r - 1 \) with \( N \neq 1 \), and that there exists a positive integer \( \ell_1 \) such that \( p^{\ell_1} \equiv -1 \pmod{N} \). Choose \( \ell_1 \) to be minimal and write \( er = 2\ell_1 t \). Let \( J \subset \mathbb{Z}_N \) be a proper subset of size \( u \). If \( q \) is odd, we further assume that \( N \mid \frac{q^r - 1}{2} \) and that \( J + \frac{q^r - 1}{2} \equiv J \pmod{N} \). Define
\[
D = D_J = \bigcup_{j \in J} C_j.
\] 
Then the graph $\text{Cay}(\gf_{q^r},D)$ is strongly regular with eigenvalues
$$\kappa=|D|=\frac{q^r-1}{N}u,\ \mbox{with multiplicity} ~ 1;$$
$$\theta_1=\frac{u}{N}(-1+(-1)^t)\sqrt{q^r},\ \mbox{with multiplicity}~ q^r-1-\kappa;$$
$$\theta_2=\theta_1+(-1)^{t+1}\sqrt{q^r},\ \mbox{with multiplicity}~ \kappa.$$
To be speciﬁc, for $i\in \{0,\cdots, N-1\}$, we have 

\begin{eqnarray*}
\chi_1(\alpha^iD)=\left\{
\begin{array}{ll}
\theta_2  &   \mbox{if $\varpi^t=1, i\in -J \pmod{N}$ or $\varpi^t=-1, i\in -J +\frac{N}{2} \pmod{N}$,}\\
\theta_1   &   \mbox{ otherwise}.\\
\end{array} \right.
\end{eqnarray*}
where \begin{eqnarray*}
\varpi=\left\{
\begin{array}{ll}
-1 & \mbox{if $N$ is even and $\frac{p^{\ell_1+1}}{N}$ is odd,}\\
1 & \mbox{otherwise}.
\end{array} \right.
\end{eqnarray*}
\end{lemma}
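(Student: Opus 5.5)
The plan is to compute the character values $\chi_1(\alpha^i D)$ explicitly via Gauss periods, and then deduce strong regularity from the fact that only two nontrivial values occur. Here $\alpha=\alpha_1$ generates $\gf_{q^r}^*$, $Q:=q^r=p^{er}$, and $\chi_1$ is the canonical additive character of $\gf_{Q}$. Since $D=\bigcup_{j\in J}C_j$, we have
$$\chi_1(\alpha^iD)=\sum_{j\in J}\eta_{i+j},\qquad \eta_k:=\sum_{x\in C_k}\chi_1(x).$$
So everything reduces to evaluating the Gauss periods $\eta_k$ of order $N$. I read the quantity $\frac{p^{\ell_1+1}}{N}$ in the definition of $\varpi$ as $\frac{p^{\ell_1}+1}{N}$, which is an integer by hypothesis.

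\textbf{Step 1 (Fourier expansion of $\eta_k$).} Let $\psi$ be a multiplicative character of order exactly $N$ with $\psi(\alpha)=\zeta_N$. Orthogonality of the characters $\psi^a$ ($0\le a\le N-1$) on $\gf_Q^*/C_0$ gives
$$\eta_k=\frac1N\sum_{a=0}^{N-1}\overline{\psi^a(\alpha^k)}\,G(\psi^a,\chi_1)=\frac1N\Bigl(-1+\sum_{a=1}^{N-1}\zeta_N^{-ak}G(\psi^a,\chi_1)\Bigr).$$

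\textbf{Step 2 (semiprimitive Gauss sums).} This is the step I expect to be the main obstacle, since it requires tracking signs carefully. The input is the classical semiprimitive evaluation (Stickelberger; see Berndt--Evans--Williams). If $\psi'$ has order $M>2$ and $p^{\ell}\equiv-1\pmod M$ with $\ell$ minimal and $er=2\ell s$, then
$$G(\psi',\chi_1)=(-1)^{s-1}\sqrt{Q}\quad\text{for }p=2,$$
and for odd $p$ there is an extra sign factor $(-1)^{(p^\ell+1)s/M}$.

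Apply this to $\psi^a$, whose order is $M=N/\gcd(a,N)$, so that $M\mid N$. The plan has three sub-steps.
\begin{itemize}
\item[(i)] Check that $p^{\ell_1}\equiv-1\pmod M$, and relate the minimal exponent for $M$ to $\ell_1$. The associated $s$ changes only by odd factors, and these should be absorbed into the same parity.
\item[(ii)] Order-$2$ characters: their quadratic Gauss sums come from Lemma~\ref{quadGuasssum3}. The hypothesis $2\mid er$ forces these into the same shape.
\item[(iii)] Show the uniform formula
$$G(\psi^a,\chi_1)=(-1)^{t-1}\varpi^{at}\sqrt{Q},\qquad 1\le a\le N-1.$$
The factor $\varpi^{at}$ should record exactly when $(p^{\ell_1}+1)/M$ is odd: this happens only when $M$ carries the full $2$-part of $N$, i.e. $a$ odd with $N$ even.
\end{itemize}
I would verify sub-steps (i)--(iii) by a case split on the $2$-adic valuation of $a$.

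\textbf{Step 3 (geometric sum).} Substituting Step 2 into Step 1 gives
$$\eta_k=\frac{1}{N}\Bigl(-1+(-1)^{t-1}\sqrt{Q}\sum_{a=1}^{N-1}(\varpi^t\zeta_N^{-k})^a\Bigr).$$
The inner sum equals $N-1$ if $\varpi^t\zeta_N^{-k}=1$, and $-1$ otherwise. Hence
$$\eta_k=\frac{-1+(-1)^{t-1}\sqrt{Q}\,(N\delta_k-1)}{N},$$
where $\delta_k=1$ exactly when either $\varpi^t=1$ and $k\equiv0\pmod N$, or $\varpi^t=-1$ and $k\equiv N/2\pmod N$.

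\textbf{Step 4 (summing over $J$).} Summing over $j\in J$ with $k=i+j$, at most one $j$ has $\delta_{i+j}=1$, and this happens precisely when $i\in -J$ (respectively $i\in -J+\frac N2$). This yields the two values
$$\theta_1=\frac{u}{N}\bigl(-1+(-1)^t\sqrt{Q}\bigr),\qquad \theta_2=\theta_1+(-1)^{t+1}\sqrt Q,$$
where I read the paper's expression for $\theta_1$ as $\frac{u}{N}(-1+(-1)^t\sqrt{q^r})$. This also gives the displayed rule for which $i$ yield $\theta_2$.

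Every nonzero $b$ lies in some $C_i$, and $\chi_b(D)=\chi_1(bD)$ depends only on that class. The multiplicity of $\theta_2$ is therefore $|C_0|\cdot u=\kappa$, and that of $\theta_1$ is $Q-1-\kappa$.

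\textbf{Step 5 (strong regularity).} The hypothesis $N\mid\frac{Q-1}{2}$ and $J+\frac{Q-1}{2}\equiv J\pmod N$ gives $-D=D$; for even $q$ this is automatic. Clearly $0\notin D$. A symmetric set whose nontrivial character values take exactly two values is a PDS by Lemma~\ref{lemma-21}, solving for $\beta$ and $\gamma$ from $\theta_1+\theta_2$ and $\theta_1\theta_2$. Hence $\mathrm{Cay}(\gf_{q^r},D)$ is strongly regular with the stated eigenvalues and multiplicities.
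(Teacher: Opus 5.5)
The paper gives no proof of its own here; the lemma is quoted from \cite{Brouwer}. Your argument is the standard one behind that citation: semiprimitive (Stickelberger) Gauss sums give two-valued Gauss periods of order $N$, hence a two-valued nontrivial spectrum, and Lemma~\ref{lemma-21} finishes. It is correct, and your readings $\theta_1=\frac{u}{N}\bigl(-1+(-1)^t\sqrt{q^r}\bigr)$ and $\frac{p^{\ell_1}+1}{N}$ of the two typos are the intended ones.

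One claim in Step~2 needs care when you write it out. The statement that ``$s$ changes only by an odd factor'' holds for $M>2$, since there $\ell_1/\ell_M$ and $(p^{\ell_1}+1)/(p^{\ell_M}+1)$ are odd. It can fail for $M=2$, where the factor is $\ell_1$ itself. Your separate order-$2$ treatment still goes through: when $\ell_1$ is even, $p^{\ell_1}+1\equiv 2\pmod 8$ forces $N\equiv 2\pmod 4$ and $\varpi=-1$. Then the quadratic Gauss sum $-(-1)^{\frac{p-1}{2}\ell_1 t}\sqrt{q^r}=-\sqrt{q^r}$ agrees with $(-1)^{t-1}\varpi^{tN/2}\sqrt{q^r}$, so your uniform formula holds for all $1\le a\le N-1$.
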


\begin{lemma}\label{lem-24}\cite{Tao}
Retain the notation as above, and let \( D = D_J = \bigcup_{j \in J} C_j \) be as defined in Lemma~\ref{lem-23}. Then \( D \) is \( \mathbb{F}_q^* \)-invariant if and only if \( J \) is invariant under the map \( \rho: j \mapsto j + \frac{q^r - 1}{q - 1} \pmod{N} \).
\end{lemma}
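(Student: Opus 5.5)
The plan is to reduce everything to the action of a single generator of $\mathbb{F}_q^*$ on the cyclotomic classes, and then use the fact that the classes are pairwise disjoint.

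Set $M:=\frac{q^r-1}{q-1}$. Since $\mathbb{F}_{q^r}^*=\langle\alpha_1\rangle$ is cyclic of order $q^r-1$, the subfield $\mathbb{F}_q^*$ is its unique subgroup of order $q-1$. Hence $\mathbb{F}_q^*=\langle g\rangle$ with $g:=\alpha_1^{M}$.

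A set $D\subseteq\mathbb{F}_{q^r}$ is $\mathbb{F}_q^*$-invariant if and only if $gD\subseteq D$. One direction is trivial. Conversely, if $gD\subseteq D$, then $g^kD\subseteq D$ for every $k$, and every element of $\mathbb{F}_q^*$ is a power of $g$. Since multiplication by $g$ is injective and $D$ is finite, $gD\subseteq D$ is moreover equivalent to $gD=D$.

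Next I will compute the effect of $g$ on a single class. For $0\le j\le N-1$,
$$gC_j=\{\alpha_1^{kN+j+M}: 0\le k\le \tfrac{q^r-1}{N}-1\}.$$
Write $j+M=sN+j'$ with $j'\equiv j+M \pmod N$ and $0\le j'\le N-1$. The exponents $kN+sN+j'$ run over the residue class of $j'$ modulo $N$, reduced modulo $q^r-1$. Because $N\mid q^r-1$, this shows $gC_j=C_{\rho(j)}$, where $\rho(j)=j+M \pmod N$. Consequently
$$gD_J=\bigcup_{j\in J}C_{\rho(j)}=D_{\rho(J)}.$$

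The cyclotomic classes $C_0,\dots,C_{N-1}$ are nonempty and pairwise disjoint, since they are the cosets of $C_0$. Therefore $D_{J_1}=D_{J_2}$ holds if and only if $J_1=J_2$, and more generally $D_{J_1}\subseteq D_{J_2}$ holds if and only if $J_1\subseteq J_2$. Combining this with the previous steps,
$$D \text{ is } \mathbb{F}_q^*\text{-invariant}\iff D_{\rho(J)}\subseteq D_J\iff \rho(J)\subseteq J.$$
Since $\rho$ is a translation, it is a bijection of $\mathbb{Z}_N$, so $\rho(J)\subseteq J$ is the same as $\rho(J)=J$. This is exactly invariance of $J$ under $\rho$, which proves the lemma.

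There is no real obstacle in this argument. The only points needing care are the bookkeeping of exponents modulo $N$ versus modulo $q^r-1$, which is handled by $N\mid q^r-1$, and the observation that invariance under the single generator $g$ suffices for invariance under all of $\mathbb{F}_q^*$.
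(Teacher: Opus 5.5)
Your proof is correct and complete. The key steps all hold: $\mathbb{F}_q^*=\langle \alpha_1^{(q^r-1)/(q-1)}\rangle$, multiplication by this generator $g$ sends $C_j$ to $C_{\rho(j)}$ (using $N\mid q^r-1$), and the classes $C_0,\dots,C_{N-1}$ are disjoint nonempty cosets of $C_0$, which together give $gD_J=D_{\rho(J)}$ and hence the equivalence. The paper gives no proof of its own, since the lemma is quoted from \cite{Tao}, and your argument is the standard one behind that result.
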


\subsection{The Pless power moments}\label{Plessmoments}
\begin{lemma}[Pless power moments]\cite{Huff}\label{Plessmoments}
For an $[\mathfrak{n},k,d]$ linear code $\C$ over $\gf_q$, let $A_j$ and $A_{j}^\perp$ respectively denote the frequencies of codewords with weight $j$ in $\C$ and $\C^\perp$, where $0\leq j\leq \mathfrak{n}$. Then the following equalities hold:
\begin{eqnarray*}
\sum_{j=0}^{\mathfrak{n}}A_j&=&q^{k},\\
\sum_{j=0}^{\mathfrak{n}}jA_j&=&q^{k-1}(q\mathfrak{n}-\mathfrak{n}-A_1^\perp),\\
\sum_{j=0}^{\mathfrak{n}}j^2A_j&=&q^{k-2}\big((q-1)\mathfrak{n}(q\mathfrak{n}-\mathfrak{n}+1)-   (2q\mathfrak{n}-q-2\mathfrak{n}+2)A_1^\perp+2A_2^\perp\big),\\
\sum_{j=0}^{\mathfrak{n}}j^3A_j&=&q^{k-3}[(q-1)\mathfrak{n}(q^2\mathfrak{n}^2-2q\mathfrak{n}^2+ 3q\mathfrak{n}-q+\mathfrak{n}^2-3\mathfrak{n}+2)
-(3q^2 \mathfrak{n}^2-3q^2\mathfrak{n} -6q\mathfrak{n}^2\\ & &+12q\mathfrak{n}+q^2-6q+3\mathfrak{n}^2-9\mathfrak{n}+6)A_1^\perp+6(q\mathfrak{n}-q-\mathfrak{n}+2)A_2^\perp-6A_3^\perp].
\end{eqnarray*}
\end{lemma}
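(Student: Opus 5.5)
We would derive the moments from the MacWilliams identity rather than by direct counting, since the dual-code terms $A_j^\perp$ then appear automatically. Write the weight enumerator of $\C$ in one variable as $W_{\C}(y)=\sum_{j=0}^{\mathfrak{n}}A_jy^j$. The MacWilliams identity (standard, e.g. \cite{Huff}) gives, using $|\C^\perp|=q^{\mathfrak{n}-k}$,
\begin{eqnarray*}
W_{\C}(y)=q^{k-\mathfrak{n}}\sum_{j=0}^{\mathfrak{n}}A_j^\perp\,(1+(q-1)y)^{\mathfrak{n}-j}(1-y)^j .
\end{eqnarray*}
The $t$-th moment is $\sum_j j^tA_j=\big[(y\tfrac{d}{dy})^tW_{\C}(y)\big]_{y=1}$. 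For $t=0$ the result is immediate: only $j=0$ survives, $A_0^\perp=1$, and we get $q^{k-\mathfrak{n}}q^{\mathfrak{n}}=q^k$.

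For $t=1,2,3$ we expand $(y\tfrac{d}{dy})^t$ as a combination of ordinary derivatives:
\begin{eqnarray*}
y\tfrac{d}{dy}&=&yD,\\
(y\tfrac{d}{dy})^2&=&y^2D^2+yD,\\
(y\tfrac{d}{dy})^3&=&y^3D^3+3y^2D^2+yD,
\end{eqnarray*}
where $D=\tfrac{d}{dy}$. We then evaluate each term at $y=1$. The key observation is that the factor $(1-y)^j$ vanishes to order $j$ at $y=1$. Hence $D^s$ applied to the $j$-th summand is zero at $y=1$ whenever $j>s$. So for the $t$-th moment only $A_0^\perp=1,A_1^\perp,\dots,A_t^\perp$ contribute, which is exactly the shape of the claimed formulas.

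For each surviving $j\le s$ we compute the value at $y=1$ by the Leibniz rule. Set $g(y)=(1+(q-1)y)^{\mathfrak{n}-j}$, so that $g(1)=q^{\mathfrak{n}-j}$ and $g^{(i)}(1)=(\mathfrak{n}-j)_i(q-1)^iq^{\mathfrak{n}-j-i}$, with $(\cdot)_i$ the falling factorial. Since $(1-y)^j$ must be differentiated exactly $j$ times to be nonzero at $y=1$, this gives
\begin{eqnarray*}
\big[D^s\big(g(y)(1-y)^j\big)\big]_{y=1}=\binom{s}{j}(-1)^jj!\,g^{(s-j)}(1).
\end{eqnarray*}
Multiplying by $q^{k-\mathfrak{n}}$ turns $q^{\mathfrak{n}-j-i}$ into $q^{k-s}$, which is the common power of $q$ in each formula (note $s=j+i$). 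The first moment then comes out as $q^{k-1}\big((q-1)\mathfrak{n}-A_1^\perp\big)$. The second moment combines the $D^2$ and $D$ contributions, after putting the latter over the common factor $q^{k-2}$.

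The main obstacle is purely bookkeeping in the third moment. There we must collect the contributions of $D^3$, $3D^2$ and $D$ for $j=0,1,2,3$ over the common factor $q^{k-3}$, and match them against the stated polynomials in $q$ and $\mathfrak{n}$. We would organize this as a table indexed by $(j,s)$ and check that the coefficients come out as stated. For $A_0^\perp$ the term is $(q-1)\mathfrak{n}$ times the listed quadratic. For $A_1^\perp$ it is the listed coefficient, and for $A_2^\perp$ it is $6(q\mathfrak{n}-q-\mathfrak{n}+2)$. For $A_3^\perp$ the only contribution is $D^3$ with $j=3$, namely $(-1)^33!=-6$. As sanity checks we would verify each identity on a small example such as the trivial code and the full space $\gf_q^{\mathfrak{n}}$.
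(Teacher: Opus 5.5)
Your proposal is correct. The paper gives no proof of its own and only cites \cite{Huff}, and the standard derivation there is the same as yours: the MacWilliams identity combined with the Stirling-number expansion $(y\frac{d}{dy})^t=\sum_{\nu}S(t,\nu)y^{\nu}D^{\nu}$, where your Leibniz evaluation gives each $D^{\nu}$ the contribution $(-1)^j\nu!\binom{\mathfrak{n}-j}{\nu-j}(q-1)^{\nu-j}q^{k-\nu}A_j^\perp$. The third-moment bookkeeping you defer does close: the coefficient of $A_0^\perp=1$ is $(q-1)\mathfrak{n}\big[(\mathfrak{n}-1)(\mathfrak{n}-2)(q-1)^2+3q(\mathfrak{n}-1)(q-1)+q^2\big]$ and that of $-A_1^\perp$ is $3(\mathfrak{n}-1)(\mathfrak{n}-2)(q-1)^2+6q(\mathfrak{n}-1)(q-1)+q^2$, which expand to exactly the stated polynomials.
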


The Pless power moments are very useful for determining the weight distributions of linear codes.

\subsection{Quantum codes}
An $[[\mathfrak{n}, k, d]]_q$ quantum code $Q$ of length $\mathfrak{n}$ is a $K$-dimensional subspace of $\mathbb{C}^{q^\mathfrak{n}}$ with $k = \log_q K$, and $d$ denotes its minimum distance.

\begin{definition}
A quantum code $Q$ of minimum distance $d$ is said to be \emph{pure} if for every pair of codewords $v, v' \in Q$ and every error $\varepsilon$ satisfying $1 \le w_Q(\varepsilon) \le d-1$, the Hermitian inner product satisfies $\langle v | \varepsilon | v' \rangle = 0$, where $w_Q(\varepsilon)$ denotes the quantum weight of $\varepsilon$.
\end{definition}

Since the foundational contributions of Shor \cite{Shor}, Calderbank \cite{CRS}, and Steane \cite{Steane}, quantum codes have attracted considerable attention. For given $\mathfrak{n}$ and $d$, maximizing the dimension $k$ is a central problem. The following bound is fundamental.

\begin{lemma}[Quantum Hamming bound \cite{CRSS, FM}]\label{lem-bound}
The existence of an $[[\mathfrak{n}, k, d]]_q$ pure quantum code implies
\[
\sum_{i=0}^{\lfloor (d-1)/2 \rfloor} (q^2-1)^i \binom{\mathfrak{n}}{i} \le q^{\mathfrak{n}-k}.
\]
\end{lemma}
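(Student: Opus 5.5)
The plan is the standard packing argument: the correctable errors of small weight map the code space to mutually orthogonal copies of itself inside $\mathbb{C}^{q^\mathfrak{n}}$, and then we count dimensions.

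\emph{Step 1: error basis and count.} Write $Q\subseteq\mathbb{C}^{q^\mathfrak{n}}$ for the pure code, with $K=\dim Q=q^k$, and set $t=\lfloor (d-1)/2\rfloor$. Fix the Weyl--Heisenberg nice error basis. For $a,b\in\gf_q$ let
$$X(a)|x\rangle=|x+a\rangle,\qquad Z(b)|x\rangle=\lambda_1(bx)|x\rangle,$$
and for $\mathbf{a},\mathbf{b}\in\gf_q^\mathfrak{n}$ put $E_{\mathbf{a},\mathbf{b}}=X(a_1)Z(b_1)\otimes\cdots\otimes X(a_\mathfrak{n})Z(b_\mathfrak{n})$. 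The quantum weight $w_Q(E_{\mathbf{a},\mathbf{b}})$ is the number of indices $i$ with $(a_i,b_i)\neq(0,0)$. These $q^{2\mathfrak{n}}$ operators are unitary and form a basis of the operator space. The product of two of them is, up to a nonzero scalar phase, again a basis element, with $(\mathbf{a},\mathbf{b})$ parameters adding. Let $\mathcal{E}_t$ be the set of basis errors of weight at most $t$. Choosing $i$ positions and a nonzero pair $(a_j,b_j)$ at each gives
$$|\mathcal{E}_t|=\sum_{i=0}^{t}(q^2-1)^i\binom{\mathfrak{n}}{i}.$$

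\emph{Step 2: orthogonality of the translates.} Take distinct $E,F\in\mathcal{E}_t$. Then $E^\dagger F$ equals a phase times $E_{\mathbf{c},\mathbf{e}}$ with $(\mathbf{c},\mathbf{e})\neq(\mathbf{0},\mathbf{0})$. Its weight is at most the size of the union of the two supports, so
$$1\le w_Q(E^\dagger F)\le 2t\le d-1.$$
Purity of $Q$ then gives $\langle v|E^\dagger F|v'\rangle=0$ for all $v,v'\in Q$. In other words, the subspaces $EQ$ and $FQ$ are orthogonal. Each $EQ$ has dimension exactly $K$ because $E$ is unitary.

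\emph{Step 3: dimension count.} The subspaces $EQ$, $E\in\mathcal{E}_t$, are pairwise orthogonal, so their sum is direct inside $\mathbb{C}^{q^\mathfrak{n}}$. Hence
$$|\mathcal{E}_t|\cdot q^k\le q^\mathfrak{n},$$
which is exactly the stated inequality.

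The main obstacle is Step 2, namely being precise about what purity is applied to. The definition in the paper quantifies over errors $\varepsilon$ with $1\le w_Q(\varepsilon)\le d-1$. We must apply it to the basis operators $E^\dagger F$ and check two things. First, $E\ne F$ really forces $E^\dagger F$ to be a non-identity basis element, i.e.\ its weight is at least $1$; this holds because distinct $(\mathbf{a},\mathbf{b})$ give operators that are distinct even up to phase. Second, the phases introduced by the commutation relation $Z(b)X(a)=\lambda_1(ab)X(a)Z(b)$ do not affect the vanishing of $\langle v|E^\dagger F|v'\rangle$. Once these are checked, the rest is pure counting.
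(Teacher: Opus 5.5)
Your proof is correct. It is the standard packing argument behind the quantum Hamming bound for pure codes in \cite{CRSS, FM}; the paper itself gives no proof and only quotes the bound from those sources. The two checks you flag are immediate. For $E=E_{\mathbf{a},\mathbf{b}}$ and $F=E_{\mathbf{a}',\mathbf{b}'}$, the operator $E^\dagger F$ is a phase times $E_{\mathbf{a}'-\mathbf{a},\,\mathbf{b}'-\mathbf{b}}$, whose parameter pair is nonzero exactly when $E\neq F$. A scalar phase cannot affect whether $\langle v|E^\dagger F|v'\rangle$ vanishes. So Steps 1--3 go through as written.
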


An $[[\mathfrak{n}, k, d]]_q$ quantum code is called \emph{dimension optimal} if no $[[\mathfrak{n}, k+1, d]]_q$ code exists for fixed $\mathfrak{n}$ and $d$; \emph{distance optimal} if no $[[\mathfrak{n}, k, d+1]]_q$ code exists for fixed $\mathfrak{n}$ and $k$; and \emph{almost optimal} if an optimal $[[\mathfrak{n}, k, d+1]]_q$ code exists. The present work is devoted to constructing pure quantum codes that are at least almost optimal, or optimal, according to the quantum Hamming bound.

For a comprehensive treatment of quantum code constructions, the reader is referred to \cite{CRS, Steane01, Steane02}. We next recall a generalization of Steane's enlargement construction due to Ling et al. \cite{LingS}.

\begin{lemma}\cite[Generalization of Steane's Enlargement Construction]{LingS}\label{lem-Steane}
Let $\mathcal{C}_1$ and $\mathcal{C}_2$ be $[\mathfrak{n}, k_1, d_1]$ and $[\mathfrak{n}, k_2, d_2]$ linear codes over $\mathbb{F}_q$, respectively. If $\mathcal{C}_1^\perp \subset \mathcal{C}_1 \subset \mathcal{C}_2$ and $k_1+2 \le k_2$, then there exists a pure quantum code with parameters
\[
[[\mathfrak{n},\, k_1+k_2-\mathfrak{n},\, \min\{d_1,\, \lceil (q+1)d_2/q \rceil\}]]_q.
\]
\end{lemma}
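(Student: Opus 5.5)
The plan is to realise the quantum code as a stabilizer code in the symplectic (CSS-type) framework and read off its length, dimension and distance. Identify a $q$-ary stabilizer code of length $\mathfrak{n}$ with an $\gf_q$-subspace $S\subseteq\gf_q^{2\mathfrak{n}}$ that is self-orthogonal for the symplectic form
\[
\langle (u,v),(u',v')\rangle_s=u\cdot v'-v\cdot u'.
\]
Such an $S$ yields an $[[\mathfrak{n},\,\mathfrak{n}-\dim S,\,d]]_q$ code, where $d$ is the minimum symplectic weight $\mathrm{swt}(u,v)=|\{i:(u_i,v_i)\neq(0,0)\}|$ on $S^{\perp_s}\setminus S$. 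The code is pure if this minimum is attained on all of $S^{\perp_s}\setminus\{0\}$.

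\textbf{Step 1: constructing $S$.} Dualising $\mathcal{C}_1^\perp\subset\mathcal{C}_1\subset\mathcal{C}_2$ gives $\mathcal{C}_2^\perp\subseteq\mathcal{C}_1^\perp$. Choose a $(k_2-k_1)\times\mathfrak{n}$ matrix $B$ whose rows complete a basis of $\mathcal{C}_2^\perp$ to a basis of $\mathcal{C}_1^\perp$. Fix an invertible matrix $M\in\gf_q^{(k_2-k_1)\times(k_2-k_1)}$ with no eigenvalue in $\gf_q$, for example the companion matrix of a product of irreducible quadratics; this needs $k_2-k_1\ge2$, which is where the hypothesis $k_1+2\le k_2$ enters. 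Define
\[
S=(\mathcal{C}_2^\perp\times\mathcal{C}_2^\perp)+\{(xB,\,xMB):x\in\gf_q^{k_2-k_1}\}.
\]
Then $\dim S=2(\mathfrak{n}-k_2)+(k_2-k_1)$, so $\mathfrak{n}-\dim S=k_1+k_2-\mathfrak{n}$, as required. For self-orthogonality, note that $S\subseteq\mathcal{C}_1^\perp\times\mathcal{C}_1^\perp$. Since $\mathcal{C}_1^\perp\subseteq\mathcal{C}_1$, every mixed inner product vanishes, so $\langle\cdot,\cdot\rangle_s$ is identically zero on $S$.

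\textbf{Step 2: describing $S^{\perp_s}$.} A pair $(u,v)$ is symplectically orthogonal to $\mathcal{C}_2^\perp\times\mathcal{C}_2^\perp$ exactly when $u,v\in\mathcal{C}_2$. Orthogonality to the $B$-part imposes one further linear relation, namely $B u^T = M^T B v^T$ up to sign and transposition conventions. This relation links the images of $u$ and $v$ in $\mathcal{C}_2/\mathcal{C}_1\cong\gf_q^{k_2-k_1}$ through an invertible map with no fixed lines.

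\textbf{Step 3: the distance bound.} Take a nonzero $(u,v)\in S^{\perp_s}$ and split into two cases.
\begin{enumerate}
\item[(a)] If $u,v\in\mathcal{C}_1$, then any nonzero one of $u,v$ has weight at least $d_1$, so $\mathrm{swt}(u,v)\ge d_1$.
\item[(b)] Otherwise, the eigenvalue-free condition on $M$ forces every combination $au+bv$ with $(a,b)\neq(0,0)$ to lie in $\mathcal{C}_2\setminus\mathcal{C}_1$; in particular it is nonzero. I expect this to be the main obstacle. Suppose $au+bv\in\mathcal{C}_1$. Then the classes $\bar u,\bar v$ in $\mathcal{C}_2/\mathcal{C}_1$ are linearly dependent. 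Combined with the linkage from Step 2, this dependence makes some nonzero class an eigenvector of $M$ (or $M^T$) with eigenvalue in $\gf_q$, contradicting the choice of $M$. Granting this, consider the $q+1$ projective points $(a:b)\in\mathbb{P}^1(\gf_q)$. Each coordinate $i$ with $(u_i,v_i)\neq(0,0)$ vanishes in $au_i+bv_i$ for exactly one projective point, so it contributes to exactly $q$ of the $q+1$ weights. Summing gives
\[
q\cdot\mathrm{swt}(u,v)=\sum_{(a:b)}\wt(au+bv)\ge(q+1)d_2,
\]
so $\mathrm{swt}(u,v)\ge\lceil (q+1)d_2/q\rceil$.
\end{enumerate}

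\textbf{Step 4: conclusion.} Both bounds in Step 3 hold for every nonzero element of $S^{\perp_s}$, not only for those outside $S$. Hence the code is pure, and its minimum distance is at least $\min\{d_1,\lceil (q+1)d_2/q\rceil\}$.
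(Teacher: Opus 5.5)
Your argument is correct and is essentially the standard proof of this result (Steane's enlargement argument as generalized in \cite{LingS}; the paper quotes the lemma without proof): twist the extra generators by an eigenvalue-free matrix of size $k_2-k_1\ge 2$, split according to whether $(u,v)\in S^{\perp_s}$ maps to zero in $\mathcal{C}_2/\mathcal{C}_1$, and average the weights of $au+bv$ over the $q+1$ points of $\mathbb{P}^1(\mathbb{F}_q)$. Two small fixes: a product of irreducible quadratics has even degree, so for odd $k_2-k_1$ take $M$ to be the companion matrix of an irreducible polynomial of degree $k_2-k_1$ instead; and the step you flagged as the main obstacle follows directly from your relation $Bv^T=MBu^T$, since $B(au+bv)^T=(aI+bM)Bu^T\neq 0$ whenever $Bu^T\neq 0$ and $(a,b)\neq(0,0)$.
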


Consequently, the construction of good quantum codes reduces to finding self-orthogonal codes and their subcodes with large minimum distance.

\section{Self-orthogonal codes based on the defining-set construction}
The objective of this section is to construct new families of self-orthogonal codes. To this end,  we first establish several new criteria for linear codes derived from the defining-set construction to be self-orthogonal. Using these criteria in conjunction with partial difference sets, we construct several new families of self-orthogonal codes, which yield optimal or
almost optimal quantum codes.

\subsection{New criteria for self-orthogonal codes derived from the defining set construction}
From now on, we view $V_n^{(p)}$ as a vector space over $\mathbb{F}_{p^e}$ of dimension $\frac{n}{e}$.
In what follows, for $q>3$, we show that the code $\mathcal{C}_D$ over $\gf_q$ is self-orthogonal whenever the defining set $D$ is $G$-invariant, where $G \subseteq \mathbb{F}_q^*$ and $|G|>2$.

\begin{theorem} \label{The-31}
Let $q = p^e$ be a prime power with $q > 3$. Let $G$ be a multiplicative subgroup of $\mathbb{F}_q^*$ with $|G|>2$, and let $D = \{d_1, d_2, \dots, d_\mathfrak{n}\} \subseteq V_n^{(p)}$ be $G$-invariant. Then the code $\mathcal{C}_D$ defined in Equation (\ref{eq-1}) is self-orthogonal.
\end{theorem}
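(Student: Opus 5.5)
Self-orthogonality means $\langle \mathbf{c}_a,\mathbf{c}_b\rangle=0$ for any two codewords $\mathbf{c}_a=(\langle a,d\rangle_{n/e})_{d\in D}$ and $\mathbf{c}_b=(\langle b,d\rangle_{n/e})_{d\in D}$ of $\mathcal{C}_D$. I will prove this directly from the $G$-invariance, without Lemma \ref{orthogonal}, so the argument works uniformly for odd and even $q$. We have
$$\langle \mathbf{c}_a,\mathbf{c}_b\rangle=\sum_{d\in D}\langle a,d\rangle_{n/e}\langle b,d\rangle_{n/e}=:S(a,b)\in\mathbb{F}_q .$$
The key observation is that for every $g\in G$ the map $d\mapsto gd$ is a permutation of $D$. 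Indeed, it maps $D$ into $D$ by $G$-invariance, and it is injective because $g\neq 0$ and $V_n^{(p)}$ is an $\mathbb{F}_q$-vector space. Substituting $d\to gd$ and using the $\mathbb{F}_q$-bilinearity of $\langle\cdot,\cdot\rangle_{n/e}$ gives
$$S(a,b)=\sum_{d\in D}\langle a,gd\rangle_{n/e}\langle b,gd\rangle_{n/e}=g^2 S(a,b)\quad\text{for all } g\in G .$$

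Hence $(g^2-1)S(a,b)=0$ for every $g\in G$. It therefore suffices to find some $g\in G$ with $g^2\neq 1$, since then $S(a,b)=0$ and $\mathcal{C}_D$ is self-orthogonal.

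Finding such a $g$ is the only real step, and it is where the hypothesis $|G|>2$ is used. The equation $x^2=1$ has at most two roots in the field $\mathbb{F}_q$, namely $\pm1$, which coincide when $p=2$. So at most two elements of $G$ satisfy $g^2=1$. Because $|G|\geq 3$, there is some $g\in G$ with $g^2\neq1$.

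The condition $q>3$ is only needed so that such a subgroup exists at all: when $q\le 3$ we have $|\mathbb{F}_q^*|\le 2$, so no subgroup $G$ with $|G|>2$ is available. I expect no real obstacle beyond checking that $d\mapsto gd$ is a bijection of $D$, so that the reindexing is legitimate, which was verified above.
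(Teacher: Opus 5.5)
Your proof is correct, but it takes a different route from the paper's.

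The paper partitions $D$ into $G$-orbits $x_iG$ and factors the inner product as $\mathbf{c}_1\cdot\mathbf{c}_2=\bigl(\sum_{i}\langle b_1,x_i\rangle_{n/e}\langle b_2,x_i\rangle_{n/e}\bigr)\sum_{g\in G}g^2$. It then writes $G=\langle\beta\rangle$ and kills the power sum with the geometric series $(1-\beta^{2\mathbf{s}})/(1-\beta^2)=0$, which needs $\beta^2\neq 1$. You instead reindex by a single scaling $d\mapsto gd$ to get $S(a,b)=g^2S(a,b)$. This needs only one element $g\in G$ with $g^2\neq 1$, and it uses neither orbit bookkeeping nor the cyclicity of $G$.

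Your reindexing also avoids a small looseness in the paper's decomposition. If $0\in D$, its orbit is $\{0\}$, not a set of size $|G|$, so writing $\sum_{x\in D}=\sum_{g\in G}\sum_i$ counts that term $|G|$ times. This is harmless because $\langle b,0\rangle_{n/e}=0$, but your bijection argument is exact.

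The paper's route makes explicit that each orbit contributes zero on its own, because $\sum_{g\in G}g^2=0$. The consequences the paper draws from this follow just as easily from your argument:
\begin{itemize}
\item Codes built from unions of orbits are self-orthogonal, since such unions are again $G$-invariant.
\item The even-like property in Corollary~\ref{corol-31} follows from $\sum_{d\in D}\langle b,d\rangle_{n/e}=g\sum_{d\in D}\langle b,d\rangle_{n/e}$ for some $g\neq 1$.
\item The column construction in Theorem~\ref{theorem-2} works by the same reindexing applied to the $G$-invariant multiset of columns.
\end{itemize}
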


\begin{IEEEproof}
Since $D$ is $G$-invariant, it can be partitioned into orbits. Assume that the number of orbits is $t$ and that 
$$D = \bigcup_{i=1}^{t} \operatorname{Orb}(x_i)=\bigcup_{i=1}^{t}x_iG,$$
 where $\{x_1,x_2,\dots,x_t\}$ is a set of representatives of the distinct orbits and $x_iG:=\{x_ig:g\in G\}$. Let
 $$\mathbf{c}_1=\left(\langle b_1,d_1\rangle_{n/e}, \langle b_1,d_2\rangle_{n/e},\cdots, \langle b_1,d_\mathfrak{n}\rangle_{n/e}\right)$$ and
  $$\mathbf{c}_2=\left(\langle b_2,d_1\rangle_{n/e}, \langle b_2,d_2\rangle_{n/e},\cdots, \langle b_2,d_\mathfrak{n}\rangle_{n/e}\right)$$
be any two codewords in $\mathcal{C}_{D}$ with  $b_1,b_2 \in V_{n}^{(p)}$. Then their inner product is given by
\begin{eqnarray*}
\nonumber &
&\mathbf{c}_1\cdot\mathbf{c}_2\\
&=&\sum_{x\in D}\langle b_1,x\rangle_{n/e}\langle b_2,x\rangle_{n/e}\\
&=&\sum_{g \in G}\sum_{i=1}^{t}\langle b_1,gx_i\rangle_{n/e}\langle b_2,gx_i\rangle_{n/e}\\
&=&\sum_{i=1}^{t}\langle b_1,x_i\rangle_{n/e}\langle b_2,x_i\rangle_{n/e}\sum_{g \in G}g^2.
\end{eqnarray*}
For $G = \langle \beta \rangle$, the condition $|G|>2$ implies $\beta^2\neq 1$. 
Assume $\ord(G)=\mathbf{s}$. We have  $$\sum_{g \in G}g^2=1+\beta^2+\cdots+\beta^{2(s-1)}=\frac{1-(\beta^2)^{\mathbf{s}}}{1-\beta^2}=0.$$ 
This yields $\mathbf{c}_1\cdot\mathbf{c}_2=0$. 
\end{IEEEproof}

\begin{remark}We make the following remarks on Theorem \ref{The-31}. 
\begin{enumerate}
\item In Theorem \ref{The-31}, if $G=\gf_{q}^*$, then $\mathcal{C}_{D}$ is self-orthogonal if  $D$ is $\gf_{q}^*$-invariant for $q>3$.  
It is easy to verify that the defining sets of the linear codes in \cite[Theorem 1-3]{Mesnager1}, \cite[Theorem 15-16, Theorem 19, 22, 25]{Tang} are all $\gf_{q}^*$-invariant for $q>3$. According to Theorem \ref{The-31}, we conclude that these known linear codes are all self-orthogonal. 
\item From the proof of Theorem \ref{The-31}, we can construct shorter self-orthogonal codes by suitably puncturing the self-orthogonal code $\mathcal{C}_{D}$.  
Indeed, define  
$$ D' = \bigcup_{i \in S} \operatorname{Orb}(x_i) \subseteq D, $$  
where $S \subseteq \{x_1, x_2, \dots, x_t\}$.  
Then the corresponding linear code $\mathcal{C}_{D'}$ is also self-orthogonal.
\end{enumerate}
\end{remark}

In the following, we characterize the self-orthogonality of $\mathcal{C}_D$ using character sums for the cases $q = 2$ and $q = 3$.

\begin{theorem} \label{The-32}
Let $q=p^e=3$ and $D = \{d_1, d_2, \cdots, d_\mathfrak{n}\} \subseteq V_{n}^{(p)}$ be $\gf_{q}^*$-invariant. Then $\mathcal{C}_{D}$ defined in Equation (\ref{eq-1}) is self-orthogonal if and only if $9\mid \left(|D|-\chi_b(D)\right)$ for all $b \in V_{n}^{(p)}\backslash\{0\}$.
\end{theorem}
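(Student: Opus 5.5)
Since $q=3$ is prime, Lemma \ref{lem-binary} applies: $\mathcal{C}_D$ is self-orthogonal if and only if every codeword has weight divisible by $3$. The plan is therefore to compute the Hamming weight of the codeword $\mathbf{c}_b=(\langle b,d\rangle_{n/e})_{d\in D}$ in terms of $\chi_b(D)$, using the $\gf_3^*$-invariance of $D$. The theorem then follows by turning the condition $3\mid \wt(\mathbf{c}_b)$ into the stated divisibility. Here $e=1$, so $\langle\cdot,\cdot\rangle_{n/e}=\langle\cdot,\cdot\rangle_n$ and $\chi_b(x)=\zeta_3^{\langle b,x\rangle_n}$.

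For $b=0$ the codeword is zero, so nothing needs checking. For $b\neq 0$, the standard orthogonality relation gives
\begin{eqnarray*}
\wt(\mathbf{c}_b)=|D|-\frac{1}{3}\sum_{y\in\gf_3}\sum_{x\in D}\zeta_3^{y\langle b,x\rangle_n}=\frac{2|D|}{3}-\frac{1}{3}\sum_{y\in\gf_3^*}\chi_{yb}(D).
\end{eqnarray*}
Because $D=-D$, we have $\chi_{-b}(D)=\chi_b(D)$, so the sum over $y\in\gf_3^*$ equals $2\chi_b(D)$. This yields
\begin{eqnarray*}
\wt(\mathbf{c}_b)=\frac{2\left(|D|-\chi_b(D)\right)}{3}.
\end{eqnarray*}
In particular, $\chi_b(D)$ is real; it is in fact a rational integer, since it is fixed by $\sigma_{-1}$ and lies in $\mathbb{Z}[\zeta_3]$, by Lemma \ref{lem-cyclo}.

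Since $\gcd(2,3)=1$, the condition $3\mid \wt(\mathbf{c}_b)$ holds if and only if $9\mid (|D|-\chi_b(D))$. Applying this for every $b\in V_n^{(p)}\setminus\{0\}$ and invoking Lemma \ref{lem-binary} gives both directions of the equivalence. The only point needing care is the integrality step, namely that $\frac{|D|-\chi_b(D)}{3}$ is a genuine integer, so that the divisibility statements are meaningful. This follows directly from the weight formula itself, since a weight is an integer. I expect no real obstacle beyond keeping track of the symmetry $D=-D$, which is exactly where the $\gf_q^*$-invariance hypothesis is used.
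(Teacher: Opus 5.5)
Your proposal is correct and is essentially the paper's argument. The paper invokes Lemma \ref{orthogonal} instead of Lemma \ref{lem-binary}, and computes $\mathbf{c}\cdot\mathbf{c}=N_1(b)+N_{-1}(b)=\frac{2}{3}\left(|D|-\chi_b(D)\right)$ by the same character-sum count using $yD=D$ for $y\in\gf_3^*$; this is your weight formula read modulo $3$, and your explanation of why $\chi_b(D)\in\mathbb{Z}$ and $3\mid(|D|-\chi_b(D))$ makes explicit an integrality point the paper leaves implicit.
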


\begin{IEEEproof}
According to Lemma \ref{orthogonal},  $\mathcal{C}_{D}$ is self-orthogonal if and only if $\mathbf{c} \cdot \mathbf{c} = 0$ for all $\mathbf{c} \in \mathcal{C}_{D}$.
Let  $$\mathbf{c}=\left(\langle b,d_1\rangle_{n/e}, \langle b,d_2\rangle_{n/e},\cdots, \langle b,d_\mathfrak{n}\rangle_{n/e}\right)$$ 
be any codeword in $\mathcal{C}_{D}$ for $b \in V_{n}^{(p)}\backslash\{0\}$. For $a\in \gf_{3}^*$, denote by
$$N_{a}(b)=\sharp\left\{x\in D: \langle b,x\rangle_{n/e}=a\right\}.$$ 
Note that $yD=D$ for $y \in \gf_{3}^*$. By the orthogonal relation of additive characters, for $q=3$, we have 
\begin{eqnarray*}
N_{a}(b)&=&\frac{1}{q}\sum_{y \in \gf_{q}}\sum_{x \in D}\zeta_{p}^{\tr_{q/p}\left(\langle b,yx\rangle_{n/e}-ya\right)}\\
&=&\frac{1}{q}|D|+\frac{1}{q}\sum_{y \in \gf_{q}^*}\zeta_{p}^{\tr_{q/p}(-ya)}\sum_{x \in D}\zeta_{p}^{\langle b, yx\rangle_n}\\
&=&\frac{1}{q}\left(|D|+\sum_{y \in \gf_{q}^*}\varphi_1(-ya)\sum_{x \in D}\chi_{b}(x)\right)\\
&=&\frac{1}{q}\left(|D|-\sum_{x \in D}\chi_{b}(x)\right).
\end{eqnarray*}
Then we obtain
\begin{eqnarray*}
\mathbf{c}\cdot\mathbf{c}=N_{1}(b)+N_{-1}(b)=\frac{2}{3}\left(|D|-\sum_{x \in D}\chi_{b}(x)\right).
\end{eqnarray*}
This implies that $\mathcal{C}_{D}$ is self-orthogonal if and only if $9| \left(|D|-\sum_{x \in D}\chi_{b}(x)\right)$ for all $b \in V_{n}^{(p)}\backslash\{0\}$.
\end{IEEEproof}

\begin{theorem} \label{The-33}
Let $q=p^e=2$ and $D = \{d_1, d_2, \cdots, d_\mathfrak{n}\} \subseteq V_{n}^{(p)}$. Then $\mathcal{C}_{D}$ defined in Equation (\ref{eq-1})  is self-orthogonal if and only if $$8\mid \left(|D|-\chi_{b_1}(D)-\chi_{b_2}(D)+\chi_{b_1+b_2}\left(D\right)\right)$$ for all $b_1, b_2 \in  V_{n}^{(p)}\backslash \{0\}$.
\end{theorem}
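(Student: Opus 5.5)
For $q=2$ we have $p=2$, $e=1$, so $\langle b,x\rangle_{n/e}=\langle b,x\rangle_n\in\gf_2$ and $\chi_b(x)=(-1)^{\langle b,x\rangle_n}$. The plan is to express the inner product of two arbitrary codewords as a count and then write that count with characters. Take $\mathbf{c}_1,\mathbf{c}_2\in\mathcal{C}_D$ given by $b_1,b_2\in V_n^{(p)}$. Over $\gf_2$,
$$\mathbf{c}_1\cdot\mathbf{c}_2=\sum_{x\in D}\langle b_1,x\rangle_n\langle b_2,x\rangle_n \equiv N(b_1,b_2)\pmod 2,$$
where
$$N(b_1,b_2)=\sharp\{x\in D:\langle b_1,x\rangle_n=\langle b_2,x\rangle_n=1\}.$$
So $\mathcal{C}_D$ is self-orthogonal if and only if $N(b_1,b_2)$ is even for all $b_1,b_2$.

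If $b_1=0$ or $b_2=0$, then $N=0$, so these pairs impose nothing. This is why the statement only quantifies over nonzero $b_1,b_2$. For $b_1=b_2$ the condition says $\wt(\mathbf{c}_1)$ is even, and the criterion covers this automatically: $\chi_0(D)=|D|$, so the expression becomes $2(|D|-\chi_{b_1}(D))$.

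Next compute $N$ with the indicator identity $\mathbf{1}[u=1]=\frac{1}{2}(1-(-1)^u)$ for $u\in\gf_2$:
$$N(b_1,b_2)=\frac14\sum_{x\in D}\bigl(1-(-1)^{\langle b_1,x\rangle_n}\bigr)\bigl(1-(-1)^{\langle b_2,x\rangle_n}\bigr).$$
Expanding and using bilinearity, $(-1)^{\langle b_1,x\rangle_n+\langle b_2,x\rangle_n}=\chi_{b_1+b_2}(x)$. This gives
$$N(b_1,b_2)=\frac14\Bigl(|D|-\chi_{b_1}(D)-\chi_{b_2}(D)+\chi_{b_1+b_2}(D)\Bigr).$$
All character sums here are integers, so $N$ is even exactly when $8$ divides $|D|-\chi_{b_1}(D)-\chi_{b_2}(D)+\chi_{b_1+b_2}(D)$. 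Combining this with the reduction above proves both directions.

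The only point needing care is the reduction step. Since $p=2$, Lemma \ref{orthogonal} does not apply, so we must check $\mathbf{c}_1\cdot\mathbf{c}_2=0$ for all pairs rather than just $\mathbf{c}\cdot\mathbf{c}=0$. That is precisely why the criterion involves two parameters $b_1,b_2$. The remaining computation is routine.
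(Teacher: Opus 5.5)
Your proposal is correct and matches the paper's proof: the paper also identifies $\mathbf{c}_1\cdot\mathbf{c}_2$ with the count $N(b_1,b_2)$ and evaluates it as $\frac14\sum_{y,z\in\gf_2}\sum_{x\in D}$ via orthogonality of additive characters, which is your product of indicator identities written differently. Your additions are to reduce $N(b_1,b_2)$ modulo $2$ explicitly and to dispose of the cases $b_1=0$ or $b_2=0$; these make explicit what the paper leaves implicit.
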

\begin{IEEEproof}
Let  $$\mathbf{c_1}=\left(\langle b_1,d_1\rangle_{n/e}, \langle b_1,d_2\rangle_{n/e},\cdots, \langle b_1,d_\mathfrak{n}\rangle_{n/e}\right)$$ 
and 
$$\mathbf{c_2}=\left(\langle b_2,d_1\rangle_{n/e}, \langle b_2,d_2\rangle_{n/e},\cdots, \langle b_2,d_\mathfrak{n}\rangle_{n/e}\right)$$ 
be any two codewords in $\mathcal{C}_{D}$ for $b_1,b_2 \in V_{n}^{(p)}\backslash\{0\}$. Denote by
$$N(b_1,b_2)=\sharp\{x\in D: \langle x, b_1\rangle_{n/e}=1 ~\mbox{and}~ \langle x,b_2\rangle_{n/e}=1\}.$$
By the orthogonal relation of additive characters, we have 
\begin{eqnarray*}
\mathbf{c_1}\cdot\mathbf{c_2}&=&N(b_1,b_2)\\
&=&\frac{1}{4}\sum_{y \in \gf_{2}}\sum_{z \in \gf_{2}}\sum_{x \in D}(-1)^{\langle b_1,yx\rangle_{n/e}-y}(-1)^{\langle b_2,zx\rangle_{n/e}-z}\\
&=&\frac{1}{4}\left(|D|-\sum_{x \in D}\chi_{b_1}(x)-\sum_{x \in D}\chi_{b_2}(x)+\sum_{x \in D}\chi_{b_1+b_2}\left(x\right)\right).
\end{eqnarray*}
Hence, $\mathcal{C}_{D}$ is self-orthogonal if and only if $8\mid \left(|D|-\sum_{x \in D}\chi_{b_1}(x)-\sum_{x \in D}\chi_{b_2}(x)+\sum_{x \in D}\chi_{b_1+b_2}\left(x\right)\right)$.
\end{IEEEproof}

A vector $\mathbf{x} = (x_1, x_2, \dots, x_\mathfrak{n}) \in \mathbb{F}_q^\mathfrak{n}$ is said to be even-like provided that $\sum_{i=1}^{\mathfrak{n}} x_i = 0$ in $\mathbb{F}_q$, and odd-like otherwise.
A code is called even-like if all its codewords are even-like. The following lemma establishes a connection between the self-orthogonality of a linear code and that of its augmented code.

\begin{lemma} \label{lem-31}
Let $\mathcal{C}$ be a linear code over $\mathbb{F}_q$ with length $\mathfrak{n}$ and $\mathbf{1} \notin \mathcal{C}$, and let $\overline{\mathcal{C}}$ be its augmented code as defined in Equation~(\ref{eq-01}). Then $\overline{\mathcal{C}}$ is self-orthogonal if and only if $\mathcal{C}$ is an even-like self-orthogonal code and $p \mid \mathfrak{n}$.
\end{lemma}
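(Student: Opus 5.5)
The plan is to expand the inner product of two arbitrary codewords of $\overline{\mathcal{C}}$ by bilinearity and read off the conditions term by term. By Equation~(\ref{eq-01}), every codeword of $\overline{\mathcal{C}}$ has the form $\mathbf{c}+c\mathbf{1}$ with $\mathbf{c}\in\mathcal{C}$ and $c\in\gf_q$. For $\mathbf{c}_1,\mathbf{c}_2\in\mathcal{C}$ and $a_1,a_2\in\gf_q$, I would write
$$(\mathbf{c}_1+a_1\mathbf{1})\cdot(\mathbf{c}_2+a_2\mathbf{1})=\mathbf{c}_1\cdot\mathbf{c}_2+a_2(\mathbf{c}_1\cdot\mathbf{1})+a_1(\mathbf{c}_2\cdot\mathbf{1})+a_1a_2\,(\mathbf{1}\cdot\mathbf{1}).$$
Two observations drive everything. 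First, $\mathbf{c}\cdot\mathbf{1}=\sum_i c_i$ is exactly the quantity that vanishes when $\mathbf{c}$ is even-like. Second, $\mathbf{1}\cdot\mathbf{1}=\mathfrak{n}\cdot 1_{\gf_q}$, which is $0$ in $\gf_q$ if and only if $p\mid\mathfrak{n}$.

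For sufficiency, assume $\mathcal{C}$ is an even-like self-orthogonal code and $p\mid\mathfrak{n}$. Then each of the four terms vanishes: the first by self-orthogonality, the middle two because $\mathcal{C}$ is even-like, and the last because $p\mid\mathfrak{n}$. Hence any two codewords of $\overline{\mathcal{C}}$ are orthogonal, so $\overline{\mathcal{C}}$ is self-orthogonal.

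For necessity, assume $\overline{\mathcal{C}}$ is self-orthogonal. I would specialize the pair $(a_1,a_2)$ to isolate each condition. Taking $a_1=a_2=0$ restricts to $\mathcal{C}\subseteq\overline{\mathcal{C}}$, so $\mathcal{C}$ is self-orthogonal. Taking $\mathbf{c}_1=\mathbf{c}$, $a_1=0$ and the second codeword equal to $\mathbf{1}$ (that is, $\mathbf{c}_2=\mathbf{0}$, $a_2=1$) gives $\mathbf{c}\cdot\mathbf{1}=0$ for every $\mathbf{c}\in\mathcal{C}$, so $\mathcal{C}$ is even-like. Taking both codewords equal to $\mathbf{1}$ gives $\mathfrak{n}\equiv0$ in $\gf_q$, which means $p\mid\mathfrak{n}$, since the characteristic of $\gf_q$ is $p$.

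There is no real obstacle here. The only points to state explicitly are that $\mathbf{1}\in\overline{\mathcal{C}}$, which holds because $\mathcal{C}$ is a subspace containing $\mathbf{0}$, and that $\mathfrak{n}\cdot 1=0$ in $\gf_q$ exactly when $p\mid\mathfrak{n}$. The hypothesis $\mathbf{1}\notin\mathcal{C}$ matters only for the dimension count of $\overline{\mathcal{C}}$, not for the equivalence itself.
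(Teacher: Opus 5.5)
Your proof is correct and takes the same approach as the paper. Necessity is argued exactly as in the paper, using $\mathcal{C}\subseteq\overline{\mathcal{C}}$ and $\mathbf{1}\in\overline{\mathcal{C}}$. For sufficiency the paper only cites \cite[Proposition 1]{HengL}, whereas your bilinear expansion of $(\mathbf{c}_1+a_1\mathbf{1})\cdot(\mathbf{c}_2+a_2\mathbf{1})$ proves that step directly.
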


\begin{IEEEproof}
Assume that $\overline{\mathcal{C}}$ is a self-orthogonal code. As $\mathcal{C}\subseteq\overline{\mathcal{C}}$, it follows that $\mathcal{C}$ is also self-orthogonal.
Since $\mathbf{1}\in \overline{\mathcal{C}}$, we know that $\mathbf{1}\cdot\mathbf{c}=\sum_{i=1}^{\mathfrak{n}}c_{i}=0$ for $\mathbf{c}=\{c_1,c_2,\cdots,c_\mathfrak{n}\} \in \mathcal{C}$. Thus $\mathcal{C}$ is an even-like code. Besides, $\mathbf{1}\in \overline{\mathcal{C}}$ implies $\mathbf{1}\cdot\mathbf{1}=\mathfrak{n}\equiv 0\pmod{p}$.
This completes the proof of necessity.

The sufficiency part has been proven in \cite[Proposition 1]{HengL}, so the desired conclusion follows.
\end{IEEEproof}

The following provides a sufficient condition for the augmented code $\overline{\mathcal{C}_{D}}$ to be self-orthogonal.

\begin{corollary}\label{corol-31}
Let $q=p^e$ be a prime power and $q>3$. Let $G$ be a multiplicative subgroup of $\gf_{q}^*$ with $|G|>2$ and $D = \{d_1, d_2, \cdots, d_\mathfrak{n}\} \subseteq V_{n}^{(p)}$ be $G$-invariant with $p \mid |D|$. Then $\mathcal{C}_{D}$ is even-like and its augmented code $\overline{\mathcal{C}_{D}}$ is self-orthogonal.
\end{corollary}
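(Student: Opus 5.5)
The plan is to combine Theorem \ref{The-31} with Lemma \ref{lem-31}. Theorem \ref{The-31} already shows that $\mathcal{C}_D$ is self-orthogonal under exactly the hypotheses of the corollary. The length of $\mathcal{C}_D$ is $\mathfrak{n}=|D|$, and we assume $p\mid |D|$. So by the sufficiency direction of Lemma \ref{lem-31}, two things remain: to show that $\mathcal{C}_D$ is even-like, and to check that $\mathbf{1}\notin\mathcal{C}_D$, which is needed for the augmented code to be defined.

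\emph{Even-likeness.} I will use the same orbit decomposition as in the proof of Theorem \ref{The-31}. Write $D=\bigcup_{i=1}^t x_iG$. For a codeword indexed by $b$, the coordinate sum is
$$\sum_{x\in D}\langle b,x\rangle_{n/e}=\sum_{i=1}^t\langle b,x_i\rangle_{n/e}\sum_{g\in G}g,$$
using $\langle b,gx\rangle_{n/e}=g\langle b,x\rangle_{n/e}$. With $G=\langle\beta\rangle$ and $|G|>1$ we have $\beta\neq 1$, so $\sum_{g\in G}g=(1-\beta^{|G|})/(1-\beta)=0$. Hence every codeword of $\mathcal{C}_D$ is even-like.

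\emph{Excluding $\mathbf{1}$.} Suppose $\mathbf{1}\in\mathcal{C}_D$. Then $\mathbf{1}$ would be even-like by the previous step, so $\mathfrak{n}\equiv 0$ in $\mathbb{F}_q$. This is consistent with $p\mid|D|$, so the even-like property alone gives no contradiction. Instead I will use $G$-invariance directly. If $\mathbf{1}=(\langle b,d\rangle_{n/e})_{d\in D}$, then $\langle b,x\rangle_{n/e}=1$ for every $x\in D$, and therefore $\langle b,gx\rangle_{n/e}=g=1$ for every $g\in G$. This contradicts $|G|>2$. So $\mathbf{1}\notin\mathcal{C}_D$. (If $D$ is empty the statement is vacuous.)

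With both facts in hand, Lemma \ref{lem-31} gives that $\overline{\mathcal{C}_D}$ is self-orthogonal. The main point needing care is confirming that $\mathbf{1}\notin\mathcal{C}_D$. One could alternatively argue directly: $\mathbf{1}\cdot\mathbf{1}=|D|=0$ because $p\mid|D|$, $\mathbf{1}\cdot\mathbf{c}=0$ by even-likeness, and $\mathbf{c}\cdot\mathbf{c}'=0$ by Theorem \ref{The-31}. This route does not need $\mathbf{1}\notin\mathcal{C}_D$, but invoking Lemma \ref{lem-31} keeps the argument aligned with how the paper is organized.
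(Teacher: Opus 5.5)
Your proof is correct and follows the paper's argument: Theorem \ref{The-31} gives self-orthogonality, the orbit decomposition with $\sum_{g\in G} g=0$ gives even-likeness, and Lemma \ref{lem-31} finishes. Your explicit check that $\mathbf{1}\notin\mathcal{C}_D$ is a worthwhile addition, since it is a hypothesis of Lemma \ref{lem-31} that the paper's proof leaves implicit here and only establishes later as Lemma \ref{lem-33}.
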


\begin{IEEEproof}
By Theorem \ref{The-31}, the linear code $\mathcal{C}_{D}$ is self-orthogonal as $D$ is $G$-invariant. Since $D$ is $G$-invariant, it can be partitioned into orbits. Assume that the number of orbits is $t$ and that 
$$D = \bigcup_{i=1}^{t} \operatorname{Orb}(x_i)=\bigcup_{i=1}^{t}x_iG,$$
 where $\{x_1,x_2,\dots,x_t\}$ is a set of representatives of the distinct orbits and $x_iG:=\{x_ig:g\in G\}$.
Let  $$\mathbf{c}=\left(\langle b,d_1\rangle_{n/e}, \langle b,d_2\rangle_{n/e},\cdots, \langle b,d_\mathfrak{n}\rangle_{n/e}\right)$$ 
be any codeword in $\mathcal{C}_{D}$ for $b \in V_{n}^{(p)}$.
Then we have
\begin{eqnarray*}
\sum_{x\in D}\langle b,x\rangle_{n/e}&=&\sum_{g \in G}\sum_{i=1}^{t}\langle b,gx_i\rangle_{n/e}\\
&=&\sum_{i=1}^{t}\langle b,x_i\rangle_{n/e}\sum_{g \in G}g=0.
\end{eqnarray*}
Thus $\mathcal{C}_{D}$ is even-like. Then the augmented code $\overline{\mathcal{C}_{D}}$ is self-orthogonal by Lemma \ref{lem-31}.
\end{IEEEproof}

\subsection{Self-orthogonal codes from  $\gf_{q}^*$-invariant partial difference sets}
In this subsection, we construct self-orthogonal codes from $\gf_{q}^*$-invariant partial difference sets. 
To this end, we first introduce several lemmas that will be needed later.

\begin{lemma}\label{lem-32}
Let $q=p^e$, $n$ and $e$ be positive integers such that $e\mid n$ and $r := \frac{n}{e}$. Let $D = \{d_1, d_2, \cdots, d_\mathfrak{n}\} \subseteq V_{n}^{(p)}$ be $\langle D\rangle=V_{n}^{(p)}$, where $\langle D\rangle$ is the $\mathbb{F}_{q}$-linear subspace spanned by $D$. Then the dimension of $\mathcal{C}_D$  defined in Equation (\ref{eq-1}) is $r$.
\end{lemma}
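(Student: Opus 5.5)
The plan is to use the fact that $\mathcal{C}_D$ is the image of an $\mathbb{F}_q$-linear map. Consider
$$\Phi: V_n^{(p)} \to \mathbb{F}_q^{\mathfrak{n}}, \qquad b \mapsto \left(\langle b,d_1\rangle_{n/e},\dots,\langle b,d_\mathfrak{n}\rangle_{n/e}\right).$$
With the identification $V_n^{(p)}\cong\mathbb{F}_q^{r}$ fixed in the Notation subsection, $\langle\cdot,\cdot\rangle_{n/e}$ is the standard $\mathbb{F}_q$-bilinear dot product. Hence $\Phi$ is $\mathbb{F}_q$-linear and $\mathcal{C}_D=\Phi(V_n^{(p)})$. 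Rank–nullity then gives
$$\dim_{\mathbb{F}_q}\mathcal{C}_D = r-\dim_{\mathbb{F}_q}\ker\Phi,$$
so it suffices to show that $\ker\Phi=\{0\}$.

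The key step is to identify the kernel. A vector $b$ lies in $\ker\Phi$ exactly when $\langle b,d_i\rangle_{n/e}=0$ for every $i$. By bilinearity this holds if and only if $\langle b,y\rangle_{n/e}=0$ for every $y$ in the $\mathbb{F}_q$-span $\langle D\rangle$. Since $\langle D\rangle=V_n^{(p)}$ by hypothesis, taking $y$ to run over the coordinate vectors $e_1,\dots,e_r$ forces each coordinate $b_j=\langle b,e_j\rangle_{n/e}$ to vanish. Therefore $b=0$, and the dimension of $\mathcal{C}_D$ is $r$.

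An equivalent matrix viewpoint is also available. Let $M$ be the $r\times\mathfrak{n}$ matrix whose columns are the coordinate vectors of the $d_i$. Then $\mathcal{C}_D$ is the row space of $M$, and the hypothesis says $M$ has column rank $r$, so its row rank is $r$ as well.

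There is no real obstacle here. The only point needing care is to make sure the span in the hypothesis is the $\mathbb{F}_q$-span and that the pairing used in Equation (\ref{eq-1}) is the $\mathbb{F}_q$-valued form $\langle\cdot,\cdot\rangle_{n/e}$, not the trace form $\langle\cdot,\cdot\rangle_n$. Both are as stated, so the nondegeneracy argument above applies directly.
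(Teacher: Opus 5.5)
Your proposal is correct and follows essentially the same route as the paper, which also sets $\mathcal{B}=\{b: \langle b,d_i\rangle_{n/e}=0 \text{ for all } d_i\in D\}$ and uses $\langle D\rangle=V_n^{(p)}$ with bilinearity to conclude $\mathcal{B}=\{0\}$. You make explicit two steps the paper leaves implicit: rank--nullity for the map $b\mapsto(\langle b,d_i\rangle_{n/e})_i$, and the nondegeneracy of the dot product, checked by testing against the coordinate vectors $e_1,\dots,e_r$.
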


\begin{IEEEproof}
Let $\mathcal{B}=\{b\in V_{n}^{(p)}: \langle b,d_i\rangle_{n/e}=0~ \mbox{for all} ~d_i ~ \mbox{in} ~D\}$.
Since $\langle D\rangle=V_{n}^{(p)}$, for any $x \in V_{n}^{(p)}$, it follows that $x$ is a linear combination of elements in $D$.
Assume that $x=a_1d_1+a_2d_2+\cdots+a_\mathfrak{n}d_\mathfrak{n}$, where $a_i \in \gf_{q}$ for $1\leq i\leq \mathfrak{n}$.
Then $\langle b,x\rangle_{n/e}=\sum_{i=1}^{\mathfrak{n}}a_i\langle b,d_i\rangle_{n/e}=0$ for $b \in \mathcal{B}$. Therefore, we have $\mathcal{B}=\{0\}$ and the desired conclusion follows. 
\end{IEEEproof}

\begin{lemma}\label{lem-33}
Let $q=p^e>2$. Let $G=\langle \beta\rangle$ be a multiplicative subgroup of $\gf_{q}^*$ with $\ord(G)=\mathbf{s}>1$, and $D = \{d_1, d_2, \cdots, d_\mathfrak{n}\} \subseteq V_{n}^{(p)}$ be $G$-invariant. Then $\mathbf{1} \notin \mathcal{C}_{D}$ .
\end{lemma}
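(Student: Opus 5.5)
We argue by contradiction, using the orbit decomposition already exploited in the proofs of Theorem \ref{The-31} and Corollary \ref{corol-31}. Suppose $\mathbf{1}\in\mathcal{C}_D$. Then there is some $b\in V_n^{(p)}$ with $\langle b,d\rangle_{n/e}=1$ for every $d\in D$. The case $D=\emptyset$ is vacuous, since then $\mathcal{C}_D$ has length $0$ and there is nothing to prove; so we may fix some $x\in D$.

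The key step is the $G$-invariance of $D$. For every $g\in G$ we have $gx\in D$, so by $\mathbb{F}_q$-linearity of the inner product
\[
1=\langle b,gx\rangle_{n/e}=g\langle b,x\rangle_{n/e}=g\cdot 1=g .
\]
Hence every $g\in G$ equals $1$, i.e. $G=\{1\}$. This contradicts the hypothesis $\ord(G)=\mathbf{s}>1$, which gives $\beta\neq 1$. Taking $g=\beta$ is already enough: $\langle b,\beta x\rangle_{n/e}=\beta\neq 1$ shows that the coordinate of the codeword indexed by $\beta x\in D$ is not $1$.

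The argument also covers every nonzero scalar multiple $c\mathbf{1}$, because the same computation gives $c=\beta c$ and therefore $c=0$. No step here is a real obstacle. The only point needing care is that the inner product $\langle\cdot,\cdot\rangle_{n/e}$ is $\mathbb{F}_q$-linear in its second argument, and this holds because $V_n^{(p)}$ is identified with $\mathbb{F}_q^{r}$. The hypothesis $q>2$ is only what allows a subgroup with $\mathbf{s}>1$ to exist.
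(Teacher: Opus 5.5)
Your proof is correct and is essentially the paper's argument: assume some $b$ satisfies $\langle b,d\rangle_{n/e}=1$ for all $d\in D$, then use $G$-invariance and $\mathbb{F}_q$-linearity to get $1=\langle b,gd\rangle_{n/e}=g$ for every $g\in G$, which contradicts $\mathbf{s}>1$. One small correction: for $D=\emptyset$ the length-$0$ all-one vector coincides with the zero word and so does lie in $\mathcal{C}_D$, so that case must be excluded rather than called vacuous (the paper tacitly assumes $D\neq\emptyset$).
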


\begin{IEEEproof}
Assume that $\mathbf{1} \in \mathcal{C}_{D}$. Then there exists an element $b \in V_{n}^{(p)}$ such that $\langle b,d\rangle_{n/e}=1$ for all $d \in D$. 
Since $D$ is $G$-invariant, for any $a \in G$ and $d \in D$, we have $ad \in D$. Thus $1=\langle b,ad\rangle_{n/e}=a\langle b,d\rangle_{n/e}=a$ for all $a \in G$. This is impossible for $\ord(G)=\mathbf{s}>1$.
Therefore, we have $\mathbf{1} \notin \mathcal{C}_{D}.$
\end{IEEEproof}

\begin{lemma} \label{lem-34}
Let $q=p^e>2$ with $e\mid n$ and $r = \frac{n}{e}$. Let $D = \{d_1, d_2, \cdots, d_\mathfrak{n}\} \subseteq V_{n}^{(p)}$ be $\gf_{q}^*$-invariant, where $\langle D\rangle=V_{n}^{(p)}$.  Then $\overline{\mathcal{C}_D}$ is an $[\mathfrak{n}, r+1]$ linear code over $\gf_q$. For a codeword $$\textbf{c}(b,c)=\left(\langle b,d_1\rangle_{n/e}, \langle b,d_2\rangle_{n/e},\cdots, \langle b,d_\mathfrak{n}\rangle_{n/e}\right)+c\textbf{1}\in \overline{\mathcal{C}_D},$$ where $b \in V_{n}^{(p)}$ and $c\in \gf_q$, its Hamming weight is given by
\begin{eqnarray*}
\wt(\textbf{c}(b,c))
=\left\{
\begin{array}{ll}
0 & \mbox{if $b=c=0$,}\\
\mathfrak{n} & \mbox{if $b=0, c \neq 0$,}\\
\frac{q-1}{q}\mathfrak{n}-\frac{q-1}{q}\chi_{b}(D) & \mbox{if $b\neq0, c=0$,}\\
\frac{q-1}{q}\mathfrak{n}+\frac{1}{q}\chi_{b}(D) & \mbox{if $b\neq0, c\neq0$.}
\end{array}\right.
 \end{eqnarray*}
\end{lemma}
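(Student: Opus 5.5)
The plan has three parts: dimension, then the weight of $\textbf{c}(b,c)$ via additive characters, using the $\gf_q^*$-invariance of $D$ to collapse the character sums.

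\textbf{Dimension.} By Lemma \ref{lem-32}, since $\langle D\rangle=V_n^{(p)}$, the code $\mathcal{C}_D$ has dimension $r$. Because $q>2$, the group $\gf_q^*$ has order $q-1>1$, so Lemma \ref{lem-33} gives $\mathbf{1}\notin\mathcal{C}_D$. Hence the augmented code $\overline{\mathcal{C}_D}$ is an $[\mathfrak{n},r+1]$ code.

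\textbf{Trivial cases.} If $b=0$, every coordinate of $\textbf{c}(b,c)$ equals $c$. This gives weight $0$ when $c=0$ and weight $\mathfrak{n}$ when $c\neq0$.

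\textbf{Counting zeros for $b\neq0$.} We have $\wt(\textbf{c}(b,c))=\mathfrak{n}-Z$, where
$$Z=\sharp\{x\in D:\langle b,x\rangle_{n/e}+c=0\}.$$
Using orthogonality of additive characters of $\gf_q$,
$$Z=\frac{1}{q}\sum_{y\in\gf_q}\sum_{x\in D}\zeta_p^{\tr_{q/p}(y\langle b,x\rangle_{n/e}+yc)}.$$
The $y=0$ term contributes $\mathfrak{n}/q$. For $y\neq0$, note that $\tr_{q/p}(y\langle b,x\rangle_{n/e})=\langle b,yx\rangle_n$. Since $D$ is $\gf_q^*$-invariant, $x\mapsto yx$ permutes $D$, so the inner sum equals $\chi_b(D)$, independent of $y$. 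Therefore
$$Z=\frac{\mathfrak{n}}{q}+\frac{1}{q}\chi_b(D)\sum_{y\in\gf_q^*}\lambda_1(yc).$$

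\textbf{Evaluating the remaining sum.} The sum $\sum_{y\in\gf_q^*}\lambda_1(yc)$ equals $q-1$ if $c=0$ and $-1$ if $c\neq0$.
- For $c=0$: $\wt=\mathfrak{n}-\frac{\mathfrak{n}}{q}-\frac{q-1}{q}\chi_b(D)$, which matches the stated formula.
- For $c\neq0$: $\wt=\frac{q-1}{q}\mathfrak{n}+\frac{1}{q}\chi_b(D)$, which also matches.

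\textbf{Where care is needed.} The only delicate point is the invariance step, i.e. replacing $\sum_{x\in D}\zeta_p^{\langle b,yx\rangle_n}$ by $\chi_b(D)$. This needs $yD=D$ as multisets, which is exactly $\gf_q^*$-invariance of $D$. It also needs the compatibility $\tr_{q/p}(y\langle b,x\rangle_{n/e})=\langle b,yx\rangle_n$, which follows from the definition of $\langle\cdot,\cdot\rangle_n$ and the $\gf_q$-linearity of $\langle b,\cdot\rangle_{n/e}$.
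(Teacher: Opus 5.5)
Your proposal is correct and follows the paper's proof essentially step by step. It gets the dimension from Lemmas \ref{lem-32} and \ref{lem-33}, counts zeros of $\langle b,x\rangle_{n/e}+c$ on $D$ via orthogonality of additive characters, and uses $\gf_q^*$-invariance ($yD=D$) to collapse each $y\neq 0$ term to $\chi_b(D)$. It then evaluates $\sum_{y\in\gf_q^*}\lambda_1(yc)$ as $q-1$ or $-1$ according to whether $c=0$.
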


\begin{IEEEproof}
According to Lemma \ref{lem-32} and \ref{lem-33} ,we deduce that the dimension of $\overline{\mathcal{C}_D}$ is $r+1$.
For $(b, c) \in V_{n}^{(p)}\times \gf_{q}$, we have 
\begin{eqnarray*}
\wt(\textbf{c}(b,c))&=&|D|-\frac{1}{q}\sum_{y \in \gf_{q}}\sum_{x\in D}\zeta_{p}^{\tr_{q/p}\left(y\left(\langle b,x\rangle_{n/e}+c\right)\right)}\\
&=&|D|-\frac{1}{q}|D|-\frac{1}{q}\sum_{y \in \gf_{q}^*}\zeta_{p}^{\tr_{q/p}(yc)}\sum_{x\in D}\zeta_{p}^{\langle b,yx\rangle_{n}}\\
&=&|D|-\frac{1}{q}|D|-\frac{1}{q}\sum_{y \in \gf_{q}^*}\varphi_{1}(yc)\sum_{x \in D}\chi_{b}(yx)\\
&=&\frac{q-1}{q}|D|-\frac{1}{q}\sum_{y \in \gf_{q}^*}\varphi_{1}(yc)\sum_{x \in D}\chi_{b}(x),
\end{eqnarray*}
where the Fourth equation holds as $D$ is $\gf_{q}^*$-invariant. According to the orthogonal
relation of additive characters, we further have 
\begin{eqnarray*}
\wt(\textbf{c}(b,c))
=\left\{
\begin{array}{ll}
0 & \mbox{if $b=c=0$,}\\
|D| & \mbox{if $b=0, c \neq 0$,}\\
\frac{q-1}{q}|D|-\frac{q-1}{q}\sum_{x \in D}\chi_{b}(x) & \mbox{if $b\neq0, c=0$,}\\
\frac{q-1}{q}|D|+\frac{1}{q}\sum_{x \in D}\chi_{b}(x) & \mbox{if $b\neq0, c\neq0$.}
\end{array}\right.
\end{eqnarray*}
Then the desired conclusion follows. 
\end{IEEEproof}

% \begin{remark}
% If $0\in D$, then the Lemma \ref{lem-34} still holds for $p=2$. If $0 \notin D$ and $q=2$, when $D$ satisfies $\{b \in V_{n}^{(p)}: \langle b,x\rangle_{n/e}=1 ~\mbox{for all} ~ x \in D\}=\emptyset$ ,  then Lemma \ref{lem-34} is also still holds. In the following, when $0 \notin D$ and $q=2$, we say that $D$ satisfying $\{b \in V_{n}^{(p)}: \langle b,x\rangle_{n/e}=1 ~\mbox{for all} ~ x \in D\}=\emptyset$. 
% \end{remark}

Now we use $\gf_{q}^*$-invariant partial difference sets to construct self-orthogonal codes and determine their weight distributions.

\begin{theorem} \label{theo-31}
Let $p$ be a prime and $q=p^e$, where $n$ and $e$ are positive integers such that $e\mid n$ and $r = \frac{n}{e}$. Let $D\subseteq V_{n}^{(p)}$ be an $\gf_{q}^*$-invariant partial difference set with parameters $(p^n, \kappa, \lambda, \mu)$ such that $-D=D$, $\mathbf{e} \notin D$ and $\langle D\rangle=V_{n}^{(p)}$. Then we have the following.
\begin{enumerate}
\item $\overline{\mathcal{C}_D}$ is a $[\kappa, r+1]$ linear code with its weight distribution as shown in Table \ref{tab-31}. Besides,  $\overline{\mathcal{C}_D}^{\perp}$  has parameters $[\kappa, \kappa-r-1,\geq3]$.  In particular, if $q>3$, then $\overline{\mathcal{C}_D}^{\perp}$ has parameters $[\kappa, \kappa-r-1, 3]$ and, for $\lambda \geq 1, \mu \geq 1$ and $r \geq 2$, $\overline{\mathcal{C}_D}^{\perp}$ is at least almost optimal according to the sphere-packing bound.
\item If $q>3$ and $p\mid \kappa$, then  $\overline{\mathcal{C}_D}$ is a self-orthogonal code. When $q=3$, $\overline{\mathcal{C}_D}$ is a self-orthogonal code if $9 \mid \left(|D|-\chi_{b}(D)\right)$ for all $b \in V_{n}^{(p)} \backslash \{0\}$ and $p\mid \kappa$. When $q=2$,  $\overline{\mathcal{C}_D}$ is a self-orthogonal code if $8\mid(|D|-\chi_{b_1}(D)-\chi_{b_2}(D)+\chi_{b_1+b_2}\left(D\right))$ for all $b_1, b_2 \in V_{n}^{(p)}\backslash \{0\}$ and $p\mid \kappa$.
\end{enumerate}

\begin{table}[ht]                                                                                                                                                                                                                                                                                         
\begin{center}
\caption{The weight distribution of $\overline{\cC_{D}}$ in Theorem \ref{theo-31}}\label{tab-31}
\begin{tabular}{cc} \hline
Weight  &  Frequency   \\ \hline
$0$          &  $1$ \\
$\kappa$ & $q-1$ \\
$\kappa-\frac{1}{q}(\kappa-\frac{\beta+\sqrt{\Delta_1}}{2})$  &  $\frac{q-1}{2}\left((v-1)-\frac{2\kappa+\beta(v-1)}{\sqrt{\Delta_1}}\right)$ \\
$\kappa-\frac{1}{q}(\kappa-\frac{\beta-\sqrt{\Delta_1}}{2})$  & $\frac{q-1}{2}\left((v-1)+\frac{2\kappa+\beta(v-1)}{\sqrt{\Delta_1}}\right)$ \\
$\kappa-\frac{1}{q}\left(\kappa+(q-1)\frac{\beta+\sqrt{\Delta_1}}{2}\right)$& $\frac{1}{2}\left((v-1)-\frac{2\kappa+\beta(v-1)}{\sqrt{\Delta_1}}\right)$\\
$\kappa-\frac{1}{q}\left(\kappa+(q-1)\frac{\beta-\sqrt{\Delta_1}}{2}\right)$ & $\frac{1}{2}\left((v-1)+\frac{2\kappa+\beta(v-1)}{\sqrt{\Delta_1}}\right)$\\
\hline
\end{tabular}
\end{center}
\end{table}
\end{theorem}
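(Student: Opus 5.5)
Part 1 follows from Lemma \ref{lem-34} combined with Lemma \ref{lem-21}. Since $\langle D\rangle=V_n^{(p)}$ and $D$ is $\gf_q^*$-invariant, Lemma \ref{lem-34} gives that $\overline{\mathcal{C}_D}$ is a $[\kappa,r+1]$ code with explicit weights in terms of $\chi_b(D)$. The additive characters of $V_n^{(p)}$ are exactly the $\chi_b$. Because $D$ is a PDS with $-D=D$ and $\mathbf{e}\notin D$, Lemma \ref{lem-21} says that as $b$ runs over $V_n^{(p)}\setminus\{0\}$, $\chi_b(D)$ takes the values $\frac{\beta\pm\sqrt{\Delta_1}}{2}$ with the stated frequencies. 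The weights are then assigned as follows.
\begin{enumerate}
\item For $b\neq0,c=0$, the weight is $\frac{q-1}{q}(\kappa-\chi_b(D))=\kappa-\frac1q(\kappa+(q-1)\chi_b(D))$, with frequency $m_1$ or $m_2$.
\item For $b\neq0,c\neq0$, the weight is $\kappa-\frac1q(\kappa-\chi_b(D))$, with frequency $(q-1)m_1$ or $(q-1)m_2$.
\item For $b=0,c\neq0$, the weight is $\kappa$, with frequency $q-1$.
\end{enumerate}
This yields Table \ref{tab-31}.

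For the dual distance, the first two Pless power moments (Lemma \ref{Plessmoments}) with $k=r+1$ determine $A_1^\perp$ and $A_2^\perp$. Alternatively, argue directly. $A_1^\perp=0$ because no coordinate of $\overline{\mathcal{C}_D}$ is identically zero ($\mathbf{1}$ is a codeword). $A_2^\perp=0$ because a weight-2 dual word would give two columns $(d_i,1)$ and $(d_j,1)$ that are proportional, hence equal, forcing $d_i=d_j$. This gives $d^\perp\geq3$.

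When $q>3$, I will exhibit a weight-3 dual codeword. Pick $d\in D$ and distinct $a_1,a_2,a_3\in\gf_q^*$ (possible since $q-1\geq3$), so that $a_id\in D$ by invariance. Solve
$$\sum_i u_i(a_id,1)=0,$$
that is, $\sum u_ia_i=0$ and $\sum u_i=0$. This Vandermonde-type $2\times3$ system has a nonzero solution. Every $u_i$ is nonzero, since if one vanished the remaining two columns would be proportional, contradicting distinctness. Hence $d^\perp=3$.

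Almost optimality: I will check via Lemma \ref{sphere-packing} that no $[\kappa,\kappa-r-1,5]$ code exists. That requires
$$q^{r+1}<1+\kappa(q-1)+\binom{\kappa}{2}(q-1)^2,$$
which should hold because $\kappa$ is large relative to $q^{r}$. Here I would use $\kappa^2=\mu v+\beta\kappa+\kappa-\mu\geq\mu v$ with $\mu\ge1$, giving $\kappa\gtrsim q^{r/2}$, so $\kappa^2(q-1)^2/2$ exceeds $q^{r+1}$ once $r\geq2$. This inequality is the step I expect to need the most care, since the bound must be made robust for small parameters. If $d=4$ is achievable, the code is optimal; otherwise it is almost optimal.

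Part 2. For $q>3$, take $G=\gf_q^*$, which has $|G|>2$. Corollary \ref{corol-31} with $p\mid\kappa=|D|$ gives that $\mathcal{C}_D$ is even-like and that $\overline{\mathcal{C}_D}$ is self-orthogonal. For $q=3$, Theorem \ref{The-32} gives that $\mathcal{C}_D$ is self-orthogonal under the $9$-divisibility hypothesis. It is even-like because $\sum_{x\in D}\langle b,x\rangle=\sum_{\text{orbits}}\langle b,x_i\rangle(1+(-1))=0$. Lemma \ref{lem-33} gives $\mathbf{1}\notin\mathcal{C}_D$, so Lemma \ref{lem-31} applies.

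For $q=2$, Theorem \ref{The-33} gives self-orthogonality of $\mathcal{C}_D$. Even-likeness is then automatic in the binary case, since $\mathbf{c}\cdot\mathbf{c}=\sum c_i^2=\sum c_i$. Lemma \ref{lem-33} requires $q>2$, so I would need $\mathbf{1}\notin\mathcal{C}_D$ separately. The natural argument is that $\mathbf{1}\in\mathcal{C}_D$ would make $\mathbf{1}\cdot\mathbf{1}=\kappa\equiv0$ consistent, and I would have to assume or argue it otherwise. If $\mathbf{1}\in\mathcal{C}_D$, then $\overline{\mathcal{C}_D}=\mathcal{C}_D$ is trivially self-orthogonal anyway, so the conclusion holds in either case.
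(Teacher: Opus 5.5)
Your route is essentially the paper's. You use Lemma~\ref{lem-34} with Lemma~\ref{lem-21} for Table~\ref{tab-31}, a dependent triple of columns indexed by $a_1d,a_2d,a_3d$ for $d^\perp=3$ when $q>3$, the sphere-packing bound for almost optimality, and Lemma~\ref{lem-31} with Theorems~\ref{The-31}--\ref{The-33} for self-orthogonality. Two things you add are improvements on the paper:
\begin{itemize}
\item Your direct proof of $A_1^\perp=A_2^\perp=0$ (the columns $(\phi(d_i),1)^T$ are nonzero and pairwise non-proportional) is cleaner than the paper's appeal to the Pless power moments.
\item Your explicit even-likeness checks for $q=3$ and $q=2$ fill in what the paper leaves implicit.
\end{itemize}

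The one step that fails as written is the estimate in the sphere-packing check. The claim $\kappa^2=\mu v+\beta\kappa+\kappa-\mu\geq\mu v$ is equivalent to $\kappa(\lambda-\mu+1)\geq\mu$, which is false whenever $\mu\geq\lambda+1$. For instance, take three of the five lines through the origin in $\gf_4^2$, with the origin removed. This is an $\gf_4^*$-invariant $(16,9,4,6)$-PDS satisfying all hypotheses with $q=4$, $r=2$ and $\lambda,\mu\geq1$, yet $\kappa^2=81<96=\mu v$.

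What you need comes from the same identity written as $\kappa(\kappa-\lambda-1)=\mu(v-\kappa-1)$. Since $\mu\geq1$ and $v-\kappa-1\geq0$, it gives $\kappa(\kappa-1)\geq v-1+(\lambda-1)\kappa\geq q^r-1$ for $\lambda\geq1$. Hence
\[
2\Bigl(1+\kappa(q-1)+\binom{\kappa}{2}(q-1)^2-q^{r+1}\Bigr)\geq 2+2\kappa(q-1)+q^r(q^2-4q+1)-(q-1)^2>0
\]
for $q\geq4$ and $r\geq2$. So no $[\kappa,\kappa-r-1,5]$ code exists, and this is exactly where $\lambda,\mu\geq1$, $q>3$ and $r\geq2$ enter.

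Your closing dichotomy is stated the wrong way round. If a $[\kappa,\kappa-r-1,4]$ code exists, then $\overline{\mathcal{C}_D}^{\perp}$ is almost optimal. If no such code exists, it is optimal.

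For $q=2$ you do not need the fallback $\overline{\mathcal{C}_D}=\mathcal{C}_D$. That fallback would in any case contradict the dimension $r+1$ you took from Lemma~\ref{lem-34}, which requires $\mathbf{1}\notin\mathcal{C}_D$. Instead argue as follows.
\begin{itemize}
\item If $\mathbf{1}\in\mathcal{C}_D$, then $\chi_b(D)=-\kappa$ for some $b\neq0$.
\item Every nontrivial character value $\theta$ satisfies $\theta^2-\beta\theta-(\kappa-\mu)=0$.
\item Substituting $\theta=-\kappa$ gives $(\kappa-1)(\kappa-\mu)+\lambda\kappa=0$.
\item This is impossible, because $\mu\leq\kappa$ and $\lambda\geq1$ (positivity of $\lambda,\mu$ is assumed in Lemma~\ref{lemma-21}).
\end{itemize}
This also closes a gap the paper's own proof leaves open for $q=2$.
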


\begin{IEEEproof}
  Let $$\textbf{c}(b,c)=\left(\langle b,d_1\rangle_{n/e}, \langle b,d_2\rangle_{n/e},\cdots, \langle b,d_n\rangle_{n/e}\right)+c\textbf{1}$$
   be a codeword in $\overline{\mathcal{C}_D}$, where $b \in V_{n}^{(p)}$ and $c\in \gf_q$.
According to Lemmas  \ref{lem-21} and \ref{lem-34}, $\overline{\mathcal{C}_D}$ is an $[\kappa, r+1]$ linear code over $\gf_q$ and
\begin{eqnarray*}
\wt(\textbf{c}(b,c))
=\left\{
\begin{array}{ll}
0 & \mbox{with frequency $1$,}\\
\kappa & \mbox{with frequency $q-1$,}\\
\kappa-\frac{1}{q}(\kappa-\frac{\beta+\sqrt{\Delta_1}}{2}) & \mbox{with frequency $\frac{q-1}{2}\left((v-1)-\frac{2\kappa+\beta(v-1)}{\sqrt{\Delta_1}}\right)$,}\\
\kappa-\frac{1}{q}(\kappa-\frac{\beta-\sqrt{\Delta_1}}{2}) & \mbox{with frequency $\frac{q-1}{2}\left((v-1)+\frac{2\kappa+\beta(v-1)}{\sqrt{\Delta_1}}\right)$,}\\
\kappa-\frac{1}{q}\left(\kappa+(q-1)\frac{\beta+\sqrt{\Delta_1}}{2}\right) & \mbox{with frequency $\frac{1}{2}\left((v-1)-\frac{2\kappa+\beta(v-1)}{\sqrt{\Delta_1}}\right)$,}\\
\kappa-\frac{1}{q}\left(\kappa+(q-1)\frac{\beta-\sqrt{\Delta_1}}{2}\right) & \mbox{with frequency $\frac{1}{2}\left((v-1)+\frac{2\kappa+\beta(v-1)}{\sqrt{\Delta_1}}\right)$,}\\
\end{array}\right.
\end{eqnarray*}
where $|D|=\kappa$. By Lemma \ref{Plessmoments}, we then have $A_1^\perp=A_2^\perp=0$. Hence, we have $d^\perp \geq3$.

Note that the generator matrix of $\overline{\mathcal{C}_D}$ is given as follows:
\begin{eqnarray*}
 \overline{G}=\left[
\begin{array}{cccc}
 \phi(\mathbf{d}_1)&  \phi(\mathbf{d}_2)& \cdots&  \phi(\mathbf{d}_n)\\
 \textbf{1}& \textbf{1}&\cdots& \textbf{1}
 \end{array}
 \right],
 \end{eqnarray*}
 where $\phi(\mathbf{d_i})=(\langle e_1, d_i\rangle_{n/e}, \langle e_2, d_i\rangle_{n/e},\cdots, \langle e_{r}, d_i\rangle_{n/e})^{T} $and $\{e_1,  e_2, \cdots, e_{r}\} $ is a $\gf_{q}$-basis of $V_{n}^{(p)}$. Since $D$ is $\gf_{q}^*$-invariant, then for any $d_i \in D$, we have $a_1d_i$ and $a_2d_i \in D$ for $q>3$, where $1,a_1, a_2 \in \gf_{q}^*$ are pairwise distinct. Let $\alpha, \beta, \gamma \in \gf_{q}$ such that $\alpha\phi(\mathbf{d_i})+\beta\phi(a_1\mathbf{d_i})+\gamma\phi(a_2\mathbf{d_i})=0$ and $\alpha+\beta+\gamma=0$. Then we have 
 \begin{eqnarray*}
\left\{
\begin{array}{ll}
\alpha+\beta+\gamma=0,\\
\alpha+a_1\beta+a_2\gamma=0.
\end{array}\right.
\end{eqnarray*}
 Let $\beta=1$. Then we have $\gamma=-\frac{a_1-1}{a_2-1}$ and $\alpha=-\frac{a_2-a_1}{a_2-1}$. Since $1,a_1, a_2 \in \gf_{q}^*$ are pairwise distinct, we deduce  $\gamma \neq0$ and $\alpha \neq0$. Thus  the minimum distance  of $\overline{\mathcal{C}_D}^{\perp}$ is $3$. When $q>3$ and $r\geq2$, by the sphere-packing bound, it is easy to deduce $d^{\perp}\leq 4$. Hence, $\overline{\mathcal{C}_D}^{\perp}$  is at least almost optimal  according to the sphere packing bound.

The self-orthogonality of $\overline{\mathcal{C}_D}$ follows from Lemma \ref{lem-31} and Theorems \ref{The-31}, \ref{The-32} and \ref{The-33}.
\end{IEEEproof}

\begin{example}
Let $p=2$, $e=1$ and $r=n=6$. Let $\gf_{2^6}=\langle \alpha\rangle$. Take $\ell_1=1$ and $N=p^{\ell_1}+1$, then $C_{i}=\{\alpha^{jN+i}:0 \leq j\leq\frac{q^r-1}{N}-1\}$ for $0\leq i\leq N-1$. Let $D=C_{0}\bigcup C_1$. Then $D$ is a $(64, 42, 26, 30)$-partial difference set by Lemma \ref{lem-23}, where $\chi_1(\alpha^iD)\in \{2,-6\}$.  The $\overline{\mathcal{C}_D}$ in Theorem \ref{theo-31} has parameters $[42, 7, 18]$ and weight enumerator $1+21z^{18}+42z^{20}+42z^{22}+21z^{24}+z^{42}$.
It is verified that $8 \mid \left(|D|-\sum_{x \in D}\chi_{b_1}(x)-\sum_{x \in D}\chi_{b_2}(x)+\sum_{x \in D}\chi_{b_1+b_2}\left(x\right)\right)$ by Lemma \ref{lem-23}. By Theorem \ref{theo-31}, we deduce that $\overline{\mathcal{C}_D}$  is a \textbf{singly-even self-orthogonal} code.
\end{example}

\begin{theorem}\label{theo-32}
Let $p$ be a prime, $q=p^e$, $n$ and $e$ are positive integers such that $r = \frac{n}{e}$. Let $D\subseteq V_{n}^{(p)}$ be an $\gf_{q}^*$-invariant partial difference set with parameters $(p^n, \kappa, \lambda, \mu)$ satisfying  $-D=D$, $\mathbf{e} \in D$, $\langle D\rangle=V_{n}^{(p)}$. Then we have the following.
\begin{enumerate}
\item $\overline{\mathcal{C}_D}$ is a $[\kappa, r+1]$ linear code  with its weight distribution as shown in Table \ref{tab-32}. Besides,  $\overline{\mathcal{C}_D}^{\perp}$  has parameters $[\kappa, \kappa-r-1,\geq3]$. In  particular, if $q>2$, then  $\overline{\mathcal{C}_D}^{\perp}$  has parameters $[\kappa, \kappa-r-1,3]$ and, for  $\lambda \geq 1, \mu \geq 1$, $r\geq2$ and $q>3$, $\overline{\mathcal{C}_D}^{\perp}$  is at least almost optimal  according to the sphere packing bound.
\item If $p\mid \kappa$ and $p>3$, then  $\overline{\mathcal{C}_D}$ is a self-orthogonal code. If $q=3$, $9| \left(|D|-\chi_{b}(D)\right)$ for all $b \in V_{n}^{(p)} \backslash \{0\}$ and $p\mid \kappa$, then $\overline{\mathcal{C}_D}$ is a self-orthogonal code. If $q=2$, $8| \left(|D|-\chi_{b_1}(D)-\chi_{b_2}(D)+\chi_{b_1+b_2}\left(D\right)\right)$ for all $b_1, b_2 \in V_{n}^{(p)}\backslash \{0\}$ and $p\mid \kappa$, then $\overline{\mathcal{C}_D}$ is a self-orthogonal code.
\end{enumerate}

\begin{table}[ht]
\begin{center}
\caption{The weight distribution of $\overline{\cC_{D}}$ in Theorem \ref{theo-32}}\label{tab-32}
\begin{tabular}{cc} \hline
Weight  &  Frequency   \\ \hline
$0$          &  $1$ \\
$\kappa$ & $q-1$ \\
$\kappa-\frac{1}{q}(\kappa-\frac{\beta+\sqrt{\Delta_2}}{2})$  &  $\frac{q-1}{2}\left((v-1)+\frac{2(v-\kappa)-\beta(v-1)}{\sqrt{\Delta_2}}\right)$ \\
$\kappa-\frac{1}{q}(\kappa-\frac{\beta-\sqrt{\Delta_2}}{2})$  & $\frac{q-1}{2}\left((v-1)-\frac{2(v-\kappa)-\beta(v-1)}{\sqrt{\Delta_2}}\right)$ \\
$\kappa-\frac{1}{q}\left(\kappa+(q-1)\frac{\beta+\sqrt{\Delta_2}}{2}\right)$& $\frac{1}{2}\left((v-1)+\frac{2(v-\kappa)-\beta(v-1)}{\sqrt{\Delta_2}}\right)$\\
$\kappa-\frac{1}{q}\left(\kappa+(q-1)\frac{\beta-\sqrt{\Delta_2}}{2}\right)$ & $\frac{1}{2}\left((v-1)-\frac{2(v-\kappa)-\beta(v-1)}{\sqrt{\Delta_2}}\right)$\\
\hline
\end{tabular}
\end{center}
\end{table}
\end{theorem}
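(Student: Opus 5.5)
The proof will follow that of Theorem \ref{theo-31}, with Lemma \ref{lem-22} (the case $\mathbf{e}\in D$) in place of Lemma \ref{lem-21}.

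\textbf{Weight distribution.} Write a codeword as $\textbf{c}(b,c)$ with $b\in V_n^{(p)}$ and $c\in\gf_q$. Lemma \ref{lem-34} gives the weights in terms of $\chi_b(D)$:
\begin{itemize}
\item $b=0$: weight $0$ for $c=0$, and weight $\kappa$ for $c\neq 0$ ($q-1$ codewords);
\item $b\neq0$, $c=0$: weight $\frac{q-1}{q}(\kappa-\chi_b(D))$;
\item $b\neq0$, $c\neq0$: weight $\frac{q-1}{q}\kappa+\frac1q\chi_b(D)$ ($q-1$ choices of $c$).
\end{itemize}
Lemma \ref{lem-22} says that for $b\neq0$, $\chi_b(D)$ equals $\frac{\beta\pm\sqrt{\Delta_2}}{2}$, with the stated frequencies. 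Substituting these values, and multiplying the frequencies by $q-1$ when $c\neq0$, yields Table \ref{tab-32}. The dimension $r+1$ follows from Lemmas \ref{lem-32} and \ref{lem-33}.

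\textbf{Dual distance at least $3$.} I will apply the first three Pless power moments (Lemma \ref{Plessmoments}) to this weight distribution and solve for $A_1^\perp=A_2^\perp=0$. Alternatively, argue directly: no column of the generator matrix is zero, since the all-one row is present. Two columns are proportional only if $\lambda(\phi(d_i),1)=(\phi(d_j),1)$; the last coordinate forces $\lambda=1$, so $d_i=d_j$. Since $\mathbf{e}\in D$, one column is $(0,\dots,0,1)^T$, but it is still nonzero and distinct from the others.

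\textbf{Dual distance exactly $3$.} For $q>2$, I need three dependent columns. Take $d\neq0$ in $D$ and $a\in\gf_q^*\setminus\{1\}$, so $ad\in D$, and also use $0\in D$. The columns for $0$, $d$, $ad$ are $(\mathbf 0,1)$, $(\phi(d),1)$, $(a\phi(d),1)$. The combination $(1-a)(\mathbf 0,1)+a(\phi(d),1)-(a\phi(d),1)=0$ has all coefficients nonzero. This is why $q>2$ suffices here, whereas Theorem \ref{theo-31} needed $q>3$ to have three distinct multiples. For near-optimality when $q>3$, $r\ge2$, $\lambda,\mu\ge1$, I will check that the sphere-packing bound (Lemma \ref{sphere-packing}) with $d=5$ fails for length $\kappa$ and dimension $\kappa-r-1$. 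That shows $d^\perp\le4$, so $d^\perp=3$ is at least almost optimal. I expect this to be the main obstacle: it needs $\kappa$ large enough relative to $q^{r+1}$, i.e.\ $1+\kappa(q-1)+\binom{\kappa}{2}(q-1)^2>q^{r+1}$. I would derive a lower bound on $\kappa$ from the PDS relation $\kappa^2=\mu v+\beta\kappa+\gamma$ with $\mu\ge1$, which gives $\kappa^2\gtrsim q^r$.

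\textbf{Self-orthogonality.} First, $\mathbf{1}\notin\mathcal{C}_D$ by Lemma \ref{lem-33}. For $p>3$ we have $q>3$ with $G=\gf_q^*$, so Corollary \ref{corol-31} with $p\mid\kappa$ gives the result. For $q=3$ and $q=2$, Theorems \ref{The-32} and \ref{The-33} give self-orthogonality of $\mathcal{C}_D$. It remains to show $\mathcal{C}_D$ is even-like, so that Lemma \ref{lem-31} applies:
\begin{itemize}
\item For $q=3$: $\sum_{x\in D}\langle b,x\rangle_{n/e}=\sum_{\text{orbits}}\langle b,x_i\rangle_{n/e}(1-1)=0$, while $0$ contributes nothing.
\item For $q=2$: even-likeness means $\sum_{x\in D}\langle b,x\rangle_{n/e}=0$, i.e.\ each weight $\frac12(\kappa-\chi_b(D))$ is even. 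This follows from the $8\mid(\cdots)$ hypothesis with $b_1=b_2$, which gives $8\mid 2(\kappa-\chi_b(D))$.
\end{itemize}
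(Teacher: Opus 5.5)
Your proposal is correct. For the weight distribution and for self-orthogonality you follow the paper: Lemmas \ref{lem-22} and \ref{lem-34}, then Lemma \ref{lem-31} combined with Corollary \ref{corol-31} or Theorems \ref{The-32} and \ref{The-33}. The real difference is in the dual distance.

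\textbf{Dual distance.} The paper gets $d^\perp\ge3$ from the Pless power moments and then says the rest is ``analogous to Theorem \ref{theo-31}''. That dependency uses the columns of $d$, $a_1d$, $a_2d$ with $1,a_1,a_2\in\gf_q^*$ pairwise distinct. Taken literally, this needs $q>3$, so it leaves the case $q=3$ of the claim ``$d^\perp=3$ for $q>2$'' unproved. Your direct column argument gives $d^\perp\ge3$ without any moment computation. Your use of the column $(\mathbf 0,1)^T$ supplied by $\mathbf e\in D$, together with the columns of $d$ and $ad$ for a single $a\ne1$, gives $d^\perp=3$ for every $q>2$. This is exactly what the hypothesis $\mathbf e\in D$ buys, and it explains why the threshold drops from $q>3$ to $q>2$ compared with Theorem \ref{theo-31}.

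\textbf{Even-likeness.} You also verify that $\mathcal{C}_D$ is even-like for $q=3$ and $q=2$. The paper's one-line appeal to Lemma \ref{lem-31} leaves this implicit. For $q=2$ it is in fact automatic, since $\mathbf c\cdot\mathbf c=\sum_i c_i$ over $\gf_2$, so a self-orthogonal binary code is even-like.

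\textbf{Two small repairs.} First, Lemma \ref{lem-33} assumes $q>2$. So for $q=2$, both for the dimension $r+1$ and for applying Lemma \ref{lem-31}, argue instead that every codeword of $\mathcal{C}_D$ vanishes at the coordinate indexed by $0\in D$; hence $\mathbf 1\notin\mathcal{C}_D$. Second, your sphere-packing estimate can be made exact rather than approximate. Counting differences in $D$ gives $\kappa(\kappa-1)=\lambda(\kappa-1)+\mu(v-\kappa)$, so $\mu\ge1$ forces $\kappa^2\ge v=q^r$. Then $1+\kappa(q-1)+\binom{\kappa}{2}(q-1)^2>q^{r+1}$ holds for $q\ge4$ and $r\ge2$. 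The reason is that the leading excess $\frac{q^r(q^2-4q+1)}{2}$ outweighs the remaining terms, which are of order $q^{r/2}(q-1)(q-3)$.
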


\begin{IEEEproof}
 Let $$\textbf{c}(b,c)=\left(\langle b,d_1\rangle_{n/e}, \langle b,d_2\rangle_{n/e},\cdots, \langle b,d_n\rangle_{n/e}\right)+c\textbf{1}$$
   be a codeword in $\overline{\mathcal{C}_D}$, where $b \in V_{n}^{(p)}$ and $c\in \gf_q$.
   According to Lemmas  \ref{lem-22} and \ref{lem-34}, $\overline{\mathcal{C}_D}$ is an $[\kappa, r+1]$ linear code over $\gf_q$ and
\begin{eqnarray*}
\wt(\textbf{c}(b,c))=\left\{
\begin{array}{ll}
0 & \mbox{with frequency $1$,}\\
\kappa  & \mbox{with frequency $q-1$,}\\
\kappa-\frac{1}{q}(\kappa-\frac{\beta+\sqrt{\Delta_2}}{2}) & \mbox{with frequency $\frac{q-1}{2}\left((v-1)+\frac{2(v-\kappa)-\beta(v-1)}{\sqrt{\Delta_2}}\right)$,}\\
\kappa-\frac{1}{q}(\kappa-\frac{\beta-\sqrt{\Delta_2}}{2}) & \mbox{with frequency $\frac{q-1}{2}\left((v-1)-\frac{2(v-\kappa)-\beta(v-1)}{\sqrt{\Delta_2}}\right)$,}\\
\kappa-\frac{1}{q}\left(\kappa+(q-1)\frac{\beta+\sqrt{\Delta_2}}{2}\right) & \mbox{with frequency $\frac{1}{2}\left((v-1)+\frac{2(v-\kappa)-\beta(v-1)}{\sqrt{\Delta_2}}\right)$,}\\
\kappa-\frac{1}{q}\left(\kappa+(q-1)\frac{\beta-\sqrt{\Delta_2}}{2}\right) & \mbox{with frequency $\frac{1}{2}\left((v-1)-\frac{2(v-\kappa)-\beta(v-1)}{\sqrt{\Delta_2}}\right)$.}\\
\end{array}\right.
 \end{eqnarray*}
 According to Lemma \ref{Plessmoments}, we deduce $A_1^\perp=0$ and $A_2^\perp=0$. Then we have $d^\perp \geq3$.
 The rest of the proof is analogous to that of Theorem  \ref{theo-31}.
\end{IEEEproof}

\begin{example}
Let $p=2$, $e=1$, and $r=n=6$. Let $\gf_{2^6}=\langle \alpha\rangle$. Take $\ell_1=3$ and $N=p^{\ell_1}+1$, then $C_{i}=\{\alpha^{jN+i}:0 \leq j\leq\frac{q^r-1}{N}-1\}$ for $0\leq i\leq N-1$. Let $D=\bigcup_{i=0}^{6}C_{i}\bigcup\{0\}$. Then $D$ is a $(64, 50, 38, 42)$ partial difference set by Lemma \ref{lem-23}, where $\chi_1(\alpha^i D)\in \{2,-6\}$. Then $\overline{\mathcal{C}_D}$ in Theorem \ref{theo-32} has parameters $[50, 7, 22]$ and weight enumerator $1+14z^{22}+49z^{24}+49z^{26}+14z^{28}+z^{50}$.
It is easy to deduce $8 \mid \left(|D|-\chi_{b_1}(D)-\chi_{b_2}(D)+\chi_{b_1+b_2}\left(D\right)\right)$. According to Theorem \ref{theo-32}, we know that $\overline{\mathcal{C}_D}$ is a \textbf{singly-even self-orthogonal} code. 
\end{example}

\begin{remark}We make the following remarks on the codes in Theorems \ref{theo-31} and \ref{theo-32}.
\begin{enumerate}
\item In Table \ref{tab:best_codes}, we list the parameters of some codes produced by Theorems \ref{theo-31} and \ref{theo-32}. They are optimal or have best known parameters according to the Code Tables at http://www.codetables.de/.
    
\begin{table}[ht] 
\begin{center}
\caption{Good codes produced by Theorems \ref{theo-31} and \ref{theo-32}}\label{tab:best_codes}
\begin{tabular}{lllllll}
\hline
$q$ & $r$ & $\mathbf{e} \in D$ or $\mathbf{e} \notin D$ & Code  & Parameters  & Optimality\\
\midrule
$2$ & $4$ & $\mathbf{e} \notin D$ &  $\overline{\mathcal{C}_D}$ & $[10, 5, 4]$ & Optimal \\
$2$ & $4$ & $\mathbf{e} \in D$ &  $\overline{\mathcal{C}_D}$& $[13, 5, 5]$ & Optimal \\
$2$ & $6$ & $\mathbf{e} \notin D$ &  $\overline{\mathcal{C}_D}^\perp$ & $[42, 35, 4]$ & Optimal \\
$2$ & $6$ & $\mathbf{e} \notin D$ &  $\overline{\mathcal{C}_D}$ & $[36, 7, 16]$ & Optimal \\
$2$ & $8$ & $\mathbf{e} \notin D$ &  $\overline{\mathcal{C}_D}$ & $[120, 9, 56]$ & Optimal \\
$2$ & $8$ & $\mathbf{e} \notin D$ &  $\overline{\mathcal{C}_D}$ & $[153, 9, 72]$ & Best known parameters\\
$2$ & $8$ & $\mathbf{e} \notin D$ &  $\overline{\mathcal{C}_D}$ & $[170, 9, 80]$ & Best known parameters \\
$3$ & $4$ & $\mathbf{e} \notin D$ &  $\overline{\mathcal{C}_D}^\perp$ & $[24, 19, 3]$ & Optimal \\
$3$ & $4$ & $\mathbf{e} \notin D$ &  $\overline{\mathcal{C}_D}$ & $[48, 5, 30]$ & Optimal \\
$3$ & $4$ & $\mathbf{e} \notin D$ &  $\overline{\mathcal{C}_D}$ & $[40, 5, 24]$ & Optimal \\
$3$ & $4$ & $\mathbf{e} \notin D$ &  $\overline{\mathcal{C}_D}^\perp$ & $[40, 35, 3]$ & Optimal \\
$3$ & $5$ & $\mathbf{e} \notin D$ &  $\overline{\mathcal{C}_D}$ & $[220, 6, 144]$ & Optimal \\
$3$ & $5$ & $\mathbf{e} \notin D$ &  $\overline{\mathcal{C}_D}^\perp$ & $[220, 214, 3]$ & Optimal \\
$4$ & $4$ & $\mathbf{e} \in D$ &  $\overline{\mathcal{C}_D}$ & $[52, 5, 36]$ & Optimal  \\
$4$ & $4$ & $\mathbf{e} \in D$ & $\overline{\mathcal{C}_D}^\perp$ & $[52, 47, 3]$ & Optimal \\
$4$ & $4$ & $\mathbf{e} \notin D$ & $\overline{\mathcal{C}_D}^\perp$ & $[102, 97, 3]$ & Optimal \\
$4$ & $4$ & $\mathbf{e} \notin D$ & $\overline{\mathcal{C}_D}^\perp$ & $[204, 199, 3]$ & Optimal \\
$5$ & $2$ & $\mathbf{e} \notin D$ &  $\overline{\mathcal{C}_D}$ & $[12, 3, 8]$ & Optimal  \\
$5$ & $2$ & $\mathbf{e} \notin D$ &  $\overline{\mathcal{C}_D}$ & $[16, 3, 12]$ & Optimal  \\
$5$ & $2$ & $\mathbf{e} \notin D$ &  $\overline{\mathcal{C}_D}$& $[20, 3, 15]$ & Optimal  \\
$5$ & $4$ & $\mathbf{e} \in D$ &  $\overline{\mathcal{C}_D}$ & $[105, 5, 80]$ & Best known parameters \\
$7$ & $2$ & $\mathbf{e} \notin D$ &  $\overline{\mathcal{C}_D}$ & $[36, 3, 30]$ & Optimal  \\
$7$ & $2$ & $\mathbf{e} \notin D$ &  $\overline{\mathcal{C}_D}$ & $[42, 3, 35]$ & Optimal  \\
$8$ & $2$ & $\mathbf{e} \in D$ &  $\overline{\mathcal{C}_D}$ & $[50, 3, 42]$ & Optimal \\
$8$ & $2$ & $\mathbf{e} \in D$ &  $\overline{\mathcal{C}_D}$ & $[57, 3, 49]$ & Optimal \\
$9$ & $2$ & $\mathbf{e} \notin D$ &  $\overline{\mathcal{C}_D}$ & $[64, 3, 56]$ & Optimal \\
$9$ & $2$ & $\mathbf{e} \notin D$ &  $\overline{\mathcal{C}_D}$ & $[73, 3, 63]$ & Optimal \\
$9$ & $3$ & $\mathbf{e} \notin D$ &  $\overline{\mathcal{C}_D}^\perp$ & $[104, 100, 3]$ & Optimal \\
\hline
\end{tabular}
\end{center}
\end{table}
\item Let $r\geq2$ and $\overline{\mathcal{C}_D}$ be a $q$-ary self-orthogonal code obtained from Theorem \ref{theo-31} or Theorem \ref{theo-32}. Let $\mathcal{C}_1=\overline{\mathcal{C}_D}^{\perp}$ and $\mathcal{C}_2$ be the dual of the code $\{c\mathbf{1}:c \in \gf_{q}\}\subseteq \overline{\mathcal{C}}$. Then it is easy to deduce that $\mathcal{C}_2$ has parameters $[\kappa, \kappa-1, 2]$ and $\mathcal{C}_1^{\perp}\subset\mathcal{C}_1\subset\mathcal{C}_2$.
According to Lemma \ref{lem-Steane}, then there exist pure quantum codes with parameters $[[\kappa, \kappa-r-2, 3]]_q$. Specially, when $\lambda, \mu\geq1 $ and $q>2$, these quantum codes are  at least almost optimal  according to the quantum Hamming bound for ﬁxed lengths and dimensions. In Table \ref{compare}, we  compare  our  pure  quantum codes with the known ones in \cite{Edel}. It is shown that our pure quantum codes have better parameters as they have higher rate than that of known ones in \cite{Edel}.

\begin{table*}[!t]
\begin{center}
\caption{Comparing our pure quantum codes with those in \cite{Edel}}\label{compare}
\begin{tabular}{cccc}\hline
  Condition & Our quantum codes  &   Quantum codes in \cite{Edel}    \\ \hline
%$q=3, r=6$&  $[[156, 148, 3]]_3$ &  $[[160, 148, 3]]_3$ \\  
$q=4, r=4$&  $[[102, 96, 3]]_4$ &  $[[105, 93, 3]]_4$ \\
$q=4, r=4$&  $[[154, 148, 3]]_4$ &  $[[156, 144, 3]]_4$ \\
$q=4, r=5$&  $[[186, 179, 3]]_4$ &  $[[190, 179, 3]]_4$ \\
%$q=5, r=2$&  $[[20, 16, 3]]_5$ &  $[[23, 16, 3]]_5$ \\
%$q=5, r=4$&  $[[520, 514, 3]]_5$ &  $[[521, 511, 3]]_5$ \\
$q=7, r=4$&  $[[385, 379, 3]]_7$ &  $[[387, 379, 3]]_7$ \\
$q=7, r=4$&  $[[672, 666, 3]]_7$ &  $[[680, 660, 3]]_7$ \\
$q=7, r=4$&  $[[721, 715, 3]]_7$ &  $[[721, 707, 3]]_7$ \\
$q=8, r=2$&  $[[28, 24, 3]]_8$ &  $[[33, 23, 3]]_8$ \\
$q=8, r=4$&  $[[126, 120, 3]]_8$ &  $[[128, 119, 3]]_8$ \\
$q=8, r=4$&  $[[630, 624, 3]]_8$ &  $[[645, 623, 3]]_8$ \\
$q=9, r=2$&  $[[24, 20, 3]]_9$ &  $[[26, 20, 3]]_9$ \\
$q=9, r=2$&  $[[72, 68, 3]]_9$ &  $[[73, 67, 3]]_9$ \\
$q=9, r=3$&  $[[624, 619, 3]]_9$ &  $[[637, 616, 3]]_9$ \\

\hline
\end{tabular}
\end{center}
\end{table*}

\end{enumerate}
\end{remark}

\section{Self-orthogonal codes from projective linear codes}

In this section, we present a new construction of self-orthogonal codes by the action of a multiplicative subgroup $G \subset \mathbb{F}_q^*$ $(|G|>2)$ on the columns of an arbitrary projective linear code. This construction yields self-orthogonal codes with flexible parameters. Then we derive quantum codes from their augmented codes. In particular, selecting appropriate projective codes leads to optimal quantum codes with a high degree of parametric flexibility.

\begin{theorem} \label{theorem-2}
Let $q=p^e$ be a prime power and $G=\{g_1,g_2,\ldots,g_\mathbf{s}\}$ be a multiplicative subgroup of $\gf_{q}^*$ with $|G|=\mathbf{s}>2$ and $\mathbf{s}\mid(q-1)$. Let $\mathcal{C}$ be an $[\mathfrak{n},k,d]$ projective linear code with generator matrix $\mathcal{G}=[\mathbf{v}_1,\mathbf{v}_2,\ldots, \mathbf{v}_\mathfrak{n}]$ over $\gf_{q}$. Let $$\mathcal{\widetilde{G}} = [g_1\mathbf{v}_1, \dots, g_1\mathbf{v}_\mathfrak{n}, \dots, g_\mathbf{s}\mathbf{v}_1, \dots, g_\mathbf{s}\mathbf{v}_\mathfrak{n}]$$ be the matrix obtained by the action of $G$ on the columns of $\mathcal{G}$. Then the linear code $\widetilde{\mathcal{C}}$, generated by the matrix $\widetilde{\mathcal{G}}$, is an $[\mathbf{s}\mathfrak{n}, k, \mathbf{s}d]$ self-orthogonal code with $\mathbf{1} \notin \mathcal{\widetilde{C}}$.
\end{theorem}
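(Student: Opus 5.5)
The plan is to describe the codewords of $\widetilde{\mathcal{C}}$ explicitly and read off each claimed property from that description. A codeword of $\widetilde{\mathcal{C}}$ has the form $\mathbf{u}\widetilde{\mathcal{G}}$ with $\mathbf{u}\in\gf_q^k$. Write $\mathbf{c}=\mathbf{u}\mathcal{G}=(c_1,\dots,c_\mathfrak{n})$ with $c_j=\mathbf{u}\mathbf{v}_j$. Then
\[
\mathbf{u}\widetilde{\mathcal{G}}=(g_1\mathbf{c},g_2\mathbf{c},\dots,g_\mathbf{s}\mathbf{c}),
\]
that is, $\mathbf{s}$ scalar multiples of $\mathbf{c}$ placed side by side. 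The map $\mathbf{c}\mapsto(g_1\mathbf{c},\dots,g_\mathbf{s}\mathbf{c})$ is an injective $\gf_q$-linear map from $\mathcal{C}$ onto $\widetilde{\mathcal{C}}$, since every $g_i\neq 0$.

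The three parameters follow directly.
\begin{enumerate}
\item \emph{Length and dimension.} The length is $\mathbf{s}\mathfrak{n}$. Injectivity gives $\dim\widetilde{\mathcal{C}}=k$, so $\widetilde{\mathcal{G}}$ has rank $k$.
\item \emph{Weights.} Since each $g_i\neq0$, we have $\wt(g_i\mathbf{c})=\wt(\mathbf{c})$. Hence the codeword above has weight $\mathbf{s}\,\wt(\mathbf{c})$, and the minimum distance is $\mathbf{s}d$.
\end{enumerate}

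For self-orthogonality, I would argue as in the proof of Theorem \ref{The-31}. Take two codewords built from $\mathbf{c},\mathbf{c}'\in\mathcal{C}$. Their inner product is
\[
\sum_{i=1}^{\mathbf{s}}\sum_{j=1}^{\mathfrak{n}}g_i^2c_jc'_j=(\mathbf{c}\cdot\mathbf{c}')\sum_{g\in G}g^2 .
\]
Write $G=\langle\beta\rangle$. Because $\mathbf{s}>2$, we have $\beta^2\neq1$, and the geometric sum gives $\sum_{g\in G}g^2=(1-\beta^{2\mathbf{s}})/(1-\beta^2)=0$. So every inner product vanishes. This works for all $q$, including $q=2,3$, where no subgroup with $\mathbf{s}>2$ exists and the statement is vacuous.

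Finally I would show $\mathbf{1}\notin\widetilde{\mathcal{C}}$; this is the only step that needs care. Suppose $(g_1\mathbf{c},\dots,g_\mathbf{s}\mathbf{c})=\mathbf{1}$. Looking at the coordinates belonging to a fixed column index $j$, we get $g_ic_j=1$ for every $i$. Since $\mathbf{s}>1$, there are $g_i\neq g_{i'}$, and this is impossible. Equivalently, one can argue as in Lemma \ref{lem-33}: the multiset of columns of $\widetilde{\mathcal{G}}$ is $G$-invariant, so $\mathbf{u}\cdot(g\mathbf{v}_j)=1$ and $\mathbf{u}\cdot\mathbf{v}_j=1$ would force $g=1$ for all $g\in G$. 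Projectivity of $\mathcal{C}$ is not actually needed for these claims. It is presumably kept so that the columns $g_i\mathbf{v}_j$ are pairwise distinct, which matters for later uses such as the augmented code.
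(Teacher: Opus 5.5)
Your proposal is correct and follows essentially the same route as the paper. Both arguments write codewords as $(g_1\mathbf{c},\dots,g_\mathbf{s}\mathbf{c})$, read off the dimension and minimum distance from this form, factor the inner product through $\sum_{g\in G}g^2=0$, and rule out $\mathbf{1}$ via the contradiction $g_1=\dots=g_\mathbf{s}$. Your side remark is also accurate: projectivity is not needed here and is used only later, for $d^\perp\ge 3$ of the augmented code's dual in Theorem \ref{coroll-32}.
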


\begin{IEEEproof}
It is clear that the length of $\mathcal{\widetilde{G}}$ is $\mathbf{s}n$.
Let $\mathbf{r}_1, \mathbf{r}_2, \ldots, \mathbf{r}_k$ be the rows of $\mathcal{G}$ and $\mathbf{\widetilde{r}}_1, \mathbf{\widetilde{r}}_2, \ldots, \mathbf{\widetilde{r}}_k$ be the rows of $\mathcal{\widetilde{G}}$, where each 
$$\mathbf{\widetilde{r}}_i=(g_1\mathbf{r}_i, g_2\mathbf{r}_i, \ldots, g_\mathbf{s}\mathbf{r}_i).$$ 
Suppose that there exist $\alpha_1,\dots,\alpha_k \in \mathbb{F}_q$ such that
\[
\sum_{i=1}^{k}\alpha_{i}\widetilde{\mathbf{r}}_i = \mathbf{0}.
\]
Then for all $g \in G$, we have
\[
g\left(\sum_{i=1}^{k}\alpha_{i}\mathbf{r}_i\right) = \mathbf{0}.
\]
 Since $g \neq 0$, we have $\sum_{i=1}^{k}\alpha_{i}\mathbf{r}_i = 0$. Because $\mathbf{r}_1, \mathbf{r}_2, \dots, \mathbf{r}_k$ are linearly independent, it follows that $\alpha_1 = \cdots = \alpha_k = 0$. Hence $\mathbf{\widetilde{r}}_1, \mathbf{\widetilde{r}}_2, \dots, \mathbf{\widetilde{r}}_k$ are also linearly independent. Therefore, the dimension of $\mathcal{\widetilde{G}}$ is $k$. By the definition of $\mathcal{\widetilde{C}}$, its minimum distance equals $\mathbf{s}$ times that of $\mathcal{C}$.

Let $\mathbf{\widetilde{c}}_1$ and $\mathbf{\widetilde{c}}_2$ be two codewords in  $\mathcal{\widetilde{C}}$. Then we have 
$$\mathbf{\widetilde{c}}_1=(g_1\textbf{k}_1\mathbf{v}_1, \dots, g_1\textbf{k}_1\mathbf{v}_\mathfrak{n}, \dots, g_\mathbf{s}\textbf{k}_1\mathbf{v}_1, \dots, g_\mathbf{s}\textbf{k}_1\mathbf{v}_\mathfrak{n})
$$
with $\textbf{k}_1$ a $1\times k$ vector in $\gf_q^k$, and
$$\mathbf{\widetilde{c}}_2=(g_1\textbf{k}_2\mathbf{v}_1, \dots, g_1\textbf{k}_2\mathbf{v}_\mathfrak{n}, \dots, g_\mathbf{s}\textbf{k}_2\mathbf{v}_1, \dots, g_\mathbf{s}\textbf{k}_2\mathbf{v}_\mathfrak{n})
$$
with $\textbf{k}_2$ a $1\times k$ vector in $\gf_q^k$. It follows that
\begin{eqnarray*}
\nonumber &
&\mathbf{\widetilde{c}}_1\cdot\mathbf{\widetilde{c}}_2\\
&=&\sum_{g\in G}\sum_{i=1}^{\mathfrak{n}}(g\textbf{k}_1\mathbf{v}_i)(g\textbf{k}_2\mathbf{v}_i)\\
&=&\sum_{g\in G}g^2\sum_{i=1}^{\mathfrak{n}}(\textbf{k}_1\mathbf{v}_i)(\textbf{k}_2\mathbf{v}_i)\\
&=&(\mathbf{c}_1\cdot\mathbf{c}_2)\sum_{g\in G}g^2,
\end{eqnarray*}
where $\mathbf{c}_1:=(\textbf{k}_1\mathbf{v}_1,\textbf{k}_1\mathbf{v}_2,\ldots,\textbf{k}_1\mathbf{v}_\mathfrak{n})$ and $\mathbf{c}_2:=(\textbf{k}_2\mathbf{v}_1,\textbf{k}_2\mathbf{v}_2,\ldots,\textbf{k}_2\mathbf{v}_\mathfrak{n})$ are two codewords in $\mathcal{C}$.
Note that $$\sum_{g \in G}g^2=1+\beta^2+\cdots+\beta^{2(s-1)}=\frac{1-(\beta^2)^{\mathbf{s}}}{1-\beta^2}=0,$$ where $G=\langle\beta\rangle$ and $|G|>2$.
Hence, we deduce $\mathbf{\widetilde{c}}_1\cdot\mathbf{\widetilde{c}}_2=0$.
Then  $\mathcal{\widetilde{G}}$ is a self-orthogonal code.

Assume that $\mathbf{1} \in \mathcal{\widetilde{C}}$. Then there exists $\mathbf{c}=(c_1, \ldots, c_\mathfrak{n}) \in \mathcal{C}$ such that $(g_1c_1,\ldots,g_\mathbf{s}c_1,\ldots,g_1c_\mathfrak{n}, \ldots, g_\mathbf{s}c_\mathfrak{n})$ is the all-1 vector. This implies $g_1=g_2=\cdots=g_\mathbf{s}$, which is impossible. 
Therefore, we have $\mathbf{1} \notin \mathcal{\widetilde{C}}.$
\end{IEEEproof}

In then following, we investigate the self-orthogonality of the augmented code of $\mathcal{\widetilde{C}}$.

\begin{theorem}\label{coroll-32}
 Let $\mathcal{\widetilde{C}}$ be the linear code defined in Theorem \ref{theorem-2}. 
Then the augmented code $\overline{\mathcal{\widetilde{C}}}$ is an $[\mathbf{s}\mathfrak{n}, k+1]$ self-orthogonal code if and only if $p \mid \mathfrak{n}$.
Its dual $\overline{\mathcal{\widetilde{C}}}^{\perp}$ has parameters $[\mathbf{s}\mathfrak{n}, \mathbf{s}\mathfrak{n}-k-1,\geq 3]$. 
\end{theorem}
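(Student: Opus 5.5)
The argument has three parts: the dimension, the self-orthogonality criterion via Lemma~\ref{lem-31}, and the bound $d^\perp\ge 3$ read off the columns of a generator matrix.

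\emph{Dimension.} By Theorem~\ref{theorem-2}, $\widetilde{\mathcal{C}}$ is an $[\mathbf{s}\mathfrak{n},k]$ self-orthogonal code with $\mathbf{1}\notin\widetilde{\mathcal{C}}$. Hence the augmented code $\overline{\widetilde{\mathcal{C}}}$ has dimension $k+1$.

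\emph{Self-orthogonality.} Lemma~\ref{lem-31} says $\overline{\widetilde{\mathcal{C}}}$ is self-orthogonal if and only if $\widetilde{\mathcal{C}}$ is even-like, self-orthogonal, and $p$ divides the length $\mathbf{s}\mathfrak{n}$. Self-orthogonality is given, so two checks remain.
\begin{itemize}
\item[(i)] $\widetilde{\mathcal{C}}$ is always even-like. A codeword has the form $(g\,\mathbf{k}\mathbf{v}_i)_{g\in G,\,1\le i\le\mathfrak{n}}$, so its coordinate sum is $\left(\sum_{g\in G}g\right)\sum_{i}\mathbf{k}\mathbf{v}_i$. Writing $G=\langle\beta\rangle$ with $\beta\ne1$ (since $\mathbf{s}>2$), the geometric series gives $\sum_{g\in G}g=(1-\beta^{\mathbf{s}})/(1-\beta)=0$.
\item[(ii)] $p\mid\mathbf{s}\mathfrak{n}$ if and only if $p\mid\mathfrak{n}$. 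Since $\mathbf{s}\mid(q-1)$, we have $\gcd(\mathbf{s},p)=1$, so $p$ divides $\mathbf{s}\mathfrak{n}$ exactly when it divides $\mathfrak{n}$.
\end{itemize}
These give both directions of the equivalence.

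\emph{Dual distance.} A generator matrix of $\overline{\widetilde{\mathcal{C}}}$ is $\widetilde{\mathcal{G}}$ with an all-one row appended. Its columns are $(g\mathbf{v}_i,1)^T$ for $g\in G$, $1\le i\le\mathfrak{n}$.
\begin{itemize}
\item No column is zero, because its last entry is $1$. So $d^\perp\ge2$.
\item A weight-2 dual codeword would require two columns to be proportional: $(g\mathbf{v}_i,1)=\lambda(g'\mathbf{v}_j,1)$. Comparing last entries forces $\lambda=1$, so $g\mathbf{v}_i=g'\mathbf{v}_j$.
\item If $i\ne j$, this contradicts projectivity of $\mathcal{C}$, which says $\mathbf{v}_i,\mathbf{v}_j$ are not proportional. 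If $i=j$, then $(g-g')\mathbf{v}_i=0$ with $\mathbf{v}_i\ne0$ (again by projectivity), so $g=g'$ and the two columns are the same column.
\end{itemize}
Thus any two distinct columns are linearly independent, and $d^\perp\ge3$. The dual length and dimension $[\mathbf{s}\mathfrak{n},\mathbf{s}\mathfrak{n}-k-1]$ follow immediately.

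Every step is routine. The only points needing care are the coprimality $\gcd(\mathbf{s},p)=1$ in the equivalence, and the fact that the even-like property is automatic for all $\widetilde{\mathcal{C}}$, so the only condition that survives is $p\mid\mathfrak{n}$.
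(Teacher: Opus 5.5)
Your proof is correct and follows the paper's argument essentially step for step. The paper likewise gets even-likeness from $\sum_{g\in G}g=0$, combines $\gcd(\mathbf{s},p)=1$ with Lemma~\ref{lem-31}, uses $\mathbf{1}\notin\widetilde{\mathcal{C}}$ for the dimension, and uses pairwise independence of the columns of the augmented generator matrix for $d^\perp\ge 3$. Your last step is slightly more explicit than the paper's: you spell out that the appended all-one row forces $\lambda=1$, which is exactly what rules out proportionality between the columns $g\mathbf{v}_i$ and $g'\mathbf{v}_i$.
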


\begin{IEEEproof}
For any $\widetilde{\mathbf{c}}\in\widetilde{\mathcal{C}}$, there exists $\mathbf{c}=(c_1,\dots,c_\mathfrak{n})\in\mathcal{C}$ such that $\widetilde{\mathbf{c}}=(g_1c_1,\dots,g_\mathbf{s}c_1,\dots,g_1c_\mathfrak{n},\dots,g_\mathbf{s}c_\mathfrak{n})$. Then the sum of all entries of $\widetilde{\mathbf{c}}$ equals
\[
 \sum_{g\in G}\sum_{i=1}^{\mathfrak{n}} g c_i = \left(\sum_{g\in G} g\right)\left(\sum_{i=1}^{\mathfrak{n}} c_i\right).
\]
Since $G=\langle\beta\rangle$ with $|G|>2$, we have $\sum_{g\in G} g=0$ and $\sum_{g\in G}\sum_{i=1}^{\mathfrak{n}} g c_i =0$.
Thus $\widetilde{\mathcal{C}}$ is even-like. Note that $\gcd(p,\mathbf{s})=1$ as $\mathbf{s} \mid (q-1)$.
The self-orthogonality of $\overline{\mathcal{\widetilde{C}}}$ follows from Theorem \ref{theorem-2} and Lemma \ref{lem-31}. 
 According to Theorem \ref{theorem-2}, we know  $\mathbf{1} \notin \mathcal{\widetilde{C}}$. Then the dimension of $\overline{\mathcal{\widetilde{C}}}$ is $k+1$.

The generator matrix of $\overline{\mathcal{\widetilde{C}}}$ is given by
\begin{eqnarray*}
 \overline{\widetilde{G}}=\left[
\begin{array}{ccccccc}
 g_1\mathbf{v}_1 & \cdots & g_1\mathbf{v}_\mathfrak{n} & \cdots & g_\mathbf{s}\mathbf{v}_1 & \cdots & g_\mathbf{s}\mathbf{v}_\mathfrak{n}\\
 1& \cdots & 1& \cdots & 1 & \cdots & 1
 \end{array}
 \right].
 \end{eqnarray*}
Since $\mathcal{C}$ is a projective linear code, the vectors $\mathbf{v}_1, \mathbf{v}_2, \ldots, \mathbf{v}_\mathfrak{n}$ are pairwise linearly independent over $\gf_q$. 
Moreover, $g_1, g_2, \ldots, g_{\mathbf{s}}$ are pairwise distinct. 
Consequently, the columns of $\overline{\widetilde{G}}$ are pairwise linearly independent over $\gf_q$. 
It follows that the minimum distance $d^{\perp}$ of $\overline{\mathcal{\widetilde{C}}}^{\perp}$ satisfies $d^{\perp} \geq 3$.
\end{IEEEproof}

Now we use  $\overline{\mathcal{\widetilde{C}}}^{\perp}$ to construct quantum codes. 

\begin{corollary}\label{coroll-33}
With the same notation as in Theorem  \ref{theorem-2}. Let $p \mid \mathfrak{n}$, $\mathbf{s}\mid (q-1)$ and $k\geq2$. Then there exist  pure quantum codes with parameters $[[\mathbf{s}\mathfrak{n}, \mathbf{s}\mathfrak{n}-k-2, 3]]_q$.
\end{corollary}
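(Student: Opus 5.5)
The plan is to apply Lemma \ref{lem-Steane} (the generalized Steane enlargement) with $\mathcal{C}_1=\overline{\mathcal{\widetilde{C}}}^{\perp}$ and a suitable larger code $\mathcal{C}_2$. This mirrors the remark following Theorem \ref{theo-32}.

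\emph{Step 1: set up $\mathcal{C}_1$.} Since $p\mid\mathfrak{n}$, Theorem \ref{coroll-32} shows that $\overline{\mathcal{\widetilde{C}}}$ is an $[\mathbf{s}\mathfrak{n},k+1]$ self-orthogonal code. Hence $\mathcal{C}_1:=\overline{\mathcal{\widetilde{C}}}^{\perp}$ has parameters $[\mathbf{s}\mathfrak{n},\mathbf{s}\mathfrak{n}-k-1,d_1]$ with $d_1\geq 3$. Moreover, $\mathcal{C}_1^{\perp}=\overline{\mathcal{\widetilde{C}}}\subseteq \overline{\mathcal{\widetilde{C}}}^{\perp}=\mathcal{C}_1$, and the inclusion is proper because $k+1<\mathbf{s}\mathfrak{n}-k-1$ (this follows from $\mathbf{s}\ge 3$ and $\mathfrak{n}\ge k$).

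\emph{Step 2: choose $\mathcal{C}_2$.} Let $\mathcal{C}_2:=\{c\mathbf{1}:c\in\gf_q\}^{\perp}$, the even-like code of length $\mathbf{s}\mathfrak{n}$. It has parameters $[\mathbf{s}\mathfrak{n},\mathbf{s}\mathfrak{n}-1,2]$: it contains weight-$2$ vectors such as $(1,-1,0,\dots,0)$ and has no weight-$1$ vectors. Since $\{c\mathbf{1}\}\subseteq\overline{\mathcal{\widetilde{C}}}$, taking duals gives $\mathcal{C}_1\subseteq\mathcal{C}_2$. The dimension condition $k_1+2\le k_2$ reads $\mathbf{s}\mathfrak{n}-k+1\le \mathbf{s}\mathfrak{n}-1$, which is exactly $k\ge 2$.

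\emph{Step 3: apply Lemma \ref{lem-Steane}.} The lemma gives a pure quantum code with parameters
\[
[[\mathbf{s}\mathfrak{n},\,(\mathbf{s}\mathfrak{n}-k-1)+(\mathbf{s}\mathfrak{n}-1)-\mathbf{s}\mathfrak{n},\,\min\{d_1,\lceil 2(q+1)/q\rceil\}]]_q .
\]
The dimension term simplifies to $\mathbf{s}\mathfrak{n}-k-2$. Since $q\ge 4$ here ($\mathbf{s}>2$ and $\mathbf{s}\mid q-1$), we have $\lceil 2+2/q\rceil=3$, so the distance is $\min\{d_1,3\}=3$.

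\emph{Main obstacle.} The only delicate point is the distance $d_1\ge 3$. If $d_1$ happened to exceed $3$ it would not matter, since the minimum is capped at $3$ anyway. The bound $d_1\ge3$ itself comes from Theorem \ref{coroll-32}, which rests on the columns of $\overline{\widetilde{G}}$ being pairwise independent. If needed, I would re-verify this directly. Suppose $(g\mathbf{v}_i,1)$ and $(g'\mathbf{v}_j,1)$ are proportional. Comparing the last coordinates forces the scalar to be $1$, so $g\mathbf{v}_i=g'\mathbf{v}_j$. Projectivity then forces $i=j$, and $\mathbf{v}_i\neq 0$ forces $g=g'$. Also no column is zero, since every last coordinate is $1$.
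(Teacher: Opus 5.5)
Your proposal is correct and follows the paper's proof: you take $\mathcal{C}_1=\overline{\widetilde{\mathcal{C}}}^{\perp}$ and $\mathcal{C}_2=\{c\mathbf{1}:c\in\gf_q\}^{\perp}$, and apply Lemma~\ref{lem-Steane}, where $k\ge 2$ gives $k_1+2\le k_2$ and $\lceil 2(q+1)/q\rceil=3$. One aside: your claim that $\mathcal{C}_1^{\perp}\subsetneq\mathcal{C}_1$ does not follow from $\mathbf{s}\ge 3$ and $\mathfrak{n}\ge k$ when $k=2$, and it can fail (e.g.\ $q=4$, $\mathbf{s}=3$, $\mathcal{C}=\gf_4^2$ makes $\overline{\widetilde{\mathcal{C}}}$ a self-dual $[6,3]$ code), but properness is not needed because Lemma~\ref{lem-Steane} only requires containment.
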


\begin{IEEEproof}
Let $\mathcal{C}_1=\overline{\mathcal{\widetilde{C}}}^{\perp}$ and $\mathcal{C}_2$ be the dual of the code $\{c\mathbf{1}:c \in \gf_{q}\}\subseteq \overline{\mathcal{\widetilde{C}}}$. Then it is easy to deduce that $\mathcal{C}_2$ has parameters $[\mathbf{s}\mathfrak{n}, \mathbf{s}\mathfrak{n}-1, 2]$ and $\mathcal{C}_1^{\perp}\subset\mathcal{C}_1\subset\mathcal{C}_2$ by Theorem \ref{coroll-32}. 
The desired conclusion follows from Lemma \ref{lem-Steane}. 
\end{IEEEproof}

In Corollary \ref{coroll-33}, the quantum codes may be optimal if we select suitable projective codes. 
In the following, we select two families of projective codes to derive optimal quantum codes.

\begin{corollary}\label{coroll-33-1}
Let $q=p^e$ be a prime power, $\mathbf{s}\mid (q-1)$, $\mathbf{s}>2$ and $1\leq u\leq r-1$ with $r$ a positive integer. Then there exist dimension optimal pure quantum codes with parameters $\left[\left[\mathbf{s}\frac{q^r-q^{u}}{q-1}, \mathbf{s}\frac{q^r-q^{u}}{q-1}-r-2, 3\right]\right]_q$ with  respect  to  the  quantum Hamming bound. 
\end{corollary}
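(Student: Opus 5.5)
We will apply Corollary \ref{coroll-33} to a suitable projective code and then check dimension optimality against Lemma \ref{lem-bound}. The projective code is the punctured simplex-type code. Let $U$ be a fixed $u$-dimensional $\gf_q$-subspace of $\gf_q^r$, and take one representative of each projective point of $\gf_q^r\setminus U$ as columns. This gives $\mathfrak{n}=\frac{q^r-q^u}{q-1}$ columns, which are pairwise linearly independent. The resulting code $\mathcal{C}$ is an $[\frac{q^r-q^u}{q-1}, r]$ projective code. Since the points outside a proper subspace span $\gf_q^r$ (as $u\le r-1$), the dimension is exactly $r\ge 2$; if $r=1$ the range for $u$ is empty.

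\textbf{Step 1: the divisibility hypothesis.} We have $\mathfrak{n}=q^u\frac{q^{r-u}-1}{q-1}$ with $u\ge 1$, so $p\mid \mathfrak{n}$. All hypotheses of Corollary \ref{coroll-33} then hold with $k=r$ and $\mathbf{s}\mid(q-1)$, $\mathbf{s}>2$. The corollary yields pure quantum codes with parameters $[[N, N-r-2, 3]]_q$, where $N=\mathbf{s}\mathfrak{n}$.

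\textbf{Step 2: dimension optimality.} We must show that no pure $[[N,N-r-1,3]]_q$ code exists. By Lemma \ref{lem-bound} with $d=3$, such a code would require
\[
1+(q^2-1)N\le q^{r+1}.
\]
Substituting gives
\[
(q^2-1)N=\mathbf{s}(q+1)(q^r-q^u)\ge 3(q+1)(q^r-q^{r-1})=3(q^2-1)q^{r-1},
\]
using $\mathbf{s}\ge 3$ and $q^u\le q^{r-1}$. For $q\ge 4$, the quantity $3(q^2-1)q^{r-1}$ exceeds $q^{r+1}$, since $3q^2-3>q^2$. The bound is therefore violated, so the code is dimension optimal. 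Since $\mathbf{s}>2$ with $\mathbf{s}\mid q-1$ forces $q\ge 4$, no small-$q$ cases remain.

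\textbf{Main obstacle.} The only delicate point is ensuring that the minimum distance is exactly $3$ rather than merely $\ge 3$. This is already asserted in Corollary \ref{coroll-33}, since Lemma \ref{lem-Steane} gives $\min\{d_1,\lceil (q+1)\cdot 2/q\rceil\}=\min\{d_1,3\}$. If needed, $d_1\le 3$ can also be verified directly. Take a column $g\mathbf{v}$ together with $g'\mathbf{v}$ and $g''\mathbf{v}$ for three distinct elements $g,g',g''\in G$. As in the proof of Theorem \ref{theo-31}, these three columns of $\overline{\widetilde{G}}$ (with the appended $1$) are linearly dependent, since this is a $2\times3$ Vandermonde-type system. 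Hence $d_1=3$, and the pure code attains distance $3$.
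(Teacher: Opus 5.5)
Your proposal is correct and follows essentially the same route as the paper: take the MacDonald code, then pass through Theorems \ref{theorem-2} and \ref{coroll-32} and Lemma \ref{lem-Steane} (packaged as Corollary \ref{coroll-33}), and finally check the quantum Hamming bound. You also make explicit what the paper leaves implicit or calls ``easy to verify'': that $p\mid\mathfrak{n}$ because $u\ge 1$, that $r\ge 2$, and that $1+(q^2-1)\mathbf{s}\frac{q^r-q^u}{q-1}>q^{r+1}$.
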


\begin{IEEEproof}
It is known that the MacDonald codes are projective linear codes obtained by
 puncturing the simplex codes and have parameters $[\frac{q^r-q^{u}}{q-1}, r, q^{r-1}-q^{u-1}]$, where $1\leq u\leq r-1$.
 By Theorem \ref{theorem-2}, a family of
self-orthogonal codes with parameters $[\mathbf{s}\frac{q^r-q^{u}}{q-1}, r, \mathbf{s}(q^{r-1}-q^{u-1})]$ can be obtained from MacDonald codes with  $\mathbf{s}\mid (q-1)$ and $\mathbf{s}>2$. Its augmented code is also a self-orthogonal code whose dual code  has parameters $[\mathbf{s}\frac{q^r-q^{u}}{q-1},\mathbf{s}\frac{q^r-q^{u}}{q-1}-r-1,\geq 3]$, according to Theorem \ref{coroll-32}. The desired conclusion can be obtained from Lemma \ref{lem-Steane} by an argument similar to that used in the proof of Corollary \ref{coroll-33}.
It is easy to verify these quantum codes are dimension optimal with  respect  to  the  quantum Hamming bound.
\end{IEEEproof}

\begin{lemma}\label{lemma-31} \cite[Theorem 6 and Proposition 5]{YWei}
Let $q=p$ be a prime and $n, s$ be integers with $0\leq s\leq n-4$ for even $n+s$, and $0\leq s\leq n-3$, $(p,n)\neq (3,3)$ for odd $n+s$. Let $f(x):V_{n}^{(p)}\rightarrow \gf_{p}$ be an $s$-plateaued function belonging to $\mathscr{F}$ and $\sharp B_{+}(f)=k$. When $n+s$ is even, $\widehat{\mathcal{C}}_{D_{f,sq}}$ and $\widehat{\mathcal{C}}_{D_{f,nsq}}$ are $p$-ary $\left[\frac{p^{n-1}-\epsilon_{0}p^{\frac{n+s}{2}-1}}{2}, n\right]$ projective linear codes. When $n+s$ is odd, $\widehat{\mathcal{C}}_{D_{f,sq}}$ is a $p$-ary $\left[\frac{p^{n-1}+\epsilon_{0}p^{\frac{n+s-1}{2}}}{2}, n\right]$ projective linear codes, and $\widehat{\mathcal{C}}_{D_{f,nsq}}$ is a $p$-ary $\left[\frac{p^{n-1}-\epsilon_{0}p^{\frac{n+s-1}{2}}}{2}, n\right]$ projective linear codes.
\end{lemma}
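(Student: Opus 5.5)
The plan is to read $\widehat{\mathcal{C}}_{D}$ as the code obtained from $\mathcal{C}_D$ (Equation (\ref{eq-1}) with $q=p$, $e=1$) by keeping one representative from each orbit of $\gf_p^*$ acting on $D$. I am assuming this is the intended notation, since it is not defined in the text above. With that reading there are three things to show for $D\in\{D_{f,sq},D_{f,nsq}\}$: $D$ is $\gf_p^*$-invariant and the representatives give pairwise independent columns; the length is $|D|/(p-1)$; and the dimension is $n$.

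\emph{Invariance and projectivity.} By condition (4) of $\mathscr{F}$, $f(ax)=a^{t}f(x)$ with $\gcd(t-1,p-1)=1$ (or the same with $t'$ on $B_-(f^*)$). Since $p-1$ is even, $t-1$ is odd, so $t$ is even. Hence $a^t$ is a nonzero square and $f(ax)\in SQ$ if and only if $f(x)\in SQ$, and likewise for $NSQ$. So both sets are unions of full $\gf_p^*$-orbits, and neither contains $0$ because $f(0)=0\notin SQ\cup NSQ$. Two representatives from distinct orbits are not $\gf_p$-proportional, so the columns are pairwise independent and the code is projective.

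\emph{Length.} Condition (1) and the evenness of $t$ give $f(-x)=f(x)$ and $f(0)=0$, so the lemma preceding Lemma \ref{YWei} yields $j_0=f^*(0)=0$. Then Lemma \ref{lem-11} gives $|D_{f,sq}|=\sum_{j\in SQ}N_j(f)$.
\begin{itemize}
\item For $n+s$ even, $\delta_0(j)=0$ on $SQ$, so $|D_{f,sq}|=\frac{p-1}{2}\bigl(p^{n-1}-\epsilon_0p^{\frac{n+s}{2}-1}\bigr)$. The same value holds for $NSQ$.
\item For $n+s$ odd, $\eta_0(j)=\pm1$ on $SQ$ and $NSQ$ respectively, which gives $\frac{p-1}{2}\bigl(p^{n-1}\pm\epsilon_0p^{\frac{n+s-1}{2}}\bigr)$.
\end{itemize}
Dividing by $p-1$ gives exactly the stated lengths.

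\emph{Dimension.} Since $D$ is $\gf_p$-linearly closed under scalars, the dimension is $n$ if and only if $D$ spans $V_n^{(p)}$, as in Lemma \ref{lem-32}. Equivalently, no codeword with $b\neq0$ vanishes identically, i.e. $\wt(\mathbf{c}_b)=\frac{p-1}{p}\bigl(|D|-\chi_b(D)\bigr)>0$. I would write
\[
\chi_b(D)=\frac1p\sum_{y\in\gf_p^*}\sum_{j\in SQ}\zeta_p^{-yj}\sum_{x}\zeta_p^{yf(x)+\langle b,x\rangle_n}
\]
(plus the $y=0$ term, giving $|D|/p$ contributions). I would then express the inner sums through $W_{yf}(-b)$, using Lemma \ref{YWei} for how $S_f$ and $B_\pm(f)$ behave under scaling. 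Each inner sum has modulus at most $p^{\frac{n+s}{2}}$, so $|\chi_b(D)|\le c\,p^{\frac{n+s}{2}}$ for a small constant $c$. The hypotheses $s\le n-4$ (for $n+s$ even) and $s\le n-3$ with $(p,n)\ne(3,3)$ (for $n+s$ odd) should make $|D|\approx\frac{p-1}{2}p^{n-1}$ strictly exceed this bound, forcing positive weight.

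\emph{Main obstacle.} I expect the hardest step to be this last estimate, specifically getting a constant sharp enough that the borderline cases, $s=n-4$ and the small primes $p=3$ or $5$, still work. If the crude bound fails there, I would compute $\chi_b(D)$ exactly using the Gauss sum of Lemma \ref{quadGuasssum3} together with Lemma \ref{lem-13}, and check positivity case by case.
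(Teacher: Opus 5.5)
Your plan is correct, and your reading of $\widehat{\mathcal{C}}_D$ (one representative per $\gf_p^*$-orbit of $D$) is the one consistent with the stated lengths $|D|/(p-1)$. The invariance, projectivity and length steps are fine. The step you flagged as the main obstacle is not delicate: the crudest bound already closes it. For $b\neq 0$ and $A\in\{SQ,NSQ\}$,
\[
\chi_b(D)=\frac{1}{p}\sum_{y\in\gf_p^*}\Bigl(\sum_{j\in A}\zeta_p^{-yj}\Bigr)\,\sigma_y\bigl(W_f(-y^{-1}b)\bigr).
\]
Here the $y=0$ term is $\frac{|A|}{p}\sum_{x}\chi_b(x)=0$, not a ``$|D|/p$ contribution'' as you wrote.

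Complex conjugation commutes with $\sigma_y$, so $|\sigma_y(\alpha)|^2=\sigma_y(|\alpha|^2)=|\alpha|^2$ for $\alpha=W_f(c)$, since $|\alpha|^2\in\{0,p^{n+s}\}$ is rational. Combined with the trivial bound $\bigl|\sum_{j\in A}\zeta_p^{-yj}\bigr|\le\frac{p-1}{2}$, this gives $|\chi_b(D)|\le\frac{(p-1)^2}{2}p^{\frac{n+s}{2}-1}$. On the other side, $|D|\ge\frac{p-1}{2}\bigl(p^{n-1}-p^{\tau}\bigr)$ with $\tau=\frac{n+s}{2}-1$ for $n+s$ even and $\tau=\frac{n+s-1}{2}$ for $n+s$ odd. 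So it suffices that $p^{n-1}-p^{\tau}>(p-1)p^{\frac{n+s}{2}-1}$.

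\emph{$n+s$ even:} the condition is $p^{n-1}>p^{\frac{n+s}{2}}$, i.e.\ $n-s>2$.

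\emph{$n+s$ odd:} $n-s\ge 3$ gives $n-1\ge\tau+1$, so the left side is at least $(p-1)p^{\tau}>(p-1)p^{\tau-\frac12}$.

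Hence $|\chi_b(D)|<|D|$ and $\wt(\mathbf{c}_b)>0$ for all $b\neq0$, uniformly in the odd prime $p$, including $p=3$. The bounds on $s$ in the hypothesis are exactly what this consumes, and $(p,n)\neq(3,3)$ is not used.

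One repair is needed in the length step. Lemma \ref{lem-11}, and the unlabeled lemma preceding Lemma \ref{YWei} that you use to get $f^*(0)=0$, both presuppose that $f$ is unbalanced, i.e.\ $0\in S_f$, and you did not check this. It holds for three reasons:
\begin{enumerate}
\item by Parseval, $|S_f|=p^{n-s}$;
\item by Lemma \ref{YWei}(1), $S_f\setminus\{0\}$ is a union of $\gf_p^*$-orbits of size $p-1$;
\item $p^{n-s}\equiv 1\pmod{p-1}$ with $p-1\ge 2$, so $|S_f|$ is not a multiple of $p-1$.
\end{enumerate}

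For comparison: the paper gives no proof of this lemma; it imports it from \cite{YWei}. An exact route is available from the paper's own tools. For $b\neq0$, $\chi_b(D_{f,sq})=\frac12(S-T)$ and $\chi_b(D_{f,nsq})=-\frac12(S+T)$, with $S,T$ as in Lemmas \ref{lem-41} and \ref{lem-42}. This gives every nonzero weight exactly; for instance $|\chi_b(D)|\le\frac{p+1}{2}p^{\frac{n+s}{2}-1}$ when $n+s$ is even. That yields more information. Your estimate, however, avoids $f^*$, $\epsilon_b$ and the homogeneity of $f^*$ in Lemma \ref{YWei}(2) altogether, and it is all one needs for the dimension and hence for projectivity.
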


Now we will use the projective code in Lemma \ref{lemma-31} to construct optimal pure quantum codes. According to Theorems \ref{theorem-2} and \ref{coroll-32}, together with Lemma \ref{lem-Steane}, the following corollary follows from a proof analogous to that of Corollary \ref{coroll-33}.

\begin{corollary}
Let $q=p$ be a prime, $\mathbf{s}\mid (p-1)$ and $\mathbf{s}>2$. Let $n, s$ be integers with $0\leq s\leq n-4$ for even $n+s$, and $0\leq s\leq n-3$, $(p,n)\neq (3,3)$  for odd $n+s$. 
Then we have the following. 
\begin{enumerate}
\item When $n+s$ is even, there exists a family of dimension optimal pure quantum codes with parameters $$\left[\left[\mathbf{s}\frac{p^{n-1}-\epsilon_{0}p^{\frac{n+s}{2}-1}}{2}, \mathbf{s}\frac{p^{n-1}-\epsilon_{0}p^{\frac{n+s}{2}-1}}{2}-n-2,3\right]\right]_p$$ with respect to the quantum Hamming bound.
    
 \item When $n+s$ is odd, there exist two families of dimension optimal pure quantum codes with parameters $$\left[\left[\mathbf{s}\frac{p^{n-1}+\epsilon_{0}p^{\frac{n+s-1}{2}}}{2}, \mathbf{s}\frac{p^{n-1}+\epsilon_{0}p^{\frac{n+s-1}{2}}}{2}-n-2,3\right]\right]_p$$  and  $$\left[\left[\mathbf{s}\frac{p^{n-1}-\epsilon_{0}p^{\frac{n+s-1}{2}}}{2}, \mathbf{s}\frac{p^{n-1}-\epsilon_{0}p^{\frac{n+s-1}{2}}}{2}-n-2,3\right]\right]_p$$ with respect to the quantum Hamming bound.
\end{enumerate}
\end{corollary}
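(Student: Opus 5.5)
The plan is to feed the projective codes of Lemma \ref{lemma-31} into Theorem \ref{theorem-2}, Theorem \ref{coroll-32} and Corollary \ref{coroll-33}. Then we check dimension optimality against the quantum Hamming bound (Lemma \ref{lem-bound}).

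\emph{Step 1 (self-orthogonality and quantum code).} Take $\mathcal{C}=\widehat{\mathcal{C}}_{D_{f,sq}}$ or $\widehat{\mathcal{C}}_{D_{f,nsq}}$ from Lemma \ref{lemma-31}. It is a projective $[\mathfrak{n},n]$ code over $\gf_p$, with $\mathfrak{n}$ equal to the stated length, so Theorem \ref{theorem-2} applies with $k=n\geq 2$. Corollary \ref{coroll-33} then requires $p\mid \mathfrak{n}$.

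\begin{itemize}
\item When $n+s$ is even, $\mathfrak{n}=\frac{p^{n-1}-\epsilon_0p^{\frac{n+s}{2}-1}}{2}$. Here $n-1\geq 1$, and $\frac{n+s}{2}-1\geq 1$ because $s\leq n-4$ forces $n\geq 4$. Both terms are divisible by $p$, and $p$ is odd because $\mathbf{s}\mid p-1$ with $\mathbf{s}>2$, so $p\mid \mathfrak{n}$.
\item When $n+s$ is odd, $\frac{n+s-1}{2}\geq 1$ because $n\geq 3$, and again $p\mid\mathfrak{n}$.
\end{itemize}

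Corollary \ref{coroll-33} therefore gives pure $[[\mathbf{s}\mathfrak{n},\mathbf{s}\mathfrak{n}-n-2,3]]_p$ codes in each case. This yields one family when $n+s$ is even, since the $sq$ and $nsq$ lengths coincide, and two families when $n+s$ is odd.

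\emph{Step 2 (dimension optimality).} This is the main point to verify. Write $N=\mathbf{s}\mathfrak{n}$. A pure $[[N,N-n-1,3]]_p$ code would need $1+(p^2-1)N\leq p^{n+1}$ by Lemma \ref{lem-bound}. So I will show $(p^2-1)N+1>p^{n+1}$, using only the lower bounds on $N$.

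Since $\mathbf{s}\geq 3$, we have $N\geq \frac{3}{2}\bigl(p^{n-1}-p^{\frac{n+s}{2}-1}\bigr)$ in the even case, and $N\geq \frac32\bigl(p^{n-1}-p^{\frac{n+s-1}{2}}\bigr)$ in the odd case.

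\begin{itemize}
\item In the even case $s\leq n-4$ gives $p^{\frac{n+s}{2}-1}\leq p^{n-3}$. Hence
\[
(p^2-1)N\geq \tfrac32(p^2-1)p^{n-3}(p^2-1)=\tfrac32p^{n-3}(p^2-1)^2,
\]
and this exceeds $p^{n+1}$ because $\tfrac32(1-p^{-2})^2>1$ for $p\geq 5$. Note that $p\geq 5$, or $p=7,\ldots$, is forced by $\mathbf{s}>2$ with $\mathbf{s}\mid p-1$.
\item In the odd case, $s\leq n-3$ gives exponent $\leq n-2$. Then $N\geq \frac32p^{n-2}(p-1)$, and we need $\frac32(p^2-1)(p-1)p^{n-2}>p^{n+1}$, i.e. $\frac32(1-p^{-2})(1-p^{-1})>1$. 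This holds for $p\geq 5$: it gives $\frac32\cdot\frac{24}{25}\cdot\frac45\approx 1.15$.
\end{itemize}

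So no code of dimension $N-n-1$ with $d=3$ exists, and the constructed codes are dimension optimal.

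The main obstacle is this inequality check, specifically the odd case where the margin is small. I expect it to go through since $p\geq 5$ is forced. The divisibility $p\mid\mathfrak{n}$ also needs care, confirming that the exponent bounds from the hypotheses on $s$ keep both powers at least $p^1$.
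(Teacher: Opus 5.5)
Your proposal is correct and follows the paper's route: you feed the projective codes of Lemma \ref{lemma-31} into Theorem \ref{theorem-2} and Theorem \ref{coroll-32}, then apply Steane's enlargement as in Corollary \ref{coroll-33}. Your explicit checks that $p\mid\mathfrak{n}$ and that $1+(p^2-1)\mathbf{s}\mathfrak{n}>p^{n+1}$ (using $p\geq 5$, which $\mathbf{s}>2$ forces) supply the details the paper leaves as ``easy to verify''; just note that your nonexistence conclusion applies to pure codes, which is what ``with respect to the quantum Hamming bound'' means.
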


\section{Self-Orthogonal Minimal Codes  via the characteristic function method}
As observed in \cite{Jin}, constructing self-orthogonal minimal codes that violate the Ashikhmin-Barg condition is an interesting problem. In this section, we present new families of such codes. To this end, we use the characteristic function method to construct linear codes, establish criteria for their self-orthogonality, and then construct several classes of self-orthogonal minimal codes that violate the Ashikhmin-Barg condition.

\subsection{Self-orthogonality criteria for linear codes via the characteristic function method}
Let $q=p^e$ for a prime $p$ and a positive integer $e\mid n$. Take a proper subset $D$ of  $V_{n}^{(p)}$. Here we view $V_n^{(p)}$ as a vector space over $\mathbb{F}_{q}$ of dimension $\frac{n}{e}$. Define a characteristic function of $D$ as 
 \begin{eqnarray*}
 f_{D}(x)=\left\{
\begin{array}{lll}
1  &   \mbox{if $x \in D$},\\
0   &   \mbox{if $x \in V_{n}^{(p)}\setminus D$}.
\end{array} \right.
 \end{eqnarray*}
 Define a $q$-ary linear code by
 \begin{eqnarray}\label{eq-3}
 \mathcal{C}_{f_{D}}=\left\{\left(a f_{D}(x)+\langle b,x\rangle_{n/e}\right)_{x\in V_{n}^{(p)}}: (a,b)\in \gf_{q}\times V_{n}^{(p)}\right\}. 
 \end{eqnarray}

\begin{lemma}\label{lemm-41}
Let \( p \) be a prime, \( q = p^e \), and let \( e, n \) be positive integers such that \( e \mid n \) and \( (q, n) \neq (2,1), (2,2), (3,1) \).  
Set \( D = V_n^{(p)} \). Then the code \( \mathcal{C}_D \) defined in Equation~\eqref{eq-1} is self-orthogonal.
\end{lemma}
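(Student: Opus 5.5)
We need to show that for $D=V_n^{(p)}\cong\gf_q^r$ with $r=n/e$, every pair of codewords of $\mathcal{C}_D$ is orthogonal. The inner product of the codewords indexed by $b_1,b_2$ is
$$S(b_1,b_2)=\sum_{x\in\gf_q^r}\langle b_1,x\rangle_{r}\langle b_2,x\rangle_{r},$$
and the plan is to compute this sum directly, splitting into cases according to $q$.

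\emph{Case $q>3$.} Here the set $D=V_n^{(p)}$ is $\gf_q^*$-invariant, and $|\gf_q^*|=q-1>2$. Theorem \ref{The-31} with $G=\gf_q^*$ then gives self-orthogonality immediately. This case holds for every $n$, and it is exactly the case excluded by the condition $|G|>2$ in that theorem.

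\emph{Case $q\in\{2,3\}$.} Now $q=p$ and $e=1$, so $r=n$, and we compute $S$ directly by expanding the bilinear sum:
$$S=\sum_{i,j} b_{1,i}b_{2,j}\sum_{x\in\gf_p^n}x_ix_j .$$
\begin{itemize}
\item For $i\neq j$, the inner sum equals $p^{n-2}\bigl(\sum_{t\in\gf_p}t\bigr)^2$. Since $\sum_{t}t=0$ for $p=3$ and $\sum_t t=1$ for $p=2$, this term vanishes whenever $n\geq 3$ (because $p^{n-2}\equiv 0$ in $\gf_p$), and for $p=3$ it vanishes already for $n\ge 2$.
\item For $i=j$, the inner sum equals $p^{n-1}\sum_{t\in\gf_p}t^2$, which vanishes in $\gf_p$ as soon as $n\geq 2$.
\end{itemize}
Hence $S=0$ for $p=2$ with $n\ge3$ and for $p=3$ with $n\ge2$. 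These are precisely the cases not excluded by $(q,n)\neq(2,1),(2,2),(3,1)$.

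As a cross-check, for $q=3$ one could instead use Theorem \ref{The-32}. There $\chi_b(D)=0$ for $b\neq0$, so the criterion reduces to $9\mid 3^n$, which holds exactly when $n\ge2$. Similarly, Theorem \ref{The-33} for $q=2$ requires $8\mid 2^n$, which holds exactly when $n\ge3$. One caveat is that Theorem \ref{The-33} quantifies over $b_1,b_2\neq0$, including $b_1=b_2$, where the term $\chi_0(D)=2^n$ appears. For this reason the direct computation above is cleaner.

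The main point of care is the small-parameter bookkeeping: making sure the exceptional pairs listed in the statement are exactly the failing cases, and confirming that Theorem \ref{The-31} needs no restriction on $n$.
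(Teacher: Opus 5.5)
Your proof is correct. For $q>3$ it is the paper's argument (Theorem~\ref{The-31} with $G=\gf_q^*$); the routes differ only for $q\in\{2,3\}$.

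\textbf{The paper's route for $q\in\{2,3\}$.} The paper does not compute the inner products directly. It inserts $\sum_{x\in V_n^{(p)}}\chi_b(x)\in\{0,p^n\}$ into its own character-sum criteria, Theorems~\ref{The-32} and~\ref{The-33}, so everything reduces to $9\mid 3^n$ and $8\mid 2^n$. This is short and showcases those criteria, but it passes through Lemma~\ref{orthogonal} and the character-counting identities behind them.

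Your caveat about $b_1=b_2$ in Theorem~\ref{The-33} is harmless. In that case the expression equals $2^{n+1}$. The paper's remark that $8$ divides $\sum_x\chi_b(x)$ for \emph{every} $b$ (the value is $0$ or $2^n$) already covers it.

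\textbf{What your route buys.} Your coordinate expansion is elementary and handles all pairs $(b_1,b_2)$ at once. It also works uniformly for every $q$. Over $\gf_q^r$, the off-diagonal sums $\sum_x x_ix_j$ ($i\neq j$) equal $q^{r-2}\left(\sum_{t\in\gf_q}t\right)^2$. The diagonal sums equal $q^{r-1}\sum_{t\in\gf_q}t^2$. Since $\sum_{t\in\gf_q}t=0$ for $q>2$ and $\sum_{t\in\gf_q}t^2=0$ for $q>3$, both vanish, so you do not need Theorem~\ref{The-31} at all.

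Read in reverse, the same computation shows that the three excluded pairs genuinely fail. The diagonal term is $1$ for $(2,1)$ and $2$ for $(3,1)$, and the off-diagonal term is $1$ for $(2,2)$. The statement does not require this sharpness, but it is a bonus.

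\textbf{Wording slip.} You wrote that $q>3$ is ``the case excluded by the condition $|G|>2$''. In fact $q>3$ is the case \emph{allowed} by $|G|>2$ with $G=\gf_q^*$; it is $q\le 3$ that is excluded.
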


\begin{IEEEproof}
Since $D=V_{n}^{(p)}$, then $D$ is $\gf_{q}^*$-invariant. According to Theorem \ref{The-31}, we deduce that $\mathcal{C}_D$ is self-orthogonal code for $q>3$.
In the following, we prove that $\mathcal{C}_D$ is a self-orthogonal code for $q=2,3$.
By the orthogonal relation of additive characters, we have
\begin{eqnarray*}
\sum_{x\in V_{n}^{(p)}}\chi_b(x)=\left\{
\begin{array}{ll}
p^n  &   \mbox{if $b=0$},\\
0    &   \mbox{if $b\in V_{n}^{(p)}\setminus \{0\}$}.
\end{array} \right.
\end{eqnarray*}
When $q=3$ and $n \neq 1$, we have $9 \mid |D|$ and $9 \mid \sum_{x\in V_{n}^{(p)}}\chi_b(x)$.
When $q=2$ and $n>2 $, we have $8 \mid |D|$ and $8 \mid \sum_{x\in V_{n}^{(p)}}\chi_b(x)$.
The proof is completed by Theorems~\ref{The-32} and~\ref{The-33}.
\end{IEEEproof}

In the following, we establish a criterion for $\mathcal{C}_{f_{D}}$ to be self-orthogonal. 

\begin{theorem} \label{The-41}
Let $q=p^e$, where $p$ is a prime and $e$ is a positive integer such that $e\mid n$ and  $(q,n)\neq(2,1),(2,2),(3,1)$. Let $\mathcal{C}_D$ be defined in Equation (\ref{eq-1}) and $ \mathcal{C}_{f_{D}}$ be defined in Equation (\ref{eq-3}). Then $ \mathcal{C}_{f_{D}}$ is self-orthogonal if and only if $\mathcal{C}_D$ is an even-like code and $p \mid |D|$.
\end{theorem}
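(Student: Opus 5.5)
The plan is to write $\mathcal{C}_{f_D}$ as a sum of two pieces and test orthogonality on a spanning set. Every codeword of $\mathcal{C}_{f_D}$ has the form
$$\mathbf{c}(a,b)=a\,\mathbf{f}_D+\mathbf{u}_b,\qquad \mathbf{f}_D:=(f_D(x))_{x\in V_n^{(p)}},\quad \mathbf{u}_b:=(\langle b,x\rangle_{n/e})_{x\in V_n^{(p)}} .$$
The vectors $\mathbf{u}_b$, $b\in V_n^{(p)}$, are exactly the codewords of $\mathcal{C}_{V_n^{(p)}}$ from Equation~(\ref{eq-1}) with defining set $D=V_n^{(p)}$. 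So $\mathcal{C}_{f_D}=\mathrm{span}_{\mathbb{F}_q}\{\mathbf{f}_D\}+\mathcal{C}_{V_n^{(p)}}$.

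A linear code is self-orthogonal if and only if its spanning vectors are pairwise orthogonal, each vector included with itself; this is the condition that the Gram matrix vanishes, and it holds in every characteristic. I will therefore not use Lemma~\ref{orthogonal}, which would fail for $p=2$. By bilinearity, $\mathcal{C}_{f_D}$ is self-orthogonal if and only if three conditions hold:
\begin{enumerate}
\item[(i)] $\mathbf{u}_{b_1}\cdot\mathbf{u}_{b_2}=0$ for all $b_1,b_2\in V_n^{(p)}$;
\item[(ii)] $\mathbf{f}_D\cdot\mathbf{u}_b=0$ for all $b\in V_n^{(p)}$;
\item[(iii)] $\mathbf{f}_D\cdot\mathbf{f}_D=0$.
\end{enumerate}

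Next I evaluate each condition.
\begin{enumerate}
\item[(i)] This holds automatically: it says $\mathcal{C}_{V_n^{(p)}}$ is self-orthogonal, which is Lemma~\ref{lemm-41}. That lemma needs $(q,n)\neq(2,1),(2,2),(3,1)$, which is exactly why the theorem carries this hypothesis.
\item[(ii)] Since $f_D$ is the indicator of $D$, we get $\mathbf{f}_D\cdot\mathbf{u}_b=\sum_{x\in D}\langle b,x\rangle_{n/e}$. This is the sum of the entries of the codeword of $\mathcal{C}_D$ indexed by $b$. So (ii) holds for all $b$ precisely when $\mathcal{C}_D$ is even-like.
\item[(iii)] Here $\mathbf{f}_D\cdot\mathbf{f}_D=\sum_{x\in D}1=|D|\cdot 1\in\mathbb{F}_q$, which vanishes if and only if $p\mid |D|$.
\end{enumerate}
Combining the three gives both directions of the equivalence at once.

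There is no real obstacle. The only point needing care is the reduction to spanning vectors. It is valid because $\mathbf{c}(a_1,b_1)\cdot\mathbf{c}(a_2,b_2)$ expands as
$$a_1a_2\,\mathbf{f}_D\cdot\mathbf{f}_D+a_1\,\mathbf{f}_D\cdot\mathbf{u}_{b_2}+a_2\,\mathbf{f}_D\cdot\mathbf{u}_{b_1}+\mathbf{u}_{b_1}\cdot\mathbf{u}_{b_2}.$$
For necessity, specialise this expansion: take $(a_1,b_1)=(a_2,b_2)=(1,0)$ to isolate (iii), then take $(1,0)$ and $(0,b)$ to isolate (ii). No dimension count is needed, so it does not matter whether $\mathbf{f}_D\in\mathcal{C}_{V_n^{(p)}}$.
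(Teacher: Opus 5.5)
Your proposal is correct and follows essentially the same route as the paper: the paper also expands $\mathbf{c}_{a_1,b_1}\cdot\mathbf{c}_{a_2,b_2}$ into the same four terms and kills $\sum_{x\in V_n^{(p)}}\langle b_1,x\rangle_{n/e}\langle b_2,x\rangle_{n/e}$ with Lemma~\ref{lemm-41}. For necessity it specialises to $a_1,a_2\neq 0,\ b_1=b_2=0$ and then to $a_1\neq 0,\ a_2=0,\ b_1=0$; your framing as the vanishing of a Gram matrix on a spanning set is just a cleaner packaging of that same computation.
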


\begin{IEEEproof}
Assume that $\mathcal{C}_D$ is an even-like code and $p \mid |D|$. 
Let $\mathbf{c}_{a_1,b_1}$ and $\mathbf{c}_{a_2,b_2}$ be any two codewords in $\mathcal{C}_{f_{D}}$ given by
$$\mathbf{c}_{a_1,b_1}=\left(a_1 f_{D}(x)+\langle b_1,x\rangle_{n/e}\right)_{x\in V_{n}^{(p)}}$$
and 
$$\mathbf{c}_{a_2,b_2}=\left(a_2 f_{D}(x)+\langle b_2,x\rangle_{n/e}\right)_{x\in V_{n}^{(p)}}$$
with $a_1,a_2\in \gf_q$ and $b_1,b_2\in  V_{n}^{(p)}$. 
Then we have
\begin{eqnarray*}
&&\mathbf{c}_{a_1,b_1}\cdot\mathbf{c}_{a_2,b_2}\\
&=&\sum_{x \in V_{n}^{(p)}}\left(a_1 f_{D}(x)+\langle b_1,x\rangle_{n/e}\right)\left(a_2 f_{D}(x)+\langle b_2,x\rangle_{n/e}\right)\\
&=&\sum_{x \in V_{n}^{(p)}}a_1a_2f_{D}(x)^2+\sum_{x \in V_{n}^{(p)}}a_1f_{D}(x)\langle b_2,x\rangle_{n/e}+\sum_{x \in V_{n}^{(p)}}a_2f_{D}(x)\langle b_1,x\rangle_{n/e}+\sum_{x \in V_{n}^{(p)}}\langle b_1,x\rangle_{n/e}\langle b_2,x\rangle_{n/e}\\
&=&\sum_{x \in D}a_1a_2+\sum_{x \in D}a_1\langle b_2,x\rangle_{n/e}+\sum_{x \in D}a_2\langle b_1,x\rangle_{n/e}+\sum_{x \in V_{n}^{(p)}}\langle b_1,x\rangle_{n/e}\langle b_2,x\rangle_{n/e}.
\end{eqnarray*}
Note that $\left(\langle b_1,x\rangle_{n/e}\right)_{x\in V_{n}^{(p)}}$ and $\left(\langle b_2,x\rangle_{n/e}\right)_{x\in V_{n}^{(p)}}$ are two codewords in $\mathcal{C}_D$ for $D=V_{n}^{(p)}$. By Lemma \ref{lemm-41}, we have $\sum_{x \in V_{n}^{(p)}}\langle b_1,x\rangle_{n/e}\langle b_2,x\rangle_{n/e}=0$ for $(q,n)\neq(2,1),(2,2),(3,1)$.
Since  $\mathcal{C}_D$ is  even-like code and $p \mid |D|$, we have $\sum_{x \in D}a_1a_2=0$, $\sum_{x \in D}a_1\langle b_2,x\rangle_{n/e}=0$ and $\sum_{x \in D}a_2\langle b_1,x\rangle_{n/e}=0$. Thus $\mathbf{c}_{a_1,b_1}\cdot\mathbf{c}_{a_2,b_2}=0$, which implies that $ \mathcal{C}_{f_{D}}$ is self-orthogonal code.

 Conversely, we assume that $\mathcal{C}_{f_D}$ is self-orthogonal. 
 When $a_1\neq 0, a_2 \neq 0$ and $b_1= b_2 =0$, we have $$\mathbf{c}_{a_1,b_1}\cdot\mathbf{c}_{a_2,b_2}=\sum_{x \in D}a_1a_2=a_1a_2|D|=0,$$ which implies $p \mid |D|$. 
When $a_1 \neq 0, a_2=0$ and $b_1=0, b_2 \neq0$, we have $$\mathbf{c}_{a_1,b_1}\cdot\mathbf{c}_{a_2,b_2}=\sum_{x \in D}a_1\langle b_2,x\rangle_{n/e}=0,$$ which yields $\sum_{x \in D}\langle b_2,x\rangle_{n/e}=0$ for $b_2 \in V_{n}^{(p)}$. Hence $\mathcal{C}_D$ is an even-like code. This completes the proof. 
\end{IEEEproof}

\begin{corollary}\label{cor-41}
Let $q=p^e>3$. If $D$ is $\gf_{q}^*$-invariant and $p \mid |D|$, then $ \mathcal{C}_{f_{D}}$ is a self-orthogonal code.
\end{corollary}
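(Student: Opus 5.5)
The plan is to reduce the statement to Theorem \ref{The-41}. Since $q>3$, the exceptional pairs $(q,n)=(2,1),(2,2),(3,1)$ are automatically excluded, so Theorem \ref{The-41} applies. It says that $\mathcal{C}_{f_D}$ is self-orthogonal if and only if $\mathcal{C}_D$ is even-like and $p \mid |D|$. The hypothesis $p\mid |D|$ is given directly, so the only thing left to verify is that $\mathcal{C}_D$ is even-like.

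For this, I follow the argument in the proof of Corollary \ref{corol-31}, taking $G=\gf_q^*$, which has $|G|=q-1>2$ because $q>3$. Since $D$ is $\gf_q^*$-invariant, it is a disjoint union of orbits $x_i\gf_q^*$. The zero vector, if it lies in $D$, forms an orbit of size one, but it contributes $0$ to every coordinate sum, so it can be ignored. For any $b\in V_n^{(p)}$ we then get
\[
\sum_{x\in D}\langle b,x\rangle_{n/e}=\sum_{i}\langle b,x_i\rangle_{n/e}\sum_{g\in\gf_q^*}g .
\]
This vanishes because $\sum_{g\in\gf_q^*}g=0$ whenever $q>2$: the sum of all elements of $\gf_q$ is zero, or equivalently one can use the geometric-series argument with a generator $\beta\neq1$. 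Hence every codeword of $\mathcal{C}_D$ is even-like.

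With both conditions in hand, Theorem \ref{The-41} immediately gives that $\mathcal{C}_{f_D}$ is self-orthogonal. There is no real obstacle here. The only point needing care is the orbit decomposition when $0\in D$, since the orbit of $0$ is a singleton rather than a full copy of $\gf_q^*$, and this is handled by the remark above that $0$ contributes nothing to the sum.

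Alternatively, one could cite Corollary \ref{corol-31} directly for the even-like property. It is stated for $D$ that is $G$-invariant with $|G|>2$, and $\gf_q^*$ qualifies.
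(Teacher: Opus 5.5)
Your proposal is correct and follows the paper's own proof: the paper also obtains the even-like property of $\mathcal{C}_D$ from Corollary~\ref{corol-31} (with $G=\gf_q^*$, $|G|=q-1>2$) and then concludes via Theorem~\ref{The-41}. You additionally check that the excluded pairs $(q,n)$ cannot occur when $q>3$, and that the singleton orbit $\{0\}$ contributes nothing to the coordinate sums; the paper leaves both points implicit.
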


\begin{IEEEproof}
By Corollary \ref{corol-31}, since $D$ is $\gf_{q}^*$-invariant, we deduce that $\mathcal{C}_{D}$ is even-like. 
Then the desired conclusion follows from Theorem \ref{The-41}. 
\end{IEEEproof}

\subsection{New families of self-orthogonal minimal codes that violate the Ashikhmin-Barg condition}
To construct self-orthogonal minimal codes, we require several lemmas. We begin by computing a class of hybrid character sums in the following lemma.

\begin{lemma}\label{lem-41}
Let $p$ be an odd prime. Let $S=\sum_{x \in V_{n}^{(p)}}\eta_1\left(f(x)\right)\chi_b\left( x\right),$
where $b \in V_{n}^{(p)}\backslash \{0\}$ and $f(x): V_{n}^{(p)}\rightarrow \gf_{p}$ be an $s$-plateaued function belonging to $\mathscr{F}$ with Walsh support $S_f$. 
Let $\epsilon_b$ be defined in Subsection \ref{subsectionF}.
Then we have 
the following results. 
\begin{enumerate}
\item If $n+s$ is even, then
 \begin{eqnarray*}
 S=\left\{
\begin{array}{lll}
0  &   \mbox{if $b \notin S_{f}$ or $b \in S_{f}, f^*(b)=0$},\\
\epsilon_{b} p^{\frac{{n+s}}{2}}   &   \mbox{if $b \in S_{f}, f^*(b) \in SQ$},\\
-\epsilon_{b} p^{\frac{{n+s}}{2}}    &  \mbox{if $b \in S_{f}, f^*(b)\in NSQ$}.\\
\end{array} \right.
 \end{eqnarray*}
 \item If $n+s$ is odd, then
 \begin{eqnarray*}
 S=\left\{
\begin{array}{lll}
0  &   \mbox{if $b \notin S_{f}$},\\
\epsilon_{b}(p-1) p^{\frac{n+s-1}{2}}   &   \mbox{if $b \in S_{f}, f^*(b)=0$},\\
- \epsilon_{b}p^{\frac{n+s-1}{2}}   &  \mbox{if $b \in S_{f}, f^*(b)\neq0$}.\\
\end{array} \right.
 \end{eqnarray*}
 \end{enumerate}
\end{lemma}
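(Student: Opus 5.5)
We expand the quadratic character $\eta_1$ of $\gf_p$ through its Fourier expansion (Lemma \ref{Fourier1}) and reduce $S$ to Walsh transforms of the multiples $yf$. Since $\eta_1(0)=0$, Lemma \ref{Fourier1} gives $\eta_1(u)=\frac{1}{p}\sum_{y\in\gf_p}G(\eta_1,\overline{\varphi_y})\varphi_y(u)$. By Lemma \ref{quadGuasssum1}, this becomes $\eta_1(u)=\frac{G}{p}\sum_{y\in\gf_p^*}\eta_1(-y)\zeta_p^{yu}$, where $G:=G(\eta_1,\varphi_1)$; the $y=0$ term vanishes because the Gauss sum with trivial additive character is $0$. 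Substituting $u=f(x)$ yields
$$S=\frac{G}{p}\sum_{y\in\gf_p^*}\eta_1(-y)\sum_{x}\zeta_p^{yf(x)+\langle b,x\rangle_n}=\frac{G}{p}\sum_{y\in\gf_p^*}\eta_1(-y)\,W_{yf}(-b).$$

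The key step is to express $W_{yf}(-b)$ through $W_f$ using the homogeneity built into $\mathscr{F}$. I would apply the Galois automorphism $\sigma_y$ of Lemma \ref{lem-cyclo}, which gives $\sigma_y(W_f(c))=\sum_x\zeta_p^{yf(x)-\langle yc,x\rangle_n}=W_{yf}(yc)$. Taking $c=-y^{-1}b$ gives $W_{yf}(-b)=\sigma_y(W_f(-y^{-1}b))$. By Lemma \ref{YWei}(1), membership of $-y^{-1}b$ in $S_f$, and its sign class $B_\pm(f)$, is the same as for $b$. Hence if $b\notin S_f$ then $S=0$. If $b\in S_f$, then Lemma \ref{YWei}(2) gives $f^*(-y^{-1}b)=(-y^{-1})^{h}f^*(b)$, with $h$ or $h'$ according to the sign class. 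Applying $\sigma_y$ via (\ref{eq-sigma}) then gives
$$W_{yf}(-b)=\epsilon_b\,\mu\,p^{\frac{n+s}{2}}\,\eta_0(y)^{n+s}\,\zeta_p^{y(-y^{-1})^hf^*(b)},$$
because $\mu p^{(n+s)/2}$ is, up to sign, $\sqrt{p^*}^{\,n+s}$.

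The remaining step is to evaluate $\sum_y\eta_1(-y)\eta_0(y)^{n+s}\zeta_p^{c_y f^*(b)}$ with $c_y=y(-y^{-1})^h$. I expect this to be the main obstacle: it needs the exponent $h$ to behave so that $y\mapsto c_y$ is a bijection of $\gf_p^*$ with a controlled quadratic character. The condition $\gcd(h-1,p-1)=1$ forces $h$ even, so $c_y=y^{1-h}$ up to the sign factor $(-1)^h=1$. Thus $y\mapsto y^{1-h}$ permutes $\gf_p^*$, and since $1-h$ is odd, $\eta_0(y^{1-h})=\eta_0(y)$. The sum then splits by parity.

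For $n+s$ even, the sum is $\eta_1(-1)\sum_y\eta_0(y)\zeta_p^{y^{1-h}f^*(b)}$. This equals $0$ if $f^*(b)=0$, and otherwise $\eta_1(-1)\eta_0(f^*(b))G$. For $n+s$ odd, the character factors cancel, leaving $\eta_1(-1)\sum_y\zeta_p^{yf^*(b)}$, which is $(p-1)$ or $-1$ according to whether $f^*(b)=0$. Finally I would combine the constants using $G^2=\eta_1(-1)p$ and $\mu\,\sqrt{p^*}$-bookkeeping from Lemma \ref{quadGuasssum3}. This should give $S=\pm\epsilon_b p^{(n+s)/2}$ in the even case, and $\epsilon_b(p-1)p^{(n+s-1)/2}$ or $-\epsilon_b p^{(n+s-1)/2}$ in the odd case, as claimed. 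I expect some care will be needed to match the sign conventions for $\mu$ and $G$ exactly.
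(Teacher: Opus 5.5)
Your proposal is correct and follows essentially the same route as the paper. Both arguments Fourier-expand $\eta_1$, rewrite the inner sums as $\sigma_y\bigl(W_f(-y^{-1}b)\bigr)$, use Lemma \ref{YWei} for the $\gf_p^*$-invariance of $S_f$ and $B_{\pm}(f)$ and for the homogeneity of $f^*$, and finish with the permutation $y\mapsto y^{1-h}$ and $G(\eta_1,\varphi_1)^2=\eta_1(-1)p$; your deduction that $\gcd(h-1,p-1)=1$ forces $h$ even, so that $y(-y^{-1})^h=y^{1-h}$, spells out a step the paper uses silently, and the constants do match since $\mu=1$ for $n+s$ even and $\mu\sqrt{p}=\sqrt{p^*}=G(\eta_1,\varphi_1)$ for $n+s$ odd.
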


\begin{IEEEproof}
From the Fourier expansion of the quadratic character $\eta_1$ in Lemma \ref{Fourier1} and the properties of Gaussian sums in Lemma \ref{quadGuasssum1}, it follows that
 \begin{eqnarray}\label{eqn1}
 \nonumber 
S&=&\frac{1}{p}\sum_{x\in V_{n}^{(p)}}\left(\sum_{\varphi \in \widehat{\mathbb{F}}_{p}}G(\eta_1,\overline{\varphi})\varphi\left(f(x)\right)\right)\chi_b(x)\\
 \nonumber
 &=&\frac{1}{p}\sum_{x\in V_{n}^{(p)}}\left(\sum_{y \in \gf_{p}}G(\eta_1,\overline{\varphi}_y)\varphi_y\left(f(x)\right)\right)\chi_b(x)\\
  \nonumber
 &=&\frac{1}{p}G(\eta_1,\overline{\varphi}_1)\sum_{x\in V_{n}^{(p)}}\chi_b(x)+\frac{1}{p}\sum_{y\in \gf_{p}^*}G(\eta_1,\overline{\varphi}_y)\sum_{x\in V_{n}^{(p)}}\varphi_y(f(x))\chi_b(x)\\
 \nonumber
&=&0+\frac{1}{p}\sum_{y\in \gf_{p}^*}G(\eta_1,\overline{\varphi}_y)\sum_{x\in V_{n}^{(p)}}\varphi_1(yf(x))\chi_b(x)\\
 \nonumber
&=&\frac{1}{p}G(\eta_1,\varphi_1)\sum_{y\in \gf_{p}^*}\eta_1(-y)\sum_{x \in V_{n}^{(p)}}\zeta_p^{yf(x)+\langle b, x\rangle_{n}}\\
 \nonumber
&=&\frac{1}{p}G(\eta_1,\varphi_1)\sum_{y\in \gf_{p}^*}\eta_1(-y)\sum_{x \in V_{n}^{(p)}}\zeta_p^{y\left(f(x)-\langle -\frac{b}{y}, x\rangle_{n}\right)}\\
&=&\frac{1}{p}G(\eta_1,\varphi_1)\sum_{y\in \gf_{p}^*}\eta_1(-y)\sigma_{y}\left(W_{f}\left(\frac{-b}{y}\right)\right),
\end{eqnarray}
where $\sigma_y$ is the automorphism of $\mathbb{Q(}\zeta_p)$ defined by $ \sigma_y(\zeta_p)=\zeta_p^{y}$. We continue the proof by considering the following cases.

Case 1: If $b \notin S_{f}$, then  $W_{f}\left(\frac{-b}{y}\right)=0$. Thus we have 
$$S=0.$$

Case 2: If $b \in S_{f}$ and $n+s$ is even, by Lemma \ref{YWei}, we have 
\begin{eqnarray*}
S&=&\frac{1}{p}G(\eta_1,\varphi_1)\sum_{y\in \gf_{p}^*}\eta_1(-y)\sigma_{y}\left(\epsilon_{b}p^{\frac{n+s}{2}}\zeta_{p}^{f^*(\frac{-b}{y})}\right)\\
&=&\frac{1}{p}G(\eta_1,\varphi_1)\sum_{y\in \gf_{p}^*}\eta_1(-y)\epsilon_{b}p^{\frac{n+s}{2}}\zeta_{p}^{y^{1-h_b}f^*(b)}\\
&=&\frac{\epsilon_{b}p^{\frac{n+s}{2}}}{p}G(\eta_1,\varphi_1)\sum_{y\in \gf_{p}^*}\eta_1(-1)\eta_{1}(\widetilde{y})\zeta_{p}^{\widetilde{y}f^*(b)},
\end{eqnarray*}
where \(h_b = h\) if \(b \in B_{+}(f)\) and \(h_b = h'\) if \(b \in B_{-}(f)\), and since \(\gcd(h_b-1, p-1)=1\), we have \(\widetilde{y} := y^{1-h_b}\) runs through \(\gf_{p}^{*}\) when \(y\) runs through \(\gf_{p}^{*}\), with \(\eta_{1}(\widetilde{y}) = \eta_{1}(y)\).
 Thus we have 
\begin{eqnarray*}
S&=&\frac{\epsilon_{b}p^{\frac{n+s}{2}}}{p}G(\eta_1,\varphi_1)\eta_{1}(-1)G(\eta_1,\varphi_1)\eta_{1}(f^*(b))\\
&=&\left\{
\begin{array}{lll}
0  &   \mbox{if  $ f^*(b)=0$},\\
\epsilon_{b} p^{\frac{{n+s}}{2}}   &   \mbox{if $ f^*(b) \in SQ$},\\
-\epsilon_{b} p^{\frac{{n+s}}{2}}    &  \mbox{if $ f^*(b)\in NSQ$},\\
\end{array} \right.
\end{eqnarray*}
where $G(\eta_0,\varphi_1)G(\eta_0,\varphi_1)=\eta_{0}(-1)p.$

Case 3: If $b \in S_{f}$ and $n+s$ is odd, by Lemma \ref{YWei}, we have 
\begin{eqnarray*}
S&=&\frac{1}{p}G(\eta_1,\varphi_1)\sum_{y\in \gf_{p}^*}\eta_1(-y)\sigma_{y}(\epsilon_{b}p^{\frac{n+s-1}{2}}\sqrt{p^*}\zeta_{p}^{f^*(\frac{-b}{y})})\\
&=&\frac{1}{p}G(\eta_1,\varphi_1)\sum_{y\in \gf_{p}^*}\eta_1(-y)\epsilon_{b}p^{\frac{n+s-1}{2}}\sqrt{p^*}\eta_{1}(y)\zeta_{p}^{yf^*(\frac{-b}{y})}\\
&=&\frac{1}{p}\epsilon_{b}p^{\frac{n+s-1}{2}}\sqrt{p^*}G(\eta_1,\varphi_1)\sum_{y\in \gf_{p}^*}\eta_1(-1)\zeta_{p}^{y^{1-h_b}f^*(b)}\\
&=&\left\{
\begin{array}{lll}
\epsilon_{b} (p-1)p^{\frac{{n+s-1}}{2}}   &   \mbox{if $ f^*(b) = 0$},\\
-\epsilon_{b} p^{\frac{{n+s-1}}{2}}    &  \mbox{if $ f^*(b)\neq 0$},\\
\end{array} \right.
\end{eqnarray*}
where \(h_b = h\) if \(b \in B_{+}(f)\) and \(h_b = h'\) if \(b \in B_{-}(f)\),  $y^{1-h_b}$ runs through $\gf_{p}^*$ when $y$ runs through $\gf_{p}^*$ due to $\gcd(h_b-1,p-1)=1$ and $G(\eta_{1},\varphi_1)=\sqrt{p^*}$.
\end{IEEEproof}

\begin{lemma}\label{lem-42}
Let $T:=\sum\limits_{\substack{x \in V_n^{(p)}\\f(x) =0}}\chi_b(x)$, $b \in V_{n}^{(p)}\setminus\{0\}$, $p$ be an odd prime and $f(x): V_{n}^{(p)}\rightarrow \gf_{p}$ be an $s$-plateaued function belonging to $\mathscr{F}$. Let $\epsilon_b$ be defined in Subsection \ref{subsectionF}.  Then we have the following results.
 \begin{enumerate}
\item If $n+s$ is even, then
 \begin{eqnarray*}
 T=\left\{
\begin{array}{lll}
0  &   \mbox{if $b \notin S_{f}$ },\\
\epsilon_{b}\frac{p-1}{p} p^{\frac{{n+s}}{2}}   &   \mbox{if $b \in S_{f}, f^*(b)=0 $},\\
-\epsilon_{b} \frac{1}{p}p^{\frac{{n+s}}{2}}    &  \mbox{if $b \in S_{f}, f^*(b)\neq0$}.
\end{array} \right.
 \end{eqnarray*}
 \item If $n+s$ is odd, then
 \begin{eqnarray*}
 T=\left\{
\begin{array}{lll}
0  &   \mbox{if $b \notin S_{f}$ or $b \in S_f, f^*(b)=0$},\\
\epsilon_{b}\frac{1}{p} p^{\frac{n+s-1}{2}}p^*   &   \mbox{if $b \in S_{f}, f^*(b)\in SQ$},\\
- \epsilon_{b}\frac{1}{p}p^{\frac{n+s-1}{2}}p^*    &  \mbox{if $b \in S_{f}, f^*(b)\in NSQ$},\\
\end{array} \right.
 \end{eqnarray*}
where $p^*=\eta_{0}(-1)p.$
\end{enumerate}
\end{lemma}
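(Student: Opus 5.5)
We write the indicator of $\{f(x)=0\}$ via additive characters of $\gf_p$ and reduce $T$ to Galois conjugates of Walsh values, following the proof of Lemma \ref{lem-41}. Concretely,
\begin{eqnarray*}
T=\frac{1}{p}\sum_{y\in\gf_p}\sum_{x\in V_n^{(p)}}\zeta_p^{yf(x)}\chi_b(x)
=\frac{1}{p}\sum_{x}\chi_b(x)+\frac{1}{p}\sum_{y\in\gf_p^*}\sigma_y\left(W_f\left(\frac{-b}{y}\right)\right),
\end{eqnarray*}
and the first term vanishes because $b\neq0$. So everything reduces to evaluating $\sum_{y\in\gf_p^*}\sigma_y(W_f(-b/y))$.

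\emph{Case $b\notin S_f$.} By Lemma \ref{YWei}(1), $-b/y\notin S_f$ for every $y\in\gf_p^*$, so each Walsh value is zero and $T=0$.

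\emph{Case $b\in S_f$.} By Lemma \ref{YWei}(1), $-b/y$ lies in the same set $B_{\pm}(f)$ as $b$, so it has the same sign $\epsilon_b$. By Lemma \ref{YWei}(2), $f^*(-b/y)=(-1/y)^{h_b}f^*(b)$. Here we must handle the factor $(-1)^{h_b}$, exactly as in Lemma \ref{lem-41}. Since $\gcd(h_b-1,p-1)=1$ forces $h_b$ to be even, this factor is $1$. Hence $\sigma_y$ applied to $\zeta_p^{f^*(-b/y)}$ gives $\zeta_p^{y^{1-h_b}f^*(b)}$, and $\tilde y=y^{1-h_b}$ runs over $\gf_p^*$, with $\eta_1(\tilde y)=\eta_1(y)$ because $1-h_b$ is odd.

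\emph{Subcase $n+s$ even.} Here $W_f=\epsilon_b p^{(n+s)/2}\zeta_p^{f^*}$ up to the unit $\mu$; for $p^{n+s}\equiv1\pmod 4$ we have $\mu=1$, which is rational and fixed by $\sigma_y$. The sum becomes
\[
\epsilon_b p^{\frac{n+s}{2}}\sum_{\tilde y\in\gf_p^*}\zeta_p^{\tilde y f^*(b)},
\]
which equals $(p-1)$ or $-1$ times $\epsilon_b p^{(n+s)/2}$ according as $f^*(b)=0$ or not. Dividing by $p$ gives the stated values.

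\emph{Subcase $n+s$ odd.} Here $W_f=\epsilon_b p^{(n+s-1)/2}\sqrt{p^*}\,\zeta_p^{f^*}$, and by \eqref{eq-sigma} we have $\sigma_y(\sqrt{p^*})=\eta_1(y)\sqrt{p^*}$. The sum becomes
\[
\epsilon_b p^{\frac{n+s-1}{2}}\sqrt{p^*}\sum_{\tilde y}\eta_1(\tilde y)\zeta_p^{\tilde y f^*(b)}=\epsilon_b p^{\frac{n+s-1}{2}}\sqrt{p^*}\,\eta_1(f^*(b))\,G(\eta_1,\varphi_1).
\]
This is zero when $f^*(b)=0$. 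Otherwise it equals $\pm\epsilon_b p^{(n+s-1)/2}p^*$, using $G(\eta_1,\varphi_1)=\sqrt{p^*}$ from Lemma \ref{quadGuasssum3} with $e=1$. Dividing by $p$ yields the claim.

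The main obstacle is bookkeeping of the unit in the Walsh value. We must check that the sign $\epsilon_b$ is constant along the line $\gf_p^*b$, which is Lemma \ref{YWei}(1). We must also check how $\sigma_y$ acts on $\mu\sqrt{p}$ in the odd case, where $\mu\sqrt p=\sqrt{p^*}$ (or its negative, absorbed into $\epsilon_b$). I would match these conventions exactly with those used in Lemma \ref{lem-41}.
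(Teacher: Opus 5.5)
Your proposal is correct and follows the paper's proof essentially step for step: you expand the indicator of $f(x)=0$ with additive characters, reduce $T$ to $\frac{1}{p}\sum_{y\in\gf_p^*}\sigma_y\bigl(W_f(-b/y)\bigr)$, and evaluate it via Lemma~\ref{YWei} in the cases $b\notin S_f$, $n+s$ even, and $n+s$ odd. Your explicit remarks that $h_b$ is even (so $(-1)^{h_b}=1$) and that $\mu\sqrt{p}=\sqrt{p^*}$ spell out steps the paper uses without comment. No sign ever has to be absorbed into $\epsilon_b$: $\sigma_y(\mu\sqrt{p})=\eta_1(y)\mu\sqrt{p}$ and $\mu\sqrt{p}\,G(\eta_1,\varphi_1)=p^*$ hold whichever branch of the square root you choose.
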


\begin{IEEEproof}
By the orthogonal relation of additive characters, we have
\begin{eqnarray*}
T&=&\frac{1}{p}\sum_{x \in V_n^{(p)}}\chi_b(x)\sum_{y \in \gf_{p}}\varphi_{1}(yf(x))\\
&=&0+\frac{1}{p}\sum_{y \in \gf_{p}^{*}}\sum_{x \in V_n^{(p)}}\zeta_{p}^{yf(x)-\langle -b, x \rangle_n}\\
&=&\frac{1}{p}\sum_{y \in \gf_{p}^{*}}\sum_{x \in V_n^{(p)}}\sigma_{y}\left(\zeta_{p}^{f(x)-\langle -\frac{b}{y}, x \rangle_n}\right)\\
&=&\frac{1}{p}\sum_{y \in \gf_{p}^*}\sigma_{y}\left(W_{f}(-\frac{b}{y})\right).
\end{eqnarray*}
We now consider the following cases.

Case 1: If $b \notin S_{f}$, then $W_{f}(-\frac{b}{y})=0$ and $T=0$. 

Case 2: If $b \in S_{f}$ and $n+s$ is even, by Lemma \ref{YWei}, we have
\begin{eqnarray*}
T&=&\frac{1}{p}\sum_{y \in \gf_{p}^*}\sigma_{y}\left(\epsilon_b p^{\frac{n+s}{2}}\zeta_{p}^{f^*(-\frac{b}{y})}\right)\\
&=&\frac{1}{p}\sum_{y \in \gf_{p}^*}\epsilon_b p^{\frac{n+s}{2}}\zeta_{p}^{y^{1-h_b}f^*(b)}\\
&=&\left\{
\begin{array}{lll}
\epsilon_{b} \frac{p-1}{p}p^{\frac{{n+s}}{2}}   &   \mbox{if $ f^*(b) = 0$},\\
-\epsilon_{b}\frac{1}{p} p^{\frac{{n+s}}{2}}    &  \mbox{if $ f^*(b)\neq 0$},
\end{array} \right.
\end{eqnarray*}
where \(h_b = h\) if \(b \in B_{+}(f)\) and \(h_b = h'\) if \(b \in B_{-}(f)\).

Case 3: If $b \in S_{f}$ and $n+s$ is odd, by Lemma \ref{YWei},  we have
\begin{eqnarray*}
T&=&\frac{1}{p}\sum_{y \in \gf_{p}^*}\sigma_{y}\left(\epsilon_b p^{\frac{n+s-1}{2}}\sqrt{p^*}\zeta_{p}^{f^*(-\frac{b}{y})}\right)\\
&=&\frac{1}{p}\sum_{y \in \gf_{p}^*}\epsilon_b p^{\frac{n+s-1}{2}}\sqrt{p^*}\eta_{1}(y)\zeta_{p}^{y^{1-h_b}f^*(b)}\\
&=&\left\{
\begin{array}{lll}
0   &   \mbox{if $ f^*(b) = 0$}\\
\frac{1}{p}\epsilon_b p^{\frac{n+s-1}{2}}\sqrt{p^*}G(\eta_{1},\varphi_1)\eta_{1}(f^*(b))    &  \mbox{if $ f^*(b)\neq 0$}\\
\end{array} \right.\\
&=&\left\{
\begin{array}{lll}
0   &   \mbox{if $ f^*(b) = 0$},\\
\frac{1}{p}\epsilon_b p^{\frac{n+s-1}{2}}p^*  &  \mbox{if $ f^*(b)\in SQ$},\\
-\frac{1}{p}\epsilon_b p^{\frac{n+s-1}{2}}p^*  &  \mbox{if $ f^*(b)\in NSQ$},\\
\end{array} \right.\\
\end{eqnarray*}
where  \(h_b = h\) if \(b \in B_{+}(f)\) and \(h_b = h'\) if \(b \in B_{-}(f)\), and $G(\eta_{1},\varphi_1)=\sqrt{p^*}$ by Lemma \ref{quadGuasssum3}.
\end{IEEEproof}

\begin{lemma}\label{lem-43}
Let $S_1:=\sum\limits_{\substack{x \in V_n^{(p)}\\f(x) \in SQ\bigcup\{0\}}}\chi_b(x)$, $b \in V_{n}^{(p)}\setminus\{0\}$, $p$ be an odd prime and $f(x): V_{n}^{(p)}\rightarrow \gf_{p}$ be an $s$-plateaued function belonging to $\mathscr{F}$. Let $\epsilon_b$ be defined in Subsection \ref{subsectionF}. Then we have the following results. 
\begin{enumerate}
\item If $n+s$ is even, then we have
\begin{eqnarray*}
S_1=\left\{
\begin{array}{ll}
0  &   \mbox{if $b \notin S_f$},\\
\frac{p-1}{2p}\epsilon_b p^{\frac{n+s}{2}}  &   \mbox{if $b \in S_f, f^*(b)=0$ or $b \in S_f, f^*(b) \in SQ$},\\
-\frac{p+1}{2p}\epsilon_b p^{\frac{n+s}{2}}  &   \mbox{if $b \in S_f, f^*(b) \in NSQ$}.
\end{array} \right.
\end{eqnarray*}

\item If $n+s$ is odd, then we have
\begin{eqnarray*}
S_1=\left\{
\begin{array}{ll}
0  &   \mbox{if $b \notin S_f$},\\
\frac{p-1}{2}\epsilon_b p^{\frac{n+s-1}{2}}  &   \mbox{if $b \in S_f, f^*(b) =0$},\\
-\frac{1}{2}\epsilon_b p^{\frac{n+s-1}{2}}+\frac{1}{2p}\epsilon_{b}p^{\frac{n+s-1}{2}}p^*  &   \mbox{if $b \in S_f, f^*(b) \in SQ$},\\
-\frac{1}{2}\epsilon_b p^{\frac{n+s-1}{2}}-\frac{1}{2p}\epsilon_{b}p^{\frac{n+s-1}{2}}p^*  &   \mbox{if $b \in S_f, f^*(b) \in NSQ$}.
\end{array} \right.
\end{eqnarray*}
\end{enumerate}
\end{lemma}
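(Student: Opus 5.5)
The natural route is to write the indicator of $SQ\cup\{0\}$ in terms of the quadratic character and reduce $S_1$ to the two sums already evaluated in Lemmas \ref{lem-41} and \ref{lem-42}. For $z\in\gf_p$ we have
$$\frac{1+\eta_1(z)}{2}=\left\{\begin{array}{ll}1 & \mbox{if } z\in SQ,\\ 0 & \mbox{if } z\in NSQ,\\ \frac12 & \mbox{if } z=0,\end{array}\right.$$
using the convention $\eta_1(0)=0$. Hence the indicator of $SQ\cup\{0\}$ is $\frac{1+\eta_1(z)}{2}+\frac12\delta_0(z)$, and
$$S_1=\frac12\sum_{x\in V_n^{(p)}}\chi_b(x)+\frac12 S+\frac12 T,$$
where $S$ and $T$ are the sums of Lemmas \ref{lem-41} and \ref{lem-42}. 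Since $b\neq 0$, the first sum vanishes by orthogonality of additive characters. Therefore $S_1=\frac12(S+T)$, and everything reduces to adding the two earlier evaluations case by case.

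For $n+s$ even, the cases go as follows.
\begin{itemize}
\item If $b\notin S_f$, then $S=T=0$, so $S_1=0$.
\item If $b\in S_f$ and $f^*(b)=0$, then $S=0$ and $T=\epsilon_b\frac{p-1}{p}p^{\frac{n+s}{2}}$, so $S_1=\frac{p-1}{2p}\epsilon_bp^{\frac{n+s}{2}}$.
\item If $f^*(b)\in SQ$, then $S+T=\epsilon_bp^{\frac{n+s}{2}}\left(1-\frac1p\right)$, which gives the same value $\frac{p-1}{2p}\epsilon_bp^{\frac{n+s}{2}}$.
\item If $f^*(b)\in NSQ$, then $S+T=-\epsilon_bp^{\frac{n+s}{2}}\left(1+\frac1p\right)$, so $S_1=-\frac{p+1}{2p}\epsilon_bp^{\frac{n+s}{2}}$.
\end{itemize}
This matches the stated table.

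For $n+s$ odd, the cases go as follows.
\begin{itemize}
\item If $b\notin S_f$, then $S_1=0$.
\item If $f^*(b)=0$, then $T=0$ and $S=\epsilon_b(p-1)p^{\frac{n+s-1}{2}}$, so $S_1=\frac{p-1}{2}\epsilon_bp^{\frac{n+s-1}{2}}$.
\item If $f^*(b)\in SQ$ or $NSQ$, then $S=-\epsilon_bp^{\frac{n+s-1}{2}}$ and $T=\pm\frac1p\epsilon_bp^{\frac{n+s-1}{2}}p^*$. Halving the sum yields exactly the two remaining entries.
\end{itemize}

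The only delicate point is the convention $\eta_1(0)=0$, adopted in the preliminaries for nontrivial characters. Because of this convention the value $z=0$ must be handled separately, by adding the $\frac12\delta_0$ term, which is why $T$ enters at all. I would check that Lemmas \ref{lem-41} and \ref{lem-42} were computed under that same convention, so that $S$ has no contribution from the zeros of $f$. Apart from that, the proof is a routine combination of the two lemmas.
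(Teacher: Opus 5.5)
Your proposal is correct and matches the paper's proof. The paper also writes the indicator of $SQ\cup\{0\}$ as $\frac{1+\eta_1(f(x))}{2}$ plus an extra $\frac12$ on the zeros of $f$, obtains $S_1=\frac12 S+\frac12 T$, and reads off each case from Lemmas \ref{lem-41} and \ref{lem-42}; your case arithmetic is right, and the convention $\eta_1(0)=0$ you flag is the one Lemma \ref{lem-41} actually uses.
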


\begin{IEEEproof} Note that 
 \begin{eqnarray*}
 S_{1}&=&\sum\limits_{\substack{x \in V_{n}^{(p)}\\F(x) \in SQ\bigcup\{0\}}}\chi_b(x)\\
&=&\sum_{x \in V_{n}^{(p)}}\frac{1+\eta_1(f(x))}{2}\chi_{b}(x)-\frac{1}{2}\sum\limits_{\substack{x \in V_{n}^{(p)}\\f(x)=0}}\chi_b(x)+\sum\limits_{\substack{x \in V_{n}^{(p)}\\f(x)=0}}\chi_b(x)\\
&=&\frac{1}{2}\sum_{x \in V_{n}^{(p)}}\eta_1(f(x))\chi_{b}(x)+\frac{1}{2}\sum\limits_{\substack{x \in V_{n}^{(p)}\\f(x)=0}}\chi_b(x).
 \end{eqnarray*}
  Then the desired conclusion follows from Lemmas \ref{lem-41} and \ref{lem-42}.
 \end{IEEEproof}

\begin{lemma}\label{lem-44}
Let $S_2:=\sum\limits_{\substack{x \in V_n^{(p)}\\f(x) \in NSQ\bigcup\{0\}}}\chi_b(x)$, $b \in V_{n}^{(p)}\setminus\{0\}$, $p$ be an odd prime and $f(x): V_{n}^{(p)}\rightarrow \gf_{p}$ be an $s$-plateaued function belonging to $\mathscr{F}$. Let $\epsilon_b$ be defined in Subsection \ref{subsectionF}. 

%if $n+s$ is even
%\begin{eqnarray*}
%S_1=\left\{
%\begin{array}{ll}
%0  &   \mbox{if $b \notin S_f$},\\
%\frac{p-1}{2p}\epsilon_b p^{\frac{n+s}{2}}  &   \mbox{if $b \in S_f, f^*(b)=0$ or $b \in S_f, f^*(b) \in NSQ$},\\
%-\frac{p+1}{2p}\epsilon_b p^{\frac{n+s}{2}}  &   \mbox{if $b \in S_f, f^*(b) \in SQ$},\\
%\end{array} \right.
%\end{eqnarray*}

If $n+s$ is odd, then we have
\begin{eqnarray*}
S_2=\left\{
\begin{array}{ll}
0  &   \mbox{if $b \notin S_f$},\\
-\frac{p-1}{2}\epsilon_b p^{\frac{n+s-1}{2}}  &   \mbox{if $b \in S_f, f^*(b) =0$},\\
\frac{1}{2}\epsilon_b p^{\frac{n+s-1}{2}}+\frac{1}{2p}\epsilon_{b}p^{\frac{n+s-1}{2}}p^*  &   \mbox{if $b \in S_f, f^*(b) \in SQ$},\\
\frac{1}{2}\epsilon_b p^{\frac{n+s-1}{2}}-\frac{1}{2p}\epsilon_{b}p^{\frac{n+s-1}{2}}p^*  &   \mbox{if $b \in S_f, f^*(b) \in NSQ$},\\
\end{array} \right.
\end{eqnarray*}
\end{lemma}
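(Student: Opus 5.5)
Since $\eta_1(f(x))$ equals $-1$ exactly when $f(x)\in NSQ$, equals $1$ when $f(x)\in SQ$, and equals $0$ when $f(x)=0$, the indicator of $\{f(x)\in NSQ\cup\{0\}\}$ can be written as $\frac{1-\eta_1(f(x))}{2}+\frac{1}{2}\mathbf{1}_{f(x)=0}$. Exactly as in the proof of Lemma \ref{lem-43}, we therefore obtain the decomposition
\begin{eqnarray*}
S_2=\frac{1}{2}\sum_{x\in V_n^{(p)}}\chi_b(x)-\frac{1}{2}\sum_{x\in V_n^{(p)}}\eta_1(f(x))\chi_b(x)+\frac{1}{2}\sum\limits_{\substack{x \in V_{n}^{(p)}\\f(x)=0}}\chi_b(x)=-\frac{1}{2}S+\frac{1}{2}T,
\end{eqnarray*}
where the first sum vanishes because $b\neq 0$. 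Here $S$ and $T$ are the sums computed in Lemmas \ref{lem-41} and \ref{lem-42}.

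For $n+s$ odd, we then substitute the values case by case.
\begin{itemize}
\item If $b\notin S_f$, both $S$ and $T$ vanish, so $S_2=0$.
\item If $b\in S_f$ and $f^*(b)=0$, then $S=\epsilon_b(p-1)p^{\frac{n+s-1}{2}}$ and $T=0$, giving $S_2=-\frac{p-1}{2}\epsilon_bp^{\frac{n+s-1}{2}}$.
\item If $f^*(b)\in SQ$, then $S=-\epsilon_bp^{\frac{n+s-1}{2}}$ and $T=\frac{1}{p}\epsilon_bp^{\frac{n+s-1}{2}}p^*$, giving $\frac{1}{2}\epsilon_bp^{\frac{n+s-1}{2}}+\frac{1}{2p}\epsilon_bp^{\frac{n+s-1}{2}}p^*$.
\item If $f^*(b)\in NSQ$, then $T$ changes sign, which yields the last line of the statement.
\end{itemize}

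No genuine obstacle is expected, since all the analytic work is contained in Lemmas \ref{lem-41} and \ref{lem-42}. The only care needed is the sign bookkeeping in the identity $\mathbf{1}_{NSQ\cup\{0\}}=\frac{1-\eta_1}{2}+\frac{1}{2}\mathbf{1}_{0}$. We would verify it at the three kinds of points: the value is $0$ on $SQ$, $1$ on $NSQ$, and $\frac{1}{2}+\frac{1}{2}=1$ at $0$.
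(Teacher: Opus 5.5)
Your proof is correct and takes the same route as the paper. The paper proves this "as in Lemma \ref{lem-43}": it writes the indicator of $NSQ\cup\{0\}$ through $\eta_1$ and the zero set, so that $S_2=-\frac{1}{2}S+\frac{1}{2}T$, and then substitutes the $n+s$ odd values from Lemmas \ref{lem-41} and \ref{lem-42}; your indicator identity and each of your four case values check out.
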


\begin{IEEEproof}
With a proof similar to that of Lemma  \ref{lem-43}, 
the desired conclusion directly follows from  Lemmas \ref{lem-41} and \ref{lem-42}. 
\end{IEEEproof}

%\begin{lemma}\label{lem-31}
%Let $S_1:=\sum\limits_{\substack{x \in V_n^{(p)}\\F(x) \in SQ\bigcup\{0\}}}\chi_b(x)$, $b \in V_{n}^{(p)}\setminus\{0\}$, $p$ be an odd prime and $F(x)$ be a vectorial dual-bent function satisfying . Then we have
%\begin{eqnarray*}
%S_1=\left\{
%\begin{array}{ll}
%\frac{(p^m-1)}{2}\varepsilon p^{\frac{n}{2}-m}  &   \mbox{if $F^*(b)=0$ or $F^*(b) \in SQ$},\\
%-\frac{p^m+1}{2}\varepsilon p^{\frac{n}{2}-m}  &   \mbox{if $F^*(b) \in NSQ$},\\
%\end{array} \right.
%\end{eqnarray*}
% where $\varepsilon\in \{\pm 1\}$ was defined in Condition .
%\end{lemma}
%\begin{IEEEproof}
% \begin{eqnarray*}
% S_{1}&=&\sum\limits_{\substack{x \in V_{n}^{(p)}\\F(x) \in SQ\bigcup\{0\}}}\chi_b(x)\\
%&=&\sum_{x \in V_{n}^{(p)}}\frac{1+\eta_m(F(x))}{2}\chi_{b}(x)-\frac{1}{2}\sum\limits_{\substack{x \in V_{n}^{(p)}\\f(x)=0}}\chi_b(x)+\sum\limits_{\substack{x \in V_{n}^{(p)}\\f(x)=0}}\chi_b(x)\\
%&=&\frac{1}{2}\sum_{x \in V_{n}^{(p)}}\eta_m(F(x))\chi_{b}(x)+\frac{1}{2}\sum\limits_{\substack{x \in V_{n}^{(p)}\\f(x)=0}}\chi_b(x)\\
% \end{eqnarray*}
 %According to Lemma () and (), we can obtain the desired conclusion.
%\end{IEEEproof}

\begin{lemma}\label{333}\cite{Heng1}
Let $p$ be an odd prime. Let $S':=\sum_{x\in V_n^{(p)}}\eta_{m}(F(x))\chi_b(x)$  for $b \in V_n^{(p)}\setminus \{0\}$ and $F(x)$ be a vectorial dual-bent function satisfying \textbf{Condition A}. Then we have 
\begin{eqnarray*}
 S'=\left\{
\begin{array}{ll}
\frac{p^m-1}{p^m}\upsilon p^{\frac{n}{2}} (-1)^{m-1}\epsilon^m\sqrt{p^m}\eta_m(-1)   &   \mbox{if $F^*(b)=0$},\\
-\frac{1}{p^m}\upsilon p^{\frac{n}{2}} (-1)^{m-1}\epsilon^m\sqrt{p^m}\eta_m(-1)    &   \mbox{if $F^*(b)\neq0$},
\end{array} \right.
\end{eqnarray*}
where $\upsilon \in \{\pm\epsilon^{m}\}$ is a constant  for $\epsilon:=\sqrt{(-1)^{\frac{p-1}{2}}}$.
\end{lemma}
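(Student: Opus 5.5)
We expand the quadratic character additively and reduce the sum to Walsh coefficients of the component functions $F_c$, mirroring the proof of Lemma \ref{lem-41} but over $\gf_{p^m}$. By Lemma \ref{Fourier1} applied to $\gf_{p^m}$, we write
\[
\eta_m(F(x))=\frac{1}{p^m}\sum_{c\in\gf_{p^m}}G(\eta_m,\overline{\lambda_c})\lambda_c(F(x)).
\]
The $c=0$ term vanishes because $G(\eta_m,\lambda_0)=0$. For $c\neq 0$, Lemma \ref{quadGuasssum1} gives $G(\eta_m,\overline{\lambda_c})=\eta_m(-c)G(\eta_m,\lambda_1)$. Therefore
\[
S'=\frac{G(\eta_m,\lambda_1)}{p^m}\sum_{c\in\gf_{p^m}^*}\eta_m(-c)\sum_{x}\zeta_p^{F_c(x)+\langle b,x\rangle_n}
=\frac{G(\eta_m,\lambda_1)}{p^m}\sum_{c\ne0}\eta_m(-c)W_{F_c}(-b),
\]
where $F_c(x)=\tr_{p^m/p}(cF(x))$ is identified with $\langle c,F(x)\rangle_m$.

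Next we substitute the Walsh values from \textbf{Condition A}. Each component is weakly regular with sign $\upsilon\eta_m(c)$, so $W_{F_c}(-b)=\upsilon\eta_m(c)p^{n/2}\zeta_p^{(F_c)^*(-b)}$. The dual satisfies $(F_c)^*=(F^*)_{c^{1-d}}$, so $(F_c)^*(-b)=\tr_{p^m/p}(c^{1-d}F^*(-b))$. Since $\eta_m(-c)\eta_m(c)=\eta_m(-1)$, we get
\[
S'=\frac{G(\eta_m,\lambda_1)\upsilon p^{n/2}\eta_m(-1)}{p^m}\sum_{c\ne0}\zeta_p^{\tr_{p^m/p}(c^{1-d}F^*(-b))}.
\]
Because $\gcd(d-1,p^m-1)=1$, the map $c\mapsto c^{1-d}$ permutes $\gf_{p^m}^*$. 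The inner sum is therefore $p^m-1$ if $F^*(-b)=0$ and $-1$ otherwise.

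It remains to convert $F^*(-b)$ into $F^*(b)$. The homogeneity $F(ax)=a^lF(x)$ passes to the dual: $F^*$ is again homogeneous of some degree on $\gf_{p^e}^*$, and in particular its zero set is invariant under $x\mapsto -x$. To see this, compute $W_{F_c}(-b)$ directly with the substitution $x\mapsto -x$, using $F(-x)=(-1)^lF(x)$. This shows $W_{F_c}(-b)=W_{F_{(-1)^lc}}(b)$, hence the zero condition on $F^*(-b)$ is equivalent to that on $F^*(b)$. Finally, Lemma \ref{quadGuasssum3} gives $G(\eta_m,\lambda_1)=(-1)^{m-1}\epsilon^m\sqrt{p^m}$, which yields the stated formula.

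The main obstacle is the bookkeeping of the dual indexing: one has to check that $(F_c)^*$ evaluated at $-b$ really equals the component $c^{1-d}$ of $F^*$, with the sign convention of the Walsh transform $\zeta_p^{f(x)-\langle b,x\rangle}$. One also has to confirm the $\pm b$ invariance of $\{F^*=0\}$. The character expansion and the final Gauss sum evaluation are routine.
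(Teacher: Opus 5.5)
Your argument is correct and is the standard one: Fourier-expand $\eta_m$, reduce to $W_{F_c}(-b)$, insert the Walsh values from \textbf{Condition A}, and sum over $c$ using that $c\mapsto c^{1-d}$ permutes $\gf_{p^m}^*$. The paper states this lemma without proof (it is cited), but this is the vectorial analogue of its proof of Lemma \ref{lem-41}, with the Galois twisting there replaced by direct knowledge of every sign $\varepsilon_{F_c}$. The one step you leave terse closes by comparing the two sides of $W_{F_c}(-b)=W_{F_{(-1)^l c}}(b)$: both equal $\upsilon p^{n/2}$ times $\pm1$ times a $p$-th root of unity, and $-1$ is not a $p$-th root of unity for odd $p$. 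This forces $\tr_{p^m/p}(c^{1-d}F^*(-b))=\tr_{p^m/p}\bigl(((-1)^l c)^{1-d}F^*(b)\bigr)$ for all $c\neq 0$, hence $F^*(-b)=(-1)^{l(1-d)}F^*(b)$, which is exactly the $\pm b$ invariance of $\{F^*=0\}$ you need.
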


\begin{lemma}\label{lem15}\cite{Heng1}
Let $T':=\sum\limits_{\substack{x \in V_n^{(p)}\\F(x)=0}}\chi_b(x)$, $b \in V_{n}^{(p)}\setminus\{0\}$, $p$ be an odd prime and $F(x)$ be a vectorial dual-bent function satisfying \textbf{Condition A}. Then we have
\begin{eqnarray*}
T'&=&\left\{
\begin{array}{lll}
0  &   \mbox{if $F^*(b)=0$},\\
\frac{1}{p^m}\upsilon p^{\frac{n}{2}}(-1)^{m-1}\epsilon^m\sqrt{p^m}  &   \mbox{if $\eta_m(F^*(b))=1$},\\
-\frac{1}{p^m}\upsilon p^{\frac{n}{2}}(-1)^{m-1}\epsilon^m\sqrt{p^m}   &   \mbox{if $\eta_m(F^*(b))=-1$},\\
\end{array} \right.
\end{eqnarray*}
where $\upsilon \in \{\pm\epsilon^{m}\}$ is a constant  for $\epsilon:=\sqrt{(-1)^{\frac{p-1}{2}}}$.
\end{lemma}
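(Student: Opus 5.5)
The statement to prove is Lemma~\ref{lem15}, the value of $T'=\sum_{x\in V_n^{(p)},\,F(x)=0}\chi_b(x)$ for $b\neq 0$. I would follow the template already used for Lemma~\ref{lem-42}, replacing $\gf_p$ by $\gf_{p^m}$. The first step is to detect the condition $F(x)=0$ with additive characters of $\gf_{p^m}$:
\[
T'=\frac{1}{p^m}\sum_{c\in\gf_{p^m}}\sum_{x\in V_n^{(p)}}\zeta_p^{\tr_{p^m/p}(cF(x))+\langle b,x\rangle_n}.
\]
The $c=0$ term vanishes because $b\neq0$. For $c\neq0$ the inner sum is $W_{F_c}(-b)$, where $F_c(x)=\tr_{p^m/p}(cF(x))$. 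Since every $F_c$ is weakly regular with sign $\upsilon\eta_m(c)$, \textbf{Condition A} gives
\[
W_{F_c}(-b)=\upsilon\eta_m(c)p^{n/2}\zeta_p^{(F_c)^*(-b)}.
\]
Combining $(F_c)^*=(F^*)_{c^{1-d}}$ with the homogeneity of $F^*$ (inherited from $F(ax)=a^lF(x)$, which makes $F^*(-b)=F^*(b)$ up to the appropriate sign/power) yields
\[
(F_c)^*(-b)=\tr_{p^m/p}\bigl(c^{1-d}F^*(b)\bigr).
\]
The homogeneity relation for $F^*$ is taken from the cited framework \cite{WangJ2,Heng1}.

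The second step is the substitution $u=c^{1-d}$. Because $\gcd(d-1,p^m-1)=1$, this map is a permutation of $\gf_{p^m}^*$. Its quadratic character satisfies $\eta_m(c)=\eta_m(u)$ up to the power $1-d$; since $d-1$ is coprime to $p^m-1$, $1-d$ is odd, so $\eta_m(u)=\eta_m(c)$. Hence
\[
T'=\frac{\upsilon p^{n/2}}{p^m}\sum_{u\in\gf_{p^m}^*}\eta_m(u)\zeta_p^{\tr_{p^m/p}(uF^*(b))}.
\]
If $F^*(b)=0$, this is a sum of $\eta_m$ over $\gf_{p^m}^*$, which is $0$. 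Otherwise it equals $\eta_m(F^*(b))\,G(\eta_m,\lambda_1)$ by Lemma~\ref{quadGuasssum1}.

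The third step evaluates the Gauss sum with Lemma~\ref{quadGuasssum3}. Writing $G(\eta_m,\lambda_1)=(-1)^{m-1}\epsilon^m\sqrt{p^m}$ gives exactly the stated constants $\pm\frac{1}{p^m}\upsilon p^{n/2}(-1)^{m-1}\epsilon^m\sqrt{p^m}$.

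I expect the main obstacle to be the sign and argument bookkeeping in the first step. The delicate points are whether the dual is evaluated at $-b$ or $b$ (this depends on the sign convention in $W_f$ and on the $l$-form property of $F^*$, possibly producing a factor $\eta_m(-1)$), and checking that the $\sigma$/power substitution preserves $\eta_m$. I would first fix the Walsh-transform convention, derive $F^*(-b)=F^*(b)$ from the evenness of $l$ (here $l=d=2$ in the examples, and in general from the $l$-form together with $\gcd(l-1,p-1)=1$), and only then assemble the constants.
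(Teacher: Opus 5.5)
Your argument is correct and is the standard proof of this result. The paper only cites it from \cite{Heng1}, but your steps follow exactly the template the paper uses for Lemma~\ref{lem-42}: orthogonality over $\gf_{p^m}$, the weakly regular Walsh values with sign $\upsilon\eta_m(c)$ and dual $\tr_{p^m/p}(c^{1-d}F^*(\cdot))$, the substitution $u=c^{1-d}$ (which preserves $\eta_m$ because $1-d$ is odd), and $G(\eta_m,\lambda_1)=(-1)^{m-1}\epsilon^m\sqrt{p^m}$.

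The $b$ versus $-b$ point you flag needs neither homogeneity of $F^*$ nor the parity of $l$; note that $\gcd(l-1,p-1)=1$ is not among the stated hypotheses of Condition A anyway. Since $-1\in\gf_{p^e}^*$, the relation $F(-x)=(-1)^lF(x)$ gives $-D_{F,0}=D_{F,0}$. Hence $T'$ is invariant under $b\mapsto -b$, so you may write $T'=p^{-m}\sum_{c\neq 0}W_{F_c}(b)$ directly, and no extra factor $\eta_m(-1)$ arises.
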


\begin{lemma}\label{lem-45}
Let $S_3:=\sum\limits_{\substack{x \in V_n^{(p)}\\F(x) \in SQ\bigcup\{0\}}}\chi_b(x)$, $b \in V_{n}^{(p)}\setminus\{0\}$, $p$ be an odd prime and $F(x)$ be a vectorial dual-bent function satisfying \textbf{Condition A}. Then we have
\begin{eqnarray*}
S_3=\left\{
\begin{array}{ll}
\frac{p^m-1}{2p^m}\upsilon (-1)^{m-1}\epsilon^m\eta_{m}(-1)p^{\frac{n+m}{2}}  &   \mbox{if $F^*(b)=0$},\\
(1-\eta_{m}(-1))\frac{1}{2p^m}\upsilon (-1)^{m-1}\epsilon^m\eta_{m}(-1)p^{\frac{n+m}{2}}& \mbox{if$F^*(b) \in SQ$}\\
(-1-\eta_{m}(-1))\frac{1}{2p^m}\upsilon (-1)^{m-1}\epsilon^m\eta_{m}(-1)p^{\frac{n+m}{2}}  &   \mbox{if $F^*(b) \in NSQ$},\\
\end{array} \right.
\end{eqnarray*}
where $\upsilon \in \{\pm\epsilon^{m}\}$ is a constant  for $\epsilon:=\sqrt{(-1)^{\frac{p-1}{2}}}$.
\end{lemma}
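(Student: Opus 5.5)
The plan is to mimic the proof of Lemma \ref{lem-43}, replacing $\eta_1$ and $f$ by $\eta_m$ and $F$. For $b\in V_n^{(p)}\setminus\{0\}$, the identity $\mathbf{1}_{SQ\cup\{0\}}(z)=\frac{1+\eta_m(z)}{2}+\frac12\delta_0(z)$ on $\gf_{p^m}$ (with $\eta_m(0)=0$) gives
$$S_3=\frac12\sum_{x\in V_n^{(p)}}\chi_b(x)+\frac12\sum_{x\in V_n^{(p)}}\eta_m(F(x))\chi_b(x)+\frac12\sum_{\substack{x\in V_n^{(p)}\\ F(x)=0}}\chi_b(x).$$
The first sum vanishes because $b\neq0$, so $S_3=\frac12 S'+\frac12 T'$. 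Throughout I abbreviate $A:=\upsilon(-1)^{m-1}\epsilon^m p^{\frac{n+m}{2}}$, using $p^{\frac n2}\sqrt{p^m}=p^{\frac{n+m}{2}}$.

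Next I substitute Lemma \ref{333} for $S'$ and Lemma \ref{lem15} for $T'$, splitting into the three cases $F^*(b)=0$, $F^*(b)\in SQ$ and $F^*(b)\in NSQ$.

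\begin{itemize}
\item When $F^*(b)=0$: here $T'=0$ and $S'=\frac{p^m-1}{p^m}A\eta_m(-1)$. This yields the first line of the statement immediately.
\item When $F^*(b)\neq0$: here $S'=-\frac{1}{p^m}A\eta_m(-1)$. The stated second and third lines become $(\eta_m(-1)-1)\frac{A}{2p^m}$ and $(-\eta_m(-1)-1)\frac{A}{2p^m}$ after using $\eta_m(-1)^2=1$. Subtracting $\frac12S'$, they are equivalent to the single requirement $T'=-\frac{1}{p^m}A$ in both subcases.
\end{itemize}

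The main obstacle is therefore the precise value of $T'$, meaning its sign and its dependence on the square class of $F^*(b)$. Lemma \ref{lem15}, as quoted, yields $+\frac{A}{p^m}$ for $F^*(b)\in SQ$ and $-\frac{A}{p^m}$ for $F^*(b)\in NSQ$. Inserting those values reproduces the stated NSQ line, but in the SQ line it gives $(1-\eta_m(-1))\frac{A}{2p^m}$. That differs from the stated $(\eta_m(-1)-1)\frac{A}{2p^m}$ unless $\eta_m(-1)=1$.

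So, rather than citing Lemma \ref{lem15} verbatim, I would recompute $T'$ from scratch and then check it against the required value $-\frac{A}{p^m}$. The recomputation goes as follows.

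\begin{enumerate}
\item Write
$$T'=\frac{1}{p^m}\sum_{c\in\gf_{p^m}}\sum_{x}\zeta_p^{\tr_{p^m/p}(cF(x))+\langle b,x\rangle_n}=\frac{1}{p^m}\sum_{c\in\gf_{p^m}^*}W_{F_c}(-b),$$
where the $c=0$ term has dropped out because $b\neq0$.
\item Insert $W_{F_c}(-b)=\upsilon\eta_m(c)p^{\frac n2}\zeta_p^{(F^*)_{c^{1-d}}(-b)}$, using the first and third items of \textbf{Condition A}.
\item Use the homogeneity $F^*(-b)=(-1)^{l'}F^*(b)$ of the dual.
\item Reindex by $c'=c^{1-d}$. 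This is a bijection of $\gf_{p^m}^*$ since $\gcd(d-1,p^m-1)=1$, and it preserves $\eta_m$ because $1-d$ is odd.
\item The sum is then a quadratic Gaussian sum. By Lemma \ref{quadGuasssum1}, it equals $\eta_m(\pm F^*(b))\,G(\eta_m,\lambda_1)$, and Lemma \ref{quadGuasssum3} gives $G(\eta_m,\lambda_1)=(-1)^{m-1}\epsilon^m\sqrt{p^m}$.
\end{enumerate}

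The point to track carefully is whether a factor $\eta_m(-1)$ or $\eta_m(-F^*(b))$ appears, coming from $\chi_b$ versus the $-\langle b,x\rangle_n$ convention in $W_f$ and from the sign of $-b$. A factor of this kind is what can turn the SQ value into the stated one; note that Lemma \ref{lemm-21} already exhibits an $\eta_m(-i)$ of exactly this type.

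Once $T'$ is pinned down, the SQ and NSQ lines follow by adding $\frac12S'$ and $\frac12T'$ and simplifying with $\eta_m(-1)^2=1$. If the honest recomputation does not give $T'=-\frac{A}{p^m}$ in the SQ case, then the stated SQ line holds only when $\eta_m(-1)=1$, and the proof would record exactly that.
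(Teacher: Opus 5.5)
Your suspicion about the SQ line is correct, but as written the proposal neither proves that line nor settles whether it holds. Your route, $S_3=\frac12S'+\frac12T'$ together with the case $F^*(b)=0$, is exactly the paper's argument: its proof only cites Lemmas~\ref{333} and~\ref{lem15} as in Lemma~\ref{lem-43}.

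\textbf{The reduction in your second bullet is wrong for the SQ line.} Subtracting $\frac12S'=-\eta_m(-1)\frac{A}{2p^m}$ from the printed SQ value $(\eta_m(-1)-1)\frac{A}{2p^m}$ leaves $\frac12T'=(2\eta_m(-1)-1)\frac{A}{2p^m}$. So the SQ line requires $T'=(2\eta_m(-1)-1)\frac{A}{p^m}$, not $T'=-\frac{A}{p^m}$. Only the NSQ line reduces to $T'=-\frac{A}{p^m}$.

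This also rules out your hope that a factor $\eta_m(-1)$ or $\eta_m(-F^*(b))$ could rescue the SQ line. For $F^*(b)\neq0$, your steps 1--5 write $T'$ as $\frac{\upsilon p^{n/2}}{p^m}$ times a quadratic Gaussian sum over $\gf_{p^m}$, so $|T'|=\frac{|A|}{p^m}$ whatever the signs. But when $\eta_m(-1)=-1$ the printed line would need $T'=-\frac{3A}{p^m}$.

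\textbf{Finishing your recomputation gives Lemma~\ref{lem15} exactly.} Since $l$ is even (the paper uses this in the proof of Theorem~\ref{theorem-41}), $F(-x)=F(x)$. Substituting $x\mapsto -x$ then gives $W_{F_c}(-b)=W_{F_c}(b)$, so no extra character appears. You obtain $T'=\eta_m(F^*(b))\frac{A}{p^m}$.

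Consequently, for $F^*(b)\in SQ$ the true value is $S_3=(1-\eta_m(-1))\frac{A}{2p^m}$. This agrees with the statement when $\eta_m(-1)=1$, where both vanish. When $\eta_m(-1)=-1$, i.e.\ $p\equiv 3\pmod 4$ and $m$ odd, it equals $+\frac{A}{p^m}$ instead of the printed $-\frac{A}{p^m}$.

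\textbf{A hand check confirms this.} Take $p=3$, $e=m=1$, $n=3$, $F(x)=x_1^2+x_2^2+x_3^2$, so $\upsilon=-\sqrt{-1}$ and $A=9$. Take $b=(1,1,0)$, so $F^*(b)=-(b_1^2+b_2^2+b_3^2)=1\in SQ$. Here $D$ is the set of vectors of Hamming weight $0$, $1$ or $3$. Summing $\zeta_3^{x_1+x_2}$ over these three weight classes gives $1$, $0$ and $2$ respectively, so $S_3=3$. The printed formula gives $-3$.

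So the statement's SQ line has a sign typo. It should read $(\eta_m(-1)-1)\frac{1}{2p^m}\upsilon(-1)^{m-1}\epsilon^m\eta_m(-1)p^{\frac{n+m}{2}}$, and this corrected value is what the paper's own appeal to Lemmas~\ref{333} and~\ref{lem15} actually yields. Your write-up should prove that corrected formula, or the printed one under the extra hypothesis $\eta_m(-1)=1$, rather than leaving the SQ case conditional.
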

\begin{IEEEproof}
With a proof similar to that of Lemma \ref{lem-43}, the desired conclusion follows from Lemmas \ref{333} and \ref{lem15}.
\end{IEEEproof}

\begin{lemma}\label{lem-46}
Let $S_4:=\sum\limits_{\substack{x \in V_n^{(p)}\\F(x) \in NSQ\bigcup\{0\}}}\chi_b(x)$, $b \in V_{n}^{(p)}\setminus\{0\}$, $p$ be an odd prime and $F(x)$ be a vectorial dual-bent function satisfying \textbf{Condition A}. Then we have
\begin{eqnarray*}
S_4=\left\{
\begin{array}{ll}
-\frac{p^m-1}{2p^m}\upsilon (-1)^{m-1}\epsilon^m\eta_{m}(-1)p^{\frac{n+m}{2}}  &   \mbox{if $F^*(b)=0$},\\
(1+\eta_{m}(-1))\frac{1}{2p^m}\upsilon (-1)^{m-1}\epsilon^m\eta_{m}(-1)p^{\frac{n+m}{2}}& \mbox{if$F^*(b) \in SQ$}\\
(\eta_{m}(-1)-1)\frac{1}{2p^m}\upsilon (-1)^{m-1}\epsilon^m\eta_{m}(-1)p^{\frac{n+m}{2}}  &   \mbox{if $F^*(b) \in NSQ$},\\
\end{array} \right.
\end{eqnarray*}
where $\upsilon \in \{\pm\epsilon^{m}\}$ is a constant  for $\epsilon:=\sqrt{(-1)^{\frac{p-1}{2}}}$.
\end{lemma}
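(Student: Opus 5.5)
We mirror the proof of Lemma~\ref{lem-43}, this time splitting off the nonsquare part. The indicator of $\{x: F(x)\in NSQ\cup\{0\}\}$ can be written through the quadratic character $\eta_m$ of $\gf_{p^m}$, using the convention $\eta_m(0)=0$. For $F(x)\neq 0$ the indicator equals $\frac{1-\eta_m(F(x))}{2}$. At points with $F(x)=0$ that expression gives $\frac12$, so we must add back another $\frac12$ there. This yields the decomposition
\begin{eqnarray*}
S_4&=&\sum_{x\in V_n^{(p)}}\frac{1-\eta_m(F(x))}{2}\chi_b(x)+\frac12\sum_{\substack{x\in V_n^{(p)}\\ F(x)=0}}\chi_b(x)\\
&=&-\frac12 S'+\frac12 T',
\end{eqnarray*}
where $S'$ and $T'$ are the sums of Lemmas~\ref{333} and~\ref{lem15}. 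The term $\frac12\sum_{x}\chi_b(x)$ vanishes because $b\neq 0$, by the orthogonality relation.

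Next we substitute the values from Lemmas~\ref{333} and~\ref{lem15}, writing $K:=\upsilon(-1)^{m-1}\epsilon^m p^{\frac n2}\sqrt{p^m}=\upsilon(-1)^{m-1}\epsilon^m p^{\frac{n+m}{2}}$. We then split into three cases according to $F^*(b)$.
\begin{itemize}
\item If $F^*(b)=0$, then $T'=0$ and $S'=\frac{p^m-1}{p^m}K\eta_m(-1)$. Hence $S_4=-\frac{p^m-1}{2p^m}K\eta_m(-1)$, which matches the stated value.
\item If $F^*(b)\in SQ$, then $S'=-\frac{1}{p^m}K\eta_m(-1)$ and $T'=\frac{1}{p^m}K$. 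Hence $S_4=\frac{1}{2p^m}K(\eta_m(-1)+1)$. Using $\eta_m(-1)^2=1$, this equals $(1+\eta_m(-1))\frac{1}{2p^m}K\eta_m(-1)$, as claimed.
\item If $F^*(b)\in NSQ$, then $S'$ is the same as in the previous case and $T'=-\frac{1}{p^m}K$. Hence $S_4=\frac{1}{2p^m}K(\eta_m(-1)-1)$. Factoring out $\eta_m(-1)$ gives $(1-\eta_m(-1))\frac{1}{2p^m}K\eta_m(-1)$.
\end{itemize}

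The only real obstacle is the sign bookkeeping in the last case. The expression we obtain, $(1-\eta_m(-1))\frac{1}{2p^m}K\eta_m(-1)$, differs in sign from the stated $(\eta_m(-1)-1)\frac{1}{2p^m}K\eta_m(-1)$. So we would recheck the sign convention for $T'$ in Lemma~\ref{lem15}, namely whether its value is tied to $\eta_m(F^*(b))$ or to $\eta_m(-F^*(b))$. We would cross-check against Lemma~\ref{lem-45} via the identity $S_3+S_4=\sum_{x\in V_n^{(p)}}\chi_b(x)+T'=T'$. In the $SQ$ case this identity gives $\frac{1}{p^m}K$, which is consistent with Lemma~\ref{lem15}. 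In the $NSQ$ case the stated formulas sum to $-\frac{1}{p^m}K\eta_m(-1)$, whereas Lemma~\ref{lem15} gives $-\frac{1}{p^m}K$. These agree only when $\eta_m(-1)=1$, so one of the stated formulas is off by a sign when $\eta_m(-1)=-1$. The check $S_3+S_4=T'$ then serves as the consistency test, and the $NSQ$ line should be adjusted to match whichever convention for $T'$ is confirmed.
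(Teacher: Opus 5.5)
Your decomposition $S_4=-\frac12S'+\frac12T'$ is the route the paper's one-line proof intends ("similar to Lemma~\ref{lem-43}, from Lemmas~\ref{333} and~\ref{lem15}"), and all three of your case computations are correct. What the proposal lacks is the conclusion: the discrepancy is an error in the statement, not in your bookkeeping. The NSQ line is false whenever $\eta_m(-1)=-1$, i.e.\ $p\equiv 3\pmod 4$ and $m$ odd. So the lemma as written cannot be proved.

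A concrete check: take $p=3$, $e=m=1$, $n=3$ and $F(x)=x_1^2+x_2^2+x_3^2$. This is the second listed family with $s=3$ and all $\alpha_i=1$. Here $\epsilon=\sqrt{-1}$ and $\upsilon=\epsilon^3=-\sqrt{-1}$, so $K=9$, and $\eta_1(-1)=-1$. Completing the square gives $F^*(b)=-(b_1^2+b_2^2+b_3^2)$, so $b=(1,0,0)$ has $F^*(b)=2\in NSQ$. Since $F(x)\equiv\wt(x)\pmod 3$, the summation set is $\{x:\wt(x)\in\{0,2,3\}\}$. Weights $0$, $2$, $3$ contribute $1$, $-2-2+4$ and $4(\zeta_3+\zeta_3^2)$ respectively, so $S_4=\sum\zeta_3^{x_1}=1+0-4=-3$. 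This equals your value $\frac{K}{2p^m}(\eta_m(-1)-1)$. The stated line gives $(-2)\cdot\frac{9}{6}\cdot(-1)=+3$.

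So there is no convention in Lemma~\ref{lem15} to repair. That lemma is correct as stated: it follows from $T'=\frac{1}{p^m}\sum_{c\neq 0}W_{F_c}(-b)$ with $\varepsilon_{F_c}=\upsilon\eta_m(c)$, after the substitution $c\mapsto c^{1-d}$, which preserves $\eta_m$ because $d-1$ is odd. The alternative $\eta_m(-F^*(b))$ would not rescue the statement either: it gives $S_4=0$ in the NSQ case.

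Your consistency check is also misreported in the SQ case. With the stated SQ lines of Lemmas~\ref{lem-45} and~\ref{lem-46} one gets $S_3+S_4=\frac{K}{p^m}\eta_m(-1)$, not $\frac{K}{p^m}$. So the check fails there too when $\eta_m(-1)=-1$. The culprit in that case is the SQ line of Lemma~\ref{lem-45}, which carries the same spurious trailing factor $\eta_m(-1)$. The SQ line of the present statement survives only because $(1+\eta_m(-1))\eta_m(-1)=1+\eta_m(-1)$.

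What your argument actually proves is $S_4=\frac{K}{2p^m}\bigl(\eta_m(-1)+\eta_m(F^*(b))\bigr)$ for $F^*(b)\neq 0$. In other words, the NSQ line should read $(\eta_m(-1)-1)\frac{1}{2p^m}\upsilon(-1)^{m-1}\epsilon^m p^{\frac{n+m}{2}}$, without the final factor $\eta_m(-1)$. The corresponding weight in Table~\ref{tab-42} inherits the error when $p^m\equiv 3\pmod 4$.
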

\begin{IEEEproof}
With a proof similar to that of Lemma \ref{lem-43}, the desired conclusion follows from Lemmas \ref{333} and \ref{lem15}.
\end{IEEEproof}

\begin{lemma}\label{lem-dim}
Let $p$ be a prime, $q=p^e$, $e \mid n$, $r= \frac{n}{e}$ and $0 \in D  \subseteq V_{n}^{(p)}$. Then the dimension of the code $\mathcal{C}_{f_{D}}$ defined in Equation (\ref{eq-3}) is $r+1$.
\end{lemma}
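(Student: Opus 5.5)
The dimension claim comes down to showing that the linear map
\[
\Phi:\gf_q\times V_n^{(p)}\to \gf_q^{p^n},\qquad (a,b)\mapsto \bigl(a f_D(x)+\langle b,x\rangle_{n/e}\bigr)_{x\in V_n^{(p)}},
\]
is injective. Since $\gf_q\times V_n^{(p)}$ has $\gf_q$-dimension $1+r$, injectivity gives $\dim \mathcal{C}_{f_D}=r+1$ at once. The map $\Phi$ is $\gf_q$-linear because $f_D$ takes values in $\{0,1\}\subseteq\gf_q$ and $\langle\cdot,\cdot\rangle_{n/e}$ is $\gf_q$-bilinear.

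Suppose $\Phi(a,b)=\mathbf{0}$. The first step is to evaluate the coordinate at $x=0$. Because $0\in D$ we have $f_D(0)=1$, while $\langle b,0\rangle_{n/e}=0$. That coordinate is therefore $a$, which forces $a=0$.

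With $a=0$, the remaining coordinates say $\langle b,x\rangle_{n/e}=0$ for all $x\in V_n^{(p)}$. Under the identification $V_n^{(p)}\cong\gf_q^{r}$, $\langle\cdot,\cdot\rangle_{n/e}$ is the standard dot product, which is nondegenerate. Taking $x$ to be the standard basis vectors $\gamma_i$ gives $b_i=0$ for every $i$, so $b=0$. Equivalently, this is Lemma \ref{lem-32} applied with $D=V_n^{(p)}$, since that set spans $V_n^{(p)}$.

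So $\ker\Phi=\{0\}$ and the dimension is $r+1$. There is no real obstacle; the only point to get right is that the hypothesis $0\in D$ is exactly what isolates $a$. Without it, $f_D$ could coincide with a linear form on the support (for example, if $D$ were contained in a hyperplane complement with a constant linear value), and the dimension could drop.
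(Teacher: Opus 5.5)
Your proof is correct and follows the paper's own argument: the paper likewise shows that the zero codeword arises only from $(a,b)=(0,0)$. It reads off the coordinate at $x=0$ to force $a=0$, and then uses $\langle b,x\rangle_{n/e}=0$ for all $x$ to get $b=0$. Your closing aside is imprecise, though. A nonzero kernel element with $b\neq 0$ would require the surjective functional $\langle b,\cdot\rangle_{n/e}$ to take only the values $0$ and $-a$, which is impossible for $q>2$. So without $0\in D$ the dimension drops only if $D=\emptyset$, or if $q=2$ and $D$ equals (not merely is contained in) an affine hyperplane $\{x:\langle b,x\rangle_{n/e}=1\}$.
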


\begin{IEEEproof}
It suffices to prove that, as $(a,b)$ runs over $\gf_q \times V_{n}^{(p)}$, the zero codeword appears exactly once in $\mathcal{C}_{f_{D}}$.
Let $a f_{D}(x)+\langle b,x\rangle_{n/e}=0$ for all $x \in V_{n}^{(p)}$. Since $0 \in D$, we have $f_{D}(0)=1$. Then we have $a f_{D}(0)+\langle b,0\rangle_{n/e}=0$, which implies $a=0$. 
We then have $\langle b,x\rangle_{n/e}=0$ for all $x \in V_{n}^{(p)}$, which implies $b=0$. Then we deduce that the zero codeword appears only when $a=b=0$.
The desired conclusion follows. 
\end{IEEEproof}

In what follows, we study the Hamming weights of codewords in $\mathcal{C}_{f_D}$, provided that $0 \in D$ and $D$ is $\gf_{q}^*$-invariant.

\begin{lemma} \label{the-41}
let $p$ be a prime, $q=p^e$, $e \mid n$, $r= \frac{n}{e}$ and $0 \in D \subseteq V_{n}^{(p)}$ such that $D$ is  $\gf_{q}^*$-invariant.
Let $\mathcal{C}_{f_D}$ be the linear code defined by Equation (\ref{eq-3}).
Then $\mathcal{C}_{f_D}$  is a $[p^n, r+1]$ code over $\gf_q$. For a codeword $\textbf{c}(a,b):=\left(a f_{D}(x)+\langle b,x\rangle_{n/e}\right)_{x\in V_{n}^{(p)}}$ with $(a,b)\in \gf_{q}\times V_{n}^{(p)}$, its Hamming weight is given by
\begin{eqnarray*}
\wt(\textbf{c}(a,b))
=\left\{
\begin{array}{ll}
0 & \mbox{if $a=b=0$,}\\
|D| & \mbox{if $a\neq0, b=0$,}\\
p^n-p^{n-e} & \mbox{if $a=0, b\neq0$,}\\
p^n-p^{n-e}+\chi_{b}(D) & \mbox{if $a\neq0, b\neq0$.}\\
\end{array}\right.
 \end{eqnarray*}
\end{lemma}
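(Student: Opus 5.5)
The length $p^n$ is immediate, and the dimension $r+1$ follows from Lemma \ref{lem-dim} because $0\in D$. The remaining work is the weight formula. I will compute $\wt(\mathbf{c}(a,b)) = p^n - N(a,b)$, where $N(a,b)$ is the number of zeros of $af_D(x)+\langle b,x\rangle_{n/e}$, and split into the four cases according to whether $a$ and $b$ vanish. The case $a=b=0$ is trivial. If $a\neq 0$ and $b=0$, the coordinate at $x$ equals $af_D(x)$, which is nonzero exactly when $x\in D$, so the weight is $|D|$. If $a=0$ and $b\neq0$, the map $x\mapsto\langle b,x\rangle_{n/e}$ is a nonzero $\gf_q$-linear functional on the $r$-dimensional space $V_n^{(p)}$, so its kernel has size $q^{r-1}=p^{n-e}$, giving weight $p^n-p^{n-e}$.

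The main case is $a\neq0$ and $b\neq0$. Split the zeros according to membership in $D$:
\begin{itemize}
\item for $x\notin D$ the condition is $\langle b,x\rangle_{n/e}=0$;
\item for $x\in D$ the condition is $\langle b,x\rangle_{n/e}=-a$.
\end{itemize}
Hence
$$N(a,b) = \bigl(p^{n-e} - \sharp\{x\in D:\langle b,x\rangle_{n/e}=0\}\bigr) + \sharp\{x\in D:\langle b,x\rangle_{n/e}=-a\}.$$
Both counts over $D$ come from the orthogonality relation for additive characters of $\gf_q$, exactly as in the proofs of Theorem \ref{The-32} and Lemma \ref{lem-34}:
$$\sharp\{x\in D:\langle b,x\rangle_{n/e}=c\}=\frac1q\sum_{y\in\gf_q}\sum_{x\in D}\zeta_p^{\tr_{q/p}(y(\langle b,x\rangle_{n/e}-c))}.$$
The $y=0$ term contributes $|D|/q$. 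For $y\neq0$, the inner sum over $x$ is $\sum_{x\in D}\chi_b(yx)$, and since $D$ is $\gf_q^*$-invariant this equals $\chi_b(D)$. So the count is $\frac1q\bigl(|D|+\chi_b(D)\sum_{y\in\gf_q^*}\zeta_p^{\tr_{q/p}(-yc)}\bigr)$. The inner additive-character sum equals $q-1$ for $c=0$ and $-1$ for $c=-a\neq0$. Therefore
$$\sharp\{\langle b,x\rangle_{n/e}=0\}=\frac{|D|+(q-1)\chi_b(D)}{q},\qquad \sharp\{\langle b,x\rangle_{n/e}=-a\}=\frac{|D|-\chi_b(D)}{q}.$$

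Substituting, the $|D|/q$ terms cancel and
$$N(a,b)=p^{n-e}-\frac{(q-1)\chi_b(D)}{q}-\frac{\chi_b(D)}{q}=p^{n-e}-\chi_b(D),$$
so $\wt(\mathbf{c}(a,b))=p^n-p^{n-e}+\chi_b(D)$, as claimed. The one delicate step is the use of $\gf_q^*$-invariance to replace $\sum_{x\in D}\chi_b(yx)$ by $\chi_b(D)$ independently of $y$. This rests on $\chi_b(yx)=\zeta_p^{\tr_{q/p}(y\langle b,x\rangle_{n/e})}$, which holds because $\langle\cdot,\cdot\rangle_{n/e}$ is $\gf_q$-bilinear and $\langle b,x\rangle_n=\tr_{q/p}(\langle b,x\rangle_{n/e})$. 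Once that is in place, the rest is bookkeeping.
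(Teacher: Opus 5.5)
Your proof is correct and is essentially the paper's argument: count zeros via orthogonality of the additive characters of $\gf_q$, separate $D$ from its complement, and use $\gf_q^*$-invariance to get $\sum_{x\in D}\chi_b(yx)=\chi_b(D)$ for every $y\in\gf_q^*$. The only difference is bookkeeping. You handle the complement through the kernel size $p^{n-e}$ of $x\mapsto\langle b,x\rangle_{n/e}$, whereas the paper writes $\sum_{x\in V_n^{(p)}\setminus D}=\sum_{x\in V_n^{(p)}}-\sum_{x\in D}$ inside a single character sum and uses $\sum_{x\in V_n^{(p)}}\chi_b(yx)=0$ for $b\neq 0$.
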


\begin{IEEEproof}
For a codeword $\textbf{c}(a,b):=\left(a f_{D}(x)+\langle b,x\rangle_{n/e}\right)_{x\in V_{n}^{(p)}}$ with $(a,b)\in \gf_{q}\times V_{n}^{(p)}$, we consider the following cases. 

Case 1: If $a \neq 0$ and $b=0$, then $\wt(\textbf{c}(a,b))=|D|$.

Case 2: If $a = 0$ and $b\neq0$, then $\wt(\textbf{c}(a,b))=p^n-p^{n-e}$.

Case 3: If $a \neq 0$ and $b\neq0$, by the orthogonal relation of additive characters, we have 
\begin{eqnarray*}
\wt(\textbf{c}(a,b))&=&p^n-\sharp\{x \in V_{n}^{(p)}: af_{D}(x)+\langle b,x\rangle_{n/e}=0\}\\
&=&p^n-\frac{1}{q}\sum_{y \in \gf_{q}}\sum_{x\in V_{n}^{(p)}}\zeta_{p}^{\tr_{q/p}\left(y\left(af_{D}(x)+\langle b,x\rangle_{n/e}\right)\right)}\\
&=&p^n-p^{n-e}-\frac{1}{q}\sum_{y \in \gf_{q}^*}\sum_{x\in V_{n}^{(p)}}\zeta_{p}^{\tr_{q/p}\left(y\left(af_{D}(x)+\langle b,x\rangle_{n/e}\right)\right)}\\
&=&p^n-p^{n-e}-\frac{1}{q}\sum_{y \in \gf_{q}^{*}}\sum_{x\in D}\zeta_{p}^{\tr_{q/p}\left(ya+\langle b,yx\rangle_{n/e}\right)}-\frac{1}{q}\sum_{y \in \gf_{q}^*}\sum_{x\in V_{n}^{(p)}\backslash D}\zeta_{p}^{\langle yb,x\rangle_n}\\
&=&p^n-p^{n-e}-\frac{1}{q}\sum_{y \in \gf_{q}^*}\varphi_{1}(ya)\sum_{x\in D}\chi_{b}(yx)-\frac{1}{q}\sum_{y \in \gf_{q}^*}\sum_{x \in V_{n}^{(p)}}\chi_{b}(yx)+\frac{1}{q}\sum_{y \in \gf_{q}^*}\sum_{x\in D}\chi_{b}(yx)\\
&=&p^n-p^{n-e}+\frac{1}{q}\sum_{x\in D}\chi_{b}(x)+\frac{q-1}{q}\sum_{x \in D}\chi_{b}(x)\\
&=&p^n-p^{n-e}+\sum_{x \in D}\chi_{b}(x),
\end{eqnarray*}
where the fifth equation holds due to the condition that $D$ is $\gf_{q}^*$-invariant.
By Lemma \ref{lem-dim}, the dimension is $r+1$. Then the desired conclusions follow. 
\end{IEEEproof}

\begin{table}[!h]
\begin{center}
\caption{The weight distribution of $\mathcal{C}_{f_D}$ in Theorem \ref{theorem-41}.}\label{tab-41}
\begin{tabular}{cc} \hline
Weight  &  Frequency   \\ \hline
0 & 1\\

$p^{n-m}+\frac{p^m-1}{2}(p^{n-m}+\upsilon (-1)^{m-1}\epsilon^m\eta_{m}(-1)p^{\frac{n-m}{2}})$          &  $p^e-1$\\

$p^n-p^{n-e}$  & $p^n-1$ \\

$p^n-p^{n-e}+\frac{p^m-1}{2p^m}\upsilon (-1)^{m-1}\epsilon^m\eta_{m}(-1)p^{\frac{n+m}{2}}$     & $(p^e-1)(p^{n-m}-1)$ \\

$p^n-p^{n-e}+(1-\eta_{m}(-1))\frac{1}{2p^m}\upsilon (-1)^{m-1}\epsilon^m\eta_{m}(-1)p^{\frac{n+m}{2}}$  & $\frac{(p^e-1)(p^m-1)}{2}\left(p^{n-m}+\upsilon^{-1} (-1)^{m-1}\epsilon^m\eta_{m}(-1)p^{\frac{n-m}{2}}\right)$\\
$p^n-p^{n-e}+(-1-\eta_{m}(-1))\frac{1}{2p^m}\upsilon (-1)^{m-1}\epsilon^m\eta_{m}(-1)p^{\frac{n+m}{2}}$  & $\frac{(p^e-1)(p^m-1)}{2}\left(p^{n-m}-\upsilon^{-1} (-1)^{m-1}\epsilon^m\eta_{m}(-1)p^{\frac{n-m}{2}}\right)$\\
\hline
\end{tabular}
\end{center}
\end{table}

In the following two theorems, we use vectorial dual-bent functions to construct self-orthogonal minimal codes that violate the Ashikhmin–Barg condition.

 \begin{theorem}\label{theorem-41}
Let $q = p^e > 3$ be an odd prime power, and let $n \geq 3m$, $e \mid m \mid n$. Suppose $F(x): V_n^{(p)} \rightarrow \mathbb{F}_{p^m}$ is a vectorial dual-bent function satisfying \textbf{Condition A}. Define $D = \left\{ x \in V_n^{(p)} : F(x) \in SQ \cup \{0\} \right\}$. Then the code $\mathcal{C}_{f_D}$ defined in Equation (\ref{eq-3}) is a self-orthogonal minimal four-weight code over $\gf_q$ with parameters
\[
\left[ p^n,\ r+1,\ p^{n-m} + \frac{p^m-1}{2} \left( p^{n-m} + \upsilon (-1)^{m-1} \epsilon^m \eta_{m}(-1) p^{\frac{n-m}{2}} \right) \right]
\]
and it satisfies $\dfrac{w_{\min}}{w_{\max}} < \dfrac{q-1}{q}$. The weight distribution of $\mathcal{C}_{f_D}$ is given in Table \ref{tab-41}, where $\upsilon \in \{\pm\epsilon^{m}\}$ is a constant and $\epsilon := \sqrt{(-1)^{\frac{p-1}{2}}}$.
\end{theorem}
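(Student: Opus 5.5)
The plan is to combine Lemma \ref{the-41} for the weights, Lemma \ref{lem-45} for the character sum over $D$, and Lemma \ref{lemm-21} for the frequencies. Minimality then comes from Lemma \ref{minimal1}, and self-orthogonality from Corollary \ref{cor-41}.

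\textbf{Hypotheses of Lemma \ref{the-41}.} Since $F(0)=0$, we have $0\in D$. For $a\in\gf_q^*$, Condition A gives $F(ax)=a^lF(x)$. With $l=2$, or more generally whenever $a^l$ is a square in $\gf_{p^m}$, the set $SQ\cup\{0\}$ is preserved, so $D$ is $\gf_q^*$-invariant. Here I must use that $e\mid m$ and that elements of $\gf_{p^e}^*$ behave well under $\eta_m$: $\eta_m(a)=\eta_e(a)^{m/e}$. If $l$ is odd this is delicate, and I would restrict to the even-$l$ situation of the listed examples, or argue via $\eta_m(a^l)$.

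\textbf{Weights and frequencies.} Lemma \ref{the-41} shows that $\mathcal{C}_{f_D}$ is a $[p^n,r+1]$ code with weights $|D|$, $p^n-p^{n-e}$, and $p^n-p^{n-e}+\chi_b(D)$. The sum $\chi_b(D)=S_3$ is given by Lemma \ref{lem-45} according as $F^*(b)=0$, $F^*(b)\in SQ$, or $F^*(b)\in NSQ$. The size $|D|=|D_{F,0}|+\sum_{i\in SQ}|D_{F,i}|$ comes from Lemma \ref{lemm-21}, and this yields the stated minimum weight. The frequencies are counted over $(a,b)$ with $a\neq0$ using the value distribution of $F^*$ from Lemma \ref{lemm-21}: $|D_{F^*,0}|-1$ nonzero $b$ have $F^*(b)=0$, and $\frac{p^m-1}{2}\bigl(p^{n-m}\pm\cdots\bigr)$ nonzero $b$ fall in each of the square and nonsquare classes. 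Depending on $\eta_m(-1)$, one of the last two weights coincides with $p^n-p^{n-e}$, which explains why the code is called four-weight.

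\textbf{Self-orthogonality.} By Corollary \ref{cor-41} (using $q>3$) it suffices to show $p\mid |D|$. From the formula for $|D|$, we have $|D|=p^{n-m}+\frac{p^m-1}{2}\bigl(p^{n-m}+c\,p^{(n-m)/2}\bigr)$ with $c=\pm1$. Since $n\ge 3m$, we get $(n-m)/2\ge m\ge1$, so $p$ divides $|D|$.

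\textbf{Minimality and the Ashikhmin--Barg ratio.} The ratio $w_{\min}/w_{\max}<(q-1)/q$ is a direct comparison, because $w_{\min}\approx\frac{p^n}{2}$ while $w_{\max}\approx p^n$.

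Minimality is the main obstacle. I would apply Lemma \ref{minimal1} to every pair of linearly independent codewords $\mathbf{c}(a_1,b_1)$, $\mathbf{c}(a_2,b_2)$ and compute $\sum_{z\in\gf_q^*}\wt(\mathbf{a}+z\mathbf{b})$ case by case: $a_2=0$ versus $a_2\neq0$, and $b_1$, $b_2$ zero or proportional. Each sum expresses through $\chi_{b}(D)$ and $\chi_{b_1+zb_2}(D)$. The aggregated quantities $\sum_z\chi_{b_1+zb_2}(D)$ I would handle by the $\gf_q^*$-invariance of $D$, which makes $\chi_{yb}(D)=\chi_b(D)$. The goal in each case is to show equality with $(q-1)\wt(\mathbf{a})-\wt(\mathbf{b})$ would force an identity among the values $\pm p^{(n+m)/2}$ and $p^{n-e}$ that fails for $n\ge3m$ by a size argument. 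The hardest subcase is $a_1=0$, $a_2\ne0$, and there I would bound $\chi$ values by $\frac{p^m+1}{2p^m}p^{(n+m)/2}$ against gaps of order $p^{n-e}$.
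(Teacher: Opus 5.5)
\textbf{Verdict: same route as the paper, with one step left open.} Your outline matches the paper's proof:
\begin{itemize}
\item weights from Lemma~\ref{the-41}, with $\chi_b(D)=S_3$ taken from Lemma~\ref{lem-45};
\item $|D|$ and the frequencies from Lemma~\ref{lemm-21};
\item self-orthogonality from Corollary~\ref{cor-41} via $p\mid |D|$;
\item minimality through Lemma~\ref{minimal1} with case-by-case size estimates. The paper organizes these cases by weight classes and uses only the extreme weights.
\end{itemize}

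\textbf{The open step: $\gf_q^*$-invariance of $D$.} You make this conditional on $l$ being even. Restricting to even $l$ proves a weaker statement. ``Arguing via $\eta_m(a^l)$'' cannot rescue odd $l$ either. If $l$ and $m/e$ are both odd, then $\eta_m(a^l)=\eta_e(a)^{lm/e}=\eta_e(a)=-1$ for every nonsquare $a\in\gf_q^*$, so invariance would genuinely fail.

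What you need is that $l$ is forced to be even. The paper gets this from $\gcd(l-1,p^m-1)=1$. It also follows directly from Condition~A:
\begin{itemize}
\item If $l$ were odd, taking $a=-1$ gives $F(-x)=-F(x)$, so every component $F_c$ is odd.
\item Substituting $x\mapsto -x$ then shows $\overline{W_{F_c}(b)}=W_{F_c}(b)$ for all $b$.
\item The factor $\varepsilon_{F_c}p^{n/2}$ is real or purely imaginary. For $p$ odd, $\zeta_p^k$ is never purely imaginary and is real only for $k=0$. So weak regularity forces $F_c^*\equiv 0$.
\item The inverse Walsh transform would then give $\zeta_p^{F_c(x)}=0$ for $x\neq 0$, which is absurd.
\end{itemize}
Hence $l$ is even, $a^l=(a^{l/2})^2\in SQ$ for all $a\in\gf_q^*$, and $D$ is invariant.

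\textbf{Two quantitative cautions for when you write out the estimates.}

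First, $w_{\max}$ is not $\approx p^n$. Every weight other than $|D|$ equals $p^n-p^{n-e}+O(p^{(n+m)/2})$. After dividing by $p^n$, the Ashikhmin--Barg comparison reads, to leading order, $\frac{q}{2}(1+p^{-m})<\frac{(q-1)^2}{q}$. This holds for $q\ge 5$ but not for $q=3$, so it is a second place where $q>3$ is used.

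Second, take the degenerate minimality subcase $a_1=0$, $a_2\neq 0$, $b_2\in\{0\}\cup\gf_q^*b_1$. Here invariance makes the two sides in Lemma~\ref{minimal1} differ by exactly $|D|+(q-1)\chi_{b_1}(D)$. So the quantity that must dominate $(q-1)|\chi_{b_1}(D)|$ is $|D|\approx\frac{p^m+1}{2}p^{n-m}$, not $p^{n-e}$. Consider $n=3m$, $e=m$, $F^*(b_1)=0$ and $\upsilon(-1)^{m-1}\epsilon^m\eta_m(-1)=-1$:
\begin{itemize}
\item the difference equals $p^{2m}$, which is positive but leaves little room;
\item $(q-1)|\chi_{b_1}(D)|=\frac{(p^m-1)^2}{2}p^m$ already exceeds $p^{n-e}=p^{2m}$.
\end{itemize}
So the comparison ``against gaps of order $p^{n-e}$'' that you propose would fail in this case.
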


\begin{IEEEproof}
Let $F(x): V_n^{(p)} \rightarrow \mathbb{F}_{p^m}$ be a vectorial dual-bent function satisfying \textbf{Condition A}. If $\gcd(l-1, p^m-1)=1$ with $p$ an odd prime, then $l$ must be even. If $F(x) \in SQ \cup \{0\}$, then $F(ax)=a^lF(x) \in SQ \cup \{0\}$ for any $a\in \gf_q^*$. Consequently, $D$ is $\gf_q^*$-invariant. Since $D$ is $\gf_q^*$-invariant and $p \mid |D|$ for $n \geq 3m$ by Lemma \ref{lemm-21}, it follows from Corollary~\ref{cor-41} that the code $\mathcal{C}_{f_D}$ is self-orthogonal.

 For a codeword $\textbf{c}(a,b):=\left(a f_{D}(x)+\langle b,x\rangle_{n/e}\right)_{x\in V_{n}^{(p)}}$ with $(a,b)\in \gf_{q}\times V_{n}^{(p)}$, by Lemmas \ref{lem-45} and  \ref{the-41}, we have 
 \begin{eqnarray*}
&& \wt\left(\textbf{c}(a,b)\right)\\
&=& \left\{
\begin{array}{lll}
0&\mbox{if $a=0$, $b= 0$ },\\
p^{n-m}+\frac{p^m-1}{2}(p^{n-m}+\upsilon (-1)^{m-1}\epsilon^m\eta_{m}(-1)p^{\frac{n-m}{2}})  &   \mbox{if $a\neq0$, $b= 0$ },\\
p^n-p^{n-e} &  \mbox{if $a=0$, $b \neq 0$},\\
p^n-p^{n-e}+\frac{p^m-1}{2p^m}\upsilon (-1)^{m-1}\epsilon^m\eta_{m}(-1)p^{\frac{n+m}{2}} & \mbox{if $a\neq0,  b \neq 0$, $F^*(b)=0$.}\\
p^n-p^{n-e}+(1-\eta_{m}(-1))\frac{1}{2p^m}\upsilon (-1)^{m-1}\epsilon^m\eta_{m}(-1)p^{\frac{n+m}{2}} &  \mbox{if $a \neq 0, b \neq 0, F^*(b)\in SQ$.}\\
p^n-p^{n-e}+(-1-\eta_{m}(-1))\frac{1}{2p^m}\upsilon (-1)^{m-1}\epsilon^m\eta_{m}(-1)p^{\frac{n+m}{2}} &  \mbox{if $a \neq 0, b \neq 0, F^*(b)\in NSQ$.}\\
\end{array} \right.
 \end{eqnarray*}
 Denote these Hamming weights by $\wt_i$ for $0\leq i\leq 5$ in order. 
  According to Lemma \ref{lemm-21}, it is easy to determine the frequency $A_{\wt_i}$ of each weight. Then the weight distribution of $\mathcal{C}_{f_D}$ is given by the Table \ref{tab-41}.

In what follows, we prove that $\mathcal{C}_{f_D}$ is a minimal code violating the Ashikhmin-Barg condition by Lemma \ref{minimal1}. 
Firstly, we consider the case where $\eta_{m}(-1)=1$ and $\upsilon (-1)^{m-1}\epsilon^m\eta_{m}(-1)=1$. 
 It is easy to see that
  \begin{eqnarray*}
 \frac{\wt_{\min}}{\wt_{\max}}=\frac{p^{n-m}+\frac{p^m-1}{2}(p^{n-m}+p^{\frac{n-m}{2}})}{p^n-p^{n-e}+\frac{p^m-1}{2}p^{\frac{n-m}{2}}}<\frac{q-1}{q}.
 \end{eqnarray*}
For $i = 1,2,3,4$, define $E_i$ as the set of all codewords of weight $\wt_i$. Then
\begin{enumerate}
\item $E_1=\{ \mathbf{c}_{a, 0} \in \mathcal{C}_{f_D} | a\neq0, b=0 \}$,
\item $E_2=\{ \mathbf{c}_{0, b} \in \mathcal{C}_{f_D} | a=0, b \neq0 ~\mbox{or}~ a \neq 0, b \neq 0, F^*(b)\in SQ \}$,
\item $E_3=\{ \mathbf{c}_{a, b} \in \mathcal{C}_{f_D}| a\neq0, b\neq 0, F^*(b)=0 \}$,
\item $E_4=\{ \mathbf{c}_{a, b} \in \mathcal{C}_{f_D}| a \neq 0, b \neq 0, F^*(b))\in NSQ\}$.    
\end{enumerate}
Note that $\wt_{3}>\wt_{2}>\wt_{4}>\wt_{1}$. 
Let $(a_i, b_i) \in \gf_q \times V_n^{(p)} \setminus \{(0,0)\}$ for $i = 1, 2$. Then for any two linearly independent codewords $\mathbf{a} = \mathbf{c}(a_1, b_1)$ and $\mathbf{b} = \mathbf{c}(a_2, b_2)$ in $\mathcal{C}_{f_D}$, their coverage can be categorized into the following four cases.

Case 1: For $\mathbf{a}=\mathbf{c}(a_1,b_1) \in E_1$ and  $\mathbf{b}=\mathbf{c}(a_2,b_2) \in \bigcup_{i=1}^{4}E_{i}$, we have $$\sum_{z \in \gf_{q}^*}\wt(\mathbf{a}+z\mathbf{b})\geq(q-1)\wt_1>(q-1)\wt_1-\wt_1\geq(q-1)\wt(\mathbf{a})-\wt(\mathbf{b}),$$
which satisfies the inequality in Lemma \ref{minimal1}.

Case 2: Let $\mathbf{a}=\mathbf{c}(a_1,b_1) \in E_2$. Then $b_1\neq 0$ by the definition of $E_2$. 

Subcase 2.1: If $\mathbf{b}=\mathbf{c}(a_2,b_2) \in E_1$, then $b_2=0$ and $\mathbf{a}+z\mathbf{b}=\mathbf{c}(a_1+za_2,b_1)\not\in  E_1$ for $z\in \gf_q^*$.
This yields that
$$\sum_{z \in \gf_{q}^*}\wt(\mathbf{a}+z\mathbf{b})\geq(q-1)\wt_4=(q-1)(p^n-p^{n-e}-p^{\frac{n-m}{2}}).$$
Note that
$$(q-1)\wt(\mathbf{a})-\wt(\mathbf{b})=(q-1)\wt_2-\wt_1=(q-1)(p^n-p^{n-e})-\left( p^{n-m}+\frac{p^m-1}{2}(p^{n-m}+p^{\frac{n-m}{2}})\right).$$
It is verified that
$$\sum_{z \in \gf_{q}^*}\wt(\mathbf{a}+z\mathbf{b})>(q-1)\wt(\mathbf{a})-\wt(\mathbf{b}),$$
which satisfies the inequality in Lemma \ref{minimal1}.

Subcase 2.2: If $\mathbf{b}=\mathbf{c}(a_2,b_2) \in \bigcup_{i=2}^{4}E_i$, then $\mathbf{a}+z\mathbf{b}=\mathbf{c}(a_1+za_2,b_1+zb_2)$ for $z\in \gf_q^*$. 
There exists at most one $z\in \gf_q^*$ such that $\mathbf{a}+z\mathbf{b}\in E_1$ as $b_1\neq 0$. 
Hence, we have
$$\sum_{z \in \gf_{q}^*}\wt(\mathbf{a}+z\mathbf{b})\geq(q-2)\wt_4+\wt_1=(q-2)(p^n-p^{n-e}-p^{\frac{n-m}{2}})+p^{n-m}+\frac{p^m-1}{2}(p^{n-m}+p^{\frac{n-m}{2}}).$$
Note that
$$(q-1)\wt(\mathbf{a})-w(\mathbf{b})\leq (q-1)\wt_2-\wt_4= (q-1)(p^n-p^{n-e})-(p^n-p^{n-e}-p^{\frac{n-m}{2}}).$$
It is verified that
$$\sum_{z \in \gf_{q}^*}\wt(\mathbf{a}+z\mathbf{b})>(q-1)\wt(\mathbf{a})-\wt(\mathbf{b}),$$
which satisfies the inequality in Lemma \ref{minimal1}.

Case 3: Let $\mathbf{a}=\mathbf{c}(a_1,b_1) \in E_3$. Then $a_1\neq 0$, $b_1\neq 0$ and  $F^*(b_1)=0$.

Subcase 3.1: If $\mathbf{b}=\mathbf{c}(a_2,b_2) \in E_1$, then $a_2\neq 0$, $b_2=0$ and $\mathbf{a}+z\mathbf{b}=\mathbf{c}(a_1+za_2,b_1)$ for $z\in \gf_q^*$. Note that $\mathbf{a}+z\mathbf{b}\not \in E_1$ for all $z\in \gf_q^*$. Besides, there exists exactly one $z\in \gf_q^*$ such that $\mathbf{a}+z\mathbf{b}\in E_2$.
Hence, we have 
$$\sum_{z \in \gf_{q}^*}\wt(\mathbf{a}+z\mathbf{b})\geq(q-2)\wt_4+\wt_2=(q-2)(p^n-p^{n-e}-p^{\frac{n-m}{2}})+(p^n-p^{n-e}).$$
Note that
$$(q-1)\wt(\mathbf{a})-\wt(\mathbf{b})=(q-1)\left(p^n-p^{n-e}+\frac{p^m-1}{2}p^{\frac{n-m}{2}}\right)-\left(p^{n-m}+\frac{p^m-1}{2}(p^{n-m}+p^{\frac{n-m}{2}})\right).$$
We then have 
$$\sum_{z \in \gf_{q}^*}\wt(\mathbf{a}+z\mathbf{b})>(q-1)\wt(\mathbf{a})-\wt(\mathbf{b}),$$
which satisfies the inequality in Lemma \ref{minimal1}.

Subcase 3.2: If $\mathbf{b} \in \bigcup_{i=2}^{4}E_i$, then $a_2\neq 0$ and $\mathbf{a}+z\mathbf{b}=\mathbf{c}(a_1+za_2,b_1+zb_2)$ for $z\in \gf_q^*$.
Then there exists at most one $z\in \gf_q^*$ such that $\mathbf{a}+z\mathbf{b}\in E_1$ as $b_1\neq 0$. 
Hence, we have
$$\sum_{z \in \gf_{q}^*}\wt(\mathbf{a}+z\mathbf{b})\geq(q-2)\wt_4+\wt_1=(q-2)(p^n-p^{n-e}-p^{\frac{n-m}{2}})+p^{n-m}+\frac{p^m-1}{2}(p^{n-m}+p^{\frac{n-m}{2}}).$$
Note that
$$(q-1)\wt(\mathbf{a})-\wt(\mathbf{b})\leq (q-1)\left(p^n-p^{n-e}+\frac{p^m-1}{2}p^{\frac{n-m}{2}}\right)-(p^n-p^{n-e}-p^{\frac{n-m}{2}}).$$
Then we can easily verify that
$$\sum_{z \in \gf_{q}^*}\wt(\mathbf{a}+z\mathbf{b})>(q-1)\wt(\mathbf{a})-\wt(\mathbf{b}),$$
which satisfies the inequality in Lemma \ref{minimal1}.

Case 4: Let $\mathbf{a}=\mathbf{c}(a_1,b_1) \in E_4$. Then we have $a_1 \neq 0, b_1 \neq 0, F^*(b_1)\in NSQ$.

Subcase 4.1: If $\mathbf{b} \in E_1$, then $a_2\neq 0$, $b_2=0$ and $\mathbf{a}+z\mathbf{b}=\mathbf{c}(a_1+za_2,b_1)$ for $z\in \gf_q^*$. 
There exists exactly one $z\in \gf_q^*$ such that  $\mathbf{a}+z\mathbf{b}\in E_2$ as $a_1\neq 0$ and $a_2\neq 0$. 
Note that $\mathbf{a}+z\mathbf{b}\not \in E_1$  for all $z\in \gf_q^*$ as $b_1\neq 0$. 
Then we have
$$\sum_{z \in \gf_{q}^*}\wt(\mathbf{a}+z\mathbf{b})\geq(q-2)\wt_4+\wt_2=(q-2)(p^n-p^{n-e}-p^{\frac{n-m}{2}})+(p^n-p^{n-e}).$$
Besides, it is obvious that
$$(q-1)\wt(\mathbf{a})-\wt(\mathbf{b})=(q-1)(p^n-p^{n-e}-p^{\frac{n-m}{2}})-\left(p^{n-m}+\frac{p^m-1}{2}(p^{n-m}+p^{\frac{n-m}{2}})\right).$$
Then we obtain
$$\sum_{z \in \gf_{q}^*}\wt(\mathbf{a}+z\mathbf{b})>(q-1)\wt(\mathbf{a})-\wt(\mathbf{b}),$$
which satisfies the inequality in Lemma \ref{minimal1}.

Subcase 4.2: If $\mathbf{b} \in \bigcup_{i=2}^{4}E_i$, then $b_2\neq 0$ and $\mathbf{a}+z\mathbf{b}=\mathbf{c}(a_1+za_2,b_1+zb_2)$ for $z\in \gf_q^*$.
There exists at most one $z\in \gf_q^*$ such that $\mathbf{a}+z\mathbf{b}\in E_1$. 
Then we have
$$\sum_{z \in \gf_{q}^*}\wt(\mathbf{a}+z\mathbf{b})\geq(q-2)\wt_4+\wt_1=(q-2)(p^n-p^{n-e}-p^{\frac{n-m}{2}})+p^{n-m}+\frac{p^m-1}{2}(p^{n-m}+p^{\frac{n-m}{2}}).$$
Note that
$$(q-1)\wt(\mathbf{a})-\wt(\mathbf{b})\leq (q-1)(p^n-p^{n-e}-p^{\frac{n-m}{2}})-(p^n-p^{n-e}-p^{\frac{n-m}{2}}).$$
Then we deduce 
$$\sum_{z \in \gf_{q}^*}\wt(\mathbf{a}+z\mathbf{b})>(q-1)\wt(\mathbf{a})-\wt(\mathbf{b}),$$
which satisfies the inequality in Lemma \ref{minimal1}.

Based on the above discussions, we conclude that $\mathcal{C}_{f_D}$ is minimal when $\eta_{m}(-1)=1$ and $\upsilon (-1)^{m-1}\epsilon^m\eta_{m}(-1)=1$, by Lemma~\ref{minimal1}. The minimality in the remaining cases can be proved similarly. Hence, the desired conclusions follow.
\end{IEEEproof}

\begin{example}
Let $p=5$, $n=3$, $e=m=1$ and $F(x)=\tr_{p^n/p}(x^2)$. By Equation~\ref{Tr(x^2)1}, we have $\upsilon=1$. Then $\mathcal{C}_{f_D}$ is a self-orthogonal minimal $[125,4,85]$ code over $\gf_5$ with weight enumerator $1+4z^{85}+160z^{95}+364z^{100}+96z^{110}$ and $\frac{w_{\min}}{w_{\max}}=\frac{85}{110}<\frac{4}{5}$. This result is confirmed by a Magma program.
\end{example}

\begin{table}[!h]
\begin{center}
\caption{The weight distribution of $\mathcal{C}_{f_D}$ in Theorem \ref{theorem-42}.}\label{tab-42}
\begin{tabular}{cc} \hline
Weight   &  Frequency   \\ \hline
0 & 1\\

$p^{n-m}+\frac{p^m-1}{2}(p^{n-m}-\upsilon (-1)^{m-1}\epsilon^m\eta_{m}(-1)p^{\frac{n-m}{2}})$          &  $p^e-1$\\

$p^n-p^{n-e}$  & $p^n-1$ \\

$p^n-p^{n-e}-\frac{p^m-1}{2p^m}\upsilon (-1)^{m-1}\epsilon^m\eta_{m}(-1)p^{\frac{n+m}{2}}$     & $(p^e-1)(p^{n-m}-1)$ \\

$p^n-p^{n-e}+(1+\eta_{m}(-1))\frac{1}{2p^m}\upsilon (-1)^{m-1}\epsilon^m\eta_{m}(-1)p^{\frac{n+m}{2}}$  & $\frac{(p^e-1)(p^m-1)}{2}\left(p^{n-m}+\upsilon^{-1} (-1)^{m-1}\epsilon^m\eta_{m}(-1)p^{\frac{n-m}{2}}\right)$\\
$p^n-p^{n-e}+(\eta_{m}(-1)-1)\frac{1}{2p^m}\upsilon (-1)^{m-1}\epsilon^m\eta_{m}(-1)p^{\frac{n+m}{2}}$  & $\frac{(p^e-1)(p^m-1)}{2}\left(p^{n-m}-\upsilon^{-1} (-1)^{m-1}\epsilon^m\eta_{m}(-1)p^{\frac{n-m}{2}}\right)$\\
\hline
\end{tabular}
\end{center}
\end{table}

 \begin{theorem}\label{theorem-42}
Let $q=p^e>3$ be an odd prime power, $n\geq 3m$, $e\mid m \mid n$, and let $F(x): V_n^{(p)} \rightarrow \gf_{p^m}$ be a vectorial dual-bent function satisfying \textbf{Condition A}. Define $D = \{ x \in V_n^{(p)} : F(x) \in NSQ \cup \{0\} \}$. Then the code $\mathcal{C}_{f_D}$ defined in Equation~(\ref{eq-3}) is a self-orthogonal minimal four-weight code with parameters
\[
\left[ p^n,\, r+1,\, p^{n-m} + \frac{p^m-1}{2}\left( p^{n-m} - \upsilon (-1)^{m-1}\epsilon^m \eta_{m}(-1) p^{\frac{n-m}{2}} \right) \right],
\]
satisfying $\dfrac{w_{\min}}{w_{\max}} < \dfrac{q-1}{q}$. The weight distribution of $\mathcal{C}_{f_D}$ is given in Table~\ref{tab-42}, where $\upsilon \in \{\pm\epsilon^{m}\}$ is a constant and $\epsilon := \sqrt{(-1)^{\frac{p-1}{2}}}$.
\end{theorem}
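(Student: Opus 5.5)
The plan mirrors the proof of Theorem~\ref{theorem-41}, with $SQ$ replaced by $NSQ$ and Lemma~\ref{lem-45} replaced by Lemma~\ref{lem-46}.

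\textbf{Self-orthogonality.} First check that $D=\{x: F(x)\in NSQ\cup\{0\}\}$ is $\gf_q^*$-invariant. By Condition A, $F(ax)=a^lF(x)$ for $a\in\gf_q^*\subseteq\gf_{p^m}^*$. Since $\gcd(l-1,p^m-1)=1$ with $p$ odd, $l$ is even, so $a^l$ is a square in $\gf_{p^m}^*$. Multiplying by a square preserves $NSQ\cup\{0\}$, so $aD=D$. Next compute $|D|$ from Lemma~\ref{lemm-21}:
\[
|D|=p^{n-m}+\sum_{i\in NSQ}|D_{F,i}| = p^{n-m}+\frac{p^m-1}{2}\left(p^{n-m}-\upsilon(-1)^{m-1}\epsilon^m\eta_m(-1)p^{\frac{n-m}{2}}\right),
\]
because $\eta_m(-i)=-\eta_m(-1)$ for $i\in NSQ$. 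When $n\ge 3m$, both $p^{n-m}$ and $p^{\frac{n-m}{2}}$ are divisible by $p$, so $p\mid |D|$. Also $0\in D$ since $F(0)=0$, and Corollary~\ref{cor-41} then gives self-orthogonality.

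\textbf{Weight distribution.} By Lemmas~\ref{lem-dim} and \ref{the-41}, the dimension is $r+1$, and the weights are:
\begin{itemize}
\item $|D|$ for $a\ne0$, $b=0$, with frequency $q-1$;
\item $p^n-p^{n-e}$ for $a=0$, $b\ne0$;
\item $p^n-p^{n-e}+S_4$ for $a\ne0$, $b\ne0$, where $S_4=\chi_b(D)$ is given by Lemma~\ref{lem-46} according to whether $F^*(b)$ is $0$, in $SQ$, or in $NSQ$.
\end{itemize}
The counts of $b\ne0$ in each class come from the value distribution of $F^*$ in Lemma~\ref{lemm-21}: $p^{n-m}-1$ nonzero $b$ with $F^*(b)=0$, and for $SQ$ and $NSQ$ the counts carry the signs $\pm\upsilon^{-1}(-1)^{m-1}\epsilon^m\eta_m(-1)p^{\frac{n-m}{2}}$, using $\eta_m(-i)=\pm\eta_m(-1)$. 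Since $\eta_m(-1)=\pm1$, exactly one of the coefficients $1+\eta_m(-1)$ and $\eta_m(-1)-1$ vanishes. That vanishing weight coincides with $p^n-p^{n-e}$, which leaves four nonzero weights and reproduces Table~\ref{tab-42}. The minimum weight is $|D|$: it is of order $\frac{p^m+1}{2}p^{n-m}$, while every other weight is $p^n-p^{n-e}+O(p^{\frac{n+m}{2}})$, and since $n\ge3m$ these are much larger.

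\textbf{Minimality and violation of Ashikhmin--Barg.} The ratio inequality is direct: $\wt_{\min}=|D|\approx\frac{p^m+1}{2p^m}p^n$, while $\frac{q-1}{q}\wt_{\max}\ge\frac{q-1}{q}(p^n-p^{n-e})\approx(1-1/q)^2p^n$. For $q>3$ and $p^m\ge q\ge5$ the first is strictly smaller, and the lower-order terms only need an explicit check. Minimality is the main work. I would apply Lemma~\ref{minimal1} with the same case structure as in Theorem~\ref{theorem-41}. Let $E_1$ be the codewords with $b=0$, and group the remaining codewords by weight. The key facts are:
\begin{itemize}
\item if $b_1\ne0$, at most one $z\in\gf_q^*$ gives $\mathbf a+z\mathbf b\in E_1$;
\item if $\mathbf b\in E_1$ and $a_1\ne0$, exactly one $z$ gives $a_1+za_2=0$, producing weight $p^n-p^{n-e}$.
\end{itemize}
These yield lower bounds of the form $(q-2)\wt_{\rm low}+\wt_1$ or $(q-2)\wt_{\rm low}+(p^n-p^{n-e})$ for $\sum_{z}\wt(\mathbf a+z\mathbf b)$. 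Each bound is compared against $(q-1)\wt(\mathbf a)-\wt(\mathbf b)$, bounded above by $(q-1)\wt_{\max}-\wt_{\min}$ or a sharper case value. The obstacle is checking these inequalities across the sign cases $\eta_m(-1)=\pm1$ and $\upsilon(-1)^{m-1}\epsilon^m\eta_m(-1)=\pm1$; here $\upsilon(-1)^{m-1}\epsilon^m$ is real, $\pm1$ or $\pm p^{0}$-scaled, as in Theorem~\ref{theorem-41}. I would handle one representative sign case in full, writing $W=p^n-p^{n-e}$ and $\delta=p^{\frac{n-m}{2}}$, and reduce each inequality to something like $W+(q-2)(\text{deviation})>\ldots$. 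These hold because $W\gg (q-1)\frac{p^m\pm1}{2}\delta$ when $n\ge3m$. The other sign cases follow symmetrically, as in Theorem~\ref{theorem-41}.
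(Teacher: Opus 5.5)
Your route is the paper's own. The paper's proof just says the weights follow from Lemmas~\ref{lem-46} and~\ref{the-41} and the rest is as in Theorem~\ref{theorem-41}. Your self-orthogonality argument ($l$ even, so $D$ is $\gf_q^*$-invariant; $p\mid |D|$ from Lemma~\ref{lemm-21}; Corollary~\ref{cor-41}) is exactly that part.

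\textbf{The step ``reproduces Table~\ref{tab-42}'' fails.} You inherit the error by taking Lemma~\ref{lem-46} on faith: its value for $F^*(b)\in NSQ$ has the wrong sign when $\eta_m(-1)=-1$.

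For $b\neq 0$ one has $S_4=-\frac12 S'+\frac12 T'$, with $S'$ and $T'$ from Lemmas~\ref{333} and~\ref{lem15}. For $F^*(b)\in NSQ$ this gives $S_4=\frac12\upsilon(-1)^{m-1}\epsilon^m(\eta_m(-1)-1)p^{\frac{n-m}{2}}$. So the factor $(\eta_m(-1)-1)$ in Lemma~\ref{lem-46} should be $(1-\eta_m(-1))$.

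Put $\sigma:=\upsilon(-1)^{m-1}\epsilon^m\eta_m(-1)\in\{\pm1\}$ and $\delta:=p^{\frac{n-m}{2}}$. In every case the nonzero weights are $|D|$, $p^n-p^{n-e}$, $p^n-p^{n-e}-\frac{p^m-1}{2}\sigma\delta$ and $p^n-p^{n-e}+\sigma\delta$, with the frequencies listed in Table~\ref{tab-42}. When $p^m\equiv 3\pmod 4$, the table's last row instead has $-\sigma\delta$.

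Concretely, take $p=7$, $n=3$, $m=e=1$, $F(x)=\tr_{7^3/7}(x^2)$, so $\sigma=-1$. The code has weights $217,294,315,287$ with frequencies $6,1350,288,756$. These satisfy the first Pless moment $\sum_j jA_j=7^3\cdot 6\cdot 343=705894$. Replacing $287$ by the tabulated $301$ violates it. You therefore need to derive $S_4$ yourself. The remaining claims (four weights, $d=|D|$, self-orthogonality) survive.

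\textbf{The explicit checks you defer are not slack at the boundary $n=3m$, $m=e$.}

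For the Ashikhmin--Barg ratio, $|D|<\frac{q-1}{q}(p^n-p^{n-e})$ is false for $p=q=5$, $n=3$, $\sigma=-1$, since $85>80$. You must use the actual maximum weight $p^n-p^{n-e}+\frac{p^m-1}{2}\delta=110$, which gives $88>85$.

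In the minimality subcases where one $z$ puts $\mathbf{a}+z\mathbf{b}$ into $E_1$, the operative inequality is not $W\gg(q-1)\frac{p^m\pm1}{2}\delta$. It is $\wt_1>(q-1)(\wt_{\max}-\wt_{\mathrm{low}})=(q-1)\frac{p^m+1}{2}\delta$, where $\wt_{\mathrm{low}}$ is the smallest weight outside $E_1$. For $\sigma=1$, $n=3m$, $m=e$, this holds with margin exactly $q$. It holds only if you bound $\wt(\mathbf{b})\ge\wt_{\mathrm{low}}$ (valid there since $\mathbf{b}\notin E_1$), not $\wt(\mathbf{b})\ge\wt_{\min}$.

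For $p=q=5$, $n=3$, $\sigma=1$ (weights $65,90,100,105$), the lower bound $3\cdot 90+65=335$ exceeds $4\cdot 105-90=330$ but not $4\cdot105-65=355$. With these sharper case bounds and the corrected fourth weight, the case analysis of Theorem~\ref{theorem-41} does go through.
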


\begin{IEEEproof}
According to Lemmas~\ref{lem-46} and~\ref{the-41}, its weight distribution can be easily obtained. The rest of the proof is similar to that of Theorem~\ref{theorem-41} and is therefore omitted.
\end{IEEEproof}

In what follows, we use $s$-plateaued functions to construct self-orthogonal minimal codes that violate the Ashikhmin–Barg condition. In the remainder of this subsection, we set $e=1$ and regard $V_n^{(p)}$ as an $n$-dimensional vector space over $\mathbb{F}_{p}$. Then $\mathcal{C}_{f_D}$, defined in Equation (\ref{eq-3}), is a $p$-ary linear code.

\begin{table}[!h]
\begin{center}
\caption{The weight distribution of $\mathcal{C}_{f_D}$ in Theorem \ref{theorem-43} if $n+s$ is even.}\label{tab-43}
\begin{tabular}{ccc} \hline
Weight   &  Frequency (if $0 \in B_{+}(f)$) & Frequency (if $0 \in B_{-}(f)$)  \\ \hline
$0$ & $1$ & $1$\\

$\frac{p^n+p^{n-1}}{2}+\epsilon_0\frac{p-1}{2}p^{\frac{n+s}{2}-1}$        &  $p-1$ &  $p-1$\\

$p^n-p^{n-1}$  & $p^{n+1}-(p-1)p^{n-s}-1$ & $p^{n+1}-(p-1)p^{n-s}-1$\\

$p^n-p^{n-1}+\frac{p-1}{2p}p^{\frac{n+s}{2}}$     & $(p-1)(\frac{k}{p}+(p-1)p^{\frac{n-s}{2}}-1)+\frac{(p-1)^2}{2}(\frac{k}{p}-p^{\frac{n-s}{2}-1})$ &$(p-1)\frac{k}{p}+\frac{(p-1)^2}{2}\frac{k}{p}$\\

$p^n-p^{n-1}-\frac{p-1}{2p}p^{\frac{n+s}{2}}$     & $(p-1)(p^{n-s-1}-\frac{k}{p})+\frac{(p-1)^2}{2}(p^{n-s-1}-\frac{k}{p})$ &$(p-1)(\frac{p+1}{2}p^{n-s-1}-\frac{p+1}{2}\frac{k}{p}-\frac{p-1}{2}p^{\frac{n-s-1}{2}}-1)$\\
$p^n-p^{n-1}-\frac{p+1}{2p}p^{\frac{n+s}{2}}$& $\frac{(p-1)^2}{2}(\frac{k}{p}-p^{\frac{n-s}{2}-1})$&$\frac{(p-1)^2}{2}\frac{k}{p}$ \\
$p^n-p^{n-1}+\frac{p+1}{2p}p^{\frac{n+s}{2}}$& $\frac{(p-1)^2}{2}(p^{n-s-1}-\frac{k}{p})$&$\frac{(p-1)^2}{2}(p^{n-s-1}-\frac{k}{p}+p^{\frac{n-s}{2}-1})$ \\
\hline
\end{tabular}
\end{center}
\end{table}

\begin{table}[!h]
\begin{center}
\caption{The weight distribution of $\mathcal{C}_{f_D}$ in Theorem \ref{theorem-43} if $n+s$ is odd and $p\equiv 1 \mod 4$.}\label{tab-44}
\begin{tabular}{ccc} \hline
Weight   &  Frequency (if $0 \in B_{+}(f)$) & Frequency (if $0 \in B_{-}(f)$)  \\ \hline
$0$ & $1$ & $1$\\

$\frac{p^n+p^{n-1}}{2}+\frac{p-1}{2}\epsilon_{0}p^{\frac{n+s-1}{2}}$        &  $p-1$ &  $p-1$\\

$p^n-p^{n-1}$  & $p^{n+1}-(p-1)p^{n-s}-1+\frac{(p-1)^2}{2}(p^{n-s-1}+p^{\frac{n-s-1}{2}})$ & $p^{n+1}-(p-1)p^{n-s}-1+\frac{(p-1)^2}{2}(p^{n-s-1}-p^{\frac{n-s-1}{2}})$\\

$p^n-p^{n-1}+\frac{p-1}{2}p^{\frac{n+s-1}{2}}$     & $(p-1)(\frac{k}{p}-1)$ &$(p-1)\frac{k}{p}$\\

$p^n-p^{n-1}-\frac{p-1}{2}p^{\frac{n+s-1}{2}}$     & $(p-1)(p^{n-s-1}-\frac{k}{p})$ &$(p-1)(p^{n-s-1}-\frac{k}{p}-1)$\\

$p^n-p^{n-1}-p^{\frac{n+s-1}{2}}$& $\frac{(p-1)^2}{2}(\frac{k}{p}-p^{\frac{n-s-1}{2}})$&$\frac{(p-1)^2}{2}\frac{k}{p}$ \\

$p^n-p^{n-1}+p^{\frac{n+s-1}{2}}$& $\frac{(p-1)^2}{2}(p^{n-s-1}-\frac{k}{p})$&$\frac{(p-1)^2}{2}(p^{n-s-1}-\frac{k}{p}+p^{\frac{n-s-1}{2}})$ \\
\hline
\end{tabular}
\end{center}
\end{table} 

\begin{table}[!h]
\begin{center}
\caption{The weight distribution of $\mathcal{C}_{f_D}$ in Theorem \ref{theorem-43} if $n+s$ is odd and $p\equiv 3 \mod 4$.}\label{tab-45}
\begin{tabular}{ccc} \hline
Weight   &  Frequency (if $0 \in B_{+}(f)$) & Frequency (if $0 \in B_{-}(f)$)  \\ \hline
$0$ & $1$ & $1$\\

$\frac{p^n+p^{n-1}}{2}+\frac{p-1}{2}\epsilon_{0}p^{\frac{n+s-1}{2}}$        &  $p-1$ &  $p-1$\\

$p^n-p^{n-1}$  & $p^{n+1}-(p-1)p^{n-s}-1+\frac{(p-1)^2}{2}(p^{n-s-1}+p^{\frac{n-s-1}{2}})$ & $p^{n+1}-(p-1)p^{n-s}-1+\frac{(p-1)^2}{2}(p^{n-s-1}-p^{\frac{n-s-1}{2}})$\\

$p^n-p^{n-1}+\frac{p-1}{2}p^{\frac{n+s-1}{2}}$      & $(p-1)(\frac{k}{p}-1)$ &$(p-1)(\frac{k}{p})$\\

$p^n-p^{n-1}-\frac{p-1}{2}p^{\frac{n+s-1}{2}}$     & $(p-1)(p^{n-s-1}-\frac{k}{p})$ &$(p-1)(p^{n-s-1}-\frac{k}{p}-1)$\\

$p^n-p^{n-1}-p^{\frac{n+s-1}{2}}$& $\frac{(p-1)^2}{2}(\frac{k}{p}-p^{\frac{n-s-1}{2}})$&$\frac{(p-1)^2}{2}\frac{k}{p}$ \\

$p^n-p^{n-1}+p^{\frac{n+s-1}{2}}$& $\frac{(p-1)^2}{2}(p^{n-s-1}-\frac{k}{p})$&$\frac{(p-1)^2}{2}(p^{n-s-1}-\frac{k}{p}+p^{\frac{n-s-1}{2}})$ \\
\hline
\end{tabular}
\end{center}
\end{table}

 \begin{theorem}\label{theorem-43}
Let $p>3$ be an odd prime, $e=1$, and let $n$ and $s$ be positive integers with $n-s>2$. Let $f(x): V_{n}^{(p)}\rightarrow \gf_{p}$ be an $s$-plateaued function belonging to $\mathscr{F}$, and define $D=\{x\in V_{n}^{(p)}: f(x) \in SQ\bigcup \{0\}\}$. Let $\epsilon_0\in \{\pm 1\}$ be defined in Subsection \ref{subsectionF}. Then we have the following results.
\begin{enumerate}
\item If $n+s$ is even, then the code $\mathcal{C}_{f_D}$ defined in Equation (\ref{eq-3}) is a $\left[p^n, n+1,\frac{p^n+p^{n-1}}{2}+\epsilon_0\frac{p-1}{2}p^{\frac{n+s}{2}-1}\right]$ self-orthogonal minimal six-weight code over $\gf_p$ satisfying $\frac{w_{\min}}{w_{\max}}<\frac{p-1}{p}$. The weight distribution of $\mathcal{C}_{f_D}$ is given in Table \ref{tab-43}.
\item If $n+s$ is odd, then the code $\mathcal{C}_{f_D}$ defined in Equation (\ref{eq-3}) is a $\left[p^n, n+1,p^{n-1}+\frac{p-1}{2}\bigl(p^{n-1}+\epsilon_{0}p^{\frac{n+s-1}{2}}\bigr)\right]$ self-orthogonal minimal six-weight code over $\gf_p$ satisfying $\frac{w_{\min}}{w_{\max}}<\frac{p-1}{p}$. The weight distribution of $\mathcal{C}_{f_D}$ is given in Table \ref{tab-44} for the case $p\equiv 1 \pmod{4}$ and in Table \ref{tab-45} for the case $p\equiv 3 \pmod{4}$, respectively.
\end{enumerate}
\end{theorem}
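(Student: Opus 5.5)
The proof follows the template of Theorem~\ref{theorem-41}, with Condition~A replaced by membership in $\mathscr{F}$. There are three parts: self-orthogonality, the weights, and minimality.

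\textbf{Self-orthogonality.} Here $e=1$ and $q=p>3$, so Corollary~\ref{cor-41} applies once we show two things.

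First, $D$ is $\gf_p^*$-invariant. By property (4) of $\mathscr{F}$, $f(ax)=a^{t}f(x)$ with $\gcd(t-1,p-1)=1$. Since $p-1$ is even, $t-1$ is odd, so $t$ is even and $a^t\in SQ$. Hence $f(ax)\in SQ\cup\{0\}$ whenever $f(x)\in SQ\cup\{0\}$. One caveat: property (4) is stated only for $x\in B_{\pm}(f^*)$. I would note that these sets cover $V_n^{(p)}$ in the cases of interest (e.g.\ the GMMF examples in Remark~\ref{rema-22}); otherwise I would add this as an implicit assumption.

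Second, $p\mid |D|$. Lemma~\ref{lem-11} gives $|D|=N_0(f)+\sum_{j\in SQ}N_j(f)$ with $j_0=f^*(0)=0$. In the even case this is $p^{n-1}\frac{p+1}{2}+\epsilon_0\frac{p-1}{2}p^{\frac{n+s}{2}-1}$. In the odd case it is $\frac{p^n+p^{n-1}}{2}+\epsilon_0\frac{p-1}{2}p^{\frac{n+s-1}{2}}$, using $\sum_{j\in SQ}\eta_0(j)=\frac{p-1}{2}$. The hypothesis $n-s>2$ ensures $\frac{n+s}{2}-1\geq 1$ and $\frac{n+s-1}{2}\geq1$, so $p\mid |D|$. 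Corollary~\ref{cor-41} then gives self-orthogonality.

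\textbf{Weights.} Since $f(0)=0$, we have $0\in D$, so Lemma~\ref{lem-dim} gives dimension $n+1$. Lemma~\ref{the-41} gives the weights $0$, $|D|$, $p^n-p^{n-1}$, and $p^n-p^{n-1}+\chi_b(D)$, where $\chi_b(D)=S_1$ is evaluated in Lemma~\ref{lem-43}. The value of $S_1$ depends on three things: whether $b\in S_f$, the sign $\epsilon_b$, and the class of $f^*(b)$ (zero, square, or nonsquare).

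The frequencies come from counting $b\neq0$ in $B_\pm(f)$ with prescribed $f^*(b)$ class, via Lemma~\ref{lem-13} (the $c_j,d_j$ counts), then summing over $j\in SQ$ or $NSQ$. The term $b=0$ must be removed from $B_{+}(f)$ or $B_{-}(f)$, which is why the table has two columns. The factor $\epsilon^*_0$ is converted to $\epsilon_0$ via Lemma~\ref{YWei}(3). The $b\notin S_f$ and $a=0$ contributions add to the weight $p^n-p^{n-1}$. I expect this bookkeeping to be heavy but routine, with the six weights identified by the sign combinations in $S_1$.

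\textbf{Minimality and failure of Ashikhmin--Barg.} The ratio claim should be direct. The minimum weight is $|D|\approx\frac{p+1}{2}p^{n-1}$ and the maximum is $p^n-p^{n-1}+O(p^{\frac{n+s}{2}})$. Since $\frac{p+1}{2p}<\frac{(p-1)^2}{p^2}$ for $p>3$ and $n-s>2$, the ratio is below $\frac{p-1}{p}$.

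For minimality I would apply Lemma~\ref{minimal1} with the same case split as in Theorem~\ref{theorem-41}. Let $E_1$ be the codewords with $b=0$ (weight $|D|$); the others have $b\neq0$ and weights $p^n-p^{n-1}+\delta$ with $|\delta|\le\frac{p+1}{2p}p^{\frac{n+s}{2}}$. For independent $\mathbf a,\mathbf b$, at most one $z\in\gf_p^*$ gives $\mathbf a+z\mathbf b\in E_1$, and none does if exactly one of $\mathbf a,\mathbf b$ lies in $E_1$. The lower bound $\sum_z\wt(\mathbf a+z\mathbf b)\ge(p-2)\wt_{\min}'+|D|$, with $\wt_{\min}'$ the smallest weight having $b\neq0$, should then exceed $(p-1)\wt(\mathbf a)-\wt(\mathbf b)$ in each case. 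This is because the deviations are $O(p^{\frac{n+s}{2}})$ while the margin is of order $p^{n-1}$.

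\textbf{Main obstacle.} The hardest case is $\mathbf a\notin E_1$, $\mathbf b\notin E_1$, where one $z$ may drop to weight $|D|$. The needed inequality becomes roughly $(p-2)(p^n-p^{n-1})+|D|-(p-1)(p^n-p^{n-1})+(p^n-p^{n-1})\gtrsim (2p-2)\cdot\frac{p+1}{2p}p^{\frac{n+s}{2}}$. This reduces to $|D|>c\,p^{\frac{n+s}{2}+1}$, which holds since $n-1\ge\frac{n+s}{2}+1$ when $n-s\ge4$. The boundary case $n-s=3$ (odd $n+s$) needs the sharper odd-case constants $p^{\frac{n+s-1}{2}}$, and I would check it carefully there.
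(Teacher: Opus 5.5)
Your self-orthogonality argument and your weight and frequency bookkeeping follow the paper's proof. The minimality step, however, has a genuine gap. Replacing every weight outside $E_1$ by $p^n-p^{n-1}\pm\Delta$ does not close the argument in the allowed boundary case $n-s=3$ with $n+s$ odd, and sharper odd-case constants cannot rescue it. By Lemma~\ref{lem-43}, the class $f^*(b)=0$ produces the deviation $\Delta=\frac{p-1}{2}p^{\frac{n+s-1}{2}}$. This is the largest odd-case deviation, larger than $p^{\frac{n+s-1}{2}}$ and even larger than your stated bound $\frac{p+1}{2p}p^{\frac{n+s}{2}}$.

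Take the paper's own example $p=5$, $n=4$, $s=1$, $\epsilon_0=-1$, whose nonzero weights are $325,450,475,500,525,550$. In your hardest case, the estimate gives only $\sum_z\wt(\mathbf a+z\mathbf b)\ge 3\cdot 450+325=1675$, while the right-hand side can only be bounded by $4\cdot 550-450=1750$. The mixed case $\mathbf b\in E_1$, $\mathbf a\notin E_1$ fails as well: $4\cdot 450=1800$ against $4\cdot 550-325=1875$. In general both cases require $|D|>2(p-1)\Delta$. At $n-s=3$ this is false for every $p\ge 5$ when $\epsilon_0=-1$. So the heuristic that the margin is of order $p^{n-1}$ while the deviations are $O(p^{\frac{n+s}{2}})$ breaks down exactly there.

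The missing idea is the one the paper uses in Subcases 3.1, 3.2.1, 4.1 and 4.2.1. For $a\neq 0$, the weight $p^n-p^{n-1}+\chi_b(D)$ depends only on the line $\gf_p^*b$, because $\chi_{cb}(D)=\chi_b(cD)=\chi_b(D)$ by $\gf_p^*$-invariance of $D$. The paper phrases this through Lemma~\ref{YWei}. The proof then splits into three cases.

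\emph{Case $\mathbf a\in E_1$.} Your crude estimate already works, since $|D|<p^n-p^{n-1}-\Delta$.

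\emph{Case $\mathbf a\notin E_1$ with proportional $b$-components.} Suppose either $\mathbf b\in E_1$, or $\mathbf a+z_0\mathbf b\in E_1$ for some $z_0$, which forces $b_2=-z_0^{-1}b_1$. Then every $b$-component in the pencil is $0$ or a nonzero multiple of $b_1$. Hence every weight in the pencil, and also $\wt(\mathbf a)$ and $\wt(\mathbf b)$, is known exactly. Each equals $p^n-p^{n-1}+\chi_{b_1}(D)$, or $p^n-p^{n-1}$ for the at most one $z$ that kills the first coordinate, or $|D|$ at $z_0$. The inequality of Lemma~\ref{minimal1} then reduces to $|D|>\chi_{b_1}(D)$ or $|D|>-(p-1)\chi_{b_1}(D)$, and both hold.

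\emph{Remaining case.} Here $b_1,b_2$ are linearly independent and no member of the pencil lies in $E_1$. The crude bound then only needs $p^n-p^{n-1}>(2p-1)\Delta$, which does hold for $n-s\ge 3$.
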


\begin{IEEEproof}
By the definition of $\mathscr{F}$, we have $f(ax)=a^{t_a}f(x)$, where $a \in \gf_{p}^*$ and $\gcd(t_a-1,p-1)=1$ ($t_a=t$ if $x \in B_{+}(f^*)$, and  $t_a=t'$ if $x \in B_{-}(f^*)$). Since $p$ is an odd prime, then $t_a$ is even. For $f(x) \in SQ\bigcup\{0\}$, we have $f(ax)=a^{t_a}f(x) \in SQ\bigcup\{0\}$. Thus $D$ is $\gf_{p}^*$-invariant. 
Since $p\mid |D|$ by Lemma \ref{lem-11}, it follows from Corollary~\ref{cor-41} that $\mathcal{C}_{f_D}$ is self-orthogonal.

For a codeword $\textbf{c}(a,b):=\left(a f_{D}(x)+\langle b,x\rangle_{n/1}\right)_{x\in V_{n}^{(p)}}$ with $(a,b)\in \gf_{p}\times V_{n}^{(p)}$, by Lemma \ref{lem-43} and Theorem \ref{the-41}, we have the following results:
\begin{enumerate}
\item If $n+s$ is even, then we have
 \begin{eqnarray*}
&& \wt\left(\mathcal{C}_{f_{D}}\right)\\
&=& \left\{
\begin{array}{lll}
0&\mbox{if $a=0$, $b= 0$ },\\
\frac{p^n+p^{n-1}}{2}+\epsilon_0\frac{p-1}{2}p^{\frac{n+s}{2}-1}  &   \mbox{if $a\neq0$, $b= 0$ },\\
p^n-p^{n-1} &  \mbox{if $a=0$, $b \neq 0$ or $a\neq0$, $b \neq 0$ $b \notin S_f$},\\
p^n-p^{n-1}+\frac{p-1}{2p}\epsilon_{b}p^{\frac{n+s}{2}} & \mbox{if $a\neq0,  b \neq 0$, $b \in S_f$, $f^*(b)=0$}\\
 &\mbox{or $a\neq0,  b \neq 0$, $b \in S_f$, $f^*(b)\in SQ$,}\\
p^n-p^{n-1}-\frac{p+1}{2p}\epsilon_{b}p^{\frac{n+s}{2}} &  \mbox{if $a\neq0,  b \neq 0$, $b \in S_f$, $f^*(b)\in NSQ$,}\\
\end{array} \right.
 \end{eqnarray*}
 where $\epsilon_b=\pm1 $ be defined in Subsection \ref{subsectionF}.
 
\item If $n+s$ is odd, then we have
  \begin{eqnarray*}
&& \wt\left(\mathcal{C}_{f_{D}}\right)\\
&=& \left\{
\begin{array}{lll}
0&\mbox{if $a=0$, $b= 0$ },\\
p^{n-1}+\frac{p-1}{2}(p^{n-1}+\epsilon_{0}p^{\frac{n+s-1}{2}})  &   \mbox{if $a\neq0$, $b= 0$ },\\
p^n-p^{n-1} &  \mbox{if $a=0$, $b \neq 0$ or $a\neq0$, $b \neq 0$ $b \notin S_f$},\\
p^n-p^{n-1}+\frac{p-1}{2}\epsilon_{b}p^{\frac{n+s-1}{2}} & \mbox{if $a\neq0,  b \neq 0$, $b \in S_f$, $f^*(b)=0$,}\\
p^n-p^{n-1}+\frac{1}{2}(\eta_{0}(-1)-1)\epsilon_{b}p^{\frac{n+s-1}{2}} &  \mbox{if $a\neq0,  b \neq 0$, $b \in S_f$, $f^*(b)\in SQ$,}\\
p^n-p^{n-1}+\frac{1}{2}(-\eta_{0}(-1)-1)\epsilon_{b}p^{\frac{n+s-1}{2}} &  \mbox{if $a\neq0,  b \neq 0$, $b \in S_f$, $f^*(b)\in NSQ$,}\\
\end{array} \right.
 \end{eqnarray*}
  where $\epsilon_b=\pm1 $ be defined in Subsection \ref{subsectionF}.
 \end{enumerate}

 In the following, we consider the case where  $n+s$ is even. Firstly, we determine the frequency of each weight of $\mathcal{C}_{f_D}$. 
 Denote by $$\wt_0=0,$$ 
 $$\wt_1=\frac{p^n+p^{n-1}}{2}+\epsilon_0\frac{p-1}{2}p^{\frac{n+s}{2}-1},$$
 $$\wt_2=p^n-p^{n-1},$$ 
 $$\wt_3=p^n-p^{n-1}+\frac{p-1}{2p}p^{\frac{n+s}{2}},$$ 
 $$\wt_4=p^n-p^{n-1}-\frac{p-1}{2p}p^{\frac{n+s}{2}},$$
  $$\wt_5=p^n-p^{n-1}-\frac{p+1}{2p}p^{\frac{n+s}{2}},$$
   and $$\wt_6=p^n-p^{n-1}+\frac{p+1}{2p}p^{\frac{n+s}{2}}.$$
  It is obvious that $A_{\wt_0}=1$, $A_{\wt_1}=p-1$ and $A_{\wt_2}=p^{n+1}-(p-1)p^{n-s}-1$. Note that
 \begin{eqnarray*}
 & A_{w_3}=\left|\left\{(a,b)\in \gf_{p}^*\times V_{n}^{(p)}:a \neq 0, b \neq 0, b\in B_{+}(f), b\in S_{f}, f^*(b)=0\right\}\right|\\
 & +\left|\left\{(a,b)\in \gf_{p}^*\times V_{n}^{(p)}:a \neq 0, b \neq 0, b\in B_{+}(f), b\in S_{f}, f^*(b)\in SQ\right\}\right|.
\end{eqnarray*}
According to Lemma \ref{lem-13}, when $0\in B_{+}(f)$, then we have $A_{\wt_3}=(p-1)(\frac{k}{p}+(p-1)p^{\frac{n-s}{2}-1}-1)+\frac{(p-1)^2}{2}(\frac{k}{p}-p^{\frac{n-s}{2}-1})$;
 when $0\in B_{-}(f)$, then we have $A_{\wt_3}=(p-1)\frac{k}{p}+\frac{(p-1)^2}{2}\frac{k}{p}$. Besides, it is clear that
\begin{eqnarray*}&A_{\wt_4}=\left|\left\{(a,b)\in \gf_{p}^*\times V_{n}^{(p)}:a \neq 0, b \neq 0, b\in B_{-}(f), b\in S_{f}, f^*(b)=0\right\}\right|\\
&+\left|\left\{(a,b)\in \gf_{p}^*\times V_{n}^{(p)}:a \neq 0, b \neq 0, b\in B_{-}(f), b\in S_{f}, f^*(b)\in SQ\right\}\right|. 
\end{eqnarray*}
 According to Lemma \ref{lem-13}, when $0\in B_{+}(f)$, then we have $A_{\wt_4}=(p-1)(p^{n-s-1}-\frac{k}{p})+\frac{(p-1)^2}{2}(p^{n-s-1}-\frac{k}{p})$;
 when $0\in B_{-}(f)$, then we have $A_{\wt_4}=(p-1)(p^{n-s-1}-\frac{k}{p}-(p-1)p^{\frac{n-s}{2}-1}-1)+(p-1)\frac{p-1}{2}(p^{n-s-1}-\frac{k}{p}+p^{\frac{n-s}{2}-1})$.
 For
 $$A_{\wt_5}=\left|\left\{(a,b)\in \gf_{p}^*\times V_{n}^{(p)}:a \neq 0, b \neq 0, b\in B_{+}(f), b\in S_{f}, f^*(b)\in NSQ\right\}\right|,$$  by Lemma \ref{lem-13}, we have $A_{\wt_5}=\frac{(p-1)^2}{2}(\frac{k}{p}-p^{\frac{n-s}{2}-1})$ if $0\in B_{+}(f)$, and $A_{w_5}=\frac{(p-1)^2}{2}\frac{k}{p}$ if $0\in B_{-}(f)$.
 For
 $$A_{\wt_6}=\left|\left\{(a,b)\in \gf_{p}^*\times V_{n}^{(p)}:a \neq 0, b \neq 0, b\in B_{-}(f), b\in S_{f}, f^*(b)\in NSQ\right\}\right|,$$  
 by Lemma \ref{lem-13}, we have $A_{\wt_6}=\frac{(p-1)^2}{2}(p^{n-s-1}-\frac{k}{p})$ if $0\in B_{+}(f)$, and $A_{\wt_6}=\frac{(p-1)^2}{2}(p^{n-s-1}-\frac{k}{p}+p^{\frac{n-s}{2}-1})$ if $0\in B_{-}(f)$.
 Now we prove that $\mathcal{C}_{f_D}$ is a minimal code violating the Ashikhmin-Barg condition. It is obvious that
 \begin{eqnarray*}
 \frac{\wt_{\min}}{\wt_{\max}}=\frac{\frac{p^n+p^{n-1}}{2}+\epsilon_0\frac{p-1}{2}p^{\frac{n+s}{2}-1}}{p^n-p^{n-1}+\frac{p+1}{2p}p^{\frac{n+s}{2}}}<\frac{p-1}{p}.
 \end{eqnarray*}
For $i = 1,2,3,4,5,6$, define $E_i$ as the set of all codewords of weight $\wt_i$.
\begin{enumerate}
\item $E_1=\{ \mathbf{c}_{a, 0} \in \mathcal{C}_{f_D} | a\neq0, b=0 \}$,
\item $E_2=\{ \mathbf{c}_{a, b} \in \mathcal{C}_{f_D} | a=0, b \neq0 ~\mbox{or} ~ a\neq0, b \neq 0, b \notin S_f\}$,
\item $E_3\cup E_4=\{ \mathbf{c}_{a, b} \in \mathcal{C}_{f_D}| a\neq0, b\neq 0,b \in S_f, f^*(b)=0 ~\mbox{or}~ a \neq 0, b \in S_f, b \neq 0, f^*(b)\in SQ\}$,
\item $E_5\cup E_6=\{ \mathbf{c}_{a, b} \in \mathcal{C}_{f_D}| a \neq 0, b \neq 0, b \in S_f, f^*(b)\in NSQ\}$.    
\end{enumerate}
Note that $$\wt_{6}>\wt_{3}>\wt_{2}>\wt_{4}>\wt_{5}>\wt_{1}.$$
Let $(a_i, b_i) \in \mathbb{F}_p \times V_n^{(p)} \setminus \{(0,0)\}$ for $1 \leq i \leq 2$. Then, for any two linearly independent codewords $\mathbf{a} = \mathbf{c}(a_1, b_1)$ and $\mathbf{b} = \mathbf{c}(a_2, b_2)$, their coverage can be divided into the following four cases.

Case 1: Let $\mathbf{a}=\mathbf{c}(a_1,b_1) \in E_1$. Then $b_1=0$. For $\mathbf{b}=\mathbf{c}(a_2,b_2)$, we have $b_2\neq 0$ as $\mathbf{a}$ and $\mathbf{b}$
are $\gf_q$-linearly independent. Note that $\mathbf{a}+z\mathbf{b}=\mathbf{c}(a_1+za_2,zb_2)$ for $z\in \gf_p^*$.
Then we have $$\sum_{z \in \gf_{p}^*}\wt(\mathbf{a}+z\mathbf{b})>(p-1)\wt_1>(p-1)\wt_1-\wt_1>(p-1)\wt(\mathbf{a})-\wt(\mathbf{b}),$$
which satisfies the inequality in Lemma \ref{minimal1}.

Case 2: Let $\mathbf{a}=\mathbf{c}(a_1,b_1) \in E_2$. Then $b_1\neq 0$. We now consider the following subcases.

Subcase 2.1: If $\mathbf{b}=\mathbf{c}(a_2,b_2) \in E_1$, then $a_2\neq 0$ and $b_2=0$. Note that $\mathbf{a}+z\mathbf{b}=\mathbf{c}(a_1+za_2,b_1)  \notin E_{1}$ due to $b_1\neq 0$. Then we have
$$\sum_{z \in \gf_{p}^*}\wt(\mathbf{a}+z\mathbf{b})\geq(p-1)\wt_5=(p-1)\left(p^n-p^{n-1}-\frac{p+1}{2p}p^{\frac{n+s}{2}}\right).$$
Since
$$(p-1)\wt(\mathbf{a})-\wt(\mathbf{b})=(p-1)\wt_2-\wt_1=(p-1)(p^n-p^{n-1})-\left(\frac{p^n+p^{n-1}}{2}+\epsilon_0\frac{p-1}{2}p^{\frac{n+s}{2}-1}\right),$$
we deduce
$$\sum_{z \in \gf_{p}^*}\wt(\mathbf{a}+z\mathbf{b})>(p-1)\wt(\mathbf{a})-\wt(\mathbf{b}),$$
which satisfies the inequality in Lemma \ref{minimal1}.

Subcase 2.2: Let $\mathbf{b}=\mathbf{c}(a_2,b_2)\in \bigcup_{i=2}^{6}E_i$. Then $b_2\neq 0$. Note that $\mathbf{a}+z\mathbf{b}=\mathbf{c}(a_1+za_2,b_1+zb_2)$.
There exists at most one $z\in \gf_p^*$ such that $\mathbf{a}+z\mathbf{b}\in E_1$. 
Then we have 
$$\sum_{z \in \gf_{p}^*}\wt(\mathbf{a}+z\mathbf{b})\geq(p-2)\wt_5+\wt_1=(p-2)\left(p^n-p^{n-1}-\frac{p+1}{2p}p^{\frac{n+s}{2}}\right)+\frac{p^n+p^{n-1}}{2}+\epsilon_0\frac{p-1}{2}p^{\frac{n+s}{2}-1}.$$
Since
$$(p-1)\wt(\mathbf{a})-\wt(\mathbf{b})\leq (p-1)\wt_2-\wt_5= (p-1)(p^n-p^{n-1})-\left(p^n-p^{n-1}-\frac{p+1}{2p}p^{\frac{n+s}{2}}\right),$$
we have
$$\sum_{z \in \gf_{p}^*}\wt(\mathbf{a}+z\mathbf{b})>(p-1)\wt(\mathbf{a})-\wt(\mathbf{b}),$$
which satisfies the inequality in Lemma \ref{minimal1}.

Case 3: Let $\mathbf{a}=\mathbf{c}(a_1,b_1) \in E_3\cup E_4$. Then we have $a_1\neq 0$ and $b_1\neq 0$.

Subcase 3.1: Let $\mathbf{b}=\mathbf{c}(a_2,b_2) \in E_1$. Then we have $a_2\neq 0$ and $b_2=0$. 
Note that $\mathbf{a}+z\mathbf{b}=\mathbf{c}(a_1+za_2,b_1)$ for $z\in \gf_p^*$.
There exists exactly one $z\in \gf_p^*$ such that $\mathbf{a}+z\mathbf{b}\in E_2$. 
If $a_1+za_2\neq 0$, then it is easy to see $\mathbf{a}+z\mathbf{b}\in E_3\cup E_4$ and $\wt(\mathbf{a}+z\mathbf{b})=\wt(\mathbf{a})$. 
Then we have
$$\sum_{z \in \gf_{p}^*}\wt(\mathbf{a}+z\mathbf{b})=(p-2)\wt(\mathbf{a})+\wt_2=(p-2)\wt(\mathbf{a})+(p^n-p^{n-1}).$$
Since
$$(p-1)\wt(\mathbf{a})-\wt(\mathbf{b})=(p-1)\wt(\mathbf{a})-\left(\frac{p^n+p^{n-1}}{2}+\epsilon_0\frac{p-1}{2}p^{\frac{n+s}{2}-1}\right),$$
we derive
$$\sum_{z \in \gf_{p}^*}\wt(\mathbf{a}+z\mathbf{b})>(p-1)\wt(\mathbf{a})-\wt(\mathbf{b}),$$
which satisfies the inequality in Lemma \ref{minimal1}.

Subcase 3.2:  Let $\mathbf{b}=\mathbf{c}(a_2,b_2) \in \bigcup_{i=2}^{6}E_i$. Then we have $b_2\neq 0$.  Note that $\mathbf{a}+z\mathbf{b}=\mathbf{c}(a_1+za_2,b_1+zb_2)$ for $z\in \gf_p^*$.
Next, we will split the discussion depending on whether there exists some $z$ such that $\mathbf{a} + z\mathbf{b} \in E_1$.

Subcase 3.2.1: If there exists some $z_0\in \gf_p^*$ such that $\mathbf{a}+z_0\mathbf{b}=\mathbf{c}(a_1+z_0a_2,b_1+z_1b_2)\in E_1$, then $b_1+z_0b_2=0$ and $b_1+zb_2=-z_0b_2+zb_2=(z-z_0)b_2$. We also obtain that  $b_1+zb_2=-{z_0}^{-1}(z-z_0)b_1$ and $b_1+zb_2$, $b_1$ are linearly dependent for $z \neq z_0$. According to Lemma~\ref{YWei}, since $\mathbf{a}=\mathbf{c}(a_1,b_1) \in E_3\cup E_4$ and $b_1+zb_2=-{z_0}^{-1}(z-z_0)b_1$, we have the following: if $b_1 \in S_f$ and $f^*(b_1)=0$, then $b_1+zb_2 \in S_f$ and $f^*(b_1+zb_2)=0$; if $b_1 \in S_f$ and $f^*(b_1)\in SQ$, then $b_1+zb_2 \in S_f$ and $f^*(b_1+zb_2)\in SQ$.
Specially, we have $\epsilon_{b_1+zb_2}=\epsilon_{b_1}$. When $z$ runs over $\gf_{p}^*$, there exist at most one $z_1 \in \gf_{p}^*$ such that $a_1+z_1a_2=0$ where $z_1 \neq z_0$ . When $z\neq z_0, z_1$, we have $\mathbf{a}+z\mathbf{b} \in E_3\cup E_4$ and $\wt(\mathbf{a}+z\mathbf{b})=\wt(\mathbf{a})$. 
The following hold:
\begin{enumerate}
\item If $z_1$ exists, we have 
$$\sum_{z \in \gf_{p}^*}\wt(\mathbf{a}+z\mathbf{b})=(p-3)\wt(\mathbf{a})+\wt_1+\wt_2=(p-3)\wt(\mathbf{a})+\frac{p^n+p^{n-1}}{2}+\epsilon_0\frac{p-1}{2}p^{\frac{n+s}{2}-1}+p^n-p^{n-1}.$$

From Lemma~\ref{YWei} and the relation $b_1 = -z_0 b_2$, we obtain the following: if $b_1 \in S_f$ and $f^*(b_1) = 0$, then $b_2 \in S_f$ and $f^*(b_2) = 0$; and if $b_1 \in S_f$ and $f^*(b_1) \in SQ$, then $b_2 \in S_f$ and $f^*(b_2) \in SQ$. In particular, $\epsilon_{b_1} = \epsilon_{b_2}$.
Consequently, for $a_2 \neq 0$, we have $\mathbf{b} = \mathbf{c}(a_2, b_2) \in E_3 \cup E_4$ and $\wt(\mathbf{b}) = \wt(\mathbf{a})$. If $a_2 = 0$, then $\mathbf{b} \in E_2$.
If $\mathbf{b}=\mathbf{c}(a_2,b_2) \in E_2 $, then we have 
$$(p-1)\wt(\mathbf{a})-\wt(\mathbf{b})= (p-1)\wt(\mathbf{a})-(p^n-p^{n-1}).$$
If $\mathbf{b}=\mathbf{c}(a_2,b_2) \in E_3\cup E_4 $, then we have
$$(p-1)\wt(\mathbf{a})-\wt(\mathbf{b})= (p-1)\wt(\mathbf{a})-\wt(\mathbf{a}).$$
We then derive
$$\sum_{z \in 
\gf_{p}^*}\wt(\mathbf{a}+z\mathbf{b})>(p-1)\wt(\mathbf{a})-\wt(\mathbf{b}),$$
which satisfies the inequality in Lemma \ref{minimal1}.

\item If $z_1$ does not exist, we have
$$\sum_{z \in \gf_{p}^*}\wt(\mathbf{a}+z\mathbf{b})=(p-2)\wt(\mathbf{a})+\wt_1=(p-2)\wt(\mathbf{a})+\frac{p^n+p^{n-1}}{2}+\epsilon_0\frac{p-1}{2}p^{\frac{n+s}{2}-1}.$$
If $\mathbf{b}=\mathbf{c}(a_2,b_2) \in E_2 $, then we have 
$$(p-1)\wt(\mathbf{a})-\wt(\mathbf{b})= (p-1)\wt(\mathbf{a})-(p^n-p^{n-1}).$$
If $\mathbf{b}=\mathbf{c}(a_2,b_2) \in E_3\cup E_4 $, then we have
$$(p-1)\wt(\mathbf{a})-\wt(\mathbf{b})= (p-1)\wt(\mathbf{a})-\wt(\mathbf{a}).$$
We then derive
$$\sum_{z \in 
\gf_{p}^*}\wt(\mathbf{a}+z\mathbf{b})>(p-1)\wt(\mathbf{a})-\wt(\mathbf{b}),$$
which satisfies the inequality in Lemma \ref{minimal1}.
\end{enumerate}

Subcase 3.2.2: If there exists no $z\in \gf_p^*$ such that $\mathbf{a}+z\mathbf{b}=\mathbf{c}(a_1+za_2,b_1+zb_2)\in E_1$,
then we have
$$\sum_{z \in \gf_{p}^*}\wt(\mathbf{a}+z\mathbf{b})\geq (p-1)\wt_5=(p-1)\left(p^n-p^{n-1}-\frac{p+1}{2p}p^{\frac{n+s}{2}}\right).$$
Since
$$(p-1)\wt(\mathbf{a})-\wt(\mathbf{b})\leq (p-1)\wt(\mathbf{a})-(p^n-p^{n-1}-\frac{p+1}{2p}p^{\frac{n+s}{2}})\leq(p-1)(p^n-p^{n-1}+\frac{p-1}{2p}p^{\frac{n+s}{2}})-(p^n-p^{n-1}-\frac{p+1}{2p}p^{\frac{n+s}{2}}),$$
we derive
$$\sum_{z \in \gf_{p}^*}\wt(\mathbf{a}+z\mathbf{b})>(p-1)\wt(\mathbf{a})-\wt(\mathbf{b}),$$
which satisfies the inequality in Lemma \ref{minimal1}.

Case 4: Let $\mathbf{a}=\mathbf{c}(a_1,b_1) \in E_5\cup E_6$. Then we have $a_1\neq 0$ and $b_1\neq 0$. 

Subcase 4.1: Let $\mathbf{b}=\mathbf{c}(a_2,b_2) \in E_1$. Then we have $a_2\neq 0$ and $b_2=0$. Note that $\mathbf{a}+z\mathbf{b}=\mathbf{c}(a_1+za_2,b_1)$ for $z\in \gf_p^*$.
There exists exactly one $z\in \gf_p^*$ such that $\mathbf{a}+z\mathbf{b}\in E_2$. If $a_1+za_2\neq 0$, then it is easy to see $\mathbf{a}+z\mathbf{b}\in E_5\cup E_6$ and $\wt(\mathbf{a}+z\mathbf{b})=\wt(\mathbf{a})$. 
Then we have
$$\sum_{z \in \gf_{p}^*}\wt(\mathbf{a}+z\mathbf{b})=(p-2)\wt(\mathbf{a})+\wt_2=(p-2)\wt(\mathbf{a})+(p^n-p^{n-1}).$$
Since
$$(p-1)\wt(\mathbf{a})-\wt(\mathbf{b})=(p-1)\wt(\mathbf{a})-(\frac{p^n+p^{n-1}}{2}+\epsilon_0\frac{p-1}{2}p^{\frac{n+s}{2}-1}).$$
we derive
$$\sum_{z \in \gf_{p}^*}\wt(\mathbf{a}+z\mathbf{b})>(p-1)\wt(\mathbf{a})-\wt(\mathbf{b}),$$
which satisfies the inequality in Lemma \ref{minimal1}.

Subcase 4.2: Let $\mathbf{b}=\mathbf{c}(a_2,b_2) \in \bigcup_{i=2}^{6}E_i$. Then we have $b_2\neq 0$.  Note that $\mathbf{a}+z\mathbf{b}=\mathbf{c}(a_1+za_2,b_1+zb_2)$ for $z\in \gf_p^*$.
Next, we divide the discussion according to whether there exists $z$ such that $\mathbf{a}+z\mathbf{b}\in E_1$.

Subcase 4.2.1: If there exists some $z_0 \in \gf_{p}^*$ such that $\mathbf{a}+z\mathbf{b}=\mathbf{c}(a_1+z_0a_2,b_1+z_0b_2)\in E_1$, then  $b_1+z_0b_2=0$ and $b_1+zb_2=-z_0b_2+zb_2=(z-z_0)b_2$. We also have $b_1 + z b_2 = -z_0^{-1}(z - z_0) b_1$, and $b_1 + z b_2$ is linearly dependent on $b_1$ for $z \neq z_0$. By Lemma \ref{YWei}, since $\mathbf{a} = \mathbf{c}(a_1, b_1) \in E_5 \cup E_6$ and $b_1 + z b_2 = -z_0^{-1}(z - z_0) b_1$, it follows that if $b_1 \in S_f$ and $f^*(b_1) \in NSQ$, then $b_1 + z b_2 \in S_f$ and $f^*(b_1 + z b_2) \in NSQ$. In particular, $\epsilon_{b_1 + z b_2} = \epsilon_{b_1}$. As $z$ runs over $\mathbb{F}_p^*$, there is at most one $z_1 \in \mathbb{F}_p^*$ with $z_1 \neq z_0$ such that $a_1 + z_1 a_2 = 0$. For $z \neq z_0, z_1$, we have $\mathbf{a} + z\mathbf{b} \in E_5 \cup E_6$ and $\wt(\mathbf{a} + z\mathbf{b}) = \wt(\mathbf{a})$. The following hold:
\begin{enumerate}
\item If $z_1$ exists, we have 

$$\sum_{z \in \gf_{p}^*}\wt(\mathbf{a}+z\mathbf{b})=(p-3)\wt(\mathbf{a})+\wt_1+\wt_2=(p-3)\wt(\mathbf{a})+\frac{p^n+p^{n-1}}{2}+\epsilon_0\frac{p-1}{2}p^{\frac{n+s}{2}-1}+p^n-p^{n-1}.$$

From Lemma \ref{YWei} and the relation $b_1 = -z_0 b_2$, we obtain that if $b_1 \in S_f$ and $f^*(b_1) \in \operatorname{NSQ}$, then $b_2 \in S_f$ and $f^*(b_2) \in \operatorname{NSQ}$. In particular, $\epsilon_{b_1} = \epsilon_{b_2}$. Consequently, for $a_2 \neq 0$, we have $\mathbf{b} = \mathbf{c}(a_2, b_2) \in E_5 \cup E_6$ and $\wt(\mathbf{b}) = \wt(\mathbf{a})$. If $a_2 = 0$, then $\mathbf{b} \in E_2$.
If $\mathbf{b}=\mathbf{c}(a_2,b_2)  \in E_2 $, then 
$$(p-1)\wt(\mathbf{a})-\wt(\mathbf{b})= (p-1)\wt(\mathbf{a})-(p^n-p^{n-1}).$$
If $\mathbf{b}=\mathbf{c}(a_2,b_2)  \in E_5\cup E_6 $, then $\wt(\mathbf{a})=\wt(\mathbf{b})$, thus
$$(p-1)\wt(\mathbf{a})-\wt(\mathbf{b})= (p-1)\wt(\mathbf{a})-\wt(\mathbf{a}).$$
We then derive
$$\sum_{z \in \gf_{p}^*}\wt(\mathbf{a}+z\mathbf{b})>(p-1)\wt(\mathbf{a})-\wt(\mathbf{b}),$$
which satisfies the inequality in Lemma \ref{minimal1}.
\item If $z_1$ does not exist, we have 
$$\sum_{z \in \gf_{p}^*}\wt(\mathbf{a}+z\mathbf{b})=(p-2)\wt(\mathbf{a})+\wt_1=(p-2)\wt(\mathbf{a})+\frac{p^n+p^{n-1}}{2}+\epsilon_0\frac{p-1}{2}p^{\frac{n+s}{2}-1}.$$
If $\mathbf{b}=\mathbf{c}(a_2,b_2)  \in E_2 $, then 
$$(p-1)\wt(\mathbf{a})-\wt(\mathbf{b})= (p-1)\wt(\mathbf{a})-(p^n-p^{n-1}).$$
If $\mathbf{b}=\mathbf{c}(a_2,b_2)  \in E_5\cup E_6 $, then $\wt(\mathbf{a})=\wt(\mathbf{b})$ and
$$(p-1)\wt(\mathbf{a})-\wt(\mathbf{b})= (p-1)\wt(\mathbf{a})-\wt(\mathbf{a}).$$
We then derive
$$\sum_{z \in \gf_{p}^*}\wt(\mathbf{a}+z\mathbf{b})>(p-1)\wt(\mathbf{a})-\wt(\mathbf{b}),$$
which satisfies the inequality in Lemma \ref{minimal1}.
\end{enumerate}

Subcase 4.2.2: If there exits no $z\in \gf_p^*$ such that $\mathbf{a}+z\mathbf{b}=\mathbf{c}(a_1+za_2,b_1+zb_2)\in E_1$ , then we have
$$\sum_{z \in \gf_{p}^*}\wt(\mathbf{a}+z\mathbf{b})\geq (p-1)\wt_5=(p-1)\left(p^n-p^{n-1}-\frac{p+1}{2p}p^{\frac{n+s}{2}}\right).$$
Since
\begin{eqnarray*}(p-1)\wt(\mathbf{a})-\wt(\mathbf{b})\leq (p-1)\wt(\mathbf{a})-\left(p^n-p^{n-1}-\frac{p+1}{2p}p^{\frac{n+s}{2}}\right)\\
\leq(p-1)\left(p^n-p^{n-1}+\frac{p+1}{2p}p^{\frac{n+s}{2}}\right)
-\left(p^n-p^{n-1}-\frac{p+1}{2p}p^{\frac{n+s}{2}}\right),\end{eqnarray*}
we derive
$$\sum_{z \in \gf_{p}^*}\wt(\mathbf{a}+z\mathbf{b})>(p-1)\wt(\mathbf{a})-\wt(\mathbf{b}),$$
which satisfies the inequality in Lemma \ref{minimal1}.

Based on the above discussion and Lemma \ref{minimal1}, we conclude that if $n+s$ is even, then $\mathcal{C}_{f_D}$ is a minimal linear code satisfying $\frac{w_{\min}}{w_{\max}} < \frac{p-1}{p}$. The case where $n+s$ is odd follows similarly to the even case, and is therefore omitted.
 \end{IEEEproof}

\begin{example}
Let $p=5$, $n=4$ and define $f(x): \mathbb{F}_{5}^{4} \rightarrow \mathbb{F}_{5}$ by $f(x_1,x_2,x_3,x_4)=4x_{1}^2 x_{3}^4 + 2x_{1}^2 + x_2 x_3$. Then $f(x)$ is a non-weakly regular $1$-plateaued function of type $(-)$, with $\sharp B_{+}(f)=100$ and $t=t'=2$. Using Magma, one verifies that the linear code $\mathcal{C}_{f_D}$ in Theorem \ref{theorem-43} has parameters $[625,5,325]$ and weight enumerator
\[
1 + 4z^{325} + 16z^{450} + 160z^{475} + 2784z^{500} + 80z^{525} + 80z^{550},
\]
satisfying $\frac{w_{\min}}{w_{\max}} = \frac{325}{550} < \frac{4}{5}$. Moreover, Magma also confirms that $\mathcal{C}_{f_D}$ is self-orthogonal.
\end{example}

\begin{table}[!h]
\begin{center}
\caption{The weight distribution of $\mathcal{C}_{f_D}$ in Theorem \ref{theorem44} if $n+s$ is odd and $p\equiv 1 \mod 4$.}\label{tab-46}
\begin{tabular}{ccc} \hline
Weight   &  Frequency (if $0 \in B_{+}(f)$) & Frequency (if $0 \in B_{-}(f)$)  \\ \hline
$0$ & $1$ & $1$\\

$\frac{p^n+p^{n-1}}{2}-\frac{p-1}{2}\epsilon_{0}p^{\frac{n+s-1}{2}}$        &  $p-1$ &  $p-1$\\

$p^n-p^{n-1}$  & $p^{n+1}-(p-1)p^{n-s}-1+\frac{(p-1)^2}{2}(p^{n-s-1}-p^{\frac{n-s-1}{2}})$ & $p^{n+1}-(p-1)p^{n-s}-1+\frac{(p-1)^2}{2}(p^{n-s-1}+p^{\frac{n-s-1}{2}})$\\

$p^n-p^{n-1}-\frac{p-1}{2}p^{\frac{n+s-1}{2}}$     & $(p-1)(\frac{k}{p}-1)$ &$(p-1)\frac{k}{p}$\\

$p^n-p^{n-1}+\frac{p-1}{2}p^{\frac{n+s-1}{2}}$     & $(p-1)(p^{n-s-1}-\frac{k}{p})$ &$(p-1)(p^{n-s-1}-\frac{k}{p}-1)$\\

$p^n-p^{n-1}+p^{\frac{n+s-1}{2}}$& $\frac{(p-1)^2}{2}(\frac{k}{p}+p^{\frac{n-s-1}{2}})$&$\frac{(p-1)^2}{2}\frac{k}{p}$ \\

$p^n-p^{n-1}-p^{\frac{n+s-1}{2}}$& $\frac{(p-1)^2}{2}(p^{n-s-1}-\frac{k}{p})$&$\frac{(p-1)^2}{2}(p^{n-s-1}-\frac{k}{p}-p^{\frac{n-s-1}{2}})$ \\
\hline
\end{tabular}
\end{center}
\end{table} 

\begin{table}[!h]
\begin{center}
\caption{The weight distribution of $\mathcal{C}_{f_D}$ in Theorem \ref{theorem44} if $n+s$ is odd and $p\equiv 3 \mod 4$.}\label{tab-47}
\begin{tabular}{ccc} \hline
Weight   &  Frequency (if $0 \in B_{+}(f)$) & Frequency (if $0 \in B_{-}(f)$)  \\ \hline
$0$ & $1$ & $1$\\

$\frac{p^n+p^{n-1}}{2}-\frac{p-1}{2}\epsilon_{0}p^{\frac{n+s-1}{2}}$        &  $p-1$ &  $p-1$\\

$p^n-p^{n-1}$  & $p^{n+1}-(p-1)p^{n-s}-1+\frac{(p-1)^2}{2}(p^{n-s-1}-p^{\frac{n-s-1}{2}})$ & $p^{n+1}-(p-1)p^{n-s}-1+\frac{(p-1)^2}{2}(p^{n-s-1}+p^{\frac{n-s-1}{2}})$\\

$p^n-p^{n-1}-\frac{p-1}{2}p^{\frac{n+s-1}{2}}$      & $(p-1)(\frac{k}{p}-1)$ &$(p-1)\frac{k}{p}$\\

$p^n-p^{n-1}+\frac{p-1}{2}p^{\frac{n+s-1}{2}}$     & $(p-1)(p^{n-s-1}-\frac{k}{p})$ &$(p-1)(p^{n-s-1}-\frac{k}{p}-1)$\\

$p^n-p^{n-1}+p^{\frac{n+s-1}{2}}$& $\frac{(p-1)^2}{2}(\frac{k}{p}+p^{\frac{n-s-1}{2}})$&$\frac{(p-1)^2}{2}\frac{k}{p}$ \\

$p^n-p^{n-1}-p^{\frac{n+s-1}{2}}$& $\frac{(p-1)^2}{2}(p^{n-s-1}-\frac{k}{p})$&$\frac{(p-1)^2}{2}(p^{n-s-1}-\frac{k}{p}-p^{\frac{n-s-1}{2}})$ \\
\hline
\end{tabular}
\end{center}
\end{table} 

 \begin{theorem}\label{theorem44}
Let $p>3$ be an odd prime, $e=1$, $n$ and $s$ be positive integers such that $n-s>2$ is odd, and let $f(x): V_n^{(p)} \rightarrow \mathbb{F}_p$ be an $s$-plateaued function belonging to $\mathscr{F}$. Define $D = \{ x \in V_n^{(p)} : f(x) \in NSQ \cup \{0\} \}$. Let $\epsilon_0 \in \{\pm 1\}$ be as defined in Subsection \ref{subsectionF}. Then the code $\mathcal{C}_{f_D}$ defined in Equation (\ref{eq-3}) is a $[p^n, n+1, p^{n-1} + \frac{p-1}{2}(p^{n-1} - \epsilon_0 p^{\frac{n+s-1}{2}})]$ self-orthogonal minimal six-weight code over $\gf_p$ satisfying $\frac{\wt_{\min}}{\wt_{\max}} < \frac{p-1}{p}$, and its weight distribution is given in Tables \ref{tab-46} and \ref{tab-47} for the two respective cases.
\end{theorem}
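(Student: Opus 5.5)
The proof runs parallel to that of Theorem \ref{theorem-43}, with $SQ$ replaced by $NSQ$ and Lemma \ref{lem-44} used in place of Lemma \ref{lem-43}. We take its parts in order: self-orthogonality, the weights, the frequencies, and finally minimality.

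\emph{Self-orthogonality.} Since $f\in\mathscr{F}$, we have $f(ax)=a^{t_a}f(x)$ with $\gcd(t_a-1,p-1)=1$. This forces $t_a$ to be even, so $\eta_0(a^{t_a})=1$. Hence $f(x)\in NSQ\cup\{0\}$ implies $f(ax)\in NSQ\cup\{0\}$ for every $a\in\gf_p^*$, and $D$ is $\gf_p^*$-invariant. Clearly $0\in D$. The size of $D$ comes from Lemma \ref{lem-11} with $n+s$ odd (since $n-s$ is odd):
$$|D|=N_0(f)+\sum_{j\in NSQ}N_j(f).$$
With $j_0=f^*(0)=0$, this sum works out to $\frac{p+1}{2}p^{n-1}-\frac{p-1}{2}\epsilon_0p^{\frac{n+s-1}{2}}$. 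Since $n-1\ge 1$ and $\frac{n+s-1}{2}\ge 1$, we get $p\mid |D|$. Corollary \ref{cor-41} (with $q=p>3$) then gives self-orthogonality. The same value of $|D|$ is the weight of $\mathbf{c}(a,0)$, $a\neq0$, by Lemma \ref{the-41}, and it matches the claimed minimum distance.

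\emph{Weights.} By Lemma \ref{the-41}, for $a\neq0$ and $b\neq0$ we have $\wt(\mathbf{c}(a,b))=p^n-p^{n-1}+S_2$, where $S_2$ is evaluated by Lemma \ref{lem-44} in the case $n+s$ odd. The quantity $\frac{1}{2p}p^{\frac{n+s-1}{2}}p^*$ equals $\pm\frac12 p^{\frac{n+s-1}{2}}$ according to $\eta_0(-1)$. Splitting into $p\equiv1\pmod 4$ and $p\equiv3\pmod 4$ therefore gives the values $p^n-p^{n-1}\mp\frac{p-1}{2}p^{\frac{n+s-1}{2}}$ (for $f^*(b)=0$) and $p^n-p^{n-1}$ or $p^n-p^{n-1}\pm p^{\frac{n+s-1}{2}}$ (for $f^*(b)\neq0$), where the signs are $\epsilon_b$. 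Together with $\wt_1=|D|$ and $p^n-p^{n-1}$, these give six nonzero weights. Lemma \ref{lem-dim} gives dimension $n+1$.

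\emph{Frequencies.} These come from counting $b\in B_{\pm}(f)$ with $f^*(b)=0$, $SQ$, or $NSQ$, using Lemma \ref{lem-13} (odd case), with $j_0=f(0)=0$. Each count is multiplied by $p-1$ for the choices of $a$, and we subtract the $b=0$ contribution according to whether $0\in B_+(f)$ or $0\in B_-(f)$. This yields Tables \ref{tab-46} and \ref{tab-47}.

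\emph{Minimality.} The ratio bound $\wt_{\min}/\wt_{\max}<\frac{p-1}{p}$ is immediate, since $\wt_{\min}\approx\frac{p+1}{2}p^{n-1}$ while $\wt_{\max}\approx p^n-p^{n-1}$. For minimality we apply Lemma \ref{minimal1} with the same case analysis as in Theorem \ref{theorem-43}. Partition the nonzero codewords into $E_1=\{\mathbf{c}(a,0)\}$, $E_2$ (those with $a=0$, or with $b\notin S_f$), and the classes indexed by whether $f^*(b)$ is $0$, in $SQ$, or in $NSQ$, together with $\epsilon_b$. When $\mathbf{a}+z\mathbf{b}\in E_1$ for some $z_0$, we have $b_1+zb_2\in\gf_p^*b_1$. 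By Lemma \ref{YWei}(1),(2), this keeps the class and $\epsilon_b$ of $b_1$ unchanged, so the sum $\sum_z\wt(\mathbf{a}+z\mathbf{b})$ is computed exactly. Otherwise we use the crude bound $(p-1)\wt_{\rm low}$, where $\wt_{\rm low}=p^n-p^{n-1}-p^{\frac{n+s-1}{2}}$ is the smallest weight other than $\wt_1$.

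The main obstacle is checking the inequalities in this last subcase. The crude bound leaves a margin of order $p^n$ against a loss of order $p\cdot p^{\frac{n+s-1}{2}}$, and this is exactly where $n-s>2$ is needed, since it gives $\frac{n+s+1}{2}\le n-1$. We would check this inequality first, and then treat the remaining subcases by routine comparison, as in Theorem \ref{theorem-43}.
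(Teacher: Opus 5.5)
Your outline follows the paper's proof, which simply defers to Theorem~\ref{theorem-43}. The self-orthogonality step ($\gf_p^*$-invariance of $D$, $p\mid |D|$, Corollary~\ref{cor-41}), the weights (Lemmas~\ref{lem-44} and~\ref{the-41}) and the frequencies (Lemma~\ref{lem-13}) are fine. The minimality sketch, however, has a concrete error.

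\textbf{The value of $\wt_{\mathrm{low}}$ is wrong.} Put $W=p^n-p^{n-1}$ and $P=p^{\frac{n+s-1}{2}}$. Since $p>3$ gives $\frac{p-1}{2}\ge 2$, the smallest weight other than $\wt_1$ is $W-\frac{p-1}{2}P$, coming from $b\in B_+(f)$ with $f^*(b)=0$. It is not $W-P$. For $p=5$, $n=4$, $s=1$ the weights besides $\wt_1$ are $450,475,500,525,550$, so $(p-1)(W-P)$ is not a valid lower bound. With the correct $\wt_{\mathrm{low}}=W-\frac{p-1}{2}P$, the loss $(p-1)(\wt(\mathbf a)-\wt_{\mathrm{low}})$ can reach $(p-1)^2P$, not something of order $pP$. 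When $\mathbf b\notin E_1$, the margin is $\wt(\mathbf b)\ge (p-1)p^{n-1}-\frac{p-1}{2}P$. The inequality then reduces to $p^{\frac{n-s-1}{2}}>p-\frac12$, and this is where $n-s\ge 3$ is used.

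\textbf{The crude bound fails when $\mathbf b\in E_1$.} Your dichotomy sends the subcase $\mathbf b=\mathbf c(a_2,0)\in E_1$, $\mathbf a=\mathbf c(a_1,b_1)$ with $b_1\ne 0$, to the ``otherwise'' branch, since no $\mathbf a+z\mathbf b$ lies in $E_1$. There the margin is only $\wt_1$. For $n-s=3$ and $\epsilon_0=1$ one has $\wt_1=\frac{p^2+1}{2}P<(p-1)^2P$ whenever $p\ge 5$, so the crude bound fails. Concretely, for $p=5$, $n=4$, $s=1$ and $\wt(\mathbf a)=550$, it gives $4\cdot 450=1800$, while $(p-1)\wt(\mathbf a)-\wt(\mathbf b)=2200-325=1875$.

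\textbf{The fix.} Use the exact count from Subcases 3.1 and 4.1 of Theorem~\ref{theorem-43}. For $a_1\ne 0$, the codeword $\mathbf a+z\mathbf b=\mathbf c(a_1+za_2,b_1)$ has weight $W$ for the single value $z=-a_1/a_2$, and weight $\wt(\mathbf a)$ for the other $p-2$ values. So the sum equals $(p-2)\wt(\mathbf a)+W$. This exceeds $(p-1)\wt(\mathbf a)-\wt_1$ because $W+\wt_1>W+\frac{p-1}{2}P\ge \wt(\mathbf a)$. For $a_1=0$ the crude bound does suffice, since it only needs $\wt_1>\frac{(p-1)^2}{2}P$. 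With these two repairs the argument goes through.
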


\begin{IEEEproof}
According to Lemma \ref{lem-44} and Theorem \ref{the-41}, the weight distribution follows directly. The rest of the proof is similar to that of Theorem \ref{theorem-43} and is therefore omitted.
\end{IEEEproof}

\begin{table}[!h]
\begin{center}
\caption{The weight distribution of $\mathcal{C}_{f_D}$ in Theorem \ref{theorem45} if $n+s$ is even.}\label{tab-48}
\begin{tabular}{ccc} \hline
Weight   &  Frequency (if $0 \in B_{+}(f)$) & Frequency (if $0 \in B_{-}(f)$)  \\ \hline
$0$ & $1$ & $1$\\

$p^{n-1}+\epsilon_{0}(p-1)p^{\frac{n+s}{2}-1}$        &  $p-1$ &  $p-1$\\

$p^n-p^{n-1}$  & $p^{n+1}-(p-1)p^{n-s}-1$ & $p^{n+1}-(p-1)p^{n-s}-1$\\

$p^n-p^{n-1}+\frac{p-1}{p}p^{\frac{n+s}{2}}$     & $(p-1)(\frac{k}{p}+(p-1)p^{\frac{n-s}{2}-1}-1)$ &$(p-1)\frac{k}{p}$\\

$p^n-p^{n-1}-\frac{p-1}{p}p^{\frac{n+s}{2}}$     & $(p-1)(p^{n-s-1}-\frac{k}{p})$ &$(p-1)(p^{n-s-1}-\frac{k}{p}-(p-1)p^{\frac{n-s}{2}-1}-1)$\\

$p^n-p^{n-1}-\frac{1}{p}p^{\frac{n+s}{2}}$& $(p-1)^2(\frac{k}{p}-p^{\frac{n-s}{2}-1})$&$(p-1)^2\frac{k}{p}$ \\

$p^n-p^{n-1}+\frac{1}{p}p^{\frac{n+s}{2}}$& $(p-1)^2(p^{n-s-1}-\frac{k}{p})$&$(p-1)^2(p^{n-s-1}-\frac{k}{p}+p^{\frac{n-s}{2}-1})$ \\
\hline
\end{tabular}
\end{center}
\end{table}

\begin{table}[!h]
\begin{center}
\caption{The weight distribution of $\mathcal{C}_{f_D}$ in Theorem \ref{theorem45} if $n+s$ is odd.}\label{tab-49}
\begin{tabular}{cc} \hline
Weight   &  Frequency   \\ \hline
0 & 1\\

$p^{n-1}$          &  $p-1$\\

$p^n-p^{n-1}$  & $p^n-1+(p-1)(p^n-p^{n-1})+(p-1)(p^{n-s-1}-1)$ \\

$p^n-p^{n-1}+ p^{\frac{n+s-1}{2}}$     & $\frac{(p-1)^2}{2}(p^{n-s-1}+p^{\frac{n-s-1}{2}})$ \\

$p^n-p^{n-1}- p^{\frac{n+s-1}{2}}$  & $\frac{(p-1)^2}{2}(p^{n-s-1}-p^{\frac{n-s-1}{2}})$\\
\hline
\end{tabular}
\end{center}
\end{table}

%sn-s>4n-s>2
\begin{theorem}\label{theorem45}
Let $p=3$, $e=1$ and $f(x): V_{n}^{(p)}\rightarrow \gf_{p}$ be an $s$-plateaued function belonging to $\mathscr{F}$.  Let $D=\{x\in V_{n}^{(p)}: f(x)=0\}$. Let $\epsilon_0 \in \{\pm 1\}$ be as defined in Subsection \ref{subsectionF}. Then the following results hold. 
\begin{enumerate}
\item If $n+s$ is even and  $0\leq s< n-2$, then the code $\mathcal{C}_{f_D}$ defined in Equation (\ref{eq-3}) is a $[p^n, n+1,p^{n-1}+\epsilon_{0}(p-1)p^{\frac{n+s}{2}-1}]$ self-orthogonal minimal six-weight code over $\gf_p$ with $\frac{\wt{_{\min}}}{\wt{_{\max}}}<\frac{p-1}{p}$ and its weight distribution  is given in Table \ref{tab-48}.
\item If $n+s$ is odd and $0\leq s < n-2$, then the code $\mathcal{C}_{f_D}$ defined in Equation (\ref{eq-3}) is a $[p^n, n+1,p^{n-1}]$ self-orthogonal minimal four-weight code over $\gf_p$ with $\frac{\wt{_{\min}}}{\wt{_{\max}}}<\frac{p-1}{p}$ and its weight distribution is given in Table \ref{tab-49}.
\end{enumerate}
\end{theorem}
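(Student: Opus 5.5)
The plan has three parts: self-orthogonality via Theorem \ref{The-41}, weights via Lemmas \ref{the-41} and \ref{lem-42}, and minimality via Lemma \ref{minimal1}.

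\emph{Self-orthogonality.} Since $q=p=3$, Corollary \ref{cor-41} is not available, so I would use Theorem \ref{The-41} directly. The hypothesis $(q,n)\neq(3,1)$ holds because $s<n-2$ forces $n\geq 3$. It then suffices to show that $\mathcal{C}_D$ is even-like and that $3\mid |D|$.

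By condition (4) of $\mathscr{F}$, $f(ax)=a^{t_a}f(x)$ with $\gcd(t_a-1,2)=1$, so $t_a$ is even. Hence $f(-x)=f(x)$ and $D$ is $\gf_3^*$-invariant. Pairing $x$ with $-x$ gives $\sum_{x\in D}\langle b,x\rangle_{n}=0$, since $x=0$ contributes $0$ and $\sum_{g\in\gf_3^*}g=0$. So $\mathcal{C}_D$ is even-like.

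Because $f(0)=0$ and $f$ is even, the lemma preceding Lemma \ref{YWei} gives $f^*(0)=0$. Lemma \ref{lem-11} with $j=j_0=0$ then gives:
\begin{itemize}
\item for $n+s$ even, $|D|=3^{n-1}+2\epsilon_0 3^{\frac{n+s}{2}-1}$, which is divisible by $3$ because $s<n-2$ and $n\ge 3$ force $\frac{n+s}{2}\ge 2$;
\item for $n+s$ odd, $\eta_0(0)=0$, so $|D|=3^{n-1}$.
\end{itemize}
In both cases $3\mid |D|$. Finally, $0\in D$, so Lemma \ref{lem-dim} gives dimension $n+1$.

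\emph{Weights and frequencies.} Lemma \ref{the-41} applies since $D$ is $\gf_3^*$-invariant and $0\in D$. The weights are $|D|$, $3^n-3^{n-1}$, and $3^n-3^{n-1}+\chi_b(D)$ with $\chi_b(D)=T$ from Lemma \ref{lem-42}.
\begin{itemize}
\item For $n+s$ even, $T\in\{0,\pm\frac{2}{3}3^{\frac{n+s}{2}},\pm\frac13 3^{\frac{n+s}{2}}\}$ according to $b\notin S_f$, $f^*(b)=0$, or $f^*(b)\neq0$, with sign $\epsilon_b$. This gives the six nonzero weights of Table \ref{tab-48}.
\item For $n+s$ odd, $p^*=-3$, so $T\in\{0,\mp\epsilon_b 3^{\frac{n+s-1}{2}}\}$ according to $\eta_0(f^*(b))$. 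This gives the four weights of Table \ref{tab-49}.
\end{itemize}
The frequencies come from counting $b\in B_\pm(f)$ with $f^*(b)=0$, $f^*(b)\in SQ$, or $f^*(b)\in NSQ$ via Lemmas \ref{lem-13} and \ref{lem-12}, with the multiplier $(p-1)$ for $a$. One must remove $b=0$ according to whether $0\in B_+(f)$ or $0\in B_-(f)$, and merge the $f^*(b)=0$ class with the $T=0$ class in the odd case. The ratio $\wt_{\min}/\wt_{\max}<2/3$ is then a direct numerical check.

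\emph{Minimality.} This is the step I expect to be the main obstacle. I would mimic the case analysis of Theorem \ref{theorem-43} with Lemma \ref{minimal1}, now with $z\in\{1,2\}$ only. Let $E_1$ be the codewords with $b=0$, $E_2$ those with weight $3^n-3^{n-1}$, and group the remaining codewords by the value class of $T$.
\begin{itemize}
\item For $\mathbf a\in E_1$, the inequality is immediate.
\item For $\mathbf a\notin E_1$ and $\mathbf b\in E_1$, the two sums $\mathbf a+z\mathbf b$ share $b_1$, and exactly one of them lies in $E_2$.
\item For $\mathbf a,\mathbf b\notin E_1$ with some $\mathbf a+z_0\mathbf b\in E_1$, we have $b_1=-z_0b_2$. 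By Lemma \ref{YWei}, $b_2$ lies in the same $B_\pm$ class and the same $f^*$-value class as $b_1$, so the weights of $\mathbf a$ and $\mathbf b$ are controlled.
\item Otherwise, bound every term below by the smallest weight among those with $b\neq 0$.
\end{itemize}
Each subcase reduces to an explicit inequality between powers of $3$. The tight ones occur when $\frac{n+s}{2}$ is close to $n-1$, and this is exactly where the hypothesis $s<n-2$ is needed.
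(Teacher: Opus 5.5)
Your proposal is correct and follows the paper's outline. Weights come from Lemma \ref{the-41} with Lemma \ref{lem-42}, frequencies from Lemmas \ref{lem-12} and \ref{lem-13}, and minimality from the case analysis of Theorem \ref{theorem-43}, which the paper also omits.

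The one genuine difference is the self-orthogonality step, and there your route is the right one. The paper cites Corollary \ref{cor-41}, whose hypothesis $q>3$ fails for $q=3$. You instead apply Theorem \ref{The-41} directly: you check $(q,n)\neq(3,1)$, derive even-likeness from $D=-D$, and get $3\mid |D|$ from Lemma \ref{lem-11} with $f^*(0)=0$. This supplies exactly what the misapplied corollary was meant to give, now valid for $q=3$.

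An even shorter route for $p=3$ is Lemma \ref{lem-binary}. Once the weights are known, each of $|D|$, $2\cdot 3^{n-1}$ and $2\cdot 3^{n-1}+\chi_b(D)$ is divisible by $3$, because $\frac{n+s}{2}\ge 2$, resp.\ $\frac{n+s-1}{2}\ge 1$. Self-orthogonality is then immediate.

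Your even-case split of $\chi_b(D)$ by $f^*(b)=0$ versus $f^*(b)\neq 0$ is the one consistent with Lemma \ref{lem-42} and Table \ref{tab-48}. The case labels in the paper's displayed weight formula are carried over from the $SQ\cup\{0\}$ setting.

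A few small corrections:
\begin{itemize}
\item In the minimality sketch, for $\mathbf a\notin E_1$, $\mathbf b\in E_1$, it is not true that exactly one of the sums $\mathbf a+z\mathbf b$ lies in $E_2$.
  \begin{itemize}
  \item If $a_1\neq 0$, one sum is $\mathbf c(0,b_1)$ and the other has weight $\wt(\mathbf a)$; both lie in $E_2$ when $\chi_{b_1}(D)=0$.
  \item If $a_1=0$, both sums are $\mathbf c(za_2,b_1)$, of weight $2\cdot 3^{n-1}+\chi_{b_1}(D)$. What you need is $2\chi_{b_1}(D)\neq -|D|$. This holds by parity for $n+s$ odd, and because $3^{\frac{n-s}{2}}>6$ for $n+s$ even.
  \end{itemize}
\item In the $z_0$ subcase you do not need Lemma \ref{YWei}. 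The identity $\chi_{cb}(D)=\chi_b(D)$ for $c\in\gf_3^*$ follows directly from the $\gf_3^*$-invariance of $D$.
\item When you carry out the count for $n+s$ odd, the contribution from $a\neq 0$, $b\notin S_f$ is $(p-1)(p^n-p^{n-s})$. So the entry $(p-1)(p^n-p^{n-1})$ in Table \ref{tab-49} is right only for $s=1$, as in the paper's example.
\item For $n+s$ even, ``six weights'' requires both $B_+(f)$ and $B_-(f)$ to meet the relevant value classes of $f^*$. For weakly regular $f$, some frequencies in Table \ref{tab-48} vanish.
\end{itemize}
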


\begin{IEEEproof}
 If $f(x) = 0$, then $f(ax) = a^{t_a} f(x) = 0$ as $f(x)\in \mathscr{F}$. Hence $D$ is $\mathbb{F}_q^*$-invariant.
 Since $p\in |D|$,  it follows from Corollary~\ref{cor-41} that $\mathcal{C}_{f_D}$ is self-orthogonal.
For a codeword $\textbf{c}(a,b):=\left(a f_{D}(x)+\langle b,x\rangle_{n/1}\right)_{x\in V_{n}^{(p)}}$ with $(a,b)\in \gf_{3}\times V_{n}^{(3)}$, by Lemmas \ref{lem-42} and \ref{the-41}, we have the following:
\begin{enumerate}
\item If $n+s$ is even, then we have
 \begin{eqnarray*}
\wt\left(\textbf{c}(a,b)\right)&=& \left\{
\begin{array}{lll}
0&\mbox{if $a=0$, $b= 0$ },\\
p^{n-1}+\epsilon_{0}(p-1)p^{\frac{n+s}{2}-1},  &   \mbox{if $a\neq0$, $b= 0$ },\\
p^n-p^{n-1}, &  \mbox{if $a=0$, $b \neq 0$ or $a\neq0$, $b \neq 0$ $b \notin S_f$},\\
p^n-p^{n-1}+\frac{p-1}{p}\epsilon_{b}p^{\frac{n+s}{2}}, & \mbox{if $a\neq0,  b \neq 0$, $b \in S_f$, $f^*(b)=0$ }\\
 &\mbox{or $a\neq0, b \neq 0$, $b \in S_f$, $f^*(b)\in SQ$,}\\
p^n-p^{n-1}-\frac{1}{p}\epsilon_{b}p^{\frac{n+s}{2}}, &  \mbox{if $a\neq0, b \neq 0$, $b \in S_f$, $f^*(b)\in NSQ$.}\\
\end{array} \right.
 \end{eqnarray*}
 If $n+s$ is odd, then we have
  \begin{eqnarray*}
&& \wt\left(\textbf{c}(a,b)\right)\\
&=& \left\{
\begin{array}{lll}
0&\mbox{if $a=0$, $b= 0$ },\\
p^{n-1},  &   \mbox{if $a\neq0$, $b= 0$ },\\
p^n-p^{n-1}, &  \mbox{if $a=0$, $b \neq 0$ or $a\neq0$, $b \neq 0$ $b \notin S_f$ or $a\neq0,  b \neq 0$, $b \in S_f$, $f^*(b)=0$ },\\
p^n-p^{n-1}+\eta_{0}(-1)\epsilon_{b}p^{\frac{n+s-1}{2}}, &  \mbox{if $a\neq0,  b \neq 0$, $b \in S_f$, $f^*(b)\in SQ$,}\\
p^n-p^{n-1}-\eta_{0}(-1)\epsilon_{b}p^{\frac{n+s-1}{2}}, &  \mbox{if $a\neq0,  b \neq 0$, $b \in S_f$, $f^*(b)\in NSQ$.}\\
\end{array} \right.
 \end{eqnarray*}
\end{enumerate}
By Lemma \ref{lem-12}, the determination of the frequencies $A_{w_i}$ is similar. The weight distribution of $\mathcal{C}_{f_D}$ is given in Tables \ref{tab-48} and \ref{tab-49} for the two cases. The proof of its minimality is analogous to that of Theorem \ref{theorem-43}. The details are omitted here.
  \end{IEEEproof}

\begin{example}
Let $p=3$, $n=6$ and define $f(x): \mathbb{F}_3^{6} \rightarrow \mathbb{F}_3$ by $f(x_1,x_2,x_3,x_4,x_5,x_6)=x_{1}^2 x_{5}^2 + x_{1}^2 + x_{2}^2 + x_{3}^2 + x_4 x_5$. Then $f(x)$ is a non-weakly regular $1$-plateaued function of type $(-)$, with $\sharp B_{+}(f)=162$ and $t=t'=2$. Using Magma, one verifies that the linear code $\mathcal{C}_{f_D}$ in Theorem \ref{theorem45} has parameters $[729,7,243]$ and weight enumerator
\[
1 + 2z^{243} + 144z^{495} + 1860z^{486} + 180z^{513},
\]
satisfying $\frac{w_{\min}}{w_{\max}} = \frac{243}{513} < \frac{2}{3}$. Moreover, Magma also confirms that $\mathcal{C}_{f_D}$ is self-orthogonal.
\end{example}

%\begin{remark}
%According to Lemma \ref{lemma-21}, for the set $-D=D$, it is a partial difference set if and only if it satisfies the Lmma \ref{lemma-21}. However, the character sum $\chi(D)$ of the
 %set defined in Lemma \ref{lem-42}, \ref{lem-43}, \ref{lem-44}, \ref{lem-45} and \ref{lem-46} has three‑valued eigenvalues where $\chi$ is nontrivial, they does not satisfy the conditions of the Lemma \ref{lemma-21}. Hence, the codes constructed in this section are different from those in \cite{Tao}. According to Corollary \ref{cor-41}, we can verify that minimal code in \cite{Tao} are self-orthogonal for $q>3$ and $p \mid |D|$. When $q=2,3$, if the character sum $\Psi(D)$ in \cite{Tao} satisfying Theorem \ref{The-32} and  \ref{The-33} and $p \mid |D|$, then the minimal code in \cite{Tao} are self-orthogonal. 
%\end{remark}

\section{Summary and concluding remarks}\label{sec7}
Determining whether a linear code is self-orthogonal is generally nontrivial. In this paper, we have established several criteria for the self-orthogonality of certain classes of linear codes. These criteria have enabled us to verify the self-orthogonality of some known codes and to construct new families of self-orthogonal codes and self-orthogonal minimal codes. Our results have provided an extended framework for constructing self-orthogonal codes and have highlighted their applications in quantum codes and minimal codes. Future work will extend these criteria to more general families of linear codes.
 
%\section{Simple References}

\end{document}